\documentclass{imsart}

\usepackage[colorlinks,citecolor=blue,urlcolor=blue]{hyperref}

\usepackage{amsmath}
\usepackage{graphicx}
\usepackage{enumerate}
\usepackage{natbib}
\usepackage{url}
\usepackage{algorithm}
\usepackage{algorithmic}
\usepackage{amssymb}
\usepackage{xargs}
\usepackage{ifthen}
\usepackage{bbm}
\usepackage{aliascnt}
\usepackage{stmaryrd}
\usepackage{mathtools}
\mathtoolsset{showonlyrefs}
\usepackage{float}
\usepackage{ushort}
\usepackage{amsthm}
\usepackage{secdot}
\usepackage{subcaption}
\usepackage{xcolor}
\usepackage{comment}
\usepackage{booktabs}
\usepackage{mathtools}

% !TEX root = ow2014.tex
% Test

% \renewcommand*\sectionautorefname{Section}
% \renewcommand*\subsectionautorefname{Section}
% \renewcommand*\subsubsectionautorefname{Section}

\newtheorem{theorem}{Theorem}[section]
\newaliascnt{proposition}{theorem}
\newtheorem{proposition}[proposition]{Proposition}
\aliascntresetthe{proposition}

\newaliascnt{lemma}{theorem}
\newtheorem{lemma}[lemma]{Lemma}
\aliascntresetthe{lemma}

\newaliascnt{corollary}{theorem}
\newtheorem{corollary}[corollary]{Corollary}
\aliascntresetthe{corollary}

\newaliascnt{definition}{theorem}

\aliascntresetthe{definition}

\newaliascnt{example}{theorem}

\aliascntresetthe{example}

\newaliascnt{remark}{theorem}
\newtheorem{remark}[remark]{Remark}
\aliascntresetthe{remark}

\newtheorem{assumption}{Assumption}

\newcounter{hypA}

\newcounter{hypB}

\def\equationautorefname~#1\null{%
  Equation~(#1)\null
}

\DeclareGraphicsRule{.tif}{png}{.png}{`convert #1 `dirname #1`/`basename #1 .tif`.png}
\DeclarePairedDelimiter\floor{\lfloor}{\rfloor}
\DeclarePairedDelimiter\ceil{\lceil}{\rceil}
% #
\newcommand{\1}[1]{\mathbbm{1}_{#1}}

% A

\newcommandx\A[2][1=]{
\ifthenelse{\equal{#1}{}}
{\hspace{-1mm}(\textbf{A\ref{#2}})\hspace{-1mm}}
{\hspace{-1mm}(\textbf{A\ref{#1}--\ref{#2}})\hspace{-1mm}}
}
\newcommand{\af}[1]{h_{#1}} 
\newcommand{\alg}[1]{\mathcal{#1}} %Algebra
\newcommand{\am}[1]{\vartheta_{#1}} %auxiliary weights
\newcommand{\amm}[1]{\boldsymbol{\vartheta}_{#1}}
\newcommand{\amup}{\bar{\gamma}}

\newcommand{\addf}[1]{\termletter_{#1}} %Additive functional
\newcommand{\addfext}[2]{\bar{h}_{#1 | #2}} %Additive functional extension
\newcommand{\adds}[1]{\af{#1}} %Sum of additive functionals
\newcommand{\apf}{\mathsf{APF}}
\newcommandx{\arr}[2][1=]{
\ifthenelse{\equal{#1}{}}
{\upsilon_{\N}^{#2}}
{(\upsilon_{\N}^{#2})^{#1}}
}
\newcommandx{\arrterm}[3][1=]{
\ifthenelse{\equal{#1}{}}
{\tilde{\upsilon}_{\N}(#3,#2)}
{\tilde{\upsilon}_{\N}^{#1}(#3,#2)}
}
\newcommand{\arru}[1]{\tilde{\upsilon}_{\N}^{#1}}
%\newcommandx{\asvar}[4][1=]{
%\ifthenelse{\equal{#1}{}}
%{\sigma_{#2}(#3, #4)}
%{\sigma_{#2}^2(#3, #4)}
%}
\newcommandx{\asvar}[4][1=]{
\ifthenelse{\equal{#1}{}}
{\sigma_{#2} \langle #3, #4 \rangle(\af{#2})}
{\sigma_{#2}^2 \langle #3, #4 \rangle(\af{#2})}
}
\newcommandx{\asvarFFBSm}[4][1=]{
\ifthenelse{\equal{#1}{}}
{\tilde{\sigma}_{#2} \langle #3, #4 \rangle(\af{#1})}
{\tilde{\sigma}_{#2}^2 \langle #3, #4 \rangle(\af{#2})}
}
\newcommandx{\asvarstd}[2][1=]{
\ifthenelse{\equal{#1}{}}
{\sigma_{#2}(\af{#2})}
{\sigma_{#2}^2(\af{#2})}
}
\newcommandx{\asvarFFBSmstd}[2][1=]{
\ifthenelse{\equal{#1}{}}
{\tilde{\sigma}_{#2}(\af{#2})}
{\tilde{\sigma}_{#2}^2(\af{#2})}
}

% B

\newcommandx\B[2][1=]{
\ifthenelse{\equal{#1}{}}
{\hspace{-1mm}(\textbf{B\ref{#2}})\hspace{-1mm}}
{\hspace{-1mm}(\textbf{B\ref{#1}--\ref{#2}})\hspace{-1mm}}
}
\newcommandx{\BF}[3][1=]{
\ifthenelse{\equal{#1}{}}
{\kernel{D}_{#2, #3}}
{\kernel{D}_{#2, #3}^{#1}}
}
\newcommandx{\BFcent}[3][1=]{
\ifthenelse{\equal{#1}{}}
{\tilde{\kernel{D}}_{#2, #3}}
{\tilde{\kernel{D}}_{#2, #3}^{#1}}
}
\newcommandx{\BFm}[2]{
	\boldsymbol{\mathcal{D}}_{#1, #2}
}
\newcommandx{\BFcentm}[2]{
	\tilde{\boldsymbol{\mathcal{D}}}_{#1, #2}
}
\newcommand{\bi}[3]{J_{#1}^{(#2, #3)}}
\newcommandx{\bk}[2][1=]{ 
\ifthenelse{\equal{#1}{}}
{\overleftarrow{\kernel{Q}}_{#2}}
{\overleftarrow{\kernel{Q}}_{#2}^{#1}}
%{\overset{{\, }_{\leftarrow}}{Q}_{#2}}
%{\overset{{\, }_{\leftarrow}}{Q}_{#2}}^{#1}}
}
\newcommand{\bkm}[1]{\kernel{B}_{#1}} %backward kernel
\newcommand{\bkmm}[1]{\boldsymbol{\mathcal{B}}_{#1}}
\newcommand{\bigadds}[1]{\boldsymbol{h}_{#1}}
\newcommand{\bigaddf}[1]{\tilde{\boldsymbol{h}}_{#1}}

\newcommand{\bmf}[1]{\set{F}(#1)} %Bounded measurable function
\newcommand{\borel}[1]{\mathcal{B}(\set{#1})} %Borel sigma-algebra

% C
\newcommand{\catdist}{\mathsf{Cat}}
\newcommandx{\cexp}[3][1=]{
\ifthenelse{\equal{#1}{}}
{\mathbb{E}\left[ #2 \mid #3 \right]} % conditional expectation
{\mathbb{E}[ #2 \mid #3 ]}
}

\newcommand{\convd}{\overset{\mathcal{D}}{\longrightarrow}}
\newcommand{\convp}{\overset{\prob}{\longrightarrow}}

\newcommand{\dd}{\Delta}

% E
\newcommand{\E}{\mathbb{E}}
\newcommand{\enoch}[2]{E_{#1}^{#2}} % Enoch ancestor indices
\newcommand{\eg}{\emph{e.g.}}
\newcommand{\epart}[2]{\xi_{#1}^{#2}}
\newcommand{\epartm}[2]{\boldsymbol{\xi}_{#1}^{#2}}
\newcommand{\eparttil}[2]{\tilde{\xi}_{#1}^{#2}}
\newcommand{\epartbar}[2]{\bar{\xi}_{#1}^{#2}}
\newcommand{\eqdef}{\coloneqq} %Equal by definition
\newcommand{\eqdist}{\overset{\mathcal{D}}{=}}
\newcommand{\eqsp}{}
\newcommand{\ess}{\mathsf{ESS}}

% G
\newcommand{\g}[1]{g_{#1}} %g density

\newcommandx{\genfd}[1][1=]{
\ifthenelse{\equal{#1}{}}
{\mathcal{F}}
{\mathcal{F}_{\N}}
}

% F

% H

\newcommand{\hbd}{| \tilde{h} |_{\infty}}
\newcommand{\hk}{\kernel{P}} %Hidden kernel
\newcommand{\hkm}[1]{\boldsymbol{\mathcal{P}}_{\! #1}} %Hidden kernel
\newcommand{\hklow}{\ushort{\varepsilon}}
\newcommand{\hkup}{\bar{\varepsilon}}
\newcommand{\hd}{p} %Hidden density
\newcommand{\hdm}[1]{\boldsymbol{p}_{#1}} 

% I
\newcommand{\I}[2]{I_{#1}^{#2}} %resamplings
 %Initial distribution
\newcommand{\ie}{\emph{i.e.}}
\newcommand{\intvect}[2]{\llbracket #1, #2 \rrbracket}

% K
%{\jmath}
%\newcommand{\K}{{\tilde{N}}}
%\newcommand{\K}{K}
\newcommand{\kernel}[1]{\mathbf{#1}}
\newcommand{\kletter}{M}
\newcommandx{\K}[1][1=]{
	\ifthenelse{\equal{#1}{}}{{\kletter}}{{M^{#1}}}
}
%\newcommandx{\K}[1][1=]{
%\ifthenelse{\equal{#1}{}}{{\kletter}}{{\tilde{\N}^{#1}}}
%}

% L
\newcommand{\ld}[1]{\ell_{#1}} %unnormalized "\ell" density

\newcommand{\ldm}[1]{\boldsymbol{\ell}_{#1}}

\newcommandx{\lebfun}[1][1=]{
\ifthenelse{\equal{#1}{}}
{\lebfunletter}
{\lebfunletter_{#1}}
}

\newcommand{\lebfunletter}{\varphi}
\newcommand{\lk}[1]{\kernel{L}_{#1}} %L kernel
\newcommand{\lkm}[1]{\boldsymbol{\mathcal{L}}_{#1}}

\newcommand{\backprob}[2]{\mathbf{\Lambda}_{#1}^{#2}}

% M

%\newcommandx{\M}[1][1=]{
%\ifthenelse{\equal{#1}{}}
%{\N}
%{\N}
%}
\newcommand{\maxd}{d}
 %Measure density

\newcommand{\mdup}{\bar{\delta}}
 %Same without index!
\newcommand{\meas}[1]{\mathsf{M}(#1)}
 %Measurable function
 %Measurable space
\newcommand{\mr}{\varrho}

% N
\newcommand{\N}{N}

\newcommand{\nset}{\mathbb{N}}
\newcommand{\nsetpos}{\mathbb{N}^*}

% O

\newcommand{\ordo}{\mathcal{O}}
\newcommandx{\oscn}[2][1=]{
\ifthenelse{\equal{#1}{}}{\operatorname{osc}(#2)}{\operatorname{osc}^{#1}(#2)}
}

% P
\newcommand{\partfilt}[1]{\mathcal{G}_{#1}^{\N}}
\newcommand{\partfiltbar}[1]{\mathcal{F}_{#1}^\N }

\newcommandx\post[2][1=]{
\ifthenelse{\equal{#1}{}}
	{\phi_{#2}}
	{\phi_{#2}^\N}
}
\newcommandx\postm[2][1=]{
	\ifthenelse{\equal{#1}{}}
	{\boldsymbol{\phi}_{#2}}
	{\boldsymbol{\phi}_{#2}^\N}
}

\newcommand{\prob}{\mathbb{P}} %Probability measure

\newcommand{\probmeas}[1]{\mathsf{M}_1(#1)}
 %Probability space
\newcommand{\proj}{\boldsymbol{\Pi}}

% R
\newcommand{\refM}[1]{\mu_{#1}} %Reference measure
\newcommandx\res[2][1=]{
	\ifthenelse{\equal{#1}{}}
	{\rho_{#2}}
	{\rho_{#2}^{#1}}
}

\newcommand{\rset}{\mathbb{R}}

\newcommand{\rsetpos}{\mathbb{R}_+^*}

% S
 %Start density
\newcommand{\set}[1]{\mathsf{#1}} %Set

\newcommand{\sqmeas}[1]{\gamma_{#1}}
\newcommand{\sqc}[1]{\eta_{#1}}
\newcommand{\supn}[1]{\|#1\|_{\infty}} %Sup-norm

% T

%\newcommand{\term}[1]{s_{{#1} - 1:{#1}}}

\newcommand{\termletter}{\tilde{h}}
\newcommand{\tensprod}{\varotimes}
\newcommand{\testfsymb}{f}
\newcommandx{\testf}[1][1=]{  %Test function
\ifthenelse{\equal{#1}{}}{\testfsymb}{\testfsymb_{#1}}
}
\newcommand{\testfpsymb}{\tilde{f}}
\newcommandx{\testfp}[1][1=]{  %Test function
\ifthenelse{\equal{#1}{}}{\testfpsymb}{\testfpsymb_{#1}}
}
\newcommandx{\testfbar}[1][1=]{  %Test function
\ifthenelse{\equal{#1}{}}{\bar{f}}{\bar{f}_{#1}}
}
\newcommand{\tk}[1]{\boldsymbol{\mathcal{T}}_{\!#1}}

\newcommand{\tstatletter}{\kernel{T}}
\newcommandx\tstat[2][1=]{
\ifthenelse{\equal{#1}{}}
	{\tstatletter_{#2}}
	{\tau_{#2}^{#1}}
	%{\tstatletter_{#2}^\N}
}
\newcommandx\tstatm[2][1=]{
\ifthenelse{\equal{#1}{}}
	{\boldsymbol{\tstatletter}_{#2}}
	{\boldsymbol{\tau}_{#2}^{#1}}
	%{\tstatletter_{#2}^\N}
}

% U

\newcommand{\unorm}[1]{\varphi_{#1}}
%\newcommand{\unorm}[1]{\gamma_{#1}}

% V

% W
\newcommand{\wgt}[2]{\omega_{#1}^{#2}}
\newcommand{\wgtm}[2]{\boldsymbol{\omega}_{#1}^{#2}}
\newcommand{\wgtsum}[1]{\Omega_{#1}}
\newcommand{\wgtsumm}[1]{\boldsymbol{\Omega}_{#1}}
\newcommand{\wgttil}[2]{\tilde{\omega}_{#1}^{#2}}
\newcommand{\wgtsumtil}[1]{\tilde{\Omega}_{#1}}
\newcommand{\wgtbar}[2]{\bar{\omega}_{#1}^{#2}}
\newcommand{\wgtsumbar}[1]{\bar{\Omega}_{#1}}

% X
\newcommand{\Xinit}{\chi}
\newcommand{\Xinitm}{\boldsymbol{\chi}}
\newcommand{\Xpalg}[1]{\boldsymbol{\mathcal{X}}_{\!#1}}
\newcommand{\Xp}[1]{\boldsymbol{\mathsf{X}}_{#1}}

\newcommand{\xvec}[1]{\boldsymbol{x}_{#1}}

% Z

\newcommand{\Mletter}{\varepsilon}
\newcommandx\M[2][1=]{  % backward draws
	\ifthenelse{\equal{#1}{}}
	{\Mletter_{#2}}
	{\Mletter_{#2}^{#1}}
}
\DeclareMathOperator*{\argmax}{arg\text{ }max}

\startlocaldefs
%%%%%%%%%%%%%%%%%%%%%%%%%%%%%%%%%%%%%%%%%%%%%%
%%                                          %%
%% Uncomment next line to change            %%
%% the type of equation numbering           %%
%%                                          %%
%%%%%%%%%%%%%%%%%%%%%%%%%%%%%%%%%%%%%%%%%%%%%%
%\numberwithin{equation}{section}
%%%%%%%%%%%%%%%%%%%%%%%%%%%%%%%%%%%%%%%%%%%%%%
%%                                          %%
%% For Axiom, Claim, Corollary, Hypothezis, %%
%% Lemma, Theorem, Proposition              %%
%% use \theoremstyle{plain}                 %%
%%                                          %%
%%%%%%%%%%%%%%%%%%%%%%%%%%%%%%%%%%%%%%%%%%%%%%
%\theoremstyle{plain}
%\newtheorem{???}{???}
%\newtheorem*{???}{???}
%\newtheorem{???}{???}[???]
%\newtheorem{???}[???]{???}
%%%%%%%%%%%%%%%%%%%%%%%%%%%%%%%%%%%%%%%%%%%%%%
%%                                          %%
%% For Assumption, Definition, Example,     %%
%% Notation, Property, Remark, Fact         %%
%% use \theoremstyle{remark}                %%
%%                                          %%
%%%%%%%%%%%%%%%%%%%%%%%%%%%%%%%%%%%%%%%%%%%%%%
%\theoremstyle{remark}
%\newtheorem{???}{???}
%\newtheorem*{???}{???}
%\newtheorem{???}{???}[???]
%\newtheorem{???}[???]{???}
%%%%%%%%%%%%%%%%%%%%%%%%%%%%%%%%%%%%%%%%%%%%%%
%% Please put your definitions here:        %%
%%%%%%%%%%%%%%%%%%%%%%%%%%%%%%%%%%%%%%%%%%%%%%

\endlocaldefs

\allowdisplaybreaks
\begin{document}
	\noindent {\footnotesize This is an original manuscript of an article published by Taylor \& Francis in the Journal of the American Statistical Association (JASA) on 10 October 2022, available online: \url{https://www.tandfonline.com/doi/full/10.1080/01621459.2022.2118602}.}
\begin{frontmatter}
	%%%%%%%%%%%%%%%%%%%%%%%%%%%%%%%%%%%%%%%%%%%%%%
	%%                                          %%
	%% Enter the title of your article here     %%
	%%                                          %%
	%%%%%%%%%%%%%%%%%%%%%%%%%%%%%%%%%%%%%%%%%%%%%%
	\title{Fast and numerically stable particle-based online additive smoothing: the AdaSmooth algorithm}
	%\title{A sample article title with some additional note\thanksref{T1}}
	\runtitle{Fast and stable online additive smoothing: the AdaSmooth algorithm}
	%\thankstext{T1}{A sample of additional note to the title.}
	
	\begin{aug}
		\author[A]{\fnms{Alessandro} \snm{Mastrototaro}\ead[label=e1]{alemas@kth.se}},
		\author[A]{\fnms{Jimmy} \snm{Olsson}\ead[label=e2]{jimmyol@kth.se}}
		\and
		\author[B]{\fnms{Johan} \snm{Alenl\"{o}v}\ead[label=e3]{johan.alenlov@liu.se}}
		
		%%%%%%%%%%%%%%%%%%%%%%%%%%%%%%%%%%%%%%%%%%%%%%
		%% Addresses                                %%
		%%%%%%%%%%%%%%%%%%%%%%%%%%%%%%%%%%%%%%%%%%%%%%
		
		\address[A]{Department of Mathematics, KTH Royal Institute of Technology, Stockholm, \printead{e1,e2}}
		\address[B]{Department of Computer and Information Science, Link\"{o}ping University, Link\"{o}ping, Sweden, \printead{e3}}
	\end{aug}
	
	\begin{abstract}
		We present a novel sequential Monte Carlo approach to online smoothing of additive functionals in a very general class of path-space models. Hitherto, the solutions proposed in the literature suffer from either long-term numerical instability due to particle-path degeneracy or, in the case that degeneracy is remedied by particle approximation of the so-called backward kernel, high computational demands. In order to balance optimally computational speed against numerical stability, we propose to furnish a (fast) naive particle smoother, propagating recursively a sample of particles and associated smoothing statistics, with an adaptive backward-sampling-based updating rule which allows the number of (costly) backward samples to be kept at a minimum. This yields a new, function-specific additive smoothing algorithm, AdaSmooth, which is computationally fast, numerically stable and easy to implement. The algorithm is provided with rigorous theoretical results guaranteeing its consistency, asymptotic normality and long-term stability as well as numerical results demonstrating empirically the clear superiority of AdaSmooth to existing algorithms.
	\end{abstract}
	
	\begin{keyword}
		\kwd{adaptive sequential Monte Carlo methods}
		\kwd{central limit theorem}
		\kwd{effective sample size}
		\kwd{particle-path degeneracy}
		\kwd{particle smoothing}
		\kwd{state-space models}
	\end{keyword}
	
\end{frontmatter}

%%%%%%%%%%%%%%%%%%%%%%%%%%%%%%%%%%%%%%%%%%%%%%
%%%% Main text entry area:

\section{INTRODUCTION}\label{sec:intro}
\subsection{Background} \label{sec:background}
 
We consider a general path-space model comprising general measurable spaces $(\set{X}_n,\alg{X}_n)_{n\in\nset}$ and unnormalized transition densities $(\ld{n})_{n \in \nset}$, where for every $n \in \nset$, $\ld{n}$ is a nonnegative measurable function on $\set{X}_n \times \set{X}_{n+1}$ such that $\sup_{x_n\in\set{X}_n}\int\ld{n}(x_n,x_{n+1})\,dx_{n+1}<\infty$, with $dx_{n+1}$ being some reference measure on $\alg{X}_{n+1}$.  In addition, we let $\Xinit$ be some possibly unnormalized density function on $\set{X}_0$. The transition densities $(\ld{n})_{n\in\nset}$, which are assumed to be tractable, induce multivariate probability densities
\begin{equation}\label{eq:smooth}
	\post{0:n}(x_{0:n})\propto\Xinit(x_0)\prod_{m=0}^{n-1}\ld{m}(x_m,x_{m+1}),\quad n\in\nset,
\end{equation}
where $x_{0:n} \eqdef (x_0, \dots, x_n)$ (being our generic notation for vectors) denotes an element in the Cartesian product $\set{X}_0 \times \cdots \times \set{X}_n$. The aim of the present paper is the development of \emph{sequential Monte Carlo} (SMC) \emph{methods} approximating \emph{online} (in a sense that will be specified below) the expectations 
\begin{equation}\label{eq:post}
	\post{0:n} \adds{n} \eqdef \int \adds{n}(x_{0:n})\post{0:n}(x_{0:n})\,dx_{0:n},\quad n\in \nset,
\end{equation}
for given \emph{additive state functionals} $(\adds{n})_{n \in \nset}$ such that \eqref{eq:post} is well defined. Starting with some measurable function $\adds{0}$ on $\set{X}_0$,  these functionals are defined recursively as 
\begin{equation} \label{eq:adds}
\adds{n+1}(x_{0:n+1})=\adds{n}(x_{0:n})+\addf{n}(x_{n:n+1}), 
\end{equation}
where $\addf{n}$ is some measurable function on $\set{X}_n \times \set{X}_{n+1}$. 

Our model framework, which was also considered by \citet{gloaguen:lecorff:olsson:2021}, has great generality. It covers, \eg, the \emph{Feynman--Kac models}, for which the transition densities can be decomposed as
\begin{equation}\label{eq:feynkac}
	\ld{n}(x_n,x_{n+1}) = q_n(x_n, x_{n+1}) \g{n+1}(x_{n+1}),
\end{equation}
where $\g{n+1}$ is some tractable potential function and $q_n$ some Markov transition density. These models are widely used in, \eg, statistics, physics, biology, and signal processing, and we refer to \citet{delmoral:2004} for a comprehensive treatment. Closely related to Feynman--Kac models are \emph{hidden Markov models} (HMMs), which constitute a modeling tool of significant importance in a variety of scientific and engineering disciplines \citep[see][]{cappe:moulines:ryden:2005}. A fully dominated HMM consists of a bivariate Markov chain $(X_n, Y_n)_{n \in \mathbb{N}}$ evolving on some product measurable space $(\set{X} \times \set{Y}, \alg{X} \varotimes \alg{Y})$ according to Markov transition densities in the form $q(x_n, x_{n + 1}) g(x_{n + 1},y_{n + 1})$, $(x_n, x_{n + 1}, y_{n + 1}) \in\set{X} \times \set{X} \times \set{Y}$, where $q$ and $g$ are themselves Markov transition densities (which may depend on $n$ in the general case) on $\set{X}\times\set{X}$ and $\set{X} \times \set{Y}$, respectively. The chain is initialized according to $\chi(x_0) g(x_0, y_0)$ for some density $\chi$ on $\set{X}$. In this model, only the marginal process $(Y_n)_{n \in \nset}$ is observed, whereas $(X_n)_{n \in \nset}$ is latent. The construction implies \citep[see][Section~2.2, for details]{cappe:moulines:ryden:2005} that (i) the marginal \emph{state process} $(X_n)_{n \in \nset}$ is itself a Markov chain with transition densities $q$ and that (ii) conditionally to the state process, the observations $(Y_n)_{n \in\nset}$ are independent with marginal densities given by $g(X_n, y_n)$, $n \in \nset$. Now, let $(y_n)_{n \in \nset}$ be a \emph{fixed} sequence of observations and define, for every $n \in \nset$, the transition density $\ld{n}(x_n, x_{n + 1}) =q(x_n, x_{n + 1}) \g{n + 1}(x_{n + 1})$, $(x_n, x_{n + 1})\in \set{X} \times \set{X}$ (with the dependence on $y_{n+1}$ being implicit in the notation); then, with these definitions, each density \eqref{eq:smooth} corresponds to the \emph{joint-smoothing distribution at time $n$}, \ie, the conditional density of $X_{0:n}$ given $Y_{0:n} = y_{0:n}$. In the HMM literature,  the computation of \eqref{eq:smooth} is referred as \emph{joint smoothing}, and in the absence of alternative terminology we adopt this term to the more general context considered in the present paper. Moreover, the above-described problem of computing online the expectations $(\post{0:n} \adds{n})_{n \in \nset}$ will be referred to as \emph{online additive smoothing}. 

Additive smoothing is of crucial importance in many applications in statistics and engineering. It is a key ingredient of most approaches to parameter learning in HMMs, \eg, when computing log-likelihood gradients (\emph{score functions}) via the \emph{Fisher identity} or the \emph{intermediate quantity} of the \emph{expectation-maximization} (EM) \emph{algorithm} \citep[see, \eg,][Chapters 10--11]{cappe:moulines:ryden:2005}. Scenarios of streaming data or limited computing resources call for online versions---such as the \emph{recursive maximum likelihood} \citep{legland:mevel:1997} and \emph{online EM} \citep{mongillo:deneve:2008,cappe:2009} \emph{methods}---of these approaches, which rely entirely on the possibility of computing incrementally expectations of form 
\eqref{eq:post}. 

However, as the transition densities $(\ld{n})_{n \in \nset}$ are typically complicated, the densities \eqref{eq:smooth} are known only up to normalizing constants in the general case, \ie, for models outside the classes of finite state-space models or models with a linear Gaussian structure. SMC methods---or, \emph{particle methods}---constitute a class of powerful genetic-type algorithms sampling recursively from sequences of distributions, defined on spaces of increasing dimension and known only up to normalizing constants, by means of sequential importance sampling and resampling techniques; see \citet{chopin:papaspiliopoulos:2020} for a recent introduction to this methodology and \cite{kantas:doucet:singh:chopin:2015} for a survey of its application to parameter inference in general state-space HMMs. In the following we provide an overview of the most popular approaches to SMC-based additive smoothing. Focus is entirely on \emph{online} algorithms, by which we mean algorithms such that (1) the sequence $(\post{0:n} \adds{n})_{n \in \nset}$ is approximated incrementally in a single sweep of the data and (2) the computational cost of each incremental update as well as the total storage demand is uniformly bounded in $n$. 

\subsection{Previous work}\label{sec:prevwork}
In the following all random variables are assumed to be well defined on a common probability space $(\varOmega, \mathcal{F}, \prob)$. We aim to approximate the sequence $(\post{0:n}\adds{n})_{n\in\nset}$ by propagating recursively a random sample $(\epart{0:n}{i}, \wgt{n}{i})_{i=1}^\N$ of particles (the $\epart{0:n}{i}$) and associated weights (the $\wgt{n}{i}$). Here $\N$ is the Monte Carlo sample size. For each $n$, the sample forms an empirical probability measure $\post[\N]{0:n} \eqdef \wgtsum{n}^{-1}\sum_{i=1}^{\N}\wgt{n}{i}\delta_{\epart{0:n}{i}}$, where $\wgtsum{n} \eqdef \sum_{i=1}^{\N}\wgt{n}{i}$ and $\delta_{\epart{0:n}{i}}$ is the Dirac measure located at $\epart{0:n}{i}$, which allows $\post{0:n}\adds{n}$ to be approximated by $\post[\N]{0:n}\adds{n} = \wgtsum{n}^{-1}\sum_{i=1}^{\N}\wgt{n}{i}\adds{n}(\epart{0:n}{i})$. 

Algorithm \ref{algo:sisr} describes how the particle sample is updated recursively in the \emph{auxiliary particle filter} (APF) introduced by \citet{pitt:shephard:1999} \citep[generalizing the \emph{bootstrap particle filter} proposed by][]{gordon:salmond:smith:1993} and here furnished with adaptive multinomial resampling. Using the APF requires a few algorithmic parameters to be set. The \emph{mutation step} (Line~\ref{line:mutation}) is determined by proposal transition density $\hd_n$ on $\set{X}_n \times \set{X}_{n+1}$ such that $\hd_n(x_n, \cdot)$ dominates $\ld{n}(x_n, \cdot)$ for all $x_n \in \set{X}_n$. As a part of the \emph{selection step} (Line~\ref{line:selection}), each particle weight is multiplied by some adjustment multiplier function $\am{n}$ allowing information concerning the density $\ld{n}$ to be taken into account when selecting the particles. At time zero the particle sample is initialized by standard importance sampling, \ie, by drawing independent particles $(\epart{0}{i})_{i=1}^\N$ from some proposal density $\nu$ and assigning each particle the weight $\wgt{0}{i} \eqdef \Xinit(\epart{0}{i}) / \nu(\epart{0}{i})$. Selection is absolutely essential to counteract weight degeneracy, and hence to stabilize numerically the estimator \citep[see, \eg,][Section~7.3]{cappe:moulines:ryden:2005}, but should not be applied unnecessarily; thus, we introduce a sequence of binary-valued random variables $(\res[\N]{n})_{n\in\nset}$ indicating whether resampling should be triggered or not. The sequence $(\res[\N]{n})_{n \in \nset}$ is assumed to be adapted to the filtration $(\partfiltbar{n})_{n \in \nset}$ generated by the particle filter, where $\partfiltbar{n} \eqdef \sigma((\epart{0}{i})_{i=1}^\N, (\epart{m}{i}, \I{m}{i})_{i=1}^\N, m \in \intvect{1}{n})$. Thus, these indicators may depend on the values of the importance weights, implying an adaptive resampling schedule, or, alternatively, on $n$ only, implying a deterministic ditto. In the first case, weight skewness is most commonly assessed using the \emph{effective sample size} (ESS, \citealt{liu:1996}) defined by $\ess_n \eqdef 1 / \sum_{i=1}^{\N}(\wgt{n}{i}/\wgtsum{n})^2$, which provides an estimator of the number of active particles at time $n$, taking on the values $1$ and $N$ in the cases of maximal (all the weights are equal to zero except one) and minimal (all weights are equal and non-zero) skewness, respectively. Using the ESS, one may let $\res[\N]{n}=\1{\{ \ess_n<\alpha\N\}}$, where $\alpha\in(0,1)$ is a design parameter, and this will be our primary choice.

\begin{algorithm}[htb]
	\caption{Adaptive APF.}
	\begin{algorithmic}[1]\label{algo:sisr}
		\REQUIRE $(\epart{0:n}{i},\wgt{n}{i})_{i=1}^\N$.
		\FOR{$i=1\rightarrow\N$}
			\IF {$\res[\N]{n}=1$}
				\STATE draw $\I{n+1}{i}\sim \catdist((\wgt{n}{\ell}\am{n}(\epart{n}{\ell}))_{\ell=1}^\N)$;\label{line:selection}
			\ELSE	
				\STATE set $\I{n+1}{i}\leftarrow i$;\label{line:noselection}
			\ENDIF
			\STATE draw $\epart{n+1}{i}\sim \hd_n(\epart{n}{\I{n+1}{i}},\cdot)$;\label{line:mutation}
			\STATE set $\epart{0:n+1}{i}\leftarrow(\epart{0:n}{\I{n+1}{i}},\epart{n+1}{i})$;
			\STATE weight $\wgt{n+1}{i}\leftarrow  \dfrac{\ld{n}(\epart{n}{\I{n+1}{i}}, \epart{n+1}{i})}{\hd_n(\epart{n}{\I{n+1}{i}}, \epart{n+1}{i})(\am{n}(\epart{n}{\I{n+1}{i}}))^{\res[\N]{n}}}(\wgt{n}{i})^{1-\res[\N]{n}}$;
		\ENDFOR
		\RETURN $(\epart{0:n+1}{i},\wgt{n+1}{i},\I{n+1}{i})_{i=1}^\N$.
	\end{algorithmic}
\end{algorithm}

In the case of additive functionals, $\post[\N]{0:n}\adds{n}$ can be updated incrementally and without storing the particle paths. Indeed, assuming that we have, at time $n$, computed the statistics $\tstat[i]{n} \eqdef \adds{n}(\epart{0:n}{i})$, $i \in \intvect{1}{\N}$, we can, after having executed Algorithm~\ref{algo:sisr}, easily update the same according to
\begin{equation}\label{eq:updatepoor}
	\tstat[i]{n+1}=\tstat[\I{n+1}{i}]{n}+\addf{n}(\epart{n}{\I{n+1}{i}},\epart{n+1}{i}). 
\end{equation}
The procedure is initialized by letting $\tstat[i]{0}\leftarrow\adds{0}(\epart{0}{i})$. Besides allowing for completely recursive and computationally fast updates, this technique has constant memory requirements; in order to perform \eqref{eq:updatepoor} and then compute the estimator $\post[\N]{0:n+1}\adds{n+1}=\wgtsum{n+1}^{-1}\sum_{i=1}^{\N} \wgt{n+1}{i}\tstat[i]{n+1}$, we only need access to $(\epart{n:n+1}{i}, \wgt{n+1}{i},\allowbreak \I{n+1}{i}, \tstat[i]{n})_{i=1}^\N$ rather than the whole particle paths, whose dimension increases indefinitely with time. 

Despite its ease of use and low computational requirements, the procedure described above is impractical due to the well known \emph{particle-path degeneracy phenomenon} caused by the resampling operation. More precisely, every time selection is performed, some particles will be propagated from the same parent; thus, by tracing the genealogical history of the particles we eventually encounter, assuming $ n $ is sufficiently large, a common ancestor for all the particles. In the case of multinomial resampling and under standard strong mixing assumptions on the model, \citet{koskela2020} showed that the expected number of generations back to the most recent common ancestor is $\ordo(\N)$. %, tightening the $\ordo(\N\log\N)$ estimate presented by \citet{jacob:murray:rubenthaler:2015}. 
 This result suggests that as $n$ grows, all particle paths will largely coincide, affecting greatly the reliability of the approximation and yielding a variance that grows quadratically with $n$; see, \eg, \citet{poyiadjis:doucet:singh:2011} for a discussion. An adaptive strategy based on, say, the ESS would still not prevent this particle-path depletion; in fact, such an approach is only able to defer an inevitable destiny, without ensuring stability for large $n$. In the light of these shortcomings, we will, following the terminology of \citet{douc:moulines:stoffer:2014}, refer to this approach as the \emph{poor man's smoother}. 

An alternative approach, addressing the particle-path degeneracy, is the \emph{fixed-lag smoothing} technique, proposed by \citet{kitagawa:sato:2001} and developed further by \citet{olsson:cappe:douc:moulines:2006}. The method obtains long-term stability at the cost of a bias that depends on the ergodicity properties of the model. The bias is controlled by a lag parameter, which should be neither too small, leading to significant bias, nor too large, leading to increased particle-path collapse and hence increase of the variance. Thus, designing a good lag is non-trivial in general. 

Another line of research aims to circumvent the particle-path degeneracy phenomenon using \emph{backward-sampling techniques}. Assume for a moment a Feynman--Kac model of type \eqref{eq:feynkac} and that the particle cloud is propagated using the standard bootstrap particle filter, corresponding to the parameterization $\res[\N]{n}=1$, $\am{n} \equiv1$ and $\hd_n \equiv q_n$ of Algorithm \ref{algo:sisr} \citep[see][Section 2.2, for the generalization to our setting]{gloaguen:lecorff:olsson:2021}. In this case, it is easily seen that the conditional probability $\backprob{n}{\N}(i,j)$ that $\I{n+1}{i}=j$ given $\epart{n+1}{i}$ and $(\epart{n}{\ell})_{\ell=1}^\N$, or, in other words, the probability that $\epart{n}{j}$ is the parent of $\epart{n+1}{i}$, is
\begin{equation}\label{eq:backprob}
	\backprob{n}{\N}(i,j)\propto \wgt{n}{j} q_n(\epart{n}{j},\epart{n+1}{i})\propto\wgt{n}{j}\ld{n}(\epart{n}{j},\epart{n+1}{i}).
\end{equation}
In the case of additive smoothing, \citet{delmoral:doucet:singh:2010} use the conditional backward probabilities \eqref{eq:backprob} to Rao-Blackwellize the update \eqref{eq:updatepoor}, yielding the alternative update 
\begin{equation}\label{eq:updateffbsm}
	\tstat[i]{n+1}=\sum_{j=1}^{\N}\backprob{n}{\N}(i,j)(\tstat[j]{n}+\addf{n}(\epart{n}{j},\epart{n+1}{i})).
\end{equation}
It is easily seen that this approach is nothing but a forward-only implementation of the so-called \emph{forward-filtering backward-smoothing} (FFBSm) algorithm \citep[see, \eg,][]{doucet:godsill:andrieu:2000}. 
%(The FFBSm algorithm, which is a batch-mode algorithm in its basic form, was originally derived in differently from the so-called \emph{backward decomposition} of the joint-smoothing distribution.) 
 Importantly, the Rao-Blackwellized update \eqref{eq:updateffbsm} avoids genealogical tracing and, as a consequence, the path-degeneracy problem. Still, a significant drawback of this approach is its $\ordo(\N^2)$ complexity, which is due to the fact that each update \eqref{eq:updateffbsm} involves the calculation of two sums of $\N$ terms (including the normalizing constant of $\backprob{n}{\N}(i, \cdot)$). 

In order to reduce the computational complexity of forward-only FFBSm, \citet{olsson:westerborn:2017} propose to replace the update \eqref{eq:updateffbsm} by a Monte Carlo estimate based on 
$\K \ll \N$ conditionally independent draws $(\bi{n+1}{i}{j})_{j=1}^{\K}$ from \eqref{eq:backprob}, leading to the update
\begin{equation} \label{eq:updateparis}
	\tstat[i]{n+1}=\frac{1}{\K}\sum_{j=1}^{\K}(\tstat[\bi{n+1}{i}{j}]{n}+\addf{n}(\epart{n}{\bi{n+1}{i}{j}},\epart{n+1}{i})).
\end{equation}
By adopting an accept-reject technique developed by \citet{douc:garivier:moulines:olsson:2009}, applicable whenever $q$ is uniformly bounded, the computational complexity of the resulting algorithm, referred to as the \emph{particle-based, rapid incremental smoother} (PaRIS), can be shown to be $\ordo(\K \N)$. The rejection-sampling approach was originally introduced for the \emph{forward-filtering backward-simulation} (FFBSi) \emph{algorithm} \citep{godsill:doucet:west:2004}, a batch-mode smoother that avoids the computational overload of FFBSm by means of additional simulation, and the PaRIS can in some sense be viewed as an online version of FFBSi. Importantly, \citet{olsson:westerborn:2017} establish that the PaRIS is asymptotically consistent (as $\N$ tends to infinity) and numerically stable for any fixed $\K \geq 2$, while $\K = 1$ leads to a particle-path degeneracy phenomenon reminiscent of that of the poor man's smoother. In fact, letting $\K \geq 2$ in the PaRIS yields an estimator with a linear variance growth in $n$, which is the optimal rate for a Monte Carlo approximation of additive functions on the path space, since some variance is inevitably added at each step. Even though the accept-reject approach implies an average $\ordo(\K \N)$ complexity, which is a significant improvement compared to forward-only FFBSm, backward sampling is still the computational bottleneck of the PaRIS. Indeed, in most applications the computational time of the PaRIS exceeds that of the poor man's smoother by at least one order of magnitude. 

%Finally, so far we discussed the forward-only FFBSm and PaRIS algorithms under the assumption that the model is of Feynman-Kac type. However, these algorithms were cast into the more general setting of Section~\ref{sec:background} by \citet[Section 2.2]{gloaguen:lecorff:olsson:2021}. 

\subsection{Our contribution}
In the next section we propose a novel additive smoothing algorithm which can be viewed as a golden mean between computational speed and stability. If the PaRIS may be viewed as a hybrid between the forward-only FFBSm and the FFBSi, our novel algorithm can rather be viewed as a hybrid between the adaptive poor man's smoother and the PaRIS. The main idea is to avoid, by adaptation, unnecessary selection in order to reduce the particle-path degeneracy in the poor man's smoother, while interleaving, possibly adaptively, the evolution of the particles with regular backward-sampling operations in order to repopulate, when needed, the support of the estimator. In this way we are able to keep the number of backward-sampling operations at a minimum, yielding an algorithm that is, as demonstrated by our simulations, at least one order of magnitude faster than the PaRIS, but with a fully comparable variance. Moreover, besides proving the consistency and asymptotic normality (as $N$ tends to infinity) of the estimators produced by the algorithm, we also establish the long-term numerical stability of the algorithm by showing that the asymptotic variance grows at most linearly with $n$. 

The rest of the paper is organized as follows. In Section~\ref{sec:algo} we present our novel algorithm and Section~\ref{sec:theory} is devoted to the theoretical analysis of the same. Besides benchmarking the proposed algorithm against existing online smoothers, the purpose of the simulation study in Section~\ref{sec:simul} is also to formulate guidelines on how to set its algorithmic parameters. In Section~\ref{sec:discuss} we conclude the paper.  The paper is furnished with an Appendix, Sections~\ref{sec:introsupp}--\ref{sec:propbound}, providing the proofs of the theoretical results in Section~\ref{sec:theory}, which tend to be quite technical and call for a more advanced notational machinery.

\section{A NOVEL ADAPTIVE SMOOTHER}\label{sec:algo}
In the previous section we introduced the $(\partfiltbar{n})_{n \in \nset}$-adapted sequence $(\res[\N]{n})_{n \in \nset}$ regulating the adaptive selection schedule of the APF. We now introduce another binary-valued random sequence $(\M[\N]{n})_{n\in\nset}$, where each $\M[\N]{n}$ is measurable with respect to the $\sigma$-field $\partfiltbar{n} \vee \sigma((\I{n+1}{i})_{i=1}^\N, (\res[\N]{m})_{m=0}^n)$ and such that $\M[\N]{n}=0$ whenever $\res[\N]{n}=0$. While the sequence $(\res[\N]{n})_{n \in \nset}$ determines the resampling times (corresponding to times $n$ for which $\res[\N]{n} = 1$), the sequence $(\M[\N]{n})_{n\in\nset}$ determines the times for which backward sampling is triggered ($\M[\N]{n} = 1$). By construction, the backward-sampling times form a subset of the resampling times. Loosely speaking, our approach is basically a poor man's smoother that regularly executes PaRIS-like updating steps according to the schedule determined by $(\M[\N]{n})_{n\in\nset}$. As before, the algorithm is propagating a weighted sample $(\epart{n}{i},\tstat[i]{n},\wgt{n}{i})_{i=1}^\N$ of particles and associated smoothing statistics. Whenever $\M[\N]{n}=0$, the smoothing statistics $(\tstat[i]{n})_{i=1}^\N$ are updated according to the equation \eqref{eq:updatepoor}; when instead $\M[\N]{n}=1$, implying that resampling has been applied, the statistics are updated by means of a superposition of an update \eqref{eq:updatepoor} and a PaRIS-like update. More specifically, after selection and mutation, each draw $\epart{n+1}{i}$ is linked to a randomly selected ancestor $\epart{n}{J_{n + 1}^i} $ and associated statistic $\tstat[J_{n + 1}^i]{n}$ in the previous generation, where $J_{n + 1}^i$ is drawn from $\backprob{n}{\N}(i, \cdot)$; after this, the smoothing statistic is updated according to the equation
\begin{equation}\label{eq:newupdate}
	\tstat[i]{n+1}=\frac{1}{2}\left(\tstat[\I{n+1}{i}]{n}+\addf{n}(\epart{n}{\I{n+1}{i}},\epart{n+1}{i})+\tstat[J_{n+1}^i]{n}+\addf{n}(\epart{n}{J_{n+1}^i},\epart{n+1}{i})\right).
\end{equation} 
As shown by \citet[Section~2.2]{gloaguen:lecorff:olsson:2021}, the index $J_{n + 1}^i$ can, using rejection sampling, be generated without calculation of the normalizing constant of $\backprob{n}{\N}(i, \cdot)$, at least under the mild assumption that the exists some positive function $c_n$ on $\set{X}_{n + 1}$ such that $\ld{n}(x_n, x_{n + 1}) \leq c_n(x_{n + 1})$ for all $(x_n, x_{n + 1}) \in \set{X}_n \times \set{X}_{n + 1}$. In that case, $J_{n + 1}^i$ can be simulated by generating, until acceptance, a candidate $J^*$ from $\catdist((\wgt{n}{i})_{i = 1}^\N)$ and accepting the same with probability $\ld{n}(\epart{n}{J^*}, \epart{n + 1}{i}) / c_n(\epart{n + 1}{i})$. This can be shown to yield an overall $\ordo(\N)$ computational complexity \citep[see][for details]{gloaguen:lecorff:olsson:2021,douc:garivier:moulines:olsson:2009}.

Algorithm \ref{algo:new}, which we have called \emph{AdaSmooth} to emphasize its adaptive nature, summarizes all these steps. Clearly, as AdaSmooth operates completely online, without any need of storing the full particle paths, it is enough to input the last particle components and associated weights, $(\epart{n}{i}, \wgt{n}{i})_{i=1}^\N$, into the APF (rather than the whole paths) and let it output only the updated ditto along with the associated ancestor indices, $(\epart{n + 1}{i}, \I{n+1}{i}, \wgt{n + 1}{i})_{i=1}^\N$; this operation is expressed compactly as $(\epart{n+1}{i}, \I{n+1}{i}, \wgt{n+1}{i})_{i=1}^\N\leftarrow\apf((\epart{n}{i},\wgt{n}{i})_{i=1}^\N)$ in Algorithm~\ref{algo:new}. 
\begin{algorithm}[htb]
\caption{AdaSmooth}
\begin{algorithmic}[1]\label{algo:new}
	\REQUIRE %Particle sample with associated statistics 
	$(\epart{n}{i}, \tstat[i]{n}, \wgt{n}{i})_{i=1}^\N$%.
	\STATE run $(\epart{n+1}{i}, \I{n+1}{i}, \wgt{n+1}{i})_{i=1}^\N\leftarrow\apf((\epart{n}{i},\wgt{n}{i})_{i=1}^\N)$;\label{linealgo:apf}
	\FOR{$i=1\rightarrow\N$}
		\IF{$\M[\N]{n}=1$}
			\STATE draw $J_{n+1}^{i}\sim \catdist((\wgt{n}{j}\ld{n}(\epart{n}{j},\epart{n+1}{i}))_{j=1}^\N)$;
			\STATE set $\tstat[i]{n+1}\leftarrow2^{-1}\big(\tstat[\I{n+1}{i}]{n}+\addf{n}(\epart{n}{\I{n+1}{i}},\epart{n+1}{i})+\tstat[J_{n+1}^{i}]{n}+\addf{n}(\epart{n}{J_{n+1}^{i}},\epart{n+1}{i})\big)$;
		\ELSE
			\STATE set $\tstat[i]{n+1}\leftarrow\tstat[\I{n+1}{i}]{n}+\addf{n}(\epart{n}{\I{n+1}{i}},\epart{n+1}{i})$;\label{line:forw}
		\ENDIF
	\ENDFOR
	\RETURN $(\epart{n+1}{i}, \tstat[i]{n+1}, \wgt{n+1}{i})_{i=1}^\N$.
\end{algorithmic}
\end{algorithm}
As explained above, backward sampling is used in Algorithm \ref{algo:new} as a means of guaranteeing the stochastic stability of the resulting estimators, and here the sequences $(\res[\N]{n})_{n \in \nset}$ and $(\M[\N]{n})_{n \in \nset}$ play a critical role; in Section~\ref{sec:theory} we will discuss the convergence (as $\N$ increases) and stability properties of Algorithm~\ref{algo:new}, by starting to analyze the case where these sequences are deterministic and then extending the analysis to adaptive policies. 

As we mentioned in Section~\ref{sec:intro}, a common approach is to let $\res[\N]{n}=\1{\{\ess_n<\alpha\N\}}$ for all $n \in \nset$, \ie, to resample only when the ESS, estimating of the number of active particles, falls below a given threshold $\alpha \N$ for some prescribed $\alpha\in (0,1)$. Similarly, the sequence $(\M[\N]{n})_{n \in \nset}$ regulating the backward-sampling schedule should be based on some criterion assessing the degeneracy of the particle paths. Since backward sampling is expensive, our goal is to allow $\M[\N]{n}$ to be zero as often as possible without jeopardizing the stability of the estimator. One way to do this is to monitor the number of distinct trajectories by keeping track of the ancestors of the current particles $(\epart{n}{i})_{i=1}^\N$ at the last time point $ n_0< n $ for which $ \M[\N]{n_0}=1 $. At time $n_0$ the trajectories were recombined through the updating rule \eqref{eq:newupdate} into rejuvenated statistics $(\tstat[i]{n_0+1})_{i=1}^\N$. Thus, even if backward sampling affects only the smoothing statistics and not the underlying particle system in AdaSmooth, we may forget about the particles' history before $n_0$ and imagine that a new particle genealogy is started at time $n_0+1$. We may then re-proceed without backward sampling until the number of distinct ancestors at time $n_0+1$ is too small. More precisely, whenever this number falls below some given threshold, we set $\M[\N]{n}=1$ and let the current particles be the ancestors of a new genealogy; otherwise we set $\M[\N]{n}=0$. 
In order to keep track of the ancestors at time $n_0 + 1$, we make use of the \emph{Enoch indices}  
\citep[the concept is borrowed from][]{olsson:douc:2017} at the same time point, defined recursively through 
\begin{equation}
	\enoch{n}{i}\eqdef\begin{cases}
		i\quad &\text{for }n=n_0 + 1,\\
		\enoch{n - 1}{\I{n}{i}}&\text{for }n > n_0 + 1,
	\end{cases}
\end{equation} 
for $i \in \intvect{1}{\N}$. With this definition, $\enoch{n}{i}$ is the index of the time $n_0 + 1$ ancestor of the particle $\epart{n}{i}$. Using the Enoch indices, a new genealogy is initialized by letting $\enoch{0}{i}=i$ for all $i\in\intvect{1}{\N}$; after this, the indices are updated recursively according to $\enoch{n+1}{i}=\enoch{n}{\I{n+1}{i}}$, and once the number of distinct elements among $(\enoch{n+1}{i})_{i=1}^\N$ falls below a threshold $\beta\N$, for some prescribed $\beta\in(0,1)$, we set $\M[\N]{n}=1$ and reinitialize $\enoch{n+1}{i}=i$ for all $i\in\intvect{1}{\N}$. We summarize this adaptive policy for determining the sequence $(\M[\N]{n})_{n \in \nset}$ in Algorithm \ref{algo:backdraws}. The parameter $\beta$ determines the fraction of distinct Enoch indices below which we decide to activate backward sampling. Clearly, Algorithm~\ref{algo:backdraws} is not a stand-alone routine and has to be embedded in Algorithm~\ref{algo:new}, immediately after Line~\ref{linealgo:apf}. Having established also a criterion handling path degeneracy, we have now obtained a fully adaptive version of Algorithm~\ref{algo:new}.

\begin{algorithm}
	\caption{Generation of adaptive backward-sampling schedule $(\M[\N]{n})_{n \in \nset}$.}
	\begin{algorithmic}[1]\label{algo:backdraws}
		\REQUIRE 
		$(\enoch{n}{i})_{i=1}^\N$, $(\I{n+1}{i})_{i=1}^\N$, $\beta\in(0,1)$
		\STATE set $\enoch{n+1}{i}\leftarrow\enoch{n}{\I{n+1}{i}}$ \textbf{for} $i=1\rightarrow\N$;
		\IF{$\res[\N]{n}=1$ and $\lvert(\enoch{n+1}{i})_{i=1}^\N\rvert<\beta\N$}
			\STATE set $\M[\N]{n}\leftarrow1$;
			\STATE set $\enoch{n+1}{i}\leftarrow i$ \textbf{for} $i=1\rightarrow\N$;
		\ELSE
			\STATE set $\M[\N]{n}\leftarrow0$;
		\ENDIF
		\RETURN $(\enoch{n+1}{i})_{i=1}^\N$, $\M[\N]{n}$
	\end{algorithmic}
\end{algorithm}

\section{THEORETICAL RESULTS}\label{sec:theory}
\subsection{Deterministic selection and backward-sampling schedules}\label{sec:detsched}
Our initial analysis of Algorithm~\ref{algo:new} will be conducted under the assumption that the selection and backward-sampling schedule is deterministic.
\begin{assumption}\label{assum:seqdet}
	For all $n\in\nset$, $\res[\N]{n}=\res{n}$ and $ \M[\N]{n}=\M{n} $, where the sequences $(\res{n})_{n \in \nset}$ and $(\M{n})_{n \in \nset}$ are deterministic and such that $\M{n}=0$ whenever $\res{n}=0$.
\end{assumption}
In this setting we establish two results: the almost-sure convergence (Theorem~\ref{thm:as}) of the estimator $\wgtsum{n}^{-1}\sum_{i=1}^{\N}\wgt{n}{i}\tstat[i]{n}$, where $(\tstat[i]{n}, \wgt{n}{i})_{i=1}^\N$ is produced by $n$ steps of Algorithm~\ref{algo:new}, as well as a central limit theorem (Theorem~\ref{thm:clt}), whose asymptotic variance is subject to further investigation regarding the stochastic stability of the algorithm. Proofs are found in the Appendix. For every $n\in\nset$ we define the weight function
\begin{equation}
	w_n\langle\res{n}\rangle:\set{X}_n\times\set{X}_{n+1}\ni (x,x')\mapsto \frac{\ld{n}(x,x')}{(\am{n}(x))^{\res{n}}\hd_n(x,x')}.
\end{equation}
In addition, we set
\begin{equation}
	w_{-1}:\set{X}_0\ni x\mapsto \frac{\Xinit(x)}{\nu(x)}.
\end{equation}
\begin{assumption}\label{assum:1}
	For all $(\rho_n)_{n \in \nset}$ the weight functions $(w_n\langle\res{n}\rangle)_{n\in \nset}$ and $w_{-1}$ are bounded. So are also the auxiliary weight functions $(\am{n})_{n\in \nset}$. % and measurable.
\end{assumption}
In the following we define, for every $n \in \nset$, $\set{H}_n$ as the set of additive functionals $\adds{n}$ in the form \eqref{eq:adds} with bounded terms. In addition, we let $\res{0:n-1} = (\res{0},\dots,\res{n-1})$ and $\M{0:n-1} = (\M{0},\dots,\M{n-1})$. 
\begin{theorem}[strong consistency]\label{thm:as}
	Let Assumptions \ref{assum:seqdet} and \ref{assum:1} hold. Then for every $n\in\nset$ and $\adds{n}\in\set{H}_n$,
	\begin{equation}
		\lim_{\N\rightarrow\infty}\sum_{i=1}^{\N}\frac{\wgt{n}{i}}{\wgtsum{n}}\tstat[i]{n}=\post{0:n}\adds{n},\quad\mbox{$\prob$-a.s.}
	\end{equation}
\end{theorem}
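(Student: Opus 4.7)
I would proceed by induction on $n$, proving the slightly strengthened joint statement: for every bounded measurable $f$ on $\set{X}_n$ and every $\adds{n}\in\set{H}_n$,
\begin{equation*}
\frac{1}{\wgtsum{n}}\sum_{i=1}^{\N}\wgt{n}{i}f(\epart{n}{i})\tstat[i]{n}\convas\int f(x_n)\adds{n}(x_{0:n})\post{0:n}(x_{0:n})\,dx_{0:n},
\end{equation*}
from which the theorem follows by setting $f\equiv 1$. The base case $n=0$ is a direct consequence of the strong law of large numbers applied to the i.i.d.\ importance sample $(\epart{0}{i})_{i=1}^{\N}$, using the boundedness of $w_{-1}$ from Assumption~\ref{assum:1} and of $\adds{0}$.

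\textbf{Inductive step.} Assume the joint statement at time $n$. I would condition on the filtration $\partfiltbar{n}$ generated by the particle system up to time $n$, with the backward indices $(J_{n+1}^i)_{i=1}^{\N}$ integrated out. Conditionally on $\partfiltbar{n}$, the triples $(\I{n+1}{i},\epart{n+1}{i},J_{n+1}^i)$ are i.i.d.\ across $i$, so the empirical average of $\wgt{n+1}{i}f(\epart{n+1}{i})\tstat[i]{n+1}$ concentrates around its conditional mean via a Hoeffding-type bound combined with the Borel--Cantelli lemma. This is legitimate because, for each fixed $n$, all quantities in sight are uniformly bounded under Assumption~\ref{assum:1}; in particular, $|\tstat[i]{n}|\leq\supn{\adds{0}}+\sum_{m=0}^{n-1}\supn{\addf{m}}$. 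The crux of the step is then to verify that, in both cases $\M{n}=0$ and $\M{n}=1$, the conditional mean agrees with the same expression,
\begin{equation*}
\frac{1}{C_n^{\N}}\sum_{j=1}^{\N}(\wgt{n}{j})^{1-\res{n}}\int\ld{n}(\epart{n}{j},x)f(x)\bigl[\tstat[j]{n}+\addf{n}(\epart{n}{j},x)\bigr]\,dx,
\end{equation*}
for a common normaliser $C_n^{\N}$. In the case $\M{n}=0$ this follows by a standard APF calculation in which the proposal density $\hd_n$ cancels against the weight formula of Algorithm~\ref{algo:sisr}. In the case $\M{n}=1$ (so $\res{n}=1$), the additional backward-sampled term $\tstat[J_{n+1}^i]{n}+\addf{n}(\epart{n}{J_{n+1}^i},\epart{n+1}{i})$ is Rao-Blackwellised using $\backprob{n}{\N}(i,j)\propto\wgt{n}{j}\ld{n}(\epart{n}{j},\epart{n+1}{i})$; after integrating out $\epart{n+1}{i}$ the resulting expression coincides with that obtained for $\M{n}=0$, so the factor $1/2$ in~\eqref{eq:newupdate} merges two identical contributions into one. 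Applying the induction hypothesis with the test functions $x_n\mapsto\int\ld{n}(x_n,x)f(x)\,dx$ and $x_n\mapsto\int\ld{n}(x_n,x)f(x)\addf{n}(x_n,x)\,dx$ then identifies the limit of the conditional mean with the desired smoothing integral at time $n+1$; a ratio-of-limits argument against the denominator $\wgtsum{n+1}/\N$ (handled in parallel by taking $f\equiv 1$ and the trivial additive statistic) closes the induction.

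\textbf{Main obstacle.} The delicate point is the identity between the $\M{n}=0$ and $\M{n}=1$ conditional means. It hinges on an explicit cancellation between the proposal density $\hd_n$ appearing in $\wgt{n+1}{i}$ and the denominator of $\backprob{n}{\N}(i,\cdot)$, together with careful bookkeeping of the two different normalisers $\wgtsum{n}$ and $\sum_{\ell}\wgt{n}{\ell}\am{n}(\epart{n}{\ell})$ introduced by the APF selection step. Once this identity is in hand, the two branches of Algorithm~\ref{algo:new} merge into a single inductive relation and the argument closes uniformly in the deterministic schedule $(\res{n},\M{n})_{n\in\nset}$.
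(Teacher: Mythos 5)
Your proposal is correct in substance and its core engine is the same as the paper's: a conditional-expectation identity showing that, after Rao--Blackwellising the backward index over $\backprob{n}{\N}(i,\cdot)\propto\wgt{n}{j}\ld{n}(\epart{n}{j},\epart{n+1}{i})$, the backward-sampled half of the update \eqref{eq:newupdate} contributes exactly the same conditional mean as the forward half (this is the paper's Lemma~\ref{lemma:2}), followed by a Hoeffding-type concentration bound given $\partfiltbar{n}$ and Borel--Cantelli (the paper's Theorem~\ref{thm:hoef} and Corollary~\ref{cor:hoeffding}). Where you genuinely diverge is in the treatment of non-systematic selection: the paper proves everything only for systematic resampling and then handles a general deterministic schedule $(\res{n})_{n\in\nset}$ by building an extended path-space model whose ``states'' are the blocks between consecutive resampling times, showing distributional equivalence (Proposition~\ref{prop:original:vs:extended}) and importing the systematic-case result; you instead carry both branches $\res{n}\in\{0,1\}$ directly through the induction. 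Your route is more elementary and avoids the block construction, at the price of tracking two normalisers and the fact that for $\res{n}=0$ the summands are conditionally independent but \emph{not} identically distributed (Hoeffding still applies, but you should not claim i.i.d.-ness there); the paper's route pays the setup cost of the extended model once but then reuses it verbatim for the CLT and the variance bounds, which is why it is the better investment in context. Two small repairs: (i) your displayed conditional mean is off when $\res{n}=1$ --- the weight on particle $j$ should be $\wgt{n}{j}$ in \emph{both} cases, since under resampling the factor $\wgt{n}{j}\am{n}(\epart{n}{j})$ from the selection probabilities combines with the $1/\am{n}(\epart{n}{j})$ in the new weight, leaving $\wgt{n}{j}$ and the normaliser $\sum_{\ell}\wgt{n}{\ell}\am{n}(\epart{n}{\ell})$, so $(\wgt{n}{j})^{1-\res{n}}$ is not the right exponent; (ii) your strengthened induction hypothesis should carry a second, ``pure'' test function alongside $f\cdot\tstat[i]{n}$ (as the paper does with the pair $(\testf[n],\testfp[n])$), since the term $\lk{n}(\addf{n}f)$ generated at each step does not multiply the smoothing statistic. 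Neither issue affects the viability of the argument.
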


\begin{theorem}[asymptotic normality]\label{thm:clt}
	Let Assumptions \ref{assum:seqdet} and \ref{assum:1} hold. Then for every $n\in\nset$ there exists a positive functional $\sigma_n \langle\res{0:n-1},\M{0:n-1}\rangle$ on $\set{H}_n$ such that for every $\adds{n}\in\set{H}_n$, as $\N \rightarrow\infty$,
	\begin{equation}
		\sqrt{\N}\left(\sum_{i=1}^{\N}\frac{\wgt{n}{i}}{\wgtsum{n}}\tstat[i]{n}-\post{0:n}\adds{n}\right)\convd \sigma_n\langle\res{0:n-1},\M{0:n-1}\rangle (\adds{n}) Z,
	\end{equation}
	where $Z$ has standard Gaussian distribution.
\end{theorem}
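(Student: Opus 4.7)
The plan is to prove \autoref{thm:clt} by induction on $n$, in parallel with the almost-sure convergence result, working on the \emph{extended} space $\set{X}_n \times \rset$ with the weighted empirical measure whose atoms are the pairs $(\epart{n}{i}, \tstat[i]{n})$ and weights $(\wgt{n}{i})_{i=1}^\N$. The crucial observation is that, although the smoothing statistics $\tstat[i]{n+1}$ are correlated through shared parent indices, the AdaSmooth recursion preserves the test-function class
\[
F(x, \tau) = \tau \, a(x) + b(x),\quad a, b \text{ bounded on } \set{X}_n,
\]
in the sense that for any such $F$ the conditional expectation of $F(\epart{n+1}{i}, \tstat[i]{n+1})$ given $\partfiltbar{n}$ integrates to a function of the same form at time $n$. \autoref{thm:clt} then falls out by specializing to $F(x, \tau) = \tau$, $a \equiv 1$, $b \equiv 0$, and identifying the limit with $\post{0:n} \adds{n}$ via \autoref{thm:as}.

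The base case $n = 0$ is the standard CLT for i.i.d.\ importance samples, valid under \autoref{assum:1} since $w_{-1}$ is bounded. For the inductive step, I would decompose the normalized error at time $n+1$ into three contributions obtained by successive conditioning on nested $\sigma$-algebras:
\begin{itemize}
\item an \emph{APF contribution}, arising from selection and mutation in \autoref{algo:sisr}; given $\partfiltbar{n}$, the pairs $(\epart{n+1}{i}, \I{n+1}{i})$ are conditionally i.i.d., so the standard auxiliary-particle-filter martingale argument applies, bifurcating according to $\res{n} \in \{0, 1\}$;
\item a \emph{backward-sampling contribution} (only when $\M{n} = 1$), which is a normalized sum of conditionally independent draws $J_{n+1}^i \sim \backprob{n}{\N}(i, \cdot)$ given $\partfiltbar{n+1}$; the factor $1/2$ in~\eqref{eq:newupdate} ensures centering, and the conditional Lindeberg--Feller CLT yields asymptotic normality;
\item an \emph{inductive contribution}, which carries the error from time $n$ evaluated at the function obtained by integrating out the APF and backward-sampling randomness; by the closure property above, this function belongs to the test-function class at time $n$, so the inductive hypothesis applies.
\end{itemize}
Because these three terms live on nested $\sigma$-algebras, an iterated application of Slutsky's theorem merges them into a single centered Gaussian limit whose variance is the sum of the three contributions, yielding a recursive definition of $\sigma_{n+1}\langle \res{0:n}, \M{0:n}\rangle$ on top of $\sigma_{n}\langle \res{0:n-1}, \M{0:n-1}\rangle$.

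The principal technical obstacle will be verifying the Lindeberg condition for the backward-sampling contribution, which requires uniform control of $\sup_i |\tstat[i]{n}|/\sqrt{\N}$. This control is nevertheless immediate from the deterministic a.s.\ bound $|\tstat[i]{n}| \leq |\adds{0}|_\infty + n \hbd$, which follows from \autoref{assum:1} together with the observation that every AdaSmooth update is either a copy of a previous statistic plus a bounded increment or an average of two such updates, so $L^\infty$ boundedness is preserved through the recursion. A secondary subtlety is to verify that the backward kernel~\eqref{eq:backprob} maps the above test-function class into itself upon conditional expectation; this is a consequence of its being a probability kernel and of the linearity of the class in $\tau$, which is exactly what makes the induction close.
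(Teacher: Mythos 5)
Your skeleton — induction over $n$ on the linear test-function class $(x,\tau)\mapsto \tau a(x)+b(x)$, a martingale-plus-inductive decomposition of the normalized error, a conditional triangular-array CLT for the new randomness, and a characteristic-function/Slutsky merge — is exactly the architecture of the paper's proof (Lemma~\ref{lemma:2} is your ``closure property'', and the array $\Delta_\N^1+\Delta_\N^2$ is your decomposition). However, there are two substantive gaps. First, and most importantly, the induction as you state it does not close. To apply the conditional Lindeberg--Feller (or Douc--Moulines Theorem~A.3) theorem to the martingale term you must prove that the sum of conditional \emph{variances} converges in probability, and that sum involves $\sum_{i}\frac{\wgt{n}{i}}{\wgtsum{n}}(\tstat[i]{n})^2 f(\epart{n}{i})$ — a functional \emph{quadratic} in $\tau$, hence outside the linear class $\tau a+b$ that your induction hypothesis controls. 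Establishing its convergence requires a separate induction (the paper's Lemma~\ref{lemma:3}), and the limit is \emph{not} the naive $\post{n}(\tstatletter_n^2\adds{n}f)$: it carries an additional term $\sqc{n}(f)$ recording the residual Monte Carlo variance left over from earlier backward-sampling steps, and this term is precisely what propagates the $\prod_k(1+\M{k})^{-1}$ structure into the asymptotic variance. You identify the Lindeberg condition as the principal obstacle, but that condition is trivial given the deterministic bound $|\tstat[i]{n}|\le \supn{\adds{0}}+n\hbd$; the real work is the second-moment lemma, which your proposal omits entirely. (Relatedly, the factor $1/2$ in \eqref{eq:newupdate} does not ``ensure centering''; centering comes from subtracting the Rao--Blackwellized conditional expectation $\sum_j\backprob{n}{\N}(i,j)(\tstat[j]{n}+\addf{n}(\epart{n}{j},\epart{n+1}{i}))$, which is nonzero.)

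Second, your treatment of the $\res{n}=0$ branch is underspecified. When no resampling occurs, the pairs $(\epart{n+1}{i},\I{n+1}{i})$ are conditionally independent but \emph{not} identically distributed given $\partfiltbar{n}$ (each particle mutates from its own parent and inherits $\wgt{n}{i}$ multiplicatively), so the conditional variance of the array involves the squared-weight measures $\N\sum_i(\wgt{n}{i}/\wgtsum{n})^2(\cdot)$ evaluated against $\tstat[i]{n}$ and $(\tstat[i]{n})^2$ — yet another family of functionals needing their own induction hypotheses. The paper avoids this entirely by a device absent from your proposal: it proves the CLT only for systematic resampling (Theorem~\ref{thm:cltsupp}) and then reduces the general deterministic schedule to that case via an extended path-space model whose ``states'' are the particle trajectories between consecutive resampling times (Proposition~\ref{prop:original:vs:extended}). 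Your direct ``bifurcation'' route can probably be made to work in the style of Douc and Moulines, but it multiplies the bookkeeping rather than reducing it, and as written it rests on an i.i.d.\ claim that is false in the no-resampling case.
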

The almost sure convergence established by Theorem~\ref{thm:as} is in fact a direct consequence of a stronger result in the form of a \emph{Hoeffding-type exponential concentration inequality} for finite sample sizes $\N$; see Section~\ref{sec:proof:3.1} for details. An explicit expression of the asymptotic variance $\sigma_n^2(\adds{n})\langle\res{0:n-1},\M{0:n-1}\rangle$ of Theorem~\ref{thm:clt} is provided in Section~\ref{sec:proof:3.2}. Next, we establish, again under Assumption~\ref{assum:seqdet}, the stochastic stability of Algorithm~\ref{algo:new} by bounding $(\sigma^2(h_n)\langle\res{0:n-1},\M{0:n-1}\rangle / n)_{n \in \nset}$ uniformly in $n$. Again, the proof is provided in the Appendix, Section~\ref{sec:proof:3.3}. The analysis proceeds in two steps, where we in the first step analyse the algorithm in the case of systematic selection at each time point, and then, in the second step, extend these results to non-systematic, but still deterministic, selection schedules using an auxiliary model extension. In the first step, our proofs build upon recent works on the PaRIS by \citet{olsson:westerborn:2017} and \citet{gloaguen:lecorff:olsson:2021}; however, the fact that the updating rule \eqref{eq:newupdate} combines forward as well as backward indices induces a complex dependence structure that makes the adaptation highly non-trivial.  

The following assumption is used to control the stochastic stability of the marginal particle approximations produced by the APF, by bounding uniformly the distance between any two consecutive resampling times.
\begin{assumption} \label{assum:dist}
	There exists $\maxd\in \nsetpos$ such that for all $n\in\nset$, $\min \{ k\in\nsetpos : \res{n+k}=1\} \le \maxd$, \ie, the distance between two resampling times is always less than or equal to $\maxd$.
\end{assumption}
For any bounded measurable function $h$, let $\supn{h}$ denote the supnorm of $h$. Our stability analysis will be carried through under the following---now classical---\emph{strong mixing assumption}, which typically require the state spaces to be compact sets \citep[see \eg][Section~4]{delmoral:2004}.  
\begin{assumption}\label{assum:compact}
	There exist constants $0<\hklow<\hkup<\infty$ such that for every $n\in \nset $ and $(x,x')\in \set{X}_n\times \set{X}_{n+1}$, $\hklow \le \ld{n}(x,x')\le \hkup$. Moreover, there exist positive constants $\mdup$ and $\amup$ such that for all $n \in \nset$ and $\rho \in \{0, 1\}$, $\supn{w_n\langle\rho \rangle} \le \mdup$ and $\supn{\am{n}} \le \amup$. In addition, $\supn{w_{-1}}\le \mdup$. 
\end{assumption}

For every $n\in\nsetpos$ and $j\in\nset$, we define $r_n \eqdef \sum_{m=0}^{n-1}\res{m}$, \ie, the number of selection operations before time $n$, and $n_j \eqdef \min\{n\in\nset : r_{n+1}=j+1\}$, the time of the $(j + 1)$th selection operation. 

\begin{theorem}\label{thm:bound}
	Let Assumptions~\ref{assum:seqdet}, \ref{assum:dist} and \ref{assum:compact} hold. Then there exist positive constants $C_1$ and $C_2$, both depending on $\hklow,\hkup,\mdup$ and $\maxd$, such that for all additive functionals $(\adds{n})_{n \in \nset}$ in the form \eqref{eq:adds} for which there exists $\hbd > 0$ such that for all $n \in \nset$, $\supn{\addf{n}}\le \hbd$ and $\supn{\adds{0}+\addf{0}}\le \hbd$, 
	\begin{multline} 
	\limsup_{n\rightarrow\infty}\frac{1}{n}\sigma_n^2\langle\res{0:n-1},\M{0:n-1}\rangle(\adds{n}) \\
	\le \maxd^2\hbd^2\amup\left( C_1+C_2\lim_{n\rightarrow\infty}\frac{1}{r_n}\sum_{m=0}^{r_n-1}\sum_{\ell = 0}^{m}\prod_{j=\ell}^{m}(1+\M{n_j})^{-1}\right)\label{eq:asvarbound}.
	\end{multline}
\end{theorem}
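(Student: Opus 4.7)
The plan is to reduce the non-systematic case to systematic selection via an auxiliary-model extension, and to exploit the explicit form of the asymptotic variance produced in the proof of Theorem~\ref{thm:clt} (see Section~\ref{sec:proof:3.2}). That formula admits a decomposition as a sum, over ``perturbation times'' $\ell$, of Monte-Carlo fluctuations injected at step $\ell$ and propagated through the AdaSmooth dynamics up to time $n$. The aim is to bound the contribution of each propagated fluctuation and then apply Cesàro averaging.

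First I would treat the systematic case $\res{n}\equiv 1$, in which $n_j=j$ and $r_n=n$, following the template of \citet{olsson:westerborn:2017} and \citet{gloaguen:lecorff:olsson:2021} adapted to the hybrid updating rule \eqref{eq:newupdate}. At a step with $\M{n}=0$ the statistics evolve by \eqref{eq:updatepoor}, and Assumption~\ref{assum:compact} controls the amplification of the propagated squared fluctuation by a constant that is uniform in $n$; this accounts for the additive constant $C_1$. At a step with $\M{n}=1$, the update \eqref{eq:newupdate} averages a ``poor-man'' copy (indexed by $\I{n+1}{i}$) with a Rao-Blackwellized backward copy (indexed by $J_{n+1}^i\sim\backprob{n}{\N}(i,\cdot)$), each weighted by $1/2$. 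Conditioning on $\partfiltbar{n}\vee\sigma((\epart{n+1}{i},\I{n+1}{i})_{i=1}^\N)$, the draws $(J_{n+1}^i)_{i=1}^\N$ are conditionally independent across $i$ and, by the strong mixing provided by Assumption~\ref{assum:compact} applied to the backward kernel \eqref{eq:backprob}, the backward copy becomes asymptotically decorrelated from the poor-man copy as $\N\to\infty$. Averaging two such asymptotically uncorrelated copies contracts the variance of the propagated perturbation by a factor $1/2=(1+\M{n})^{-1}$. Iterating this one-step bound from $\ell$ to $m$ yields the announced product $\prod_{j=\ell}^{m}(1+\M{n_j})^{-1}$, after which summing over $\ell\le m$ and normalising by $n$ completes the systematic case.

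The non-systematic case is handled by a standard auxiliary-model construction: each block of consecutive times $(n_j+1,\dots,n_{j+1})$ between two resampling times of the original model is collapsed into a single step of an auxiliary path-space model whose transition densities, potentials and adjustment multipliers are suitable compositions of the originals. Under Assumption~\ref{assum:dist} every block has length at most $\maxd$, so the auxiliary model satisfies the same form of strong-mixing assumption as Assumption~\ref{assum:compact}, only with constants $\hklow^{\maxd}$, $\hkup^{\maxd}$ and $\mdup^{\maxd}$. Applying the systematic-selection analysis to this auxiliary model yields \eqref{eq:asvarbound}, with $r_n$ indexing blocks rather than individual time steps, and with the prefactor $\maxd^2\hbd^2$ accounting for the fact that each block contributes to the auxiliary additive functional a term bounded by $\maxd\hbd$ (as a sum of at most $\maxd$ original increments). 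The auxiliary multiplier $\amup$ then enters as the uniform bound on the composed adjustment multipliers used inside the blocks.

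The main obstacle will be justifying rigorously the $(1+\M{n_j})^{-1}$ contraction in presence of the joint dependence between $\I{n+1}{i}$ and $J_{n+1}^i$, which both encode ancestry information about $\epart{n+1}{i}$ but through different mechanisms; only $J_{n+1}^i$ carries a Rao-Blackwellized sample from the true backward kernel. The plan is to handle this by first conditioning on $\partfiltbar{n}\vee\sigma((\epart{n+1}{i},\I{n+1}{i})_{i=1}^\N)$, so that only the backward indices remain random and independent across $i$, and then using the strong-mixing bounds of Assumption~\ref{assum:compact} to show that, after expanding the square in \eqref{eq:newupdate}, the cross term between the forward and backward copies is asymptotically dominated by the sum of the two squared terms; this leaves the announced factor $1/2$ in the limiting variance and, by induction over the block indices, the product structure in \eqref{eq:asvarbound}.
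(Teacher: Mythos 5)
Your overall architecture is the same as the paper's: treat systematic selection first, then collapse the blocks between consecutive resampling times into single steps of an extended path-space model, observe that Assumption~\ref{assum:dist} makes the block length at most $\maxd$ so that the mixing constants become $\hklow^{\maxd}$, $\hkup^{\maxd}$, $\mdup^{\maxd}$ and the block increments are bounded by $\maxd\hbd$ (whence the prefactor $\maxd^2\hbd^2$). That part of the plan is sound and is exactly how the paper proceeds (Proposition~\ref{prop:original:vs:extended} and Section~\ref{sec:proof:3.3}).

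The gap is in the central step, which you yourself flag as the main obstacle: the derivation of the product $\prod_{j=\ell}^{m}(1+\M{n_j})^{-1}$. Your proposed mechanism --- that the forward copy (indexed by $\I{n+1}{i}$) and the backward copy (indexed by $J_{n+1}^i$) are ``asymptotically decorrelated'', so that averaging them contracts the propagated variance by $(1+\M{n})^{-1}$, with the cross term ``asymptotically dominated'' by the squared terms --- does not hold and would not produce the stated bound. The two copies are strongly and persistently correlated: both are functions of the shared particle $\epart{n+1}{i}$ and of the entire time-$n$ cloud, and the cross term in the conditional second moment is of exactly the same order as the squared terms. In the paper's computation (Lemma~\ref{lemma:3}) this cross term converges to $\post{n}\lk{n}(\tstat{n+1}^2\adds{n+1}\testf[n+1])$-type quantities, i.e.\ to the square of the Rao--Blackwellized conditional mean $\bkm{n}(\tstat{n}\adds{n}+\addf{n})$; it is computed exactly, not discarded. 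The correct mechanism is a conditional variance decomposition: only the \emph{conditional variance} of the backward draw given the new particle --- the term $\bkm{n}(\tstat{n}\adds{n}+\addf{n}-\tstat{n+1}\adds{n+1})^2$ together with the accumulated functional $\sqc{n}$ --- is damped by $(1+\M{n})^{-1}$ at each step, while the variance of the conditional mean passes through undamped and is controlled separately. If one instead bounds the cross term by $2ab\le a^2+b^2$, as your ``domination'' step suggests, the factor $1/2$ is lost entirely and no contraction survives. Note also that the forward and backward copies are not even conditionally identically distributed in the general APF setting (the ancestor law is proportional to $\wgt{n}{j}\am{n}(\epart{n}{j})\hd_n(\epart{n}{j},\cdot)$, the backward law to $\wgt{n}{j}\ld{n}(\epart{n}{j},\cdot)$), which is why $w_m$ and $\bkm{m}w_m$ both appear in the limiting variance.

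Consequently the systematic-case bound cannot be obtained by a one-step contraction heuristic; it requires the exact asymptotic variance formula from the CLT (the six-term expression in Corollary~\ref{corollary:1}, established by the induction of Theorem~\ref{thm:cltsupp} and Lemma~\ref{lemma:3}), in which the products $\prod_{k=\ell}^{m}(1+\M{k})^{-1}$ already appear explicitly through the recursion for $\sqc{n}$. One then bounds the six terms separately; the terms contributing to $C_1$ are controlled not merely by boundedness of the weights but by the geometric forgetting of the backward and retro-prospective kernels (the $\mr^{|k-m|-1}$ decay from Lemma~D.3 of \citet{gloaguen:lecorff:olsson:2021}), without which the fluctuation $\BFcent{m}{n}\adds{n}$ of an additive functional of $n$ terms could grow linearly in $n$ and the time-normalized variance would not be bounded. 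Your plan alludes to ``uniform control of amplification'' but does not identify this forgetting argument, which is indispensable for the finiteness of $C_1$.
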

As discussed above, our aim is to establish the stability of Algorithm \ref{algo:new} by bounding the right-hand side of \eqref{eq:asvarbound} uniformly in $n$. However, such a bound is not possible for all sequences $ (\M{n_j})_{j \in \nset} $; indeed, in the case where $\M{n_j} = 0$ for all $j \in \nset$ it holds that
\begin{equation}
	\lim_{n\rightarrow\infty}\frac{1}{r_n}\sum_{m=0}^{r_n-1}\sum_{\ell = 0}^{m}\prod_{j=\ell}^{m}(1+\M{n_j})^{-1}=\lim_{n\rightarrow\infty}\frac{1}{r_n}\frac{r_n(r_n+1)}{2}=\infty,
\end{equation}
which is not surprising since Algorithm \ref{algo:new} coincides with the poor man's smoother (with adaptive resampling) when the backward simulation mechanism is de-activated. Still, as established by the following theorem, a regular backward sampling schedule is sufficient to obtain a linearly increasing asymptotic variance. We define $\dd_j \eqdef \min\{ k \in\nsetpos : \M{n_{k + j}} = 1 \}$, $j \in \nset\cup\{-1\}$, which corresponds to the distance, in terms of the number of selection operations, between any selection time $n_j$ and the first subsequent backward-sampling time. If these distances are uniformly bounded, then we may obtain the desired linear bound.
\begin{proposition}\label{prop:delta}
	Assume that there exists $\dd \in \mathbb{N}^\ast$ such that $\dd_j \leq \dd$ for all $j \ge -1$. Then
	\begin{equation}\label{eq:limdelta}
		\lim_{n\rightarrow\infty}\frac{1}{r_n}\sum_{m=0}^{r_n-1}\sum_{\ell = 0}^m \prod_{j=\ell}^{m}(1+\M{n_j})^{-1} \leq \frac{3\dd-1}{2}, 
	\end{equation}
	with equality if $\dd_j = \dd$ for all $j \ge -1$. 
\end{proposition}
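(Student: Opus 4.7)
\medskip
\noindent\textbf{Proof proposal for Proposition \ref{prop:delta}.}
The plan is to recast the double sum in a one-dimensional recursive form, then chop the range $\intvect{0}{r_n-1}$ into the natural blocks delimited by the backward-sampling instants, and finally bound the contribution of each block uniformly.

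\medskip
First I would set $S_m \eqdef \sum_{\ell=0}^{m} \prod_{j=\ell}^{m}(1+\M{n_j})^{-1}$ (with the convention $S_{-1} \eqdef 0$) and observe that factoring out the $j=m$ term yields the one-step recursion
\begin{equation}
S_m = (1+\M{n_m})^{-1}\,(1+S_{m-1}), \qquad m \geq 0.
\end{equation}
Next let $t_0 < t_1 < t_2 < \cdots$ enumerate the indices $m$ with $\M{n_m} = 1$; by the definitions of $\dd_{-1}$ and $\dd_{t_i}$, the hypothesis $\dd_j \leq \dd$ gives $t_0 \leq \dd -1$ and $L_i \eqdef t_{i+1}-t_i \leq \dd$ for every $i \geq 0$. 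Writing $s_i \eqdef S_{t_i}$, the recursion implies $S_{t_i + r} = r + s_i$ for $0 \leq r < L_i$, and, comparing the values at the block boundaries,
\begin{equation}
s_{i+1} = \tfrac{1}{2}\bigl(L_i + s_i\bigr), \qquad i \geq 0, \qquad s_0 = \tfrac{1+t_0}{2} \leq \tfrac{\dd}{2}.
\end{equation}
A trivial induction using $L_i \leq \dd$ then gives $s_i \leq \dd$ for every $i \geq 0$, and hence the block average equals
\begin{equation}
\frac{1}{L_i}\sum_{r=0}^{L_i-1} S_{t_i+r} = s_i + \tfrac{L_i-1}{2} \leq \dd + \tfrac{\dd-1}{2} = \tfrac{3\dd-1}{2}.
\end{equation}

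\medskip
To conclude the inequality, I would split $\sum_{m=0}^{r_n-1} S_m$ into (i) the pre-block contribution over $\intvect{0}{t_0-1}$, (ii) the full blocks up to the largest $t_M \leq r_n-1$, and (iii) the tail over $\intvect{t_M}{r_n-1}$. Parts (i) and (iii) are bounded by constants depending only on $\dd$ (since $t_0 < \dd$ and the tail has length at most $L_M \leq \dd$ with each term bounded by $2\dd-1$), while the middle part is bounded by $(t_M-t_0)(3\dd-1)/2$ by the preceding display. Dividing by $r_n$ and using $t_M/r_n \to 1$ (as $r_n \to \infty$ forces $M \to \infty$ because $L_i \leq \dd$) produces the stated bound $\tfrac{3\dd-1}{2}$.

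\medskip
For the equality case $\dd_j = \dd$ for all $j \geq -1$, one has $t_0 = \dd-1$ and $L_i = \dd$ for every $i$, so the scalar recursion $s_{i+1} = (\dd+s_i)/2$ with $s_0 = \dd/2$ is solvable explicitly as $s_i = \dd(1-2^{-(i+1)}) \to \dd$. Each block average therefore tends to $(3\dd-1)/2$, and an application of Cesàro upgrades the inequality to an equality in the limit. I do not expect a serious obstacle; the only slightly delicate point is keeping the boundary terms (initial segment and incomplete final block) under control so that their $O(1)$ contribution is absorbed by the normalisation $1/r_n$.
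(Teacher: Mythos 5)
Your proof is correct, but it takes a genuinely different route from the paper's. The paper's proof first asserts (without a detailed justification) that the periodic schedule $\M{n_j}=1$ iff $j=k\dd-1$ maximizes the double sum under the constraint $\dd_j\le\dd$, and then computes the limit exactly for that extremal schedule via the closed form $S_m=\dd(1-2^{-a_m})+b_m$ and Ces\`{a}ro summation. You instead work with an arbitrary admissible schedule directly: the one-step recursion $S_m=(1+\M{n_m})^{-1}(1+S_{m-1})$, the block decomposition at the backward-sampling times $t_i$, the induction giving $s_i\le\dd$, and the per-block average bound $s_i+(L_i-1)/2\le(3\dd-1)/2$. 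This buys you a self-contained proof of the inequality that does not rely on the unproven maximality claim, which is arguably a gap in the paper's argument; your separate treatment of the equality case via the explicit solution $s_i=\dd(1-2^{-(i+1)})$ recovers exactly the paper's computation. The one caveat worth flagging is that for a general schedule your argument bounds the limit superior rather than proving the limit exists (the statement as written uses $\lim$); the paper's proof has the same issue, and the limsup bound is all that is needed where the proposition is applied, but you should state the conclusion as a $\limsup$ bound to be precise.
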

The proof of Proposition~\ref{prop:delta} is given in the Appendix, Section~\ref{sec:propbound}. 

\subsection{Adaptive selection and backward-sampling schedules}
Next we will show that the central limit theorem in Theorem \ref{thm:clt} can be extended to the case where the selection schedule is random and adapted to the values of the ESS. In order to guarantee the stability of the algorithm, we will still assume that selection is performed at least every $\maxd\in\nsetpos$ steps; however, this assumption can be relaxed in practice. 

\begin{assumption}\label{assum:adaptiveRes}
For given $\alpha\in(0,1)$ and $\maxd\in \nsetpos$, let $ (\res[\N]{n})_{n\in\nset} $ be defined recursively as 
\begin{equation}\label{eq:rhoN}
	\res[\N]{0} \eqdef \1{\{\ess_0<\alpha\N\}}\text{ and }\res[\N]{n+1} \eqdef 1-\1{\{\ess_{n+1}\ge\alpha\N\}}\1{\{\maxd_n^\N+1<\maxd\}},\quad n\in\nset,
\end{equation}
with $(\maxd_{n}^\N)_{n\in\nset}$ being also recursively defined through
\begin{equation}
	\maxd_0^\N \eqdef 1-\res[\N]{0}\text{ and }\maxd_{n+1}^\N \eqdef (1-\res[\N]{n+1})(1+\maxd_n^\N),\quad n\in\nset.
\end{equation}
\end{assumption}
Note that $\maxd_n^\N$ counts the number of consecutive times, including $n$, for which resampling has not been performed. The following lemma is proven in Section~\ref{sec:proof:3.5} of the Appendix.
\begin{lemma}\label{lemma:ess}
	Let Assumption \ref{assum:adaptiveRes} hold. Then for all $n\in\nset$  there exists $\res[\alpha,\maxd]{n}\in\{0,1\}$ such that, as $\N\to\infty$,
	\begin{equation}
		\res[\N]{n}\convp \res[\alpha,\maxd]{n}.
	\end{equation}
\end{lemma}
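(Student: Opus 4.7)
The plan is induction on $n$, establishing jointly the in-probability convergence of $\res[\N]{n}$ and of the integer-valued counter $\maxd_n^\N$, since both quantities are intertwined through the recursions in \eqref{eq:rhoN}. Along the way I define deterministic limits $\res[\alpha,\maxd]{n}$, $\maxd_n$, and $e_n \eqdef \lim_{\N\to\infty}\ess_n/\N$ recursively.

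For the base case $n=0$, the particles $(\epart{0}{i})_{i=1}^\N$ are i.i.d.\ from $\nu$ with weights $\wgt{0}{i}=w_{-1}(\epart{0}{i})$. The classical law of large numbers, together with the boundedness of $w_{-1}$ from Assumption~\ref{assum:1}, gives $\N^{-1}\sum_{i=1}^\N \wgt{0}{i}\convp \nu(w_{-1})$ and $\N^{-1}\sum_{i=1}^\N (\wgt{0}{i})^2\convp \nu(w_{-1}^2)$, whence $\ess_0/\N \convp e_0 \eqdef \nu(w_{-1})^2/\nu(w_{-1}^2)$. Setting $\res[\alpha,\maxd]{0}\eqdef \1{\{e_0<\alpha\}}$ yields $\res[\N]{0}\convp \res[\alpha,\maxd]{0}$, and consequently $\maxd_0^\N\convp \maxd_0\eqdef 1-\res[\alpha,\maxd]{0}$. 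A tacit genericity condition $e_0\neq \alpha$ is required here, since otherwise the indicator would straddle the threshold and fail to converge.

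For the inductive step, suppose $\res[\N]{m}\convp \res[\alpha,\maxd]{m}$ and $\maxd_m^\N\convp \maxd_m$ for all $m\le n$. Since the $\maxd_m$ take integer values, the event $A_\N \eqdef \{\res[\N]{m}=\res[\alpha,\maxd]{m},\ \maxd_m^\N=\maxd_m,\ m\le n\}$ satisfies $\prob(A_\N)\to 1$. On $A_\N$, the adaptive particle filter coincides pathwise with the reference filter driven by the deterministic schedule $(\res[\alpha,\maxd]{m})_{m\le n}$; the latter falls within the framework of Section~\ref{sec:detsched}, so a standard SMC law of large numbers applied to the self-normalized statistic $\N\sum_{i=1}^\N (\wgt{n+1}{i}/\wgtsum{n+1})^2$ (a by-product of the consistency results underlying Theorem~\ref{thm:as}) delivers a deterministic limit $e_{n+1}>0$ of $\ess_{n+1}/\N$. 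Transferring through the coupling, the same limit holds in probability under the adaptive scheme. Assuming again $e_{n+1}\neq\alpha$, the first indicator in \eqref{eq:rhoN} converges in probability to $\1{\{e_{n+1}\ge\alpha\}}$ and the second to $\1{\{\maxd_n+1<\maxd\}}$, giving $\res[\N]{n+1}\convp \res[\alpha,\maxd]{n+1}\eqdef 1-\1{\{e_{n+1}\ge\alpha\}}\1{\{\maxd_n+1<\maxd\}}$ and closing the induction.

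The principal technical hurdle is the convergence of $\ess_{n+1}/\N$ under the random adaptive schedule. The coupling reduces this to a deterministic-schedule problem, but verifying that the self-normalized squared-weight statistic (which is not a path-additive functional of the form \eqref{eq:adds} but rather a function of the marginal particle weights) falls within the scope of the consistency machinery of Section~\ref{sec:detsched} still requires some care; one invokes convergence of $\post[\N]{n+1}$ applied to the squared weight function rather than Theorem~\ref{thm:as} directly. The unavoidable caveat is the genericity condition $e_n\neq\alpha$ at every step, which the lemma tacitly assumes.
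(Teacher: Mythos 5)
Your overall architecture---induction on $n$, a high-probability coupling with the filter driven by the limiting deterministic schedule, and then a law of large numbers for the normalized squared weights---mirrors the paper's strategy, and the coupling device itself is sound (it is the same trick the paper uses to prove Corollary~\ref{cor:cltadapt}). You also correctly flag the genericity condition $e_n\neq\alpha$, which the paper handles in a remark by randomizing the threshold. The problem is the step you yourself single out as the ``principal technical hurdle'': you do not actually clear it. The claim that $\ess_{n+1}/\N$ converges under the deterministic schedule ``as a by-product of the consistency results underlying Theorem~\ref{thm:as},'' or by ``convergence of $\post[\N]{n+1}$ applied to the squared weight function,'' is not justified by anything in Section~\ref{sec:detsched}. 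That machinery controls self-normalized statistics $\sum_i(\wgt{n}{i}/\wgtsum{n})\tstat[i]{n}$, which are linear in the normalized weights; the quantity $\N\sum_i(\wgt{n+1}{i}/\wgtsum{n+1})^2$ is of a different type, and its limit is \emph{not} a $\post{n+1}$-integral of any fixed function. The reason is that when selection is skipped at time $n$ the weight $\wgt{n+1}{i}=\wgt{n}{i}\,w_n\langle 0\rangle(\epart{n}{i},\epart{n+1}{i})$ accumulates multiplicatively, so the squared weights involve $(\wgt{n}{i})^2$, and the normalized sum of these converges to the value of a \emph{separate} finite measure $\sqmeas{n}$ that cannot be recovered from $\post{n}$ alone.

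Establishing the existence of this limit is the actual content of the paper's proof: Lemma~\ref{lemma:essmeas} runs a joint induction on two sequences of measures, $\sqmeas{n}$ (governing $\N\sum_i(\wgt{n}{i}/\wgtsum{n})^2\testf[n](\epart{n}{i})$) and $\unorm{n}$ (governing $\N^{-1}\sum_i\wgt{n}{i}\testf[n](\epart{n}{i})$), proves their convergence via the triangular-array law of large numbers of Douc and Moulines in two separate cases (propagation with and without selection), and derives the explicit recursions \eqref{eq:recsq}--\eqref{eq:recun}; the limit of $\ess_n/\N$ is then $(\sqmeas{n}\1{\set{X}_n})^{-1}$. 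Your proof outsources exactly this induction to a ``standard SMC law of large numbers'' that does not exist in the form you need it (the standard results give limits for $\post[\N]{n}f$, not for the squared-weight empirical measure under a schedule with skipped resampling steps and adjustment multipliers). To repair the argument you would need to carry the pair $(\sqmeas{n},\unorm{n})$ through your induction alongside $\res[\alpha,\maxd]{n}$ and $\maxd_n$, which is precisely what the paper does.
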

\begin{assumption}\label{assum:adaptedBS}
	For every $n\in\nset$, $\M[\N]{n}$ is $\sigma(\res[\N]{0:n})$-measurable and such that $\M[\N]{n}=0$ whenever $\res[\N]{n}=0$. 
\end{assumption}
For instance, a simple rule covered by Assumption \ref{assum:adaptedBS} is to trigger backward sampling after a fixed, deterministic number of intermediate resampling operations.

\begin{lemma}\label{lemma:BSconv}
	Let Assumptions \ref{assum:adaptiveRes} and \ref{assum:adaptedBS} hold. Then for all $n\in\nset$  there exists $\M[{\alpha,\maxd}]{n}\in\{0,1\}$ such that, as $\N\rightarrow\infty$, 
	\begin{equation}
		\M{n}^\N\convp \M[\alpha,\maxd]{n}. 
	\end{equation}
\end{lemma}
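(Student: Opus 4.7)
The plan is to observe that, since Assumption~\ref{assum:adaptedBS} makes $\M[\N]{n}$ a $\{0,1\}$-valued function of the finite binary vector $\res[\N]{0:n}$, the claim reduces to a finite-state ``propagation of convergence'' via the previously established Lemma~\ref{lemma:ess}.

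First, I would argue that there is a deterministic map $f_n : \{0,1\}^{n+1} \to \{0,1\}$, independent of $\N$ (this is how the schedule rule is specified in practice; \eg, the example following Assumption~\ref{assum:adaptedBS}), such that
\begin{equation}
	\M[\N]{n} = f_n(\res[\N]{0:n}), \qquad \N\in\nsetpos,
\end{equation}
and such that $f_n(\rho_{0:n}) = 0$ whenever $\rho_n = 0$. This is just the observation that any $\sigma(\res[\N]{0:n})$-measurable random variable valued in the finite set $\{0,1\}$ is a function of the underlying discrete random vector. Then I would define the candidate limit
\begin{equation}
	\M[{\alpha,\maxd}]{n} \eqdef f_n(\res[{\alpha,\maxd}]{0:n}),
\end{equation}
which is deterministic since the $\res[{\alpha,\maxd}]{k}$'s are, and which satisfies $\M[{\alpha,\maxd}]{n}=0$ whenever $\res[{\alpha,\maxd}]{n}=0$ by construction.

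Next, by Lemma~\ref{lemma:ess}, for each $k \in \intvect{0}{n}$, $\res[\N]{k}\convp \res[{\alpha,\maxd}]{k}$; since these random variables are $\{0,1\}$-valued and the limit is deterministic, this is equivalent to $\prob(\res[\N]{k}\ne \res[{\alpha,\maxd}]{k}) \to 0$. A union bound over the $n+1$ coordinates then yields
\begin{equation}
	\prob(\res[\N]{0:n} \ne \res[{\alpha,\maxd}]{0:n}) \le \sum_{k=0}^{n}\prob(\res[\N]{k}\ne \res[{\alpha,\maxd}]{k}) \longrightarrow 0,
\end{equation}
as $\N\to\infty$. Because on the complementary event $\M[\N]{n} = f_n(\res[\N]{0:n}) = f_n(\res[{\alpha,\maxd}]{0:n}) = \M[{\alpha,\maxd}]{n}$, we conclude $\prob(\M[\N]{n}\ne\M[{\alpha,\maxd}]{n})\to 0$, which, since the limit is constant, is exactly the desired convergence in probability.

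The argument is essentially routine once one notes that all the random objects involved live in the finite alphabet $\{0,1\}$; the only subtlety I foresee is the explicit ``$\N$-independent version'' of $f_n$ alluded to above. If one wanted to be pedantic about the statement of Assumption~\ref{assum:adaptedBS} allowing $\N$-dependent rules, one could replace the above with an induction on $n$, using Lemma~\ref{lemma:ess} to propagate the convergence one coordinate at a time; the bookkeeping would be slightly longer but the core argument is unchanged. Either way, no genuine probabilistic difficulty arises beyond what Lemma~\ref{lemma:ess} already provides.
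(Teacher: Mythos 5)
Your proof is correct and follows essentially the same route as the paper: represent $\M[\N]{n}$ as $f_n(\res[\N]{0:n})$ via $\sigma(\res[\N]{0:n})$-measurability, define the limit as $f_n(\res[{\alpha,\maxd}]{0:n})$, and note that $\{\M[\N]{n}\ne\M[{\alpha,\maxd}]{n}\}\subset\{\res[\N]{0:n}\ne\res[{\alpha,\maxd}]{0:n}\}$, whose probability vanishes by Lemma~\ref{lemma:ess}. The $\N$-independence of $f_n$ that you flag is also glossed over in the paper's own argument, so no discrepancy there.
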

\begin{proof}
	By Lemma \ref{lemma:ess}, $\res[\N]{0:n}\convp \res[\alpha,\maxd]{0:n}$ as $\N\rightarrow\infty$. Since $\M[\N]{n}$ is $\sigma(\res[\N]{0:n})$-measurable there exists a measurable function $f_n:\{0,1\}^{n+1}\mapsto\{0,1\}$ such that $\M[\N]{n}=f_n(\res[\N]{0:n})$. Now, let $\M[\alpha,\maxd]{n} \eqdef f_n(\res[\alpha,\maxd]{0:n})$. Thus, $\{\M[\N]{n}\ne\M[\alpha,\maxd]{n}\}=\{f_n(\res[\N]{0:n})\ne f_n(\res[\alpha,\maxd]{0:n})\}\subset \{\res[\N]{0:n}\ne \res[\alpha,\maxd]{0:n}\}$, implying that for every $\epsilon>0$, 
	\begin{align}
		\prob(|\M[\N]{n}- \M[\alpha,\maxd]{n}|\ge\epsilon)= \prob(\M[\N]{n}\ne\M[\alpha,\maxd]{n})\le \prob(\res[\N]{0:n}\ne \res[\alpha,\maxd]{0:n})\rightarrow 0,
	\end{align}
	as $\N\rightarrow\infty$. Hence $\M[\N]{n}\convp \M[\alpha,\maxd]{n}$.
\end{proof}

\begin{corollary}\label{cor:cltadapt}
	Let Assumptions \ref{assum:1}, \ref{assum:adaptiveRes} and \ref{assum:adaptedBS} hold and $(\wgt{n}{i},\tstat[i]{n})_{i=1}^\N$ be generated by $n$ iterations of Algorithm \ref{algo:new}. Then for every $n\in \nsetpos$ and $\adds{n}\in\set{H}_n$, as $\N \rightarrow\infty$,
	\begin{equation}
		\sqrt{\N}\left(\sum_{i=1}^{\N}\frac{\wgt{n}{i}}{\wgtsum{n}}\tstat[i]{n}-\post{0:n}\adds{n}\right)\convd \sigma_n\langle \res[\alpha,\maxd]{0:n-1},\M[\alpha,\maxd]{0:n-1}\rangle (\adds{n}) Z,
	\end{equation}
	where $Z$ has standard Gaussian distribution and $\sigma_n^2\langle \res[\alpha,\maxd]{0:n-1},\M[\alpha,\maxd]{0:n-1}\rangle(\adds{n})$ is the asymptotic variance of Theorem \ref{thm:clt} with selection schedule $\res[\alpha,\maxd]{0:n-1}$ and backward-sampling schedule $\M[\alpha,\maxd]{0:n-1}$ given by Lemmas \ref{lemma:ess} and \ref{lemma:BSconv}, respectively.
\end{corollary}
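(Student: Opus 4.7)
The plan is to reduce the adaptive CLT to the deterministic CLT of Theorem~\ref{thm:clt} via a coupling/Slutsky-type argument, exploiting the fact that the limiting schedules are deterministic and $\{0,1\}$-valued. The key observation is that, for any fixed $n \in \nsetpos$, the random vector $(\res[\N]{0:n-1}, \M[\N]{0:n-1})$ takes values in the finite set $\{0,1\}^{2n}$ and, by Lemmas~\ref{lemma:ess} and \ref{lemma:BSconv}, converges in probability to the deterministic vector $(\res[\alpha,\maxd]{0:n-1}, \M[\alpha,\maxd]{0:n-1})$. Convergence in probability to a constant in a discrete set is equivalent to exact coincidence with probability tending to one; hence, defining
\begin{equation}
	E_\N \eqdef \{(\res[\N]{0:n-1}, \M[\N]{0:n-1}) = (\res[\alpha,\maxd]{0:n-1}, \M[\alpha,\maxd]{0:n-1})\},
\end{equation}
we have $\prob(E_\N) \to 1$ as $\N \to \infty$.

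Next, I would construct, on the same probability space as Algorithm~\ref{algo:new}, a coupled auxiliary version of the algorithm in which the selection/backward-sampling schedule is forced to equal the deterministic sequence $(\res[\alpha,\maxd]{0:n-1}, \M[\alpha,\maxd]{0:n-1})$ and which reuses the same exogenous randomness (initial draws, mutation noises, multinomial draws for selection and backward sampling, accept-reject uniforms). Denote by $\hat{Z}_\N$ the corresponding normalized error and by $Z_\N$ the one for the original (adaptive) algorithm. Since on $E_\N$ both runs make identical decisions at every time step and share their randomness, they produce identical particle samples, identical statistics, and hence identical estimators, so $Z_\N \1{E_\N} = \hat{Z}_\N \1{E_\N}$. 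Because the forced schedule is deterministic and satisfies Assumption~\ref{assum:seqdet}, Theorem~\ref{thm:clt} applies to $\hat{Z}_\N$, giving $\hat{Z}_\N \convd \sigma_n\langle\res[\alpha,\maxd]{0:n-1}, \M[\alpha,\maxd]{0:n-1}\rangle(\adds{n}) Z$.

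Finally, I would conclude by a standard dominance argument. For any bounded continuous $\varphi$,
\begin{equation}
	|\E[\varphi(Z_\N)] - \E[\varphi(\hat{Z}_\N)]| \le 2 \supn{\varphi}\, \prob(E_\N^c) \longrightarrow 0,
\end{equation}
since $\varphi(Z_\N) = \varphi(\hat{Z}_\N)$ on $E_\N$. Combined with $\E[\varphi(\hat{Z}_\N)] \to \E[\varphi(\sigma_n\langle\res[\alpha,\maxd]{0:n-1}, \M[\alpha,\maxd]{0:n-1}\rangle(\adds{n})\, Z)]$, this yields the claimed weak convergence. The main obstacle is the coupling step: one must verify carefully that the exogenous randomness driving Algorithm~\ref{algo:new} can be organized so that the adaptive and forced-schedule versions agree pathwise on $E_\N$ — in particular, that the adaptive backward draws $J_{n+1}^{i}$ on selection/backward-sampling times and the APF indices $\I{n+1}{i}$ are built from the same underlying uniform variables in both runs. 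Once this bookkeeping is in place, the remainder reduces to the finite-state convergence-in-probability argument above.
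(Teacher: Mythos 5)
Your proposal is correct, and it rests on exactly the same pillars as the paper's proof---Lemmas~\ref{lemma:ess} and \ref{lemma:BSconv}, Theorem~\ref{thm:clt}, and the observation that convergence in probability of the $\{0,1\}^{2n}$-valued schedule to a deterministic point forces exact coincidence with probability tending to one---but the reduction is implemented differently. The paper avoids any explicit coupling: it writes the adaptive estimator as equal in distribution to a sum over all admissible schedules $(\res{0:n-1},\M{0:n-1})\in\set{S}_n$ of independent deterministic-schedule estimators multiplied by the indicators $\prod_m\1{\{\res[\N]{m}=\res{m}\}}\1{\{\M[\N]{m}=\M{m}\}}$, and then applies Slutsky's lemma termwise, so that every term vanishes in probability except the one attached to the limit schedule. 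Your route instead couples the adaptive run with a forced-schedule run on the same randomness, identifies the two estimators pathwise on $E_\N$, and closes with the portmanteau bound $|\E[\varphi(Z_\N)]-\E[\varphi(\hat Z_\N)]|\le 2\supn{\varphi}\prob(E_\N^c)$. What your version buys is that it makes explicit the pathwise identification that the paper's ``law of total probability'' identity implicitly relies on (conditioning on an ESS-driven schedule event does, strictly speaking, perturb the law of the particle system, and the coupling sidesteps this); the cost is the bookkeeping you flag, which does close: by induction, if the shared randomness is indexed by (time, particle, purpose) and the decisions agree up to time $m$, the particle clouds, weights and hence the ESS agree at time $m$, so on $E_\N$ the two runs consume the randomness identically at every step. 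Two small points to tie off: you should note that the limit schedule inherits the compatibility condition of Assumption~\ref{assum:seqdet} ($\M[\alpha,\maxd]{m}=0$ whenever $\res[\alpha,\maxd]{m}=0$, which follows from Assumption~\ref{assum:adaptedBS} and the coincidence event) so that Theorem~\ref{thm:clt} is indeed applicable to the forced run; and, as the paper does implicitly, that $\alpha$ is not a discontinuity point of the limiting ESS criterion, which is what makes $\prob(E_\N)\to 1$ rather than merely bounded away from zero.
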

\begin{proof}
	Let $\set{S}_n$ be the set of sequences $(\res{0:n-1},\M{0:n-1})\in\{0,1\}^{2n}$ satisfying Assumption \ref{assum:dist} and being such that $\M{m}=0$ whenever $ \res{m}=0 $ for any $m\in\intvect{0}{n-1}$. For all $(\res{0:n-1}, \M{0:n-1}) \in \set{S}_n$, let $\adds{n}^\N \langle \res{0:n-1},\M{0:n-1}\rangle \eqdef \wgtsum{n}^{-1}\sum_{i=1}^{n}\wgt{n}{i}\tstat[i]{n}$ be independent estimators calculated on the basis of independent realizations $(\tstat[i]{n}, \wgt{n}{i})_{i=1}^\N$ of Algorithm~\ref{algo:new}, each realization governed by a distinct selection and backward-sampling schedule $(\res{0:n-1}, \M{0:n-1})$. Then for every $\N \in \nset^\ast$, by the law of total probability,
	\begin{multline}
		\sqrt{\N}\left(\adds{n}^\N \langle \res[\N]{0:n-1},\M[\N]{0:n-1}\rangle-\post{0:n}\adds{n}\right) \\ 
		\eqdist\sum_{(\res{0:n-1},\M{0:n-1})\in\set{S}_n}\sqrt{\N}\left(\adds{n}^\N \langle \res{0:n-1},\M{0:n-1}\rangle-\post{0:n}\adds{n}\right)\prod_{m=0}^{n-1}\1{\{\res[\N]{m}=\res{m}\}}\1{\{\M{m}^\N=\M{m}\}}.\label{eq:sumclt}
	\end{multline}
	Now, note that for all $m\in \intvect{0}{n-1}$ it holds, by Lemmas~\ref{lemma:ess} and \ref{lemma:BSconv},
	\begin{equation}
		\prod_{m=0}^{n-1}\1{\{\res[\N]{m}=\res{m}\}}\1{\{\M{m}^\N=\M{m}\}}\convp \1{\{(\res[\alpha,\maxd]{0:n-1},\M[\alpha,\maxd]{0:n-1})=(\res{0:n-1},\M{0:n-1})\}}.
	\end{equation}
	By Slutsky's lemma and Theorem \ref{thm:clt}, all terms of \eqref{eq:sumclt} tend to zero in probability except one which converges in distribution to $\sigma_n\langle \res[\alpha,\maxd]{0:n-1},\M[\alpha,\maxd]{0:n-1}\rangle (\adds{n}) Z$. This completes the proof. 
\end{proof}
Note that an immediate consequence of Corollary \ref{cor:cltadapt} is that $\wgtsum{n}^{-1}\sum_{i=1}^{n}\wgt{n}{i}\tstat[i]{n}\allowbreak\convp\post{0:n}\adds{n}$, as $\N\rightarrow\infty$. The stochastic stability of the adaptive algorithm depends on the asymptotic variance $\sigma_n^2\langle \res[\alpha,\maxd]{0:n-1},\M[\alpha,\maxd]{0:n-1}\rangle(\adds{n})$, more specifically on the limit sequence $(\M[\alpha,\maxd]{n})_{n\in\nset}$. Proposition~\ref{prop:delta} guarantees a linear growth of the variance with respect to $n$ for any adaptation schedule that allows the number of selection operations between each backward-sampling operation to be uniformly bounded; for such schedules, this property will be transferred to the limit schedule, providing an $\ordo(n)$ bound on the asymptotic variance. 

A thorough analysis of the setting where also the backward-sampling mechanism is activated adaptively using the technology described in Algorithm \ref{algo:backdraws} is beyond the scope of the present paper. Instead, we limit ourselves to justifying heuristically that triggering, as in Algorithm \ref{algo:backdraws}, backward sampling only when the proportion of distinct Enoch indices falls below a given threshold leads, in accordance with Proposition~\ref{prop:delta}, to regular distances on average between the times of backward sampling. Under strong mixing assumptions similar to Assumption~\ref{assum:compact}, \citet{koskela2020} derive an $\ordo(\N)$ bound on the expected time to the \emph{most recent common ancestor} (MRCA) in the case where multinomial resampling is executed systematically at every time step. Using the notation of the mentioned paper, let $\tau_N(T_{n'})$ be the number of SMC generations required to reach back to the MRCA for a subsample of $n'\le\N$ particles. Here $T_{n'}$ represents the continuous time required to the reach the MRCA for a partition of $n'$ elements in the \emph{Kingman's $n$-coalescent model}, while $\tau_N$ applies a rescaling providing the same corresponding coalescing time, in terms of generations, in the genealogy of an SMC particle cloud of $\N$ samples. Then in Corollary~2 in the same paper, it is shown that the expectation of $\tau_N(T_{n'})$ is $\ordo(\N)$, uniformly in time. In Kingman's $n$-coalescent model there is an initial partition of size $n'$ in which any two elements merge into one after an exponentially distributed random time with unit rate. Thus, the partition reduces to $n'-1$ elements after an exponentially distributed random time with rate $n'(n'-1)/2$; then to $n'-2$ elements with rate $(n'-1)(n'-2)/2$; and so on. We may hence write $T_{n'} = \sum_{k=2}^{n'}S_k$, where $S_2,\dots,S_n$ are independent and $S_k$ is exponentially distributed with rate $k(k-1)/2$. 
We are now interested in the number of SMC generations required to reach, starting with the full sample, \ie, $n'=\N$, the generation corresponding to the most recent time of only $\beta\N$ distinct ancestors (assuming $\beta\N$ integer for simplicity), instead of one as for the MRCA. Thus, we denote $T_\N^\beta \eqdef \sum_{k=\beta\N+1}^{\N} S_k$, and by a straightforward adaptation of the proof of Corollary~2 in \citet{koskela2020} we may establish that the expectation of $\tau_N(T_\N^\beta)$ is uniformly bounded in $\N$. Thus, assuming systematic resampling, \ie, $\res[\N]{n}=1$ for all $n\in\nset$, and using the adaptive criterion of Algorithm \ref{algo:backdraws}, this suggests that the distance between two subsequent backward sampling steps will be regular on the average and also independent of the sample size $\N$. In the general case where resampling is not applied systematically, we will count the distance between two backward sampling operations in terms of the number of intermediate resampling operations. As we will se in the next section, our simulations indicate that this number stays close to regular and constant on average with respect to the sample size $\N$.

\section{NUMERICAL RESULTS}\label{sec:simul}
We demonstrate numerically our algorithm on two different state-space models: a \emph{linear Gaussian HMM} and a \emph{stochastic volatility model} with correlated noise.

\subsection{Linear Gaussian HMM}\label{subsec:lgss}
We first consider a linear Gaussian HMM on $\rset$, described by the equations
\[
	\begin{split}
		X_{n+1} &= a X_n + \sigma_U U_{n+1}, \\
		Y_{n} &= b X_n + \sigma_V V_n,
	\end{split}
	\quad n \in \nset \eqsp,
\]
where $(U_n)_{n \in \nsetpos}$ and $(V_n)_{n \in \nset}$ are independent sequences of mutually independent standard normally distributed noise variables and $(a, b) \in \rset^2$ and $(\sigma_U, \sigma_V)\allowbreak \in \rsetpos$ are model parameters. If $|a| < 1$, the unobserved state process $(X_n)_{n \in \nset}$ has a stationary distribution given by the zero-mean Gaussian distribution with variance $\sigma^2_U/(1-a^2)$, according to which $X_0$ is initialized. In this section, focus is set on the problem of estimating the expectation of the state sum $\adds{n}(x_{0:n}) = \sum_{m=0}^n x_m$ under the joint-smoothing distribution $\post{0:n}$ on the basis of $n + 1 = 501$ observations generated by simulation of the model parameterized by $(a, b, \sigma_U, \sigma_V) = (0.7, 1, 0.2, 1)$. The main reason for considering a linear Gaussian HMM and this particular state functional is that these allow exact solutions to the additive smoothing problem to be calculated using \emph{disturbance smoothing} \citep[see, \eg,][Section~5.2]{cappe:moulines:ryden:2005}. Having access to the exact solution, we may study the convergence and accuracy of AdaSmooth and benchmark the same against existing algorithms. 
 
In order to investigate how the choices of $\alpha$ and $\beta$, governing the adaptation criteria for resampling through the ESS and the backward-sampling through Algorithm~\ref{algo:backdraws}, respectively, affect the performance of the algorithm, we run the algorithm for varying combinations of $(\alpha, \beta) \in [0,1]^2$ and $\N \in \{50,100,200,500\}$. For each combination, we replicated 100 independent estimates of $\post{0:n} \adds{n}$ for each of the AdaSmooth, PaRIS and forward-only FFBSm algorithms, all running with the same number $\N$ of particles. For simplicity, the underlying particles were mutated according to the dynamics of the state process and selected without any adjustment of the particle weights. 

In our comparison we first evaluate the \emph{efficiency} of each algorithm, which we define as the ratio of inverse sample variance to computational time, scaled further by $1 /  \sqrt{\N}$; \ie, \texttt{$\text{efficiency} = 1/ (\sqrt{\N} \times \text{sample variance} \times \text{computational}$ $ \text{time}$}). Figure~\ref{fig:lg:eff} displays efficiencies of AdaSmooth for different combinations of the algorithmic parameters $\alpha$  and $\beta$, and in these plots there is clearly a region in the parameter space, with $\alpha$ and $\beta$ being around 0.6 and 0.5, respectively, varying slightly with $\N$, for which the efficiency is maximal. For a comparison, Table~\ref{tab:lg:eff} shows the efficiencies also of the PaRIS and the forward-only FFBSm algorithms, which are outperformed by AdaSmooth by about one and two orders of magnitude, respectively.

\begin{figure}[htb]
	\centering
	\includegraphics[width =.95\textwidth]{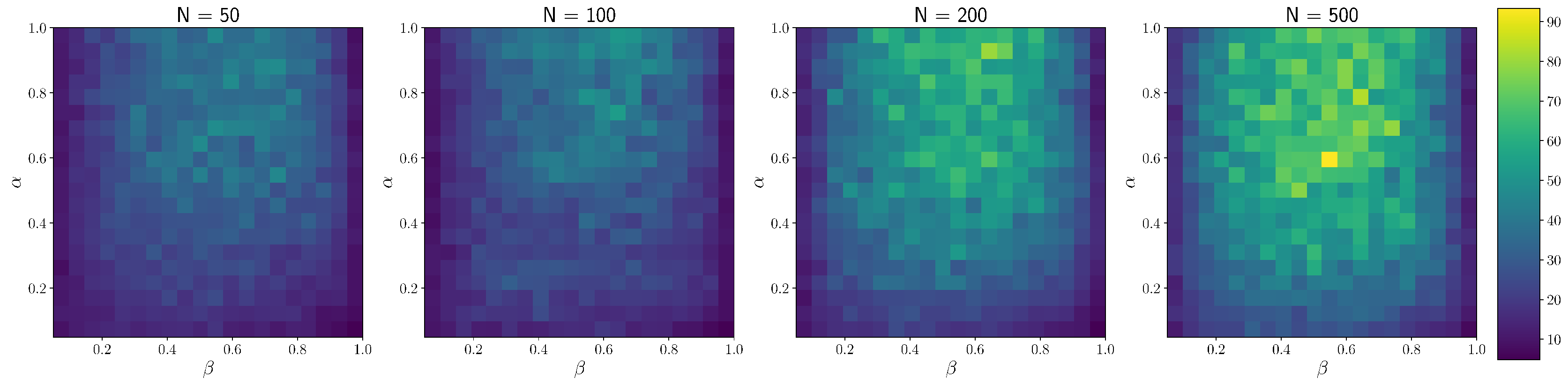}
	\caption{Efficiencies of AdaSmooth operating on the linear Gaussian HMM with different combinations of $(\alpha,\beta) \in [0,1]^2$ and $N = \{50, 100, 200, 500\}$. Each estimate is based on 100 replicates.}
	\label{fig:lg:eff}
\end{figure}

\begin{table}[htb]
	\centering
	\resizebox{8cm}{!}{\begin{tabular}{l|c|c|c|c}\toprule
		$\N $ & 50 & 100 & 200 & 500 \\ \midrule
		FFBSm & $0.86$ & $0.45$ & $0.33$ & $0.16$ \\
		PaRIS & $5.09$ & $5.62$ & $6.08$ & $5.86$ \\
		AdaSmooth $ (0.6, 0.5) $ & $31.17$ & $38.01$ & $58.32$ & $68.92$ \\\bottomrule
	\end{tabular}}
	\caption{Efficiencies of the forward-only FFBSm, the PaRIS and AdaSmooth parameterized by $(\alpha, \beta) = (0.6, 0.5)$, 
	operating on the linear Gaussian model in Section~\ref{subsec:lgss} with different sample sizes $\N$.}
	\label{tab:lg:eff}
\end{table}

Next, we illustrate that the output of AdaSmooth converges, as $\N$ increases and for any combination of $\alpha$ and $\beta$, to the exact solution provided by the disturbance smoother and compare the same to the outputs of the competitors. Figure~\ref{fig:boxplotLG} displays boxplots of independent estimates obtained with AdaSmooth for a selection of parameterizations as well as with the forward-only FFBSm, the PaRIS and the poor man's smoother for an observation record comprising $n + 1 = 1001$ observations. The figure also displays exact solutions provided by the disturbance smoother. Each box is based on 100 replicates and for each particle sample size $\N \in \{50, 500\}$, estimates of $\post{0:n} \adds{n}$ for $n \in \{100, 500, 1000\}$ are reported. The estimates are divided by $\sqrt{n}$ with the purpose of illustrating the different smoothers' stability properties as $n$ increases. In short: completely in line with the theoretical results obtained in Section~\ref{sec:theory}, the range of the boxes decrease with $\N$ and stay, with exception of the poor man's smoother, close to constant in $n$. Interestingly, the algorithm parameterized by $(\alpha, \beta)=(1, 0.1)$ (corresponding to systematic selection and infrequent backward sampling), exhibiting the largest variance among the AdaSmooth estimators, still does not not show the quadratic variance growth of the poor man's smoother. This is even clearer from Figure~\ref{fig:varianceLGSS}, displaying time-normalized variances, where all the algorithms except the poor man's smoother (whose variance growth is quadratic) present a linear increase of the variance, although with different rates. Finally, as clear from Figures~\ref{fig:boxplotLG} and \ref{fig:varianceLGSS}, the accuracy of AdaSmooth with $(\alpha, \beta)=(0.6, 0.5)$ is on par with that of the PaRIS and the forward-only FFBSm, despite the drastic improvement in terms of computational speed. 

\begin{figure}[htb]
	\centering
	\includegraphics[width=0.95\textwidth]{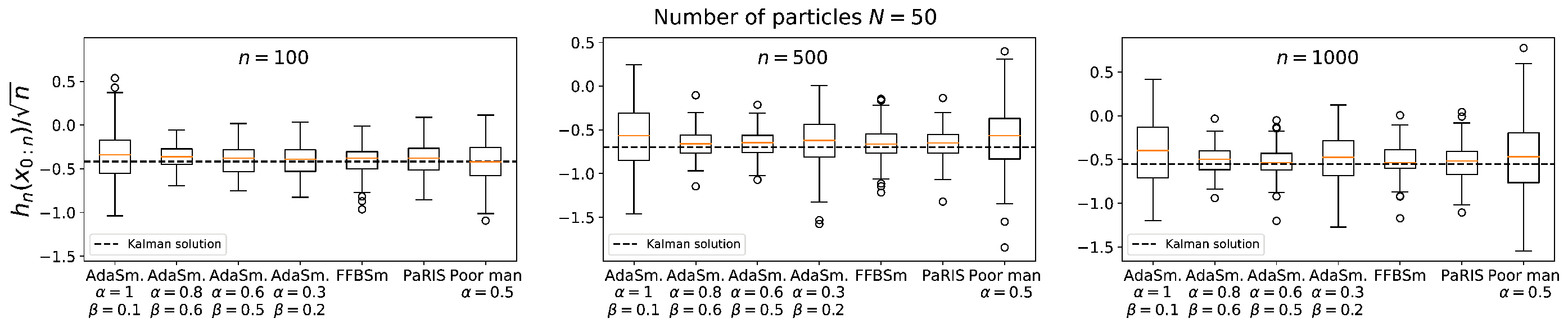}
	\includegraphics[width=0.95\textwidth]{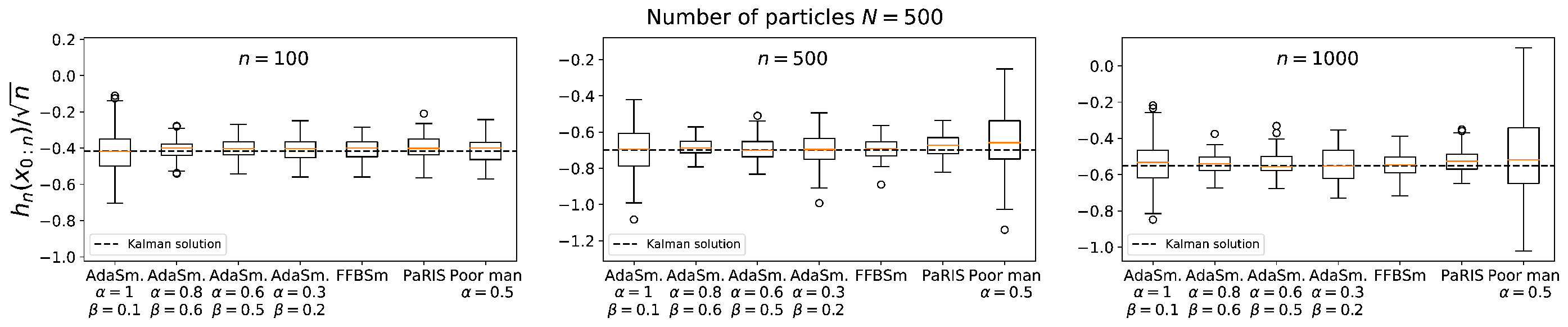}
	\caption{Boxplots of estimates (divided by $\sqrt{n}$) of smoothed expectations of $\adds{n}(x_{0:n}) = \sum_{m=0}^n x_m$, for $n \in \{100,500,1000\}$, in the linear Gaussian HMM in Section~\ref{subsec:lgss}. Each algorithm was rerun 100 times with $\N=50$ and $\N=500$. The black-dashed lines represent exact solutions provided by the disturbance smoother.  
	}
	\label{fig:boxplotLG}
\end{figure}

\begin{figure}[htb]
	\centering
	\includegraphics[width=0.85\textwidth]{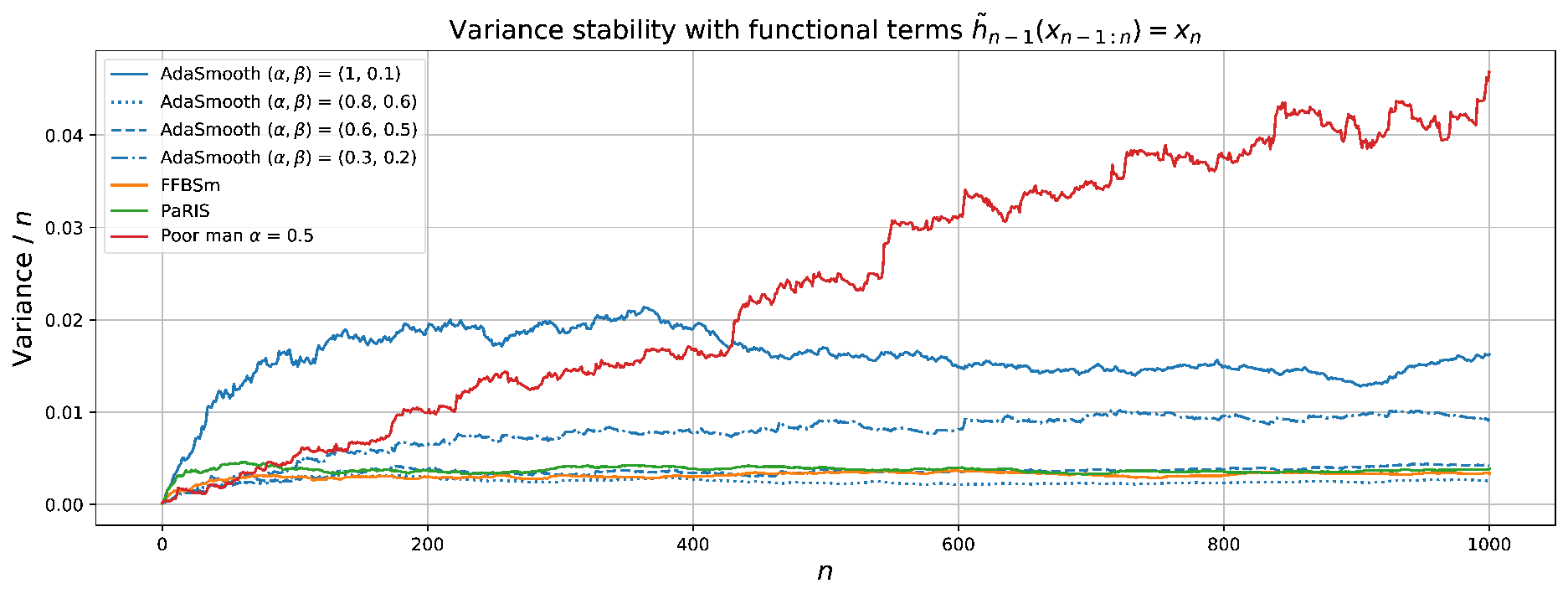}
	\caption{
	Time-normalized empirical variances, produced by AdaSmooth (with different parameterizations), the forward-only FFBSm, the PaRIS, and the poor man's smoother, of estimates of smoothed expectations of $\adds{n}(x_{0:n}) = \sum_{m=0}^n x_m$ for different $n$, in the linear Gaussian HMM in Section~\ref{subsec:lgss}. The empirical variances were obtained by rerunning each algorithm 100 times with $\N=500$ particles.}
	\label{fig:varianceLGSS}
\end{figure}

\subsection{Stochastic volatility model}\label{subsec:sv}
In order to investigate the performance of AdaSmooth in a nonlinear setting, we consider a modification of the stochastic volatility model proposed by \citet{hull:white:1987}. The observed \emph{stock returns} $(Y_n)_{n \in \nset}$ and the unobserved \emph{log-volatility} $(X_n)_{n \in \nset}$ are modeled as $\rset$-valued processes evolving recursively according to 
\[
\begin{split}
	X_{n+1}&=a X_n +\sigma U_{n+1}, \\
	Y_n&=b \exp(X_n/2) V_n, 
\end{split}
\quad n \in \nset,
\]
where $a \in \rset$, $b > 0$ and $\sigma > 0$ are model parameters. Here $V_0$ has standard Gaussian distribution, independent of $X_0$, while $(U_n, V_n)_{n \in\nsetpos}$ is a sequence of independent bivariate Gaussian random variables, with standard marginals and correlation $\rho \in (-1,1)$. All parameters of the model are assumed to be known, with $a = 0.975$, $b = 0.641$, $\sigma = 0.165$, values which appear frequently in the literature, and $ \rho = -0.1$; here the negative correlation reflects the fact that stock returns tend to be lower than average and oftentimes negative in high-risk environments (high volatility). With this parameterization, the log-volatility has a stationary distribution given by the zero-mean Gaussian distribution with variance $\sigma^2/(1-a^2)$, according to which $X_0$ is initialized. It immediately follows that, for all $n\in\nsetpos$, we may write $V_n=\rho U_n+\sqrt{1-\rho^2}W_n$, where $W_n$ has standard Gaussian distribution and is independent of $U_n$. Thus, for $n\in\nset$ the observation process can be alternatively expressed as $Y_{n+1}=b \exp(X_{n+1}/2)\{\rho (X_{n+1}-a X_n)/\sigma+\sqrt{1-\rho^2}W_{n+1}\}$, corresponding to a transition density $\g{}((x_n, x_{n+1}), y_{n+1})$. Note that model $(X_n, Y_n)_{n \in \nset}$ is not an HMM, since the correlation of the noise variables induces a conditional correlation between $Y_{n + 1}$ and $X_n$ given $X_{n + 1}$. Still, this does not cause any problem for us, since the general setting of Section~\ref{sec:background} does not presuppose the densities $(\ld{n})_{n \in \nset}$ to satisfy a Feynman--Kac-type decomposition \eqref{eq:feynkac}, and we may simply set $\ld{n}(x_n,x_{n+1})=q(x_n,x_{n+1})\g{}((x_n,x_{n+1}),y_{n+1})$, where $q$ is the transition density of the log-volatility and $y_{n + 1}$ given data at time $n + 1$. (Using the modified model $(\bar{X}_n, Y_n)_{n \in \nset}$, with $\bar{X}_n \eqdef X_{n - 1:n}$ being compound states, which is indeed an HMM, would not be an option, since the fact that the transition kernel of $(\bar{X}_n)_{n \in \nset}$ involves a Dirac mass implies that this HMM is not fully dominated.) 

For this model we consider online additive smoothing for three different additive functionals with terms given by $\addf{n}^{(1)}(x_n, x_{n+1})=x_{n+1}$, $\addf{n}^{(2)}(x_n, x_{n+1})=x_{n+1}^2$ and $\addf{n}^{(3)}(x_n, x_{n+1})=x_n x_{n+1}$, and compare the performance of AdaSmooth, for different parameterizations $(\alpha, \beta)$, to the poor man's smoother with adaptive selection, the PaRIS, and the forward-only FFBSm. Each of these algorithms was rerun 100 times for $ n=1000$ time steps. As in the previous example, the underlying particles were mutated using $q$ and selected without any adjustment of the particle weights. This choice is obviously sub-optimal, since evolving the particles ``blindly'', without taking information concerning subsequent observations into account, may cause faster weight degeneration. Hence, with more sophisticated adaptive proposals and adjustment functions we would expect even better results than those we are about to report, since a slower weight degeneration requires selection and backward-sampling to be applied less frequently. Like in the previous example, AdaSmooth outperforms by far its competitors. In Figure~\ref{fig:boxplotSV} we observe that with suitable choices of $\alpha$ and $\beta$, AdaSmooth is not only significantly faster---by one to two orders of magnitude---than the PaRIS and the forward-only FFBSm, it also exhibits lower variance. In fact, its computational complexity is of the same order as that of the poor man's smoother, whose stochastic instability is evident from the plots. Figure~\ref{fig:varianceSV}, which displays time-normalized empirical variances over time, confirms perfectly well our theoretical results in that AdaSmooth exhibits a linear increase of variance with $n$ for any parameterization. The fact that AdaSmooth provides the lowest variance in some cases is due to the adaptation of the selection schedule. In this example, the choice $(\alpha, \beta) = (0.8, 0.6)$ leads to a doubled computational complexity compared to $(0.6, 0.5)$, however without increasing notably the accuracy. Like in the previous example, we observed that an optimal tradeoff between variance and computational effort was obtained by setting $\alpha$ and $\beta$ to values around 0.5, with $\alpha \geq \beta$. 
 In all these simulations, any backward-sampling operation in AdaSmooth and the PaRIS was performed using the rejection-sampling technique described in Section~\ref{sec:algo}; on the other hand, when the backward probabilities were instead computed explicitly, the computational time of the PaRIS became similar to that of the FFBSm, while AdaSmooth slowed down by a factor 10--20.

\begin{figure}
	\centering
	\includegraphics[width=0.9\textwidth]{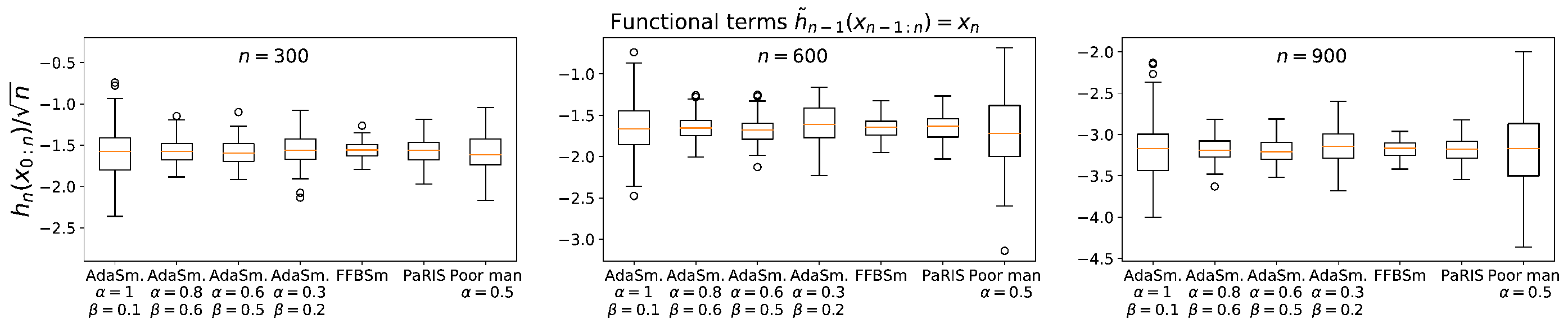}
	\includegraphics[width=0.9\textwidth]{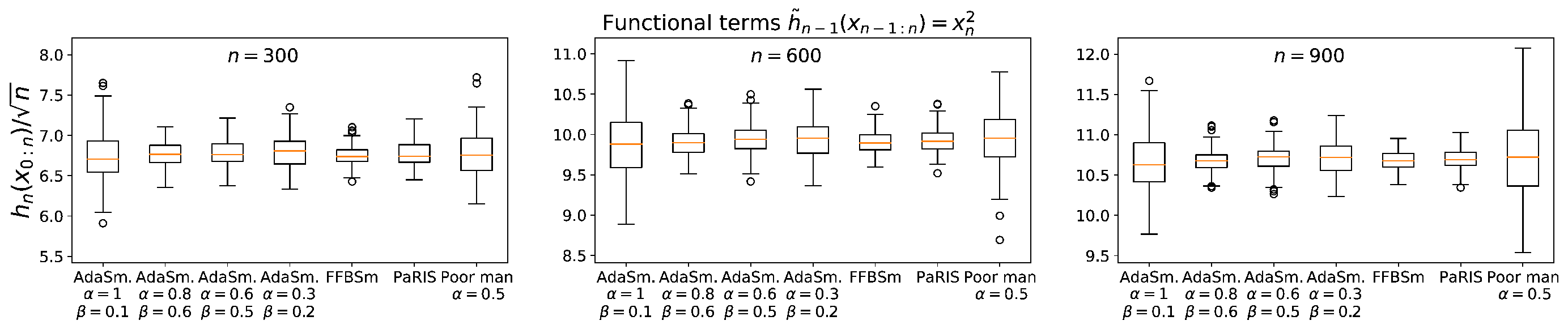}
	\includegraphics[width=0.9\textwidth]{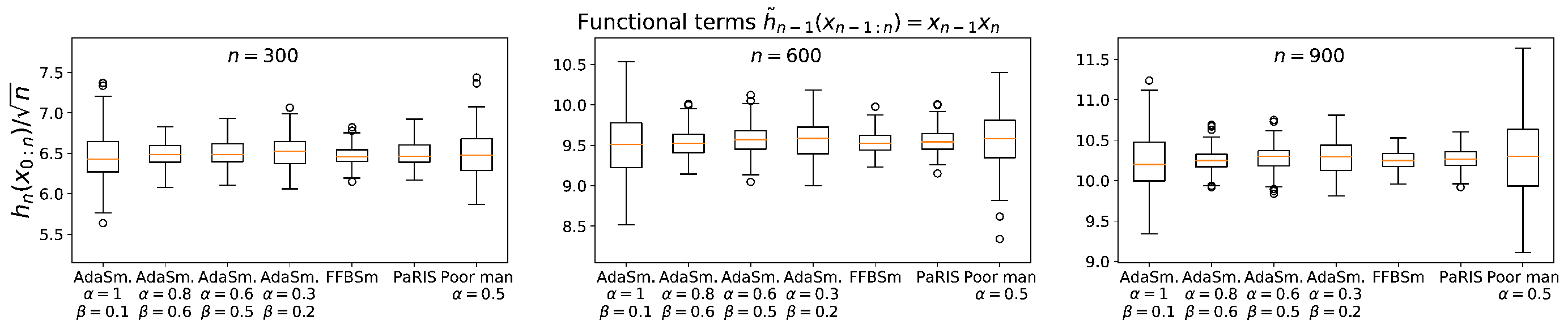}
		\caption{Boxplots of estimates (divided by $\sqrt{n}$) of smoothed expectations of three distinct smoothed functionals for $n \in \{300, 600,900\}$ in the stochastic volatility model in Section~\ref{subsec:sv}. For each algorithm, 100 estimates were produced using $\N=1000$ particles. Depending on the chosen parameters $\alpha$ and $\beta$, AdaSmooth was 8--45 times faster than the PaRIS, 80--400 times faster than the forward-only FFBSm and 2--10 times slower than the poor man's smoother. 
	%The same box plots of Figure \ref{fig:boxplotSV}, but for $\N=1000$. We observe that the ranges of the values are smaller than before, indicating the convergence of all algorithms as $\N$ increases. Depending on $\alpha$ and $\beta$, AdaSmooth was between 8 and 45 times faster than PaRIS, between 80 and 400 times faster than forward-only FFBSm and between 2 and 10 times slower than poor man's smoother.
	}
	\label{fig:boxplotSV}
\end{figure}

\begin{figure}[htb]
	\centering
	\includegraphics[width=0.85\textwidth]{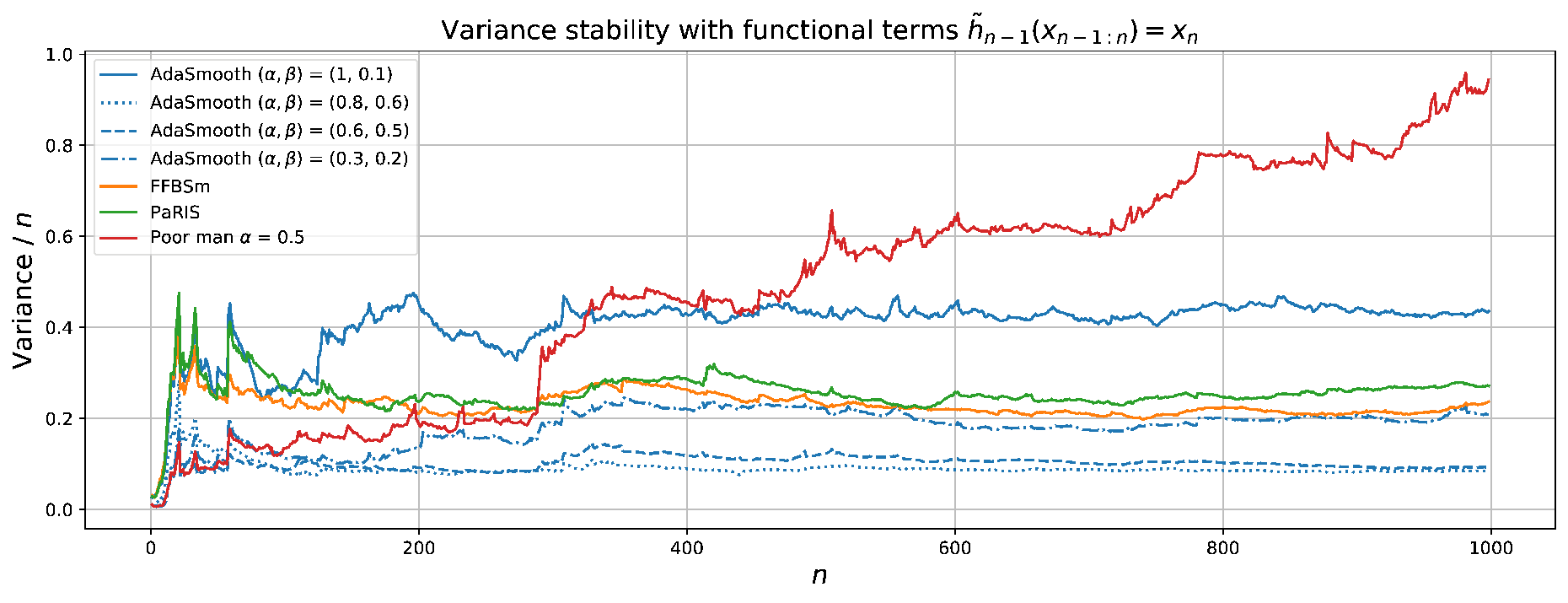}
	\caption{
	Time-normalized empirical variances, produced by AdaSmooth (with different parameterizations), the forward-only FFBSm, the PaRIS, and the poor man's smoother, of estimates of smoothed expectations of $\adds{n}(x_{0:n}) = \sum_{m=0}^n x_m$ for different $n$, in the stochastic volatility model in Section~\ref{subsec:sv}. The empirical variances were obtained by rerunning each algorithm 100 times with $\N=250$ particles.
	}
	\label{fig:varianceSV}
\end{figure}

Finally, Table~\ref{tab:samplingTimesSV} reports the average time duration between adaptive resampling operations as well as the number of selection operations on average between subsequent backward-sampling operations for different parameterizations. We observe that these averages stay basically constant when $\N$ is varied, suggesting that the operations are triggered regularly for given $\alpha$ and $\beta$. This supports our heuristic arguments outlined in Section~\ref{sec:theory}. We have also observed that the parameter $\beta$ may become useless if greater than $\alpha$, especially for $\alpha$ being greater than about 0.5, since selection is likely to automatically trigger backward sampling in that case.

\begin{table}[htb]
	\centering
	\resizebox{12cm}{!}{\begin{tabular}{c|cc|cc|cc|cc|cc|cc|cc|cc|cc|cc} %\begin{tabular}{c|rr|rr|rr|rr|rr|rr|rr|rr|rr|rr}
			\toprule
		$\N$%\diagbox{$(\alpha,\beta)$}{$N$}
	 &       \multicolumn{2}{c|}{50}     &       \multicolumn{2}{c|}{100}    &       \multicolumn{2}{c|}{250}    &       \multicolumn{2}{c|}{500}    &       \multicolumn{2}{c|}{1,000}   &       \multicolumn{2}{c|}{2,000}   &       \multicolumn{2}{c|}{5,000}   &       \multicolumn{2}{c|}{10,000}  &       \multicolumn{2}{c|}{50,000}  &       \multicolumn{2}{c}{100,000} \\ \midrule$\alpha=1.0$, $ \beta=0.1$   &  1.0& 14.2 &  1.0& 14.5 &  1.0& 14.3 &  1.0& 14.3 &  1.0& 14.3 &  1.0& 14.3 &  1.0& 14.3 &  1.0& 14.3 &  1.0& 14.3 &  1.0& 14.3 \\$\alpha=0.8$, $ \beta=0.6$ &   4.8& 1.2 &   4.6& 1.2 &   4.6& 1.1 &   4.6& 1.1 &   4.6& 1.0 &   4.7& 1.0 &   4.7& 1.0 &   4.6& 1.0 &   4.7& 1.0 &   4.6& 1.0 \\$\alpha=0.6$, $ \beta=0.5$ &   8.9& 1.5 &   8.9& 1.5 &   8.7& 1.6 &   8.6& 1.7 &   8.6& 1.7 &   8.5& 1.7 &   8.6& 1.7 &   8.6& 1.7 &   8.6& 1.7 &   8.6& 1.7 \\$\alpha=0.3$, $ \beta=0.2$ &  20.0& 2.6 &  19.2& 2.6 &  18.8& 2.7 &  18.4& 2.8 &  18.4& 2.7 &  18.2& 2.8 &  18.3& 2.8 &  18.1& 2.8 &  18.1& 2.8 &  18.1& 2.8 \\$\alpha=0.5$, $ \beta=0.6$ &  11.7& 1.0 &  11.5& 1.0 &  11.2& 1.0 &  11.2& 1.0 &  11.1& 1.0 &  11.2& 1.0 &  11.2& 1.0 &  11.2& 1.0 &  11.2& 1.0 &  11.2& 1.0 \\\bottomrule
	\end{tabular}}
	\caption{
	Average time duration between adaptive resampling operations (left columns) and the number of selection operations on average between subsequent backward-sampling operations (right columns) for different parameterizations of AdaSmooth in the stochastic volatility model in Section~\ref{subsec:sv}. For each parameterization and particle sample size, the values are based on a single run of the algorithm across $n=10,\!000$ time steps.
	}
	\label{tab:samplingTimesSV}
\end{table}

\section{CONCLUSIONS}\label{sec:discuss}
The presented algorithm, AdaSmooth, aims to combine the best of standard adaptive sequential importance sampling with resampling---which is computationally fast but numerically unstable---and the best of the PaRIS---whose long-term numerical stability is obtained via computationally costly backward sampling. As only limited code extensions of the standard particle filter is needed, AdaSmooth is very easily implemented or at least not significantly more complicated than the PaRIS in this respect. Still, the estimator is \emph{function-specific} in that the implementation depends on the additive functional under consideration. 

Even though the recursive updating step \eqref{eq:newupdate} of AdaSmooth, combining the forward and backward indices produced by the selection and backward-sampling operations, respectively, gives the estimator a very complex intrinsic dependence structure, we have been able to, by adapting existing theoretical analyses of the PaRIS, furnish the proposed algorithm with solid convergence and stability results, at least as long as the backward sampling schedule is adapted to the forward sampling schedule. As indicated by our numerical examples, AdaSmooth provides a tremendous improvement, of about one and two orders of magnitude, in terms of accuracy and computational efficiency compared to the PaRIS and the forward-only FFBSm algorithms, respectively. The improvement depends on the algorithmic parameters $\alpha$ and $\beta$, and in the models we tested it was observed that having both values around 0.5 with $\alpha$ larger than $\beta$ provides the best results. Even if we do not exclude that other combinations could work better on other examples, we dare to elevate this to a general rule of thumb when it comes to selecting these parameters. 

The theoretical analysis of the backward-sampling schedule $(\M[\N]{n})_{n \in \nset}$ proposed in Algorithm~\ref{algo:backdraws} remains an open---and possibly very complex---problem that we leave as future research. Other possible directions of future research are the extension of AdaSmooth beyond additive functionals and the further improvement of the algorithm via adaptation of the proposal kernels and adjustment multipliers of the APF. 

\begin{acks}[Acknowledgments]
	The authors would like to thank Dr. Sumeetpal Singh for valuable discussions.
\end{acks}

\begin{funding}
	J. Olsson and A. Mastrototaro are supported by the Swedish Research Council, Grant 2018-05230. 
\end{funding}

%\bibliographystyle{imsart-nameyear}
%\bibliography{motherofallbibs}

%%%%%%%%%%%%%%%%%%%%%%%%%%%%%%%%%%%%%%%%%%%%%%
%% Single Appendix:                         %%
%%%%%%%%%%%%%%%%%%%%%%%%%%%%%%%%%%%%%%%%%%%%%%
\begin{appendix}
	\newaliascnt{assumption2}{theorem}
	\newtheorem{assumption2}[assumption2]{Assumption}
	\aliascntresetthe{assumption2}	
	\renewcommand\thesection{\Alph{section}}
	\numberwithin{equation}{section}
	\numberwithin{algorithm}{section}
	\numberwithin{assumption}{section}
	
	\begin{center}
		\textbf{\\APPENDIX}
	\end{center}
	\section{Introduction to the Appendix}
\label{sec:introsupp}

In this Appendix we provide the proofs of the theoretical results in Section~\ref{sec:theory}; more specifically, we will present the proofs of Theorem~\ref{thm:as} (strong consistency of AdaSmooth in the case of a deterministic selection and backward-sampling schedule), Theorem~\ref{thm:clt} (asymptotic normality in the deterministic case), Theorem~\ref{thm:bound} ($\ordo(n)$ bound on the asymptotic variance). To this comes proofs of Proposition~\ref{prop:delta} and Lemma~\ref{lemma:ess}. The proofs of the first three theorems are, to some extent, adaptations of the corresponding proofs for the PaRIS presented by \citet{olsson:westerborn:2017} and \citet{gloaguen:lecorff:olsson:2021}. Still, the AdaSmooth updating rule \eqref{eq:newupdate}, which includes the forward indices of the APF as well as backward indices generated by backward sampling, induces a complex dependence structure that makes, as we will see, these adaptations highly non-trivial. In order to establish the mentioned results, we will proceed in two steps: first, we will, in Section~\ref{sec:convstab}, analyze an AdaSmooth algorithm with systematic resampling and an arbitrary, time-varying number of backward samples (Algorithm~\ref{algo:adasmsupp}), and prove the results in that case; second, we will, in Section~\ref{sec:detsel}, extend the results obtained in the systematic case to general deterministic schedules satisfying Assumption~\ref{assum:adaptiveRes} using an auxiliary path-based extension of the model in Section~\ref{sec:background}. 

The Appendix is organized as follows. The rest of this section recapitulates our model, presents some kernel notation needed in the proofs and introduces a modification of AdaSmooth with systematic resampling and arbitrary backward sample sizes (instead of either zero or one such draw, as in Algorithm~\ref{algo:new}). In Section~\ref{sec:convstab} we establish a Hoeffding-type exponential concentration inequality and the asymptotic normality of Algorithm~\ref{algo:adasmsupp}, together with a bound on the limit superior of the time-normalized asymptotic variance, the latter establishing the long-term numerical stability of the algorithm. In Section~\ref{sec:detsel}, the results of Section~\ref{sec:convstab} are extended to general deterministic resampling schedules via the auxiliary model extension mentioned above. Sections~\ref{sec:adaptres} and \ref{sec:propbound} provide the proofs of Lemma~\ref{lemma:ess} and Proposition~\ref{prop:delta}, respectively, the latter ensuring stability in the case of a regular backward-sampling schedule. 

\subsection{Notation and model description}

\subsubsection{Some kernel notation}

The coming developments require an expansion of the notation apparatus used in Section~\ref{sec:intro} \citealp[similar notation was used in][]{olsson:westerborn:2017,gloaguen:lecorff:olsson:2021}. For any measurable space $(\set{E}, \alg{E})$ we let $\meas{\alg{E}}$, $\probmeas{\alg{E}} \subset \meas{\alg{E}}$ and $\bmf{\alg{E}}$ be the sets of $\sigma$-finite measures on $\alg{E}$, probability measures on $\alg{E}$ and bounded $\alg{E}/\borel{\rset}$-measurable functions, respectively. For any $\mu \in \meas{\alg{E}}$ and $h \in \bmf{\alg{E}}$ we denote by $\mu h \eqdef \int h(x) \, \mu(dx)$ the Lebesgue integral of $h$ under $\mu$.  

The following kernel notation will be used over and over again. Let $(\set{E}_1, \alg{E}_1)$ and $(\set{E}_2, \alg{E}_2)$ be general measurable spaces; then a possibly unnormalized transition kernel $\kernel{Q}$ on $\set{E}_1 \times \alg{E}_2$ induces three operations: one on $\bmf{\alg{E}_1 \tensprod \alg{E}_2}$ and two on $\meas{\alg{E}_1}$. More specifically, for any $h \in \bmf{\alg{E}_1 \tensprod \alg{E}_2}$ and $\mu \in \meas{\alg{E}_1}$ we define the measurable function 
$$
	\kernel{Q} h: \set{E}_1 \ni x \mapsto \int h(x,y) \, \kernel{Q}(x, dy)
$$
as well as the measures 
\[
	\begin{split}
		\mu \kernel{Q} : \alg{E}_2 \ni A &\mapsto \int \kernel{Q}(x, A) \, \mu(dx), \\
		\mu \tensprod \kernel{Q} : \alg{E}_1 \tensprod \alg{E}_2 \ni A &\mapsto \iint_A  \kernel{Q}(x, dy) \, \mu(dx). 
	\end{split}
\]
Now, let $(\set{E}_3, \alg{E}_3)$ be another measurable space and $\kernel{P}$ a possibly unnormalized transition kernel on $\set{E}_2 \times \alg{E}_3$; then we define two kind of products of $\kernel{Q}$ and $\kernel{P}$, namely the \emph{product kernel} 
$$
	\kernel{Q} \kernel{P}: \set{E}_1 \times \alg{E}_3 \ni (x, A) \mapsto \int \kernel{Q}(x, dy) \, \kernel{P}(y, A)
$$ 
and the \emph{tensor-product kernel} 
$$ 
	\kernel{Q} \tensprod \kernel{P} : \set{E} _1\times (\alg{E}_2 \tensprod \alg{E}_3) \ni (x, A) \mapsto \iint_A \kernel{P}(x, dy) \, \kernel{Q}(y, dz).
$$ 
We will sometimes define kernels (measures) by specifying their products with (expectations of) bounded measurable functions. Finally, from time to time we will write $\kernel{Q}^2 h \eqdef (\kernel{Q} h)^2$, $\kernel{Q} h^2 \eqdef \kernel{Q}(h^2)$, $ \mu^2 h \eqdef (\mu h)^2$ and $\mu h^2 \eqdef \mu(h^2)$. 

\subsubsection{The path-space model in Section~1.1 reconsidered}
\label{sec:model}

Recall the general path-space model given in Section~1.1, comprising sequences $(\set{X}_n, \alg{X}_n)_{n \in \nset}$, $(\ld{n})_{n \in \nset}$ and $(\mu_n)_{n \in \nset}$ of measurable spaces, possible unnormalized transition densities and reference measures, respectively, as well as a possibly unnormalized density function $\chi$ on $\set{X}_0$. Here, for each $n \in \nset$, $\ld{n}$ and $\mu_n$ are defined on $\set{X}_n \times \set{X}_{n + 1}$ and $\alg{X}_n$, respectively. Using these quantities, we now introduce the unnormalized transition kernels 
\begin{equation}
	\lk{n}: \set{X}_n \times \alg{X}_{n+1} \ni (x_n, A) \mapsto \int_A \ld{n}(x_n, x_{n + 1}) \, \mu_{n+1}(dx_{n + 1}), \quad n \in \nset, 
\end{equation}
with, by convention, $\lk{n} \lk{m} = \text{id}$ if $n > m$. In addition, we will abuse notations and let $\chi$ also denote the distribution $\alg{X}_0 \ni A \mapsto \int_A \chi(x) \, \mu_0(dx)$. These notations allow us to express the path-space distributions defined in \eqref{eq:smooth} in the compact form 
\begin{equation} \label{eq:joint:smoothing:distribution}
	\post{0:n} = \frac{\Xinit \varotimes \lk{0} \varotimes \cdots \varotimes \lk{n-1}}{\Xinit \lk{0}\cdots\lk{n-1}\1{\set{X}_n}}, \quad n \in \nset, 
\end{equation}
and the corresponding marginals as
\begin{equation} \label{eq:filter:distribution}
	\post{n} = \frac{\Xinit  \lk{0} \cdots \lk{n-1}}{\Xinit \lk{0} \cdots \lk{n-1} \1{\set{X}_n}}, \quad n \in \nset. 
\end{equation}
The following \emph{backward kernels} will play a key role in the forthcoming developments. For every $n \in \nset$, define
\begin{equation}
	\bkm{n} : \set{X}_{n + 1} \times \alg{X}_n \ni (x_{n + 1}, A) \mapsto \frac{\int_A \ld{n}(x_n, x_{n+1}) \, \post{n}(dx_n)}{\int \ld{n}(x_n' ,x_{n+1}) \, \post{n}(dx_n')}. 
\end{equation}
\citet[Lemma 2.2]{gloaguen:lecorff:olsson:2021} show that $\bkm{n}$ is a \emph{reverse kernel} with respect to $\post{n}$ and $\lk{n}$ in the sense that 
\begin{equation} \label{eq:reversed:kernel}
	\post{n} \varotimes \lk{n} = (\post{n} \lk{n}) \varotimes \bkm{n}. 
\end{equation}
On the basis of the backward kernels we define, for every $n \in \nset$, 
\begin{equation}
	\tstat{n} \eqdef 
	\begin{cases}
		\bkm{n - 1} \varotimes \cdots \varotimes \bkm{0} &\text{for } n \in \nsetpos, \\ 
		\text{id} &\text{for } n = 0,
	\end{cases}
\end{equation}
and, again by \citet[Lemma~2.2]{gloaguen:lecorff:olsson:2021}, it holds that $\post{0:n} = \post{n} \varotimes \tstat{n}$. For additive functionals $(h_n)_{n \in \nset}$ in the form \eqref{eq:adds}, the functions $(\tstat{n} h_n)_{n \in \nset}$ satisfy the forward recursion 
\begin{equation} \label{eq:forward:recursion}
	\tstat{n+1} \adds{n+1}=\bkm{n} \varotimes \tstat{n}(\adds{n} + \addf{n}) = \bkm{n}(\tstat{n}\adds{n}+\addf{n}), \quad n \in \nset. 
\end{equation}
Finally, for $n \in \nset$, $m \in \intvect{0}{n}$, $x_m \in\set{X}_m$ and $h \in \bmf{\alg{X}_0 \varotimes \cdots \varotimes \alg{X}_n}$, we define the \emph{retro-prospective kernels}
\begin{equation} \label{eq:retro:prospective}
	\begin{split}
		\BF{m}{n}h(x_m) &\eqdef \iint h(x_{0:n}) \, \tstat{m}(x_m ,dx_{0:m-1}) \, \lk{m}\cdots\lk{n-1}(x_m,dx_{m+1:n}), \\
		\BFcent{m}{n}h(x_m) &\eqdef \BF{m}{n}(h - \post{0:n}h)(x_m).
	\end{split}
\end{equation}

\section{Theoretical analysis of AdaSmooth in the case of systematic resampling}
\label{sec:convstab}

\subsection{AdaSmooth with systematic resampling}

As explained above we will first analyse a version of Algorithm~\ref{algo:new} with systematic resampling, \ie, with $\res[\N]{n} = \res{n} =1$ almost surely for all $n$. Moreover, instead of being restricted to being binary-valued, the sequence $(\varepsilon_n)_{n \in \nset}$ may now take on any nonnegative integer values. This allows us to incorporate multiple backward draws (and not only a single such draw) into the AdaSmooth updates, as described in Algorithm~\ref{algo:adasmsupp} below. Even if using more than one backward draw does not, as we will see, improve further on the stability of the algorithm, it decreases somewhat the variance of the estimator; we hence present this extension here for completeness. The APF routine is given in Algorithm~\ref{algo:sisr} in the main paper, which is parameterized by sequences $(\am{n})_{n \in \nset}$ and $(\hd_n)_{n \in \nset}$ of adjustment-weight functions and proposal transition densities, respectively. For each $n \in \nset$, we let 
$$
\hk_n : \set{X}_n \times \alg{X}_{n + 1}  \ni (x_n, A) \mapsto \int_A \hd_n(x_n, x_{n + 1}) \, \mu_{n+1}(dx_{n + 1}) 
$$
the Markov transition kernel induced by $\hd_n$. 

\begin{algorithm}[H]
\caption{AdaSmooth with systematic resampling and multiple backward draws.}
	\begin{algorithmic}[1]\label{algo:adasmsupp}
		\REQUIRE $(\epart{n}{i}, \tstat[i]{n}, \wgt{n}{i})_{i=1}^\N$, $\M{n}$.
		\STATE run $(\epart{n+1}{i}, \I{n+1}{i}, \wgt{n+1}{i})_{i=1}^\N \leftarrow \apf((\epart{n}{i},\wgt{n}{i})_{i=1}^\N)$;
		\FOR{$i=1\rightarrow\N$}
		\IF{$\M{n}\ne0$}
		\FOR{$j=1\rightarrow\M{n}$}
		\STATE draw $\bi{n+1}{i}{j}\sim \catdist((\backprob{n}{\N}(i,\ell))_{\ell=1}^\N)$;\label{line:backsample}
		\ENDFOR
		\ENDIF
		\STATE set $\tstat[i]{n+1}\leftarrow(1+\M{n})^{-1}\bigg(\tstat[\I{n+1}{i}]{n}+\addf{n}(\epart{n}{\I{n+1}{i}},\epart{n+1}{i}) + \sum_{j=1}^{\M{n}} \big( \tstat[\bi{n+1}{i}{j}]{n} + \addf{n}(\epart{n}{\bi{n+1}{i}{j}},\epart{n+1}{i}) \big) \bigg)$;\label{line:updatesupp}
		\ENDFOR
		\RETURN $(\epart{n+1}{i}, \tstat[i]{n+1}, \wgt{n+1}{i})_{i=1}^\N$.
	\end{algorithmic}
\end{algorithm}

On Line~\ref{line:backsample}, each backward index $\bi{n+1}{i}{j}$ is drawn from the particle-induced backward probabilities 
$$
	\backprob{n}{\N}(i, \ell) = \frac{\wgt{n}{\ell} \ld{n}(\epart{n}{\ell},\epart{n+1}{i})}{\sum_{\ell'=1}^{\N} \wgt{n}{\ell'} \ld{n}(\epart{n}{\ell'}, \epart{n+1}{i})}, \quad (i, \ell) \in \intvect{1}{\N}^2, 
$$
defined in Section~\ref{sec:prevwork}. As in there, the initial particles $(\epart{0}{i})_{i=1}^{\N}$ are drawn from $\nu^{\tensprod  \N}$, where $\nu$ is a probability measure which is supposed to dominate $\Xinit$ and whose density function we denote by the same symbol, $\nu$, and assigned the weights $\wgt{0}{i} \eqdef \Xinit(\epart{0}{i})/\nu(\epart{0}{i})$. In addition, we set $\tstat[i]{0} \eqdef \adds{0}(\epart{0}{i})$ for all $i \in \intvect{1}{\N}$. In the updating rule on Line~\ref{line:updatesupp} (and everywhere else in the paper), the convention $\sum_{j=1}^0 = 0$ is used. 

\subsection{Exponential concentration of Algorithm~\ref{algo:adasmsupp}}

In the following, let $(\M{n})_{n\in\nset}$ be a given sequence of nonnegative integers. Recall the $\sigma$-fields 
\begin{align}
	\partfiltbar{n} = 
	\begin{cases}
		\sigma((\epart{0}{i})_{i=1}^{\N}) & \text{for } n = 0, \\
		\sigma((\epart{0}{i})_{i=1}^{\N}, (\epart{m}{i}, I_{m}^{i}, \tstat[i]{m})_{i = 1}^\N : m \in \intvect{1}{n}) & \text{for } n \in \nsetpos,
	\end{cases}
\end{align}
defined in Section~1.2. In addition, we define  
\begin{align}
	\partfilt{n} \eqdef \begin{cases}
		\partfiltbar{0} & \text{for } n = 0, \\
		\partfiltbar{n-1} \vee \sigma(\{ \epart{n}{i}, I_{n}^i \}_{i=1}^{\N}) & \text{for } n \in \nsetpos.
	\end{cases}
\end{align}
Here $\partfiltbar{n}$ is generated by the output of the first $n$ iterations of Algorithm~\ref{algo:adasmsupp}, while $\partfilt{n}$ is generated by the first $n - 1$ iterations and one additional update of the APF.

The following lemmas will be used repeatedly in the following developments, where the first is imported from \citet[Lemma~C.2]{gloaguen:lecorff:olsson:2021} and restated here for completeness. The second lemma extends a similar result obtained by \citet[Lemma~12]{olsson:westerborn:2017} \citealp[see also][Lemma~B.2]{gloaguen:lecorff:olsson:2021} for the PaRIS to the more complex AdaSmooth updating rule. 

\begin{lemma}[\citet{gloaguen:lecorff:olsson:2021}] \label{lemma:1}
	For all $n\in \nset$ and $(\testf[n+1], \testfp[n+1]) \in \bmf{\alg{X}_{n+1}}^2$ it holds that
	\begin{align*}
		\post{n+1}(\tstat{n+1}\adds{n+1}\testf[n+1]+\testfp[n+1])=\frac{\post{n}\{\tstat{n}\adds{n}\lk{n}\testf[n+1]+\lk{n}(\addf{n}\testf[n+1]+\testfp[n+1])\}}{\post{n}\lk{n}\1{\set{X}_{n+1}}}.
	\end{align*}
\end{lemma}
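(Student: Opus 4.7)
The plan is to reduce the claim to the reverse-kernel identity \eqref{eq:reversed:kernel} combined with the forward recursion \eqref{eq:forward:recursion}, both of which are already available. I would proceed in three short steps.

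First, I would rewrite the normalizing constant. Since $\post{n+1}$ and $\post{n}\lk{n}$ are both proportional to $\Xinit \lk{0} \cdots \lk{n}$ by \eqref{eq:filter:distribution}, we have $\post{n+1} = \post{n}\lk{n}/(\post{n}\lk{n}\1{\set{X}_{n+1}})$, so the denominator in the target identity is built in automatically and it suffices to prove the numerator identity
\begin{equation*}
  (\post{n}\lk{n})(\tstat{n+1}\adds{n+1}\testf[n+1]+\testfp[n+1]) = \post{n}\{\tstat{n}\adds{n}\lk{n}\testf[n+1]+\lk{n}(\addf{n}\testf[n+1]+\testfp[n+1])\}.
\end{equation*}

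Second, I would expand $\tstat{n+1}\adds{n+1}$ using \eqref{eq:forward:recursion}, namely $\tstat{n+1}\adds{n+1} = \bkm{n}(\tstat{n}\adds{n}+\addf{n})$, so the left-hand side above becomes
\begin{equation*}
  (\post{n}\lk{n})\bigl(\bkm{n}(\tstat{n}\adds{n}+\addf{n})\,\testf[n+1]\bigr) + (\post{n}\lk{n})\testfp[n+1].
\end{equation*}
The second summand is already $\post{n}\lk{n}\testfp[n+1]$, which matches part of the right-hand side after absorbing it into $\lk{n}(\addf{n}\testf[n+1]+\testfp[n+1])$.

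Third, for the first summand I would apply the reverse-kernel identity \eqref{eq:reversed:kernel}, $\post{n}\varotimes\lk{n} = (\post{n}\lk{n})\varotimes\bkm{n}$, to swap the roles of $\bkm{n}$ and $\lk{n}$: integrating the bivariate function $(x_n,x_{n+1}) \mapsto (\tstat{n}\adds{n}(x_n)+\addf{n}(x_n,x_{n+1}))\testf[n+1](x_{n+1})$ against $(\post{n}\lk{n})\varotimes\bkm{n}$ equals integrating the same function against $\post{n}\varotimes\lk{n}$. The latter gives exactly $\post{n}\{\tstat{n}\adds{n}\lk{n}\testf[n+1]+\lk{n}(\addf{n}\testf[n+1])\}$, which, combined with the $\testfp[n+1]$-term from the previous step, yields the claimed identity.

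There is no real obstacle here; the only thing to be careful about is tracking which variable each kernel integrates out, so that the swap via \eqref{eq:reversed:kernel} is applied to a genuine bivariate function rather than to the two summands separately. Writing the $\addf{n}\testf[n+1]$ term as a function of $(x_n,x_{n+1})$ and the $\tstat{n}\adds{n}\testf[n+1]$ term as a function of $(x_n,x_{n+1})$ that happens to be a tensor product makes this bookkeeping transparent, and the identity then follows from a single application of \eqref{eq:reversed:kernel} together with linearity of the integral.
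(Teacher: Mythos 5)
Your proof is correct. The paper does not actually prove this lemma---it imports it verbatim from \citet[Lemma~C.2]{gloaguen:lecorff:olsson:2021}---but your argument is exactly the canonical one: reduce to the unnormalized identity via $\post{n+1} = \post{n}\lk{n}/(\post{n}\lk{n}\1{\set{X}_{n+1}})$, expand $\tstat{n+1}\adds{n+1} = \bkm{n}(\tstat{n}\adds{n}+\addf{n})$ by the forward recursion, and apply the reverse-kernel identity $\post{n}\varotimes\lk{n} = (\post{n}\lk{n})\varotimes\bkm{n}$ to the single bivariate function $(x_n,x_{n+1})\mapsto(\tstat{n}\adds{n}(x_n)+\addf{n}(x_n,x_{n+1}))\testf[n+1](x_{n+1})$. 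Your closing remark about applying the swap to a genuine bivariate function rather than termwise is the right point of care, and the bookkeeping goes through as you describe.
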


\begin{lemma}\label{lemma:2}
	For all $n\in \nset$, $(\testf[n+1], \testfp[n+1]) \in \bmf{\alg{X}_{n+1}}^2$, $\N \in \nsetpos$ and $(\M{n})_{n\in\nset}$, the random variables $(\wgt{n+1}{i}\{\tstat[i]{n+1}\testf[n+1](\epart{n+1}{i}) + \testfp[n+1](\epart{n+1}{i})\})_{i=1}^{\N}$ are conditionally independent and identically distributed given $\partfiltbar{n}$ with common expectation 
	\begin{multline} \label{eq:cond:exp}
		\E\left[\wgt{n+1}{1}\{\tstat[1]{n+1}\testf[n+1](\epart{n+1}{1}) + \testfp[n+1](\epart{n+1}{1})\} \mid \partfiltbar{n}\right] \\
		=\left(\post[N]{n}\am{n}\right)^{-1}\sum_{i=1}^{\N} \frac{\wgt{n}{i}}{\wgtsum{n}}\{ \tstat[i]{n}\lk{n}\testf[n+1](\epart{n}{i})+\lk{n}(\addf{n}\testf[n+1] + \testfp[n+1])(\epart{n}{i}) \}.
	\end{multline}
\end{lemma}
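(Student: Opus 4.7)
The plan is to establish (i) conditional i.i.d.\ of the target random variables given $\partfiltbar{n}$ and (ii) their common conditional expectation via a three-way decomposition exploiting a crucial cancellation between the backward-sampling normalizing constant and the ancestor-selection probabilities. For (i), I would argue directly from the construction of Algorithm~\ref{algo:adasmsupp}: given $\partfiltbar{n}$, the APF of Algorithm~\ref{algo:sisr} (with $\res[\N]{n}=1$) produces $(\I{n+1}{i},\epart{n+1}{i})_{i=1}^{\N}$ conditionally i.i.d.\ across $i$, and the backward draws on Line~\ref{line:backsample} use fresh independent randomness for each $i$. Since both $\wgt{n+1}{i}$ and $\tstat[i]{n+1}$ are measurable functions of $(\I{n+1}{i},\epart{n+1}{i},(\bi{n+1}{i}{j})_{j=1}^{\M{n}})$ together with $\partfiltbar{n}$-measurable data, the conditional i.i.d.\ claim transfers to the quantities of interest.

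For (ii), I would split
\[
\wgt{n+1}{1}\{\tstat[1]{n+1}\testf[n+1](\epart{n+1}{1})+\testfp[n+1](\epart{n+1}{1})\}=T_F+T_B+T_0,
\]
where $T_F$ and $T_B$ collect, respectively, the $\I{n+1}{1}$-indexed and $(\bi{n+1}{1}{j})_{j=1}^{\M{n}}$-indexed contributions to $\tstat[1]{n+1}$ (each carrying the $(1+\M{n})^{-1}$ prefactor) and $T_0\eqdef\wgt{n+1}{1}\testfp[n+1](\epart{n+1}{1})$. For $T_F$ and $T_0$ the computation is routine: conditioning on $\I{n+1}{1}=k$, the $\hd_n(\epart{n}{k},\cdot)\mu_{n+1}$-integration of $\wgt{n+1}{1}$ against the test function cancels $\hd_n$, turns $\ld{n}$ into $\lk{n}$, and leaves a $1/\am{n}(\epart{n}{k})$ factor; averaging over $\I{n+1}{1}=k$ with probability $\wgt{n}{k}\am{n}(\epart{n}{k})/(\wgtsum{n}\post[\N]{n}\am{n})$ cancels $\am{n}(\epart{n}{k})$ and produces the right-hand side of \eqref{eq:cond:exp} (with, for $T_F$, an extra $(1+\M{n})^{-1}$ prefactor).

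The main obstacle is $T_B$. Given $\partfiltbar{n}$, $\I{n+1}{1}$, and $\epart{n+1}{1}$, the inner sum has expectation $\M{n}\sum_\ell \backprob{n}{\N}(1,\ell)[\tstat[\ell]{n}\testf[n+1](\epart{n+1}{1})+\addf{n}(\epart{n}{\ell},\epart{n+1}{1})\testf[n+1](\epart{n+1}{1})]$, a ratio whose denominator is $\sum_{\ell'}\wgt{n}{\ell'}\ld{n}(\epart{n}{\ell'},\epart{n+1}{1})$. The key observation is that when one subsequently multiplies by $\wgt{n+1}{1}$ and averages over $\I{n+1}{1}=k$, the $\ld{n}(\epart{n}{k},\epart{n+1}{1})/\am{n}(\epart{n}{k})$ factor from the weight combines with $\am{n}(\epart{n}{k})$ in the selection probability so that the sum over $k$ produces $\sum_k\wgt{n}{k}\ld{n}(\epart{n}{k},\epart{n+1}{1})$ in the numerator, exactly cancelling the backward-kernel denominator. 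This collapses $\E[T_B\mid\partfiltbar{n}]$ into $(1+\M{n})^{-1}\M{n}$ times precisely the expression obtained for $\E[T_F\mid\partfiltbar{n}]$ without its prefactor. Since $(1+\M{n})^{-1}(1+\M{n})=1$, adding the three conditional expectations yields \eqref{eq:cond:exp}.
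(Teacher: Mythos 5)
Your proposal is correct and follows essentially the same route as the paper's proof: conditional i.i.d.\ from the construction of the APF and the fresh backward-sampling randomness, the tower property conditioning on $(\I{n+1}{1},\epart{n+1}{1})$, the mixture representation of the selection-plus-mutation step proportional to $\wgt{n}{i}\am{n}(\epart{n}{i})\hk_n(\epart{n}{i},\cdot)$, and the key cancellation of the backward-kernel normalizing constant $\sum_{\ell'}\wgt{n}{\ell'}\ld{n}(\epart{n}{\ell'},\cdot)$ against the sum over ancestors produced by the weight. Your three-way split of the update into forward, backward and $\testfp[n+1]$ contributions is only a cosmetic reorganization of the paper's two-stage decomposition, and the prefactor bookkeeping $(1+\M{n})^{-1}(1+\M{n})=1$ matches the paper exactly.
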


\begin{proof}
At time $n$ the particles are resampled independently in proportion to their weights, yielding $(\epart{n}{\I{n+1}{i}})_{i=1}^\N$, where $\I{n+1}{i}$ is distributed according to $\catdist((\wgt{n}{\ell}\allowbreak \am{n}(\epart{n}{\ell}))_{\ell=1}^{\N})$. After this, each selected particle $\epart{n}{\I{n+1}{i}}$ is propagated according to $\hk_{n}(\epart{n}{\I{n+1}{i}},\cdot)$ to obtain $\epart{n+1}{i}$. The backward indices $(\bi{n+1}{i}{j})_{j=1}^\N$ are conditionally independent and identically distributed given the particle $\epart{n+1}{i}$ and the $\sigma$-field $\partfiltbar{n}$, and hence the statistics $(\tstat[i]{n+1})_{i=1}^\N$, obtained through Line~\ref{line:updatesupp}, are conditionally independent and identically distributed as well. It follows that also $(\wgt{n+1}{1}\{\tstat[1]{n+1}\testf[n+1](\epart{n+1}{1}) + \testfp[n+1](\epart{n+1}{1})\})_{i=1}^\N$ are conditionally independent and identically distributed given $\partfiltbar{n}$. 
	
In order to establish \eqref{eq:cond:exp}, we consider the two terms separately, the first one being
\begin{align} 
	\lefteqn{\E \left[ \wgt{n+1}{1} \tstat[1]{n+1}\testf[n+1](\epart{n+1}{1}) \mid \partfiltbar{n} \right]} \nonumber \\
	&= \E \left[ \wgt{n+1}{1}\testf[n+1](\epart{n+1}{1}) \E \left[ \tstat[1]{n+1} \mid \partfilt{n+1} \right] \mid \partfiltbar{n} \right] \nonumber \\
	&= \frac{1}{1+\M{n}} \E\left[\wgt{n+1}{1}\testf[n+1](\epart{n+1}{1})(\tstat[I_{n+1}^1]{n}+\addf{n}(\epart{n}{I_{n+1}^1},\epart{n+1}{1})) \mid \partfiltbar{n} \right] \nonumber \\ 
	& \quad + \frac{\M{n}}{1+\M{n}} \E\left[\wgt{n+1}{1}\testf[n+1](\epart{n+1}{1}) \sum_{j=1}^{\N} \backprob{n}{\N}(1, j) 
	(\tstat[j]{n}+\addf{n}(\epart{n}{j},\epart{n+1}{1})) \mid \partfiltbar{n} \right]. \nonumber \label{eq:condexp}
\end{align}
Now, recall from Algorithm~\ref{algo:sisr} that conditionally to $\partfiltbar{n}$, each particle $\epart{n + 1}{i}$ and its associated forward index $I_{n+1}^i$ at time $n+1$ are sampled from the mixture on $\intvect{1}{\N} \times \alg{X}_{n + 1}$ proportional to $\wgt{n}{i} \am{n}(\epart{n}{i}) \hk_n(\epart{n}{i}, \cdot)$; thus, for the first term it holds that  
\begin{align}
	\lefteqn{\E \left[ \wgt{n+1}{1} \tstat[1]{n+1}\testf[n+1](\epart{n+1}{1}) \mid \partfiltbar{n} \right]} \\
	&
	\!\begin{multlined}
	= \frac{1}{1+\M{n}}\\\times \sum_{i=1}^{\N} \frac{\wgt{n}{i}\am{n}(\epart{n}{i})}{\sum_{i'=1}^{\N} \wgt{n}{i'}\am{n}(\epart{n}{i'})} \int 	\frac{\ld{n}(\epart{n}{i},x)}{\am{n}(\epart{n}{i})\hd_{n}(\epart{n}{i},x)}\testf[n+1](x)(\tstat[i]{n}+\addf{n}(\epart{n}{i},x)) \, \hk_n(\epart{n}{i}, dx) \\ 
	+ \frac{\M{n}}{1+\M{n}} \sum_{i=1}^{\N}\frac{\wgt{n}{i}\am{n}(\epart{n}{i})}{\sum_{i'=1}^{\N} \wgt{n}{i'} \am{n}(\epart{n}{i'})}\int \frac{\ld{n}(\epart{n}{i},x)}{\am{n}(\epart{n}{i})\hd_{n}(\epart{n}{i},x)}\testf[n+1](x) \\
	\times\sum_{j=1}^{\N}\frac{\wgt{n}{j}\ld{n}(\epart{n}{j},x)}{\sum_{k=1}^{\N} \wgt{n}{j'}\ld{n}(\epart{n}{j'}, x)}(\tstat[j]{n}+\addf{n}(\epart{n}{j},x)) \, \hk_n(\epart{n}{i}, dx)
	\end{multlined} \\
	&= (\post[N]{n}\am{n})^{-1}\sum_{i=1}^{\N}\frac{\wgt{n}{i}}{\wgtsum{n}}\{\tstat[i]{n}\lk{n}\testf[n+1](\epart{n}{i})+\lk{n}(\addf{n}\testf[n+1])(\epart{n}{i})\}.
\end{align}
Similarly, for the second term, 
\begin{align*}
	\E \left[ \wgt{n+1}{1} \testfp[n+1](\epart{n+1}{1}) \mid \partfiltbar{n} \right] = (\post[N]{n}\am{n})^{-1} \sum_{i=1}^{\N}\frac{\wgt{n}{i}}{\wgtsum{n}}\lk{n}\testfp[n+1](\epart{n}{i}),
\end{align*}
and the proof is completed by summing up these two identities.  
\end{proof}

The following assumption imposes the particle weights to the bounded, which is standard in importance sampling. 

\begin{assumption2}\label{assum:supp1}
	For every $n \in \nset$, the weight function
	\begin{equation}
		w_n:\set{X}_n\times\set{X}_{n+1}\ni(x,x')\mapsto\frac{\ld{n}(x,x')}{\am{n}(x)\hd_n(x,x')}\quad \text{and}\quad  w_{-1}:\set{X}_0\ni x\mapsto\frac{\Xinit(x)}{\nu(x)}
	\end{equation}
	as well as the adjustment-weight function $\am{n}$ are bounded. 
\end{assumption2}

Under Assumption~\ref{assum:supp1}, the following exponential concentration inequalities can, using Lemma~\ref{lemma:2}, be established along the very same lines as the proof of Proposition~B.1 in \citet{gloaguen:lecorff:olsson:2021}, and the proof is hence omitted. Recall that we in Section~\ref{sec:theory} defined $\mathsf{H}_n$ as the set of additive functionals $h_n$ in the form \eqref{eq:adds} with bounded terms.  

\begin{theorem}[Hoeffding-type inequalities] \label{thm:hoef}
	Let Assumption~\ref{assum:supp1} hold. Then for every $n\in\nset$, $\adds{n} \in \mathsf{H}_n$, $(\testf[n], \testfp[n]) \in \bmf{\alg{X}_{n}}^2$ and $(\M{m})_{m=0}^{n-1}$ there exist $(c_n, c'_n) \in (\rsetpos)^2$ (depending on $\adds{n}$, $(\M{m})_{m=0}^{n-1}$, $\testf[n]$ and $\testfp[n]$) such that for all $\N \in \nsetpos$ and all $\epsilon > 0$,
	\begin{itemize}
		\item[(i)] 
		{\footnotesize$
			\displaystyle \prob \left(\left \lvert \frac{1}{N} \sum_{i=1}^{\N} \wgt{n}{i} \{\tstat[i]{n}\testf[n](\epart{n}{i}) + \testfp[n](\epart{n}{i})\} - \dfrac{\post{n-1} \lk{n-1}(\tstat{n}\adds{n}\testf[n] + \testfp[n])}{\post{n-1} \am{n-1}} \right\rvert \ge \epsilon \right) \le c_n e^{- c_n' \N \epsilon^2}, 
		$}
		\item[(ii)] 
		{\footnotesize$
			\displaystyle \prob\left(\left\lvert \sum_{i=1}^{\N}\frac{\wgt{n}{i}}{\wgtsum{n}}\{\tstat[i]{n}\testf[n](\epart{n}{i})+ \testfp[n](\epart{n}{i})\}-\post{n}(\tstat{n}\adds{n}\testf[n]+ \testfp[n])  \right \rvert \ge \epsilon\right) \le c_n e^{- c'_n\N \epsilon^2}.
		$}
	\end{itemize}
\end{theorem}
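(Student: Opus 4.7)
The plan is to prove (i) and (ii) jointly by induction on $n$. The two key ingredients are Lemma~\ref{lemma:2}, which identifies the conditional law and common conditional mean of the summands at step $n+1$ given $\partfiltbar{n}$, and Lemma~\ref{lemma:1}, which describes how the target $\post{n+1}(\tstat{n+1}\adds{n+1}\testf[n+1] + \testfp[n+1])$ is obtained from $\post{n}$, $\tstat{n}\adds{n}$ and $\lk{n}$. Throughout, (ii) will be derived from (i) by the customary self-normalization device: the difference $\wgtsum{n}^{-1}\sum_i \wgt{n}{i}Z^i - \text{target}$ is rewritten as $(\wgtsum{n}/\N)^{-1}\{\N^{-1}\sum_i \wgt{n}{i}(Z^i - \text{target})\}$ and (i) is applied separately to the numerator and to $\wgtsum{n}/\N$ (the latter by choosing $\testf[n]\equiv 0$ and $\testfp[n]\equiv 1$), using that $\nu(w_{-1})>0$ bounds the denominator away from zero with exponentially small failure probability.

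For the base case $n=0$ the random variables $(\wgt{0}{i}\{\tstat[i]{0}\testf(\epart{0}{i}) + \testfp(\epart{0}{i})\})_{i=1}^{\N}$ are i.i.d., bounded by $\supn{w_{-1}}(\supn{\adds{0}}\supn{\testf} + \supn{\testfp})$ and have common mean $\nu(w_{-1}(\adds{0}\testf + \testfp))$; the classical Hoeffding inequality immediately yields (i) at $n=0$. For the inductive step I would split the deviation in (i) at level $n+1$ into a \emph{sampling fluctuation} (the difference between $\N^{-1}\sum_i \wgt{n+1}{i}\{\tstat[i]{n+1}\testf[n+1](\epart{n+1}{i}) + \testfp[n+1](\epart{n+1}{i})\}$ and its conditional expectation given $\partfiltbar{n}$) and a \emph{propagation error} (the difference between that conditional expectation and the desired target). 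By Lemma~\ref{lemma:2} the summands of the sampling fluctuation are, conditionally on $\partfiltbar{n}$, i.i.d.\ and uniformly bounded by a constant depending on $\supn{w_n}$, $\supn{\testf[n+1]}$, $\supn{\testfp[n+1]}$, $\supn{\adds{0}}$ and $(\supn{\addf{m}})_{m=0}^{n}$, so conditional Hoeffding (integrated) yields the exponential bound. For the propagation error, Lemma~\ref{lemma:2} rewrites the conditional expectation as $(\post[\N]{n}\am{n})^{-1}$ times a self-normalized particle average at step $n$ evaluated on the modified test pair $(\lk{n}\testf[n+1],\lk{n}(\addf{n}\testf[n+1]+\testfp[n+1]))$; combining this with the forward-recursion identity of Lemma~\ref{lemma:1} for the target and applying the inductive hypothesis (ii) with these new test functions (supplemented by (ii) applied to $\testf[n]\equiv 0$, $\testfp[n]\equiv \am{n}$ in order to control $1/(\post[\N]{n}\am{n})$ via the uniform lower boundedness of $\am{n}$) provides the required concentration.

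The main obstacle is bookkeeping rather than any conceptual novelty. The AdaSmooth update for $\tstat[i]{n+1}$ blends the forward-selected ancestor $\tstat[\I{n+1}{i}]{n}$ with the backward-sampled ancestors $(\tstat[\bi{n+1}{i}{j}]{n})_{j=1}^{\M{n}}$, and this induces a complex dependence between $(\tstat[i]{n+1})_i$ and the step-$n$ particle system; but Lemma~\ref{lemma:2} collapses this dependence into a single clean conditional-mean identity that is structurally indistinguishable from its PaRIS counterpart, so the inductive step proceeds along the lines of \citet[Proposition~B.1]{gloaguen:lecorff:olsson:2021}. The constants $c_n$ and $c_n'$ of course grow with $n$ through $\supn{\tstat[i]{n}} \le \supn{\adds{0}} + \sum_{m=0}^{n-1}\supn{\addf{m}}$ entering the Hoeffding range, but this $n$-dependence is invisible at the level of a fixed-$n$ inequality and is sharpened only later in the long-time stability analysis of Theorem~\ref{thm:bound}.
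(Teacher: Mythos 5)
Your proof is correct and follows exactly the route the paper intends: the authors omit the proof of Theorem~\ref{thm:hoef}, stating that it is obtained from Lemma~\ref{lemma:2} "along the very same lines" as Proposition~B.1 of \citet{gloaguen:lecorff:olsson:2021}, which is precisely your induction with the conditional-Hoeffding sampling-fluctuation term, the propagation-error term handled via the inductive hypothesis on the test pair $(\lk{n}\testf[n+1],\lk{n}(\addf{n}\testf[n+1]+\testfp[n+1]))$ and Lemma~\ref{lemma:1}, and the self-normalization step for (ii). One tiny slip: Assumption~\ref{assum:supp1} gives no uniform lower bound on $\am{n}$; you only need $\post{n}\am{n}>0$ (positivity of the adjustment multipliers), which is all the standard ratio argument controlling $(\post[\N]{n}\am{n})^{-1}$ requires.
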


The following corollary follows immediately by letting $\testf[n] \equiv \1{\set{X}_n}$ and $\testfp[n] \equiv 0$ in Theorem~\ref{thm:cltsupp}.

\begin{corollary} \label{cor:hoeffding}
Let Assumption~\ref{assum:supp1} hold. Then for every $n\in\nset$, $\adds{n} \in \mathsf{H}_n$ and $(\M{m})_{m=0}^{n-1}$ there exist $(c_n, c'_n) \in (\rsetpos)^2$ (depending on $\adds{n}$, $(\M{m})_{m=0}^{n-1}$, $\testf[n]$ and $\testfp[n]$) such that for all $\N \in \nsetpos$ and all $\epsilon > 0$,
$$
	\displaystyle \prob\left(\left\lvert \sum_{i=1}^{\N}\frac{\wgt{n}{i}}{\wgtsum{n}} \tstat[i]{n} - \post{0:n} \adds{n} \right \rvert \ge \epsilon\right) \le c_n e^{- c'_n\N \epsilon^2}.
$$
\end{corollary}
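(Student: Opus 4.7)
The plan is to specialize Theorem~\ref{thm:hoef}(ii) to a particular choice of auxiliary test functions, since the corollary is explicitly announced as an immediate consequence of that theorem. Concretely, I would take $\testf[n] \equiv \1{\set{X}_n}$ (the constant function equal to one on $\set{X}_n$) and $\testfp[n] \equiv 0$. Both are clearly bounded and hence lie in $\bmf{\alg{X}_n}$, so Theorem~\ref{thm:hoef}(ii) applies to this pair for any additive functional $\adds{n} \in \mathsf{H}_n$ and any deterministic backward-sampling sequence $(\M{m})_{m=0}^{n-1}$.

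With these choices, the empirical quantity $\sum_{i=1}^{\N}(\wgt{n}{i}/\wgtsum{n})\{\tstat[i]{n}\testf[n](\epart{n}{i})+\testfp[n](\epart{n}{i})\}$ appearing inside the probability in Theorem~\ref{thm:hoef}(ii) collapses to $\sum_{i=1}^{\N}(\wgt{n}{i}/\wgtsum{n})\tstat[i]{n}$, which is precisely the estimator featured in the statement of the corollary. On the deterministic side, the target $\post{n}(\tstat{n}\adds{n}\testf[n]+\testfp[n])$ becomes $\post{n}(\tstat{n}\adds{n})$. Invoking the factorization $\post{0:n} = \post{n} \varotimes \tstat{n}$ recalled in Section~\ref{sec:model} (a consequence of the reverse-kernel identity \eqref{eq:reversed:kernel} from \citet[Lemma~2.2]{gloaguen:lecorff:olsson:2021}), this rewrites as $\post{0:n}\adds{n}$.

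Substituting these simplifications into Theorem~\ref{thm:hoef}(ii) and carrying over the constants $c_n, c_n'$ supplied by that theorem (which already depend on $\adds{n}$, $(\M{m})_{m=0}^{n-1}$, $\testf[n]$ and $\testfp[n]$, and therefore become functions of $\adds{n}$ and $(\M{m})_{m=0}^{n-1}$ alone once the two test functions are fixed) yields exactly the bound in the statement. There is no genuine obstacle here; the only point worth double-checking is that the identity $\post{n}(\tstat{n}\adds{n}) = \post{0:n}\adds{n}$ is applied correctly, which is immediate from the joint-smoothing decomposition in Section~\ref{sec:model}.
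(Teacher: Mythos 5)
Your proof is correct and matches the paper's own argument exactly: the paper likewise obtains the corollary by setting $\testf[n] \equiv \1{\set{X}_n}$ and $\testfp[n] \equiv 0$ in Theorem~\ref{thm:hoef}(ii) and using $\post{n}(\tstat{n}\adds{n}) = \post{0:n}\adds{n}$. (The paper's text nominally cites the central limit theorem here, but that is evidently a typo for the Hoeffding-type inequality, which is what you correctly invoke.)
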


\subsection{Asymptotic normality}

Next, we aim to establish the following central limit theorem for estimates produced by Algorithm~\ref{algo:adasmsupp}. 

\begin{theorem}[asymptotic normality]\label{thm:cltsupp}
	Let Assumption~\ref{assum:supp1} hold. Then for every $n \in \nset$, $(\M{m})_{m=0}^{n-1}$, $(\testf[n],\testfp[n]) \in \bmf{\alg{X}_n}^2$ and $\adds{n} \in \mathsf{H}_n$, as $\N \rightarrow\infty$,
	\begin{align*}
		\sqrt{\N}\left(\sum_{i=1}^{\N}\frac{\wgt{n}{i}}{\wgtsum{n}}\{\tstat[i]{n}\testf[n](\epart{n}{i})+\testfp[n](\epart{n}{i})\}-\post{n}(\tstat{n}\adds{n}\testf[n]+\testfp[n])\right)\convd \asvar[]{n}{\testf[n]}{\testfp[n]} Z,
	\end{align*}
	where $Z$ has standard Gaussian distribution and 
	{\footnotesize
	\begin{align}\label{asvar}
		\lefteqn{\asvar[2]{n}{\testf[n]}{\testfp[n]}} \\
		&\eqdef \frac{\Xinit\{w_{-1}\BFcent[2]{0}{n}(\adds{n}\testf[n] + \testfp[n])\}}{(\Xinit \lk{0} \cdots \lk{n-1}\1{\set{X}_n})^2} + \sum_{m=0}^{n-1} \post{m}\am{m} \frac{\post{m}\lk{m}\{w_m \BFcent[2]{m+1}{n}(\adds{n}\testf[n] + \testfp[n])  \}}{(\post{m} \lk{m} \cdots \lk{n-1} \1{\set{X}_n})^2} \\
		&
		\!\begin{multlined}
		+ \sum_{m=0}^{n-1} \frac{\M{m} \post{m} \am{m}}{1+\M{m}}\\
		\times\sum_{\ell = 0}^{m} \frac{\post{\ell}\lk{\ell} (\bkm{\ell}(\tstat{\ell}\af{\ell} + \addf{\ell} - \tstat{\ell+1}\af{\ell+1})^2 \lk{\ell+1} \cdots \lk{m} \{ \bkm{m}w_m (\lk{m+1} \cdots \lk{n-1} \testf[n] )^2 \} )}{(\post{\ell}\lk{\ell} \cdots \lk{m-1}\1{\set{X}_m})(\post{m}\lk{m}\cdots\lk{n-1}\1{\set{X}_n})^2 \prod_{k=\ell}^{m}(1+\M{k})} 
		\end{multlined}\\
		&+ \sum_{m=0}^{n-1} \frac{\post{m} \am{m}}{1+\M{m}}\sum_{\ell = 0}^{m} \frac{\post{\ell}\lk{\ell} (\bkm{\ell}(\tstat{\ell}\af{\ell} + \addf{\ell} - \tstat{\ell+1}\af{\ell+1})^2 \lk{\ell+1} \cdots \lk{m}\{w_m (\lk{m+1} \cdots \lk{n-1} \testf[n] )^2\})}{(\post{\ell} \lk{\ell} \cdots \lk{m-1}\1{\set{X}_m})(\post{m}\lk{m}\cdots\lk{n-1}\1{\set{X}_n})^2 \prod_{k=\ell}^m (1+\M{k}) } \\
		&
		\!\begin{multlined}
		+ 2 \sum_{m=0}^{n-1} \M{m} \post{m} \am{m} \\
		\times\frac{\post{m}\lk{m}\{w_m \BFcent{m+1}{n}(\adds{n}\testf[n] + \testfp[n])(\tstat{m}\adds{m}+\addf{m}-\tstat{m+1}\adds{m+1}) \lk{m+1}\cdots\lk{n-1}\testf[n] \}}{(\post{m}\lk{m}\cdots\lk{n-1}\1{\set{X}_n})^2 (1+\M{m})^2 } 
		\end{multlined}\\
		&
		\!\begin{multlined}
		+ \sum_{m=0}^{n-1} \frac{\post{m}\am{m}}{(\post{m}\lk{m}\cdots\lk{n-1}\1{\set{X}_n})^2 (1+\M{m})^2}\bigg(\post{m}\lk{m} ((w_m-\bkm{m}w_m) \{\BFcent{m+1}{n}(\adds{n}\testf[n] + \testfp[n])\\
		+(\tstat{m}\adds{m}+\addf{m}-\tstat{m+1}\adds{m+1})\lk{m+1}\cdots\lk{n-1}\testf[n]\}^2)\bigg).
		\end{multlined}
	\end{align}
	}
\end{theorem}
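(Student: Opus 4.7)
\medskip

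\noindent\textbf{Proof plan for Theorem~\ref{thm:cltsupp}.}
I would proceed by induction on $n$, following the general blueprint that has become standard for SMC central limit theorems (see, \textit{e.g.}, \citet{douc:moulines:2008,gloaguen:lecorff:olsson:2021}), but with careful accounting for the mixed forward/backward index structure of the AdaSmooth update. The base case $n=0$ is a plain importance-sampling CLT for i.i.d.~draws from $\nu$ with weights $w_{-1}$ applied to the function $\adds{0}\testf[0]+\testfp[0]$. For the inductive step, I would first apply the self-normalization trick: writing $A_\N\eqdef \N^{-1}\sum_i \wgt{n+1}{i}\{\tstat[i]{n+1}\testf[n+1](\epart{n+1}{i}) + \testfp[n+1](\epart{n+1}{i})\}$ and $B_\N\eqdef \N^{-1}\sum_i \wgt{n+1}{i}$, we have
\begin{equation*}
	\sum_i \frac{\wgt{n+1}{i}}{\wgtsum{n+1}}\{\tstat[i]{n+1}\testf[n+1](\epart{n+1}{i}) + \testfp[n+1](\epart{n+1}{i})\} - \post{n+1}(\tstat{n+1}\adds{n+1}\testf[n+1] + \testfp[n+1]) = \frac{A_\N - \tau\, B_\N}{B_\N},
\end{equation*}
with $\tau\eqdef \post{n+1}(\tstat{n+1}\adds{n+1}\testf[n+1] + \testfp[n+1])$. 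By Theorem~\ref{thm:hoef}, $B_\N$ converges in probability to a positive constant, so Slutsky's lemma reduces the claim to establishing joint convergence of $\sqrt{\N}(A_\N-\tau B_\N)$.

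\medskip

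\noindent The central decomposition is then
\begin{equation*}
	\sqrt{\N}(A_\N - \tau B_\N) = \underbrace{\sqrt{\N}\bigl(A_\N - \tau B_\N - \E[A_\N - \tau B_\N \mid \partfiltbar{n}]\bigr)}_{\text{conditional fluctuation}} + \underbrace{\sqrt{\N}\,\E[A_\N - \tau B_\N \mid \partfiltbar{n}]}_{\text{time-}n\text{ error}}.
\end{equation*}
By Lemma~\ref{lemma:2}, the summands of $A_\N$ are conditionally i.i.d.\ given $\partfiltbar{n}$, so the conditional fluctuation is handled by a Lindeberg--Feller triangular-array CLT; the Lindeberg condition follows from boundedness in Assumption~\ref{assum:supp1}\ (note that $\tstat[i]{n+1}$ stays bounded under the AdaSmooth recursion since each $\addf{m}$ and $\adds{0}$ is bounded). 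For the second summand, Lemma~\ref{lemma:2} together with Lemma~\ref{lemma:1} identifies the conditional expectation as a weighted-sample functional of the time-$n$ particles evaluated at the modified test functions $\tstat{n}\adds{n}\lk{n}\testf[n+1]$ and $\lk{n}(\addf{n}\testf[n+1] + \testfp[n+1])$, against which we apply the inductive hypothesis directly. The two resulting Gaussian contributions are asymptotically independent (the first is a conditional CLT, the second is $\partfiltbar{n}$-measurable), so their variances add, producing the ``propagation'' recursion for $\asvar[2]{\cdot}{\cdot}{\cdot}$.

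\medskip

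\noindent The main technical obstacle is the computation of the conditional variance of a single summand
\begin{equation*}
	\asvarstd[2]{\mathrm{cond}}\eqdef \var\bigl( \wgt{n+1}{1}\{\tstat[1]{n+1}\testf[n+1](\epart{n+1}{1}) + \testfp[n+1](\epart{n+1}{1}) - \tau\}\,\big|\,\partfiltbar{n} \bigr),
\end{equation*}
since the AdaSmooth update
\begin{equation*}
	\tstat[1]{n+1} = (1+\M{n})^{-1}\Bigl(\tstat[\I{n+1}{1}]{n}+\addf{n}(\epart{n}{\I{n+1}{1}},\epart{n+1}{1}) + \sum_{j=1}^{\M{n}}\bigl(\tstat[\bi{n+1}{1}{j}]{n} + \addf{n}(\epart{n}{\bi{n+1}{1}{j}},\epart{n+1}{1})\bigr)\Bigr)
\end{equation*}
couples the APF index $\I{n+1}{1}$ (drawn with the adjusted weights and producing $\epart{n+1}{1}$ through $\hk_n$) with the backward indices $(\bi{n+1}{1}{j})_{j=1}^{\M{n}}$ (i.i.d.\ given $\partfilt{n+1}$, drawn from $\backprob{n}{\N}(1,\cdot)$). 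I would therefore tower the conditioning through $\partfilt{n+1}$: first expand the square of $\tstat[1]{n+1}\testf[n+1](\epart{n+1}{1}) + \testfp[n+1](\epart{n+1}{1})$ into (i) a ``forward-forward'' diagonal term, (ii) $\M{n}$ ``backward-backward'' diagonal terms, (iii) $\M{n}(\M{n}-1)$ distinct ``backward-backward'' off-diagonal pairs, and (iv) $2\M{n}$ ``forward-backward'' cross terms; then integrate the backward indices against $\backprob{n}{\N}(1,\cdot)$ using the reverse-kernel identity~\eqref{eq:reversed:kernel}, which replaces the empirical backward kernel by $\bkm{n}$ in the limit and introduces the retro-prospective kernels $\BF{m+1}{n}$ and their centered versions $\BFcent{m+1}{n}$. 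Tracking the combinatorial coefficients $1/(1+\M{n})$ (diagonal), $\M{n}/(1+\M{n})$ (off-diagonal backward), and $2/(1+\M{n})^2$ (cross), unrolling the inductive recursion, and using the key algebraic identity
\begin{equation*}
	\bkm{m}(\tstat{m}\af{m} + \addf{m}) = \tstat{m+1}\af{m+1}
\end{equation*}
from~\eqref{eq:forward:recursion} to reveal the increments $\tstat{m}\af{m} + \addf{m} - \tstat{m+1}\af{m+1}$ is what produces the explicit sum of six terms in~\eqref{asvar}; bookkeeping through this expansion is the most delicate part of the argument and is where the analysis genuinely departs from the PaRIS derivation in \citet{olsson:westerborn:2017} and \citet{gloaguen:lecorff:olsson:2021}.
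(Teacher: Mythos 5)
Your plan coincides with the paper's proof in all its main lines: induction on the pair of test functions $(\testf[n],\testfp[n])$ (which is exactly why the statement carries both), reduction by Slutsky to the centered case, the split into a conditionally centered fluctuation plus the $\partfiltbar{n}$-measurable conditional expectation, a triangular-array CLT (the paper uses Theorem~A.3 of \citet{douc:moulines:2008}) for the former, Lemma~\ref{lemma:2} plus the induction hypothesis at the propagated test functions $\lk{n}\testf[n+1]$ and $\lk{n}(\addf{n}\testf[n+1]+\testfp[n+1])$ for the latter, and the four-way diagonal/off-diagonal/cross decomposition of the conditional second moment with the coefficients $1/(1+\M{n})^2$, $\M{n}(\M{n}-1)/(1+\M{n})^2$, $2\M{n}/(1+\M{n})^2$ that you list.

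There is, however, one concrete ingredient your sketch omits, and without it the conditional-variance computation cannot be closed. The diagonal terms of the expansion (both the forward--forward term and the $\M{n}$ backward--backward diagonal terms) involve the weighted empirical average of the \emph{squared} statistics, $\sum_{i}(\wgt{n}{i}/\wgtsum{n})(\tstat[i]{n})^2 f(\epart{n}{i})$. The induction hypothesis of the theorem only controls linear functionals of $\tstat[i]{n}$, and the limit of this quadratic functional is \emph{not} $\post{n}(\tstat{n}^2\adds{n}f)$: there is an additional correction $\sqc{n}(f)$, defined by its own recursion over all earlier time steps and accounting for the Monte Carlo variance accumulated inside the statistics themselves (Lemma~\ref{lemma:3} in the paper, proven by a separate induction). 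These $\sqc{n}$ terms are precisely what generate the two double sums $\sum_{m}\sum_{\ell\le m}\prod_{k=\ell}^m(1+\M{k})^{-1}$ in \eqref{asvar} — i.e.\ the terms that later drive the whole stability analysis — so "integrating the backward indices against $\backprob{n}{\N}$ and passing to $\bkm{n}$ in the limit" is not enough; you must first establish and carry along this second-moment law of large numbers. With that auxiliary lemma added, your plan matches the paper's argument.
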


Again, the following corollary follows immediately.  

\begin{corollary} \label{corollary:1}
	Let Assumption~\ref{assum:supp1} hold. Then for all $n\in \nset$, $(\M{m})_{m=0}^{n-1}$ and $\adds{n} \in\set{H}_n$, as $\N \rightarrow\infty$,
	\begin{equation}
		\sqrt{\N}\left(\sum_{i=1}^{\N}\frac{\wgt{n}{i}}{\wgtsum{n}}\tstat[i]{n} - \post{0:n}\adds{n}\right)\convd \sigma_n(\adds{n}) Z,
	\end{equation}
	where $Z$ has standard Gaussian distribution and
	{\footnotesize
	\begin{align}
		&\sigma_n^2(\adds{n}) = \frac{\Xinit(w_{-1}\BFcent[2]{0}{n}\adds{n})}{(\Xinit\lk{0}\cdots\lk{n-1}\1{\set{X}_n})^2}+\sum_{m=0}^{n-1} \post{m} \am{m} \frac{\post{m}\lk{m}(w_m \BFcent[2]{m+1}{n}\adds{n})}{(\post{m} \lk{m} \cdots\lk{n-1}\1{\set{X}_n})^2}\label{var_term:1} \\
		& 
		\!\begin{multlined}
		+ \sum_{m=0}^{n-1} \frac{\M{m} \post{m} \am{m}}{1+\M{m}} \\ \times\sum_{\ell = 0}^m\frac{\post{\ell} \lk{\ell} (\bkm{\ell}(\tstat{\ell} \af{\ell} + \addf{\ell} - \tstat{\ell+1} \af{\ell+1})^2 \lk{\ell+1} \cdots \lk{m}\{\bkm{m}w_m (\lk{m+1} \cdots \lk{n-1} \1{\set{X}_n})^2 \} )}{(\post{\ell}\lk{\ell} \cdots \lk{m-1}\1{\set{X}_m})(\post{m}\lk{m}\cdots\lk{n-1}\1{\set{X}_n}  )^2 \prod_{k=\ell}^m(1+\M{k})}\label{var_term:2}
		\end{multlined}\\ 
		& + \sum_{m=0}^{n-1} \frac{\post{m} \am{m}}{1+\M{m}}\sum_{\ell = 0}^{m} \frac{\post{\ell} \lk{\ell} (\bkm{\ell}(\tstat{\ell} \af{\ell} + \addf{\ell} - \tstat{\ell+1}\af{\ell+1})^2 \lk{\ell+1} \cdots \lk{m}\{w_m (\lk{m+1} \cdots \lk{n-1} \1{\set{X}_n})^2\} )}{(\post{\ell} \lk{\ell} \cdots \lk{m-1}\1{\set{X}_m})(\post{m} \lk{m}\cdots\lk{n-1}\1{\set{X}_n}  )^2 \prod_{k=\ell}^m (1+\M{k})}\label{var_term:3}
		\\ &+ 2 \sum_{m=0}^{n-1} \M{m} \post{m} \am{m} \frac{\post{m}\lk{m}\{w_m \BFcent{m+1}{n}\adds{n}(\tstat{m}\adds{m}+\addf{m}-\tstat{m+1}\adds{m+1}) \lk{m+1}\cdots\lk{n-1}\1{\set{X}_n} \}}{(1+\M{m})^2 (\post{m}\lk{m}\cdots\lk{n-1}\1{\set{X}_n})^2}\label{var_term:4}
		\\ &
		\!\begin{multlined}
		+\sum_{m=0}^{n-1}  \frac{\post{m} \am{m}}{(1+\M{m})^2(\post{m}\lk{m}\cdots\lk{n-1}\1{\set{X}_n})^2}\bigg(\post{m}\lk{m}((w_m - \bkm{m} w_m) \{ \BFcent{m+1}{n}\adds{n}\\
		+(\tstat{m}\adds{m}+\addf{m}-\tstat{m+1}\adds{m+1})\lk{m+1}\cdots\lk{n-1}\1{\set{X}_n} \}^2)\bigg).\label{var_term:5}
		\end{multlined}
	\end{align}
	\par}
\end{corollary}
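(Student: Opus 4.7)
The plan is to derive Corollary~\ref{corollary:1} as an immediate specialization of Theorem~\ref{thm:cltsupp}, by setting $\testf[n] \equiv \1{\set{X}_n}$ and $\testfp[n] \equiv 0$ in the statement of the latter. This is the same path signaled by the surrounding text, and the only real content is (i) checking that the centering term on the left-hand side collapses to $\post{0:n} \adds{n}$ and (ii) checking that each of the five summands of \eqref{asvar} reduces, after the substitution, to the corresponding summand \eqref{var_term:1}--\eqref{var_term:5} of $\sigma_n^2(\adds{n})$. No new probabilistic argument is required.

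For the centering term I would invoke the identity $\post{0:n} = \post{n} \tensprod \tstat{n}$ established in Section~\ref{sec:model} \citep[Lemma~2.2 of][]{gloaguen:lecorff:olsson:2021}, which yields, for any $\adds{n} \in \set{H}_n$,
\begin{equation}
\post{n}(\tstat{n} \adds{n} \testf[n] + \testfp[n]) \bigg|_{\testf[n] = \1{\set{X}_n},\, \testfp[n] = 0} = \post{n}(\tstat{n}\adds{n}) = \post{0:n} \adds{n}.
\end{equation}
Similarly, each particle-side test quantity $\tstat[i]{n} \testf[n](\epart{n}{i}) + \testfp[n](\epart{n}{i})$ appearing in the left-hand side of Theorem~\ref{thm:cltsupp} collapses to $\tstat[i]{n}$. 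Hence the left-hand side of the corollary is identical to the left-hand side of the theorem evaluated at the stipulated test functions, and convergence in distribution to $\sigma_n\langle \M{0:n-1} \rangle(h_n) Z$ is inherited directly.

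For the variance I would verify term by term that \eqref{asvar} reduces to \eqref{var_term:1}--\eqref{var_term:5}. In every summand the substitution replaces $\adds{n} \testf[n] + \testfp[n]$ by $\adds{n}$ (and hence $\BFcent{m+1}{n}(\adds{n} \testf[n] + \testfp[n])$ by $\BFcent{m+1}{n} \adds{n}$), while factors of the form $\lk{m+1} \cdots \lk{n-1} \testf[n]$ collapse to $\lk{m+1} \cdots \lk{n-1} \1{\set{X}_n}$. Matching each of the five summands of \eqref{asvar} against its counterpart in \eqref{var_term:1}--\eqref{var_term:5} is then purely mechanical and produces exactly the displayed expression for $\sigma_n^2(\adds{n})$.

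The main (and only) obstacle is simply careful bookkeeping: making sure that the conventions $\BFcent{m+1}{n} = \BF{m+1}{n}(\cdot - \post{0:n}(\cdot))$ from \eqref{eq:retro:prospective} and the product ordering in the retro-prospective kernels are respected throughout, and that the factor $\testf[n]$ that multiplies $\adds{n}$ inside $\BFcent{m+1}{n}(\adds{n} \testf[n] + \testfp[n])$ is indeed the same function as that appearing in the trailing $\lk{m+1} \cdots \lk{n-1} \testf[n]$ factors, so that setting $\testf[n] = \1{\set{X}_n}$ simultaneously simplifies both. Once this matching is done, the corollary follows from Theorem~\ref{thm:cltsupp} without further work.
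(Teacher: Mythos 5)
Your proposal is correct and matches the paper exactly: the paper derives this corollary as an immediate specialization of Theorem~\ref{thm:cltsupp} with $\testf[n] \equiv \1{\set{X}_n}$ and $\testfp[n] \equiv 0$, just as Corollary~\ref{cor:hoeffding} is obtained from Theorem~\ref{thm:hoef}. The bookkeeping you describe—collapsing the centering term to $\post{0:n}\adds{n}$ via $\post{0:n} = \post{n} \tensprod \tstat{n}$ and matching the five variance summands term by term—is precisely the intended (and omitted) verification.
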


%%%% PROOF OF CLT 

%%%%% LEMMA SQUARED STATISTIC  

The following lemma will be instrumental in the proof of Theorem~\ref{thm:cltsupp}.  

\begin{lemma}\label{lemma:3}
	Let Assumption \ref{assum:supp1} hold. Then for all $n\in\nset$, $(\M{m})_{m = 0}^{n-1}$ and $\testf[n] \in \bmf{\alg{X}_{n}}$, as $\N \to \infty$,  
	\begin{equation}\label{square_conv}
		\sum_{i=1}^{\N}\frac{\wgt{n}{i}}{\wgtsum{n}}(\tstat[i]{n})^2 \testf[n](\epart{n}{i}) \convp \post{n}(\tstatletter_n^2\adds{n}\testf[n]) + \sqc{n}(\testf[n]),
	\end{equation}
	where
	\begin{equation}\label{eq:eta}
		\sqc{n}(\testf[n]) \eqdef \sum_{\ell=0}^{n-1}\frac{\post{\ell} \lk{\ell}\{\bkm{\ell}(\tstatletter_\ell \adds{\ell} + \addf{\ell} - \tstatletter_{\ell+1} \adds{\ell+1})^2 \lk{\ell+1}\cdots\lk{n-1}\testf[n]\}}{(\post{\ell} \lk{\ell} \cdots \lk{n-1}\1{\set{X}_n}) \prod_{k=\ell}^{n-1}(1+\M{k})}.
	\end{equation}
\end{lemma}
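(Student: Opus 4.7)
The plan is induction on $n$. The base case $n=0$ is immediate: $\tstat[i]{0}=\adds{0}(\epart{0}{i})$, the sum $\sqc{0}$ is empty, and $\sum_i(\wgt{0}{i}/\wgtsum{0})\adds{0}^2(\epart{0}{i})\testf[0](\epart{0}{i})\convp\post{0}(\adds{0}^2\testf[0])$ is Theorem~\ref{thm:hoef}~(ii) at time $0$.

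For the inductive step, I would first perform a law-of-large-numbers reduction. By the same conditional-iid argument as in the proof of Lemma~\ref{lemma:2}, the variables $\wgt{n+1}{i}(\tstat[i]{n+1})^2\testf[n+1](\epart{n+1}{i})$ are iid given $\partfiltbar{n}$ and uniformly bounded (Assumption~\ref{assum:supp1} bounds the weights and $|\tstat[i]{n}|\le (n+1)\hbd$ for fixed $n$); a conditional weak LLN yields
\[
\N^{-1}\sum_{i=1}^{\N}\wgt{n+1}{i}(\tstat[i]{n+1})^2\testf[n+1](\epart{n+1}{i})-\E\!\left[\wgt{n+1}{1}(\tstat[1]{n+1})^2\testf[n+1](\epart{n+1}{1})\mid\partfiltbar{n}\right]\convp 0.
\]
Dividing by $\N^{-1}\wgtsum{n+1}$, whose limit is given by Theorem~\ref{thm:hoef}~(i), via Slutsky's lemma reduces the problem to identifying the asymptotic limit of the above conditional expectation.

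To compute this, I would apply the tower property. Given $\epart{n+1}{1}=y$ and $\partfiltbar{n}$, the $\M{n}$ backward indices are iid with law $\backprob{n}{\N}(1,\cdot)$; with $\psi\eqdef\tstat[\I{n+1}{1}]{n}+\addf{n}(\epart{n}{\I{n+1}{1}},y)$ and $B^\N(y),C^\N(y)$ denoting the first and second moments under $\backprob{n}{\N}(1,\cdot)$ of a single backward contribution, squaring the AdaSmooth update yields the clean decomposition
\[
\E[(\tstat[1]{n+1})^2\mid \epart{n+1}{1},\I{n+1}{1},\partfiltbar{n}]=\left(\frac{\psi+\M{n}B^\N(y)}{1+\M{n}}\right)^2+\frac{\M{n}}{(1+\M{n})^2}\{C^\N(y)-(B^\N(y))^2\}
\]
into squared conditional mean plus (scaled) backward variance for a single draw. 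Integrating further over the mixture distribution of $(\I{n+1}{1},\epart{n+1}{1})$ (Algorithm~\ref{algo:sisr}) rewrites the full conditional expectation as $(\post[\N]{n}\am{n})^{-1}$ times an empirical integral $\sum_i(\wgt{n}{i}/\wgtsum{n})\lk{n}(\cdots)(\epart{n}{i})$ whose integrand depends on $\tstat[i]{n}$ through $\psi$ and on $y$ through $B^\N, C^\N$.

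For the limit identification, I would use the induction hypothesis to handle the $(\tstat[i]{n})^2$-contributions (arising both from the $\psi^2$ piece and from the $(\tstat[j]{n})^2$ part inside $C^\N$) and Theorem~\ref{thm:hoef}~(ii) together with the pointwise convergence $B^\N(y)\to\bkm{n}(y,\cdot)(\tstatletter_n\adds{n}+\addf{n}(\cdot,y))=\tstatletter_{n+1}\adds{n+1}(y)$ (promoted to integrated convergence by uniform boundedness and dominated convergence) for the remaining pieces. Finally, the reversal identity~\eqref{eq:reversed:kernel} combined with the normalization identity $\post{\ell}\lk{\ell}\cdots\lk{n}\1{\set{X}_{n+1}}=(\post{\ell}\lk{\ell}\cdots\lk{n-1}\1{\set{X}_n})(\post{n}\lk{n}\1{\set{X}_{n+1}})$ reassembles the pieces: the squared-mean block reconstructs $\post{n+1}(\tstatletter_{n+1}^2\adds{n+1}\testf[n+1])$; the $(\tstat[j]{n})^2$-piece of $C^\N$ combines with the inductive $\sqc{n}(\lk{n}\testf[n+1])$ to yield the $\ell\le n-1$ part of $\sqc{n+1}(\testf[n+1])$, with the extra $(1+\M{n})^{-1}$ factor coming from the $\M{n}/(1+\M{n})^2$ multiplier; and the backward-variance residual $C^\N-(B^\N)^2$, whose pointwise limit is $\bkm{n}(\tstatletter_n\adds{n}+\addf{n}-\tstatletter_{n+1}\adds{n+1})^2$, produces the new $\ell=n$ summand. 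The main obstacle will be this algebraic bookkeeping: verifying that all four limit terms combine exactly into the coefficient structure of $\sqc{n+1}(\testf[n+1])$ --- in particular, that the cross-term involving $\psi B^\N$ cancels correctly against portions of the squared mean so that only the backward-variance residual survives at level $\ell=n$ --- and that the pointwise convergence of $B^\N, C^\N$ indeed suffices (via boundedness and dominated convergence) for the required integrated convergence under the $\lk{n}$-averaged integrals.
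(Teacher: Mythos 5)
Your proposal is correct and follows essentially the same route as the paper's proof: induction on $n$, reduction (via conditional boundedness and a conditional LLN/Hoeffding argument plus Slutsky) to the conditional expectation $\E[\wgt{n+1}{1}(\tstat[1]{n+1})^2\testf[n+1](\epart{n+1}{1})\mid\partfiltbar{n}]$, and identification of its limit using the induction hypothesis for the $(\tstat[i]{n})^2$-contributions, the exponential inequalities for the first-order terms, pointwise convergence of the empirical backward kernel promoted to integrated convergence (Lemma~14 of \citet{olsson:westerborn:2017}), and the reversal identity to produce the new $\ell=n$ summand. The only difference is cosmetic: you organize the conditional second moment as squared conditional mean plus $\M{n}(1+\M{n})^{-2}$ times the single-draw backward variance, whereas the paper expands it into four terms $a_\N^1,\dots,a_\N^4$; the two groupings are algebraically identical and both collapse, after integrating and using $\post{n}\lk{n}\{(\tstatletter_n\adds{n}+\addf{n})^2\testf[n+1]\}-\post{n}\lk{n}(\tstatletter_{n+1}^2\adds{n+1}\testf[n+1])=\post{n}\lk{n}\{\bkm{n}(\tstatletter_n\adds{n}+\addf{n}-\tstatletter_{n+1}\adds{n+1})^2\testf[n+1]\}$, to the stated $(1+\M{n})^{-1}$ coefficient.
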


\begin{proof}[Proof of Lemma~\ref{lemma:3}]
	We proceed by induction over $n$. First, the claim is straightforwardly true for $n=0$; indeed, since the algorithm is initialized using standard importance sampling,
	\begin{equation}
		\sum_{i=1}^{\N}\frac{\wgt{0}{i}}{\wgtsum{0}}(\tstat[i]{0})^2\testf[0](\epart{0}{i})=\sum_{i=1}^{\N}\frac{\wgt{0}{i}}{\wgtsum{0}}\adds{0}^2(\epart{0}{i}) \testf[0](\epart{0}{i})\convp\post{0}(\adds{0}^2\testf[0])=\post{0}(\tstat{0}^2\adds{0}\testf[0]).
	\end{equation}
	Thus, we assume that \eqref{square_conv} holds true for some arbitrary $n\in \nset$ and show that it holds true also for $ n + 1 $. Note that by Theorem \ref{thm:hoef}(i) it holds that $\N^{-1} \wgtsum{n+1}\convp \post{n}\lk{n}\1{\set{X}_{n+1}}/\post{n}\am{n}$ as $\N$ tends to infinity. Moreover, let
	\begin{equation}
		\arr{i} \eqdef \N^{-1}\wgt{n+1}{i}(\tstat[i]{n+1})^2\testf[n+1](\epart{n+1}{i})\quad \N \in \nsetpos, \ i \in \intvect{1}{\N}.
	\end{equation}
	Using conditional independence of the multinomial resampling mechanism and that the backward draws are conditionally independent and identically distribution given the new particle and $\partfiltbar{n}$,
	$$
		\sum_{i=1}^{\N}\E \left[\arr{i}\mid\partfiltbar{n}\right] = \E\left[\wgt{n+1}{1}(\tstat[1]{n+1})^2\testf[n+1](\epart{n+1}{1})\mid\partfiltbar{n}\right]=a_\N^1+a_\N^2+a_\N^3+a_\N^4,
	$$
	where
	\begin{align}
		&a_\N^1 \eqdef \frac{1}{(1+\M{n})^{2}}\E\left[\wgt{n+1}{1}\testf[n+1](\epart{n+1}{1})(\tstat[I_{n+1}^1]{n}+\addf{n}(\epart{n}{I_{n+1}^1},\epart{n+1}{1}))^2\mid\partfiltbar{n}\right], \\
		&
		\!\begin{multlined}
		a_\N^2 \eqdef \frac{2\M{n}}{(1+\M{n})^2 }\E \left[\wgt{n+1}{1}\testf[n+1](\epart{n+1}{1}) (\tstat[I_{n+1}^1]{n}+\addf{n}(\epart{n}{I_{n+1}^1},\epart{n+1}{1})) \vphantom{\E \left[ \tstat[\bi{n+1}{1}{1}]{n} + \addf{n}(\epart{n}{\bi{n+1}{1}{1}},\epart{n+1}{1}) \mid \partfilt{n+1} \right]} \right.  \\ 
		\times \left. \E \left[ \tstat[\bi{n+1}{1}{1}]{n} + \addf{n}(\epart{n}{\bi{n+1}{1}{1}},\epart{n+1}{1}) \mid \partfilt{n+1} \right] \mid \partfiltbar{n} \right],
		\end{multlined} \\
		&a_\N^3 \eqdef \frac{\M{n}}{(1+\M{n})^2}\E\left[\wgt{n+1}{1}\testf[n+1](\epart{n+1}{1})\E\left[ (\tstat[\bi{n+1}{1}{1}]{n}+\addf{n}(\epart{n}{\bi{n+1}{1}{1}},\epart{n+1}{1}))^2 \mid \partfilt{n+1}\right] \mid \partfiltbar{n}\right], \\
		&a_\N^4 \eqdef \frac{\M{n}(\M{n}-1)}{(1+\M{n})^2}\E\left[\wgt{n+1}{1}\testf[n+1](\epart{n+1}{1})\E^2\left[\tstat[\bi{n+1}{1}{1}]{n}+\addf{n}(\epart{n}{\bi{n+1}{1}{1}},\epart{n+1}{1})\mid\partfilt{n+1}\right]\mid\partfiltbar{n}\right]. 
	\end{align}
	We treat separately the four terms, starting with $a_\N^1$. Write
	\begin{align}
		&
		\!\begin{multlined}
		a_\N^1=\frac{1}{(1+\M{n})^{2}}\\\times\sum_{i=1}^{\N}\frac{\wgt{n}{i}\am{n}(\epart{n}{i})}{\sum_{\ell=1}^{\N}\wgt{n}{\ell}\am{n}(\epart{n}{\ell})}\int \frac{\ld{n}(\epart{n}{i},x)}{\am{n}(\epart{n}{i})\hd_{n}(\epart{n}{i},x)}\testf[n+1](x)(\tstat[i]{n}+\addf{n}(\epart{n}{i},x))^2 \, \hk_n(\epart{n}{i}, dx)
		\end{multlined} \\
		&=\frac{(\post[\N]{n}\am{n})^{-1}}{(1+\M{n})^{2}} \sum_{i=1}^{\N} \frac{\wgt{n}{i}}{\wgtsum{n}}\{(\tstat[i]{n})^2\lk{n}\testf[n+1](\epart{n}{i})+2\tstat[i]{n}\lk{n}(\addf{n}\testf[n+1])(\epart{n}{i})+\lk{n}(\addf{n}^2\testf[n+1])(\epart{n}{i})\}; 
	\end{align}
	applying the induction hypothesis and Theorem~\ref{thm:hoef} yields 
	\begin{multline}
		a_\N^1\convp \frac{1}{(\post{n}\am{n})(1+\M{n})^2} \bigg(\post{n}(\tstatletter_n^2\adds{n}\lk{n}\testf[n+1])+\sqc{n}(\lk{n}\testf[n+1])\\+2\post{n}\{\tstat{n}\adds{n}\lk{n}(\addf{n}\testf[n+1])\} + \post{n}\lk{n}(\addf{n}^2\testf[n+1])\bigg)\\
		=\frac{(\post{n}\am{n})^{-1}}{(1+\M{n})^{2}}\left(\post{n}\lk{n}\{(\tstatletter_n\adds{n}+\addf{n})^2\testf[n+1]\}+\sqc{n}(\lk{n}\testf[n+1])\right).
	\end{multline}
	
	We turn to $a_\N^2$, which can be expressed as 
	\begin{align}
		&
		\!\begin{multlined}
			a_\N^2=\frac{2\M{n}(\post{n}\am{n})^{-1}}{(1+\M{n})^2} \sum_{i=1}^{\N} \frac{\wgt{n}{i}\am{n}(\epart{n}{i})}{\wgtsum{n}} \int \frac{\ld{n}(\epart{n}{i},x)}{\am{n}(\epart{n}{i})\hd_{n}(\epart{n}{i},x)} \testf[n+1](x)(\tstat[i]{n}+\addf{n}(\epart{n}{i},x)) \\
			\times \sum_{j=1}^{\N} \frac{\wgt{n}{j}\ld{n}(\epart{n}{j},x)}{\sum_{j = 1}^{\N} \wgt{n}{j'} \ld{n}(\epart{n}{j'},x)}(\tstat[j]{n}+\addf{n}(\epart{n}{j},x)) \, \hk_n(\epart{n}{i}, dx)
		\end{multlined} \\ 
		&
		\!\begin{multlined}
			=\frac{2\M{n}(\post[\N]{n}\am{n})^{-1}}{(1+\M{n})^{2}}\\ \times \sum_{i=1}^{\N}\frac{\wgt{n}{i}}{\wgtsum{n}} \int \testf[n+1](x)\left(\sum_{j=1}^{\N}\frac{\wgt{n}{j} \ld{n}(\epart{n}{j},x)}{\sum_{j' = 1}^{\N} \wgt{n}{j'} \ld{n}(\epart{n}{j'},x)}(\tstat[j]{n}+\addf{n}(\epart{n}{j},x)) \right)^2 
		\, \lk{n}(\epart{n}{i}, dx).
		\end{multlined} 
	\end{align}
	In order to find the limit of this quantity, we define the function 
	\begin{equation}
		\varphi_{\N}(x) \eqdef \testf[n+1](x) \left(\sum_{j=1}^{\N}\frac{\wgt{n}{j} \ld{n}(\epart{n}{j},x)}{\sum_{j' = 1}^{\N} \wgt{n}{j'} \ld{n}(\epart{n}{j'}, x)}(\tstat[j]{n}+\addf{n}(\epart{n}{j},x))\right)^2, \quad x \in \set{X}_{n + 1}, 
	\end{equation}
	which can be uniformly bounded according to $\supn{\varphi_{\N}}\le \supn{\testf[n+1]}\supn{\adds{n+1}}^2$. By Theorem~\ref{thm:hoef}, we conclude that for every $x$, $\prob$-a.s.,
	$$
		\lim_{\N\to\infty} \varphi_{\N}(x) = \testf[n+1](x) \bkm{n}^2(\tstat{n}\adds{n}+\addf{n})(x) = \testf[n+1](x)\tstat{n+1}^2\adds{n+1}(x), 
	$$
	and using Lemma~14 in \citet{olsson:westerborn:2017} yields, as $\N$ tends to infinity,  
	$$
		a_\N^2 = \frac{2\M{n}(\post[\N]{n}\am{n})^{-1}}{(1+\M{n})^{2}} \post[\N]{n} \lk{n} \varphi_{\N} \convp \frac{2\M{n}(\post{n}\am{n})^{-1}}{(1+\M{n})^2} \post{n} \lk{n}(\tstat{n+1}^2 \adds{n+1} \testf[n+1]).
	$$	
	
	The $a_\N^3$ is treated along the same lines as $a_\N^1$; indeed, write 
	\begin{align}
		&\!\begin{multlined}
			a_\N^3=\frac{\M{n}(\post[\N]{n}\am{n})^{-1}}{(1+\M{n})^{2}}\sum_{i=1}^{\N} \frac{\wgt{n}{i} \am{n}(\epart{n}{i})}{\wgtsum{n}} \int \frac{\ld{n}(\epart{n}{i},x)}{\am{n}(\epart{n}{i}) \hd_n(\epart{n}{i},x)} \testf[n+1](x) \\
			\times \sum_{j=1}^{\N} \frac{\wgt{n}{j}\ld{n}(\epart{n}{j},x)}{\sum_{j'=1}^{\N} \wgt{n}{j'} \ld{n}(\epart{n}{j'},x)}(\tstat[j]{n}+\addf{n}(\epart{n}{j}, x))^2 \, \hk_n(\epart{n}{i}, dx) \end{multlined}\\ 
		&
		\!\begin{multlined}
		=\frac{\M{n}(\post[\N]{n}\am{n})^{-1}}{(1+\M{n})^{2}}\\\times\sum_{j=1}^{\N}\frac{\wgt{n}{j}}{\wgtsum{n}} \{(\tstat[j]{n})^2\lk{n}\testf[n+1](\epart{n}{j})+2\tstat[i]{n}\lk{n}(\addf{n}\testf[n+1])(\epart{n}{j})+\lk{n}(\addf{n}^2\testf[n+1])(\epart{n}{j})\} \\
		\convp \frac{\M{n}(\post{n}\am{n})^{-1}}{(1+\M{n})^{2}}\left(\post{n}\lk{n}\{(\tstatletter_n\adds{n}+\addf{n})^2\testf[n+1]\} + \sqc{n}(\lk{n}\testf[n+1])\right), 
		\end{multlined}
	\end{align}
	where the limit follows from the induction hypothesis and Theorem~\ref{thm:hoef}. 
	
	Finally, the term $a_\N^4$ is handled in a similar way as $a_\N^2$, \ie, by applying Theorem~\ref{thm:hoef} and Lemma~14 in \citet{olsson:westerborn:2017} according to
	\begin{multline}
		a_\N^4 = \frac{\M{n}(\M{n}-1)(\post[\N]{n}\am{n})^{-1}}{(1+\M{n})^{2}}\sum_{i=1}^{\N} \frac{\wgt{n}{i}\am{n}(\epart{n}{i})}{\wgtsum{n}} \int \frac{\ld{n}(\epart{n}{i},x)}{\am{n}(\epart{n}{i})\hd_{n}(\epart{n}{i},x)} \testf[n+1](x) \\
		\times \left(\sum_{j=1}^{\N}\frac{\wgt{n}{j}\ld{n}(\epart{n}{j},x)}{\sum_{j' = 1}^{\N}\wgt{n}{j'} \ld{n}(\epart{n}{j'},x)}(\tstat[j]{n}+\addf{n}(\epart{n}{j},x))\right)^2 \, \hk_n(\epart{n}{i}, dx) \\ 
		\convp\frac{\M{n}(\M{n}-1)(\post{n}\am{n})^{-1}}{(1+\M{n})^{2}}\post{n}\lk{n}(\tstat{n+1}^2\adds{n+1}\testf[n+1]).
	\end{multline}
	Finally, combining the previous four limits,
	\begin{multline}\label{limexpsquare}
		\sum_{i=1}^{\N}\E[\arr{i}\mid\partfiltbar{n}]\convp\frac{(\post{n}\am{n})^{-1}}{(1+\M{n})^{2}}\left(\post{n}\lk{n}\{(\tstatletter_n\adds{n}+\addf{n})^2\testf[n+1]\}+\sqc{n}(\lk{n}\testf[n+1])\right) \\
		+ \frac{2\M{n}(\post{n} \am{n})^{-1}}{(1+\M{n})^2} \post{n}\lk{n}(\tstat{n+1}^2\adds{n+1}\testf[n+1]) \\
		+ \frac{\M{n}(\post{n} \am{n})^{-1}}{(1+\M{n})^2} \left(\post{n} \lk{n}\{(\tstatletter_n\adds{n}+\addf{n})^2\testf[n+1]\}+\sqc{n}(\lk{n}\testf[n+1])\right) \\
		+ \frac{\M{n}(\M{n}-1)(\post{n}\am{n})^{-1}}{(1+\M{n})^{2}} \post{n} \lk{n}(\tstat{n+1}^2\adds{n+1}\testf[n+1]) \\
		\hspace{-7cm}=\frac{\post{n}\lk{n}(\tstat{n+1}^2\adds{n+1}\testf[n+1])}{\post{n}\am{n}}\\+\frac{\post{n}\lk{n}\{(\tstatletter_n\adds{n}+\addf{n})^2\testf[n+1]\}-\post{n}\lk{n}(\tstat{n+1}^2\adds{n+1}\testf[n+1])+\sqc{n}(\lk{n}\testf[n+1])}{(1+\M{n})\post{n}\am{n}}. 
	\end{multline}
	Now, note that $\wgt{n+1}{i}(\tstat[i]{n+1})^2\testf[n+1](\epart{n+1}{i})$, $i \in \intvect{1}{\N}$, are all bounded by $ \supn{w_n}\allowbreak \supn{\adds{n+1}}^2 \supn{\testf[n+1]}$ and conditionally independent and identically distributed given $\partfiltbar{n}$; thus, using Hoeffding's inequality for conditional expectations we obtain, for all $\epsilon > 0$,
	\begin{align}
		\prob \left( \left\lvert\sum_{i=1}^{\N}\arr{i}-\sum_{i=1}^{\N}\E[\arr{i}\mid\partfiltbar{n}]\right\rvert\ge \epsilon\bigg)\le 2\exp\bigg(-\frac{2\N\epsilon^2}{\supn{w_n}\supn{\adds{n+1}}^2\supn{\testf[n+1]}} \right).
	\end{align}
	The limit of $\sum_{i=1}^{\N}\arr{i}$ in probability is hence equal to \eqref{limexpsquare}. Moreover, using Lemma~\ref{lemma:1},  
	\begin{align}
		\lefteqn{\post{n}\lk{n} \{(\tstatletter_n\adds{n}+\addf{n})^2\testf[n+1]\}-\post{n}\lk{n}(\tstat{n+1}^2\adds{n+1}\testf[n+1])}\\
		&= \post{n} \varotimes \lk{n}(\{(\tstat{n} \adds{n}+\addf{n})^2 - \tstat{n+1}^2 \adds{n+1}\} \testf[n+1]) \\
		&= \post{n} \lk{n} \varotimes \bkm{n} (\{(\tstat{n}\adds{n} + \addf{n})^2 - 2 \tstat{n+1}^2 \adds{n+1} + \tstat{n+1}^2 \adds{n+1}\}\testf[n+1]) \\
		&
		\!\begin{multlined}
		= \post{n} \lk{n}(\bkm{n} \{(\tstat{n} \adds{n}+\addf{n})^2 - 2 \bkm{n}(\tstat{n} \adds{n} + \addf{n}) \tstat{n+1}\adds{n+1}\\
		+\tstat{n+1}^2\adds{n+1}\}\testf[n+1]) 
		\end{multlined}\\
		&= \post{n} \lk{n}\{\bkm{n} (\tstat{n}\adds{n}+\addf{n}-\tstat{n+1}\adds{n+1})^2\testf[n+1] \},
	\end{align}
	which allows us to conclude, using \eqref{eq:eta}, that $\wgtsum{n+1}^{-1}\sum_{i=1}^{\N}\wgt{n+1}{i}(\tstat[i]{n+1})^2\testf[n+1](\epart{n+1}{i})$ tends in probability to
	\begin{align}
		&\!\begin{multlined}
			\frac{\post{n}\lk{n}(\tstat{n+1}^2\adds{n+1}\testf[n+1])}{\post{n}\lk{n}\1{\set{X}_{n+1}}}\\+\frac{\post{n}\lk{n}\{\bkm{n}(\tstat{n}\adds{n}+\addf{n}-\tstat{n+1}\adds{n+1})^2\testf[n+1]\}+\sqc{n}(\lk{n}\testf[n+1])}{(1+\M{n})\post{n}\lk{n}\1{\set{X}_{n+1}}}
		\end{multlined} \\
		&
		\!\begin{multlined}
			= \post{n+1}(\tstat{n+1}^2\adds{n+1}\testf[n+1])+\frac{\post{n}\lk{n}\{\bkm{n}(\tstat{n}\adds{n}+\addf{n}-\tstat{n+1}\adds{n+1})^2\testf[n+1]\}}{(1+\M{n})\post{n}\lk{n}\1{\set{X}_{n+1}}} \\
			+ \sum_{\ell=0}^{n-1}\frac{\post{\ell}\lk{\ell}\{\bkm{\ell}(\tstatletter_\ell\adds{\ell}+\addf{\ell}-\tstatletter_{\ell+1}\adds{\ell+1})^2\lk{\ell+1}\cdots\lk{n-1}\lk{n}\testf[n+1]\}}{(1+\M{n})(\post{n}\lk{n}\1{\set{X}_{n+1}})(\post{\ell}\lk{\ell}\cdots\lk{n-1}\1{\set{X}_n}) \prod_{k=\ell}^{n-1}(1+\M{k})}
		\end{multlined} 
		\\
		&
		\!\begin{multlined}
			= \post{n+1}(\tstat{n+1}^2\adds{n+1}\testf[n+1]) \\
			+ \sum_{\ell=0}^{n}\frac{\post{\ell}\lk{\ell}\{\bkm{\ell}(\tstatletter_\ell\adds{\ell}+\addf{\ell}-\tstatletter_{\ell+1}\adds{\ell+1})^2\lk{\ell+1}\cdots\lk{n}\testf[n+1]\}}{(\post{\ell}\lk{\ell}\cdots\lk{n}\1{\set{X}_{n+1}}) \prod_{k=\ell}^n(1+\M{k})}
		\end{multlined} 
		\\
		&= \post{n+1}(\tstat{n+1}^2\adds{n+1}\testf[n+1])+\sqc{n+1}(\testf[n+1]),
	\end{align}
	where we used the identity \eqref{eq:L:identity} in the last step. The proof is complete. 
\end{proof}

\begin{proof}[Proof of Theorem~\ref{thm:cltsupp}]
	We proceed by induction and suppose that the claim of the theorem holds true for some $n \in \nset$.  Pick arbitrarily $(\testf[n+1],\testfp[n+1]) \in \bmf{\alg{X}_{n+1}}^2$ and assume first that $\post{n+1}(\tstat{n+1}\adds{n+1}\testf[n+1]+\testfp[n+1])=0$ (the general case will be treated later). Then write
	$$
		\sqrt{N} \sum_{i=1}^{\N} \frac{\wgt{n+1}{i}}{\wgtsum{n+1}}\{ \tstat[i]{n+1} \testf[n+1](\epart{n+1}{i}) + \testfp[n+1](\epart{n+1}{i}) \} = \Delta_\N^1+\Delta_\N^2,
	$$
	where
	\begin{align}
		\Delta_\N^1 &\eqdef \N \wgtsum{n+1}^{-1} \frac{1}{\sqrt{N}} \sum_{i=1}^{\N} \left( \wgt{n+1}{i} \{ \tstat[i]{n+1} \testf[n+1](\epart{n+1}{i}) + \testfp[n+1](\epart{n+1}{i}) \} \right. \\
		&\hspace{30mm}  \left. - \E \left[ \wgt{n+1}{1}\{\tstat[1]{n+1}\testf[n+1](\epart{n+1}{1}) + \testfp[n+1](\epart{n+1}{1})\} \mid \partfiltbar{n} \right]  \right), \\
		\Delta_\N^2 &\eqdef \N \wgtsum{n+1}^{-1}  \sqrt{N}\E \left[ \wgt{n+1}{1}\{\tstat[1]{n+1}\testf[n+1](\epart{n+1}{1}) + \testfp[n+1](\epart{n+1}{1})\} \mid \partfiltbar{n} \right],
	\end{align}
	and by Lemma~\ref{lemma:2},
	\begin{align}
		\Delta_\N^1 &= \N \wgtsum{n+1}^{-1} \frac{1}{\sqrt{N}} \sum_{i=1}^{\N} \bigg( \wgt{n+1}{i} \{ \tstat[i]{n+1} \testf[n+1](\epart{n+1}{i}) + \testfp[n+1](\epart{n+1}{i}) \} \\
		&\quad  - (\post[N]{n}\am{n})^{-1}\sum_{\ell = 1}^{\N} \frac{\wgt{n}{\ell}}{\wgtsum{n}} \{\tstat[\ell]{n}\lk{n}\testf[n+1](\epart{n}{\ell}) + \lk{n}(\addf{n}\testf[n+1] + \testfp[n+1])(\epart{n}{\ell}) \} \bigg), \\
		\Delta_\N^2 &= \frac{\N \sqrt{N}}{\wgtsum{n+1}  \post[N]{n}\am{n}}\sum_{\ell = 1}^{\N} \frac{\wgt{n}{\ell}}{\wgtsum{n}} \{ \tstat[\ell]{n} \lk{n}\testf[n+1](\epart{n}{\ell}) + \lk{n}(\addf{n}\testf[n+1] + \testfp[n+1])(\epart{n}{\ell})  \}.
	\end{align}
	In order to establish the weak convergence of $\Delta_\N^1$, we define the triangular array
	\begin{equation}
		\arr{i} \eqdef \frac{1}{(1+\M{n})\sqrt{N}}\sum_{j=1}^{\M{n}}\arrterm{\epart{n+1}{i}}{I_{n+1}^i,\bi{n+1}{i}{j}}+\frac{\arrterm{\epart{n+1}{i}}{I_{n+1}^i,I_{n+1}^i}}{(1+\M{n})\sqrt{N}},
	\end{equation}
	$\N \in \nsetpos, \ i \in \intvect{1}{\N}$, where
	\begin{multline*}
		\arrterm{x}{i,j} \eqdef w_n(\epart{n}{i}, x)  \left(\{\tstat[j]{n}+\addf{n}(\epart{n}{j},x)\} \testf[n+1](x) + \testfp[n+1](x)\right)  \\
		- (\post[N]{n}\am{n})^{-1}\sum_{\ell = 1}^{\N} \frac{\wgt{n}{\ell}}{\wgtsum{n}} \{\tstat[\ell]{n}\lk{n}\testf[n+1](\epart{n}{\ell}) + \lk{n}(\addf{n}\testf[n+1] + \testfp[n+1])(\epart{n}{\ell}) \}, 
	\end{multline*}
	$(i,j,x)\in \intvect{1}{\N}^2\times\set{X}_{n+1}$. Note that with this definition, 
	\begin{equation}
		\Delta_\N^1 = \N \wgtsum{n+1}^{-1} \sum_{i=1}^{\N} \arr{i}.
	\end{equation}
	By Lemma \ref{lemma:2} it holds that $\E[\arr{i} \mid \partfiltbar{n}]=0$ for all $i$. Moreover, Assumption~\ref{assum:supp1} implies that $|{\arr{i}}|\le 2\supn{w_n}(\supn{\adds{n+1}}\supn{\testf[n+1]}+\supn{\testfp[n+1]})/\sqrt{\N}$. In order to find the weak limit of $\sum_{i = 1}^\N \arr{i}$ we apply Theorem~A.3 in \citet{douc:moulines:2008}; this requires checking two conditions, where the first is the convergence in probability of 
	\begin{align}
		&\!\begin{multlined}
		\sum_{i=1}^{\N} \E \left[(\arr{i})^2 \mid \partfiltbar{n} \right]
		=\frac{1}{(1+\M{n})^2}\E\bigg[\bigg(\sum_{j=1}^{\M{n}}\arrterm{\epart{n+1}{1}}{I_{n+1}^1,\bi{n+1}{1}{j}}\bigg)^2\mid \partfiltbar{n}\bigg] \\
		+\frac{2}{(1+\M{n})^2}\E\bigg[\arrterm{\epart{n+1}{1}}{I_{n+1}^1,I_{n+1}^1}\sum_{j=1}^{\M{n}}\arrterm{\epart{n+1}{1}}{I_{n+1}^1,\bi{n+1}{1}{j}} \mid \partfiltbar{n}\bigg] \\
		+\frac{1}{(1+\M{n})^2}\E \left[\arrterm[2]{\epart{n+1}{1}}{I_{n+1}^1,I_{n+1}^1} \mid \partfiltbar{n}\right] \end{multlined}\\
		&=a_\N^1+a_\N^2+a_\N^3+a_\N^4,
	\end{align}
	as $\N$ tends to infinity, where
	\begin{align}
		&a_\N^1\eqdef \frac{\M{n}}{(1+\M{n})^2}\E\left[\E\left[\arrterm[2]{\epart{n+1}{1}}{I_{n+1}^1,\bi{n+1}{1}{1}}\mid \partfilt{n+1}\right]\mid \partfiltbar{n}\right], \\
		&a_\N^2 \eqdef \frac{\M{n}(\M{n}-1)}{(1+\M{n})^2}\E\left[\E^2\left[\arrterm{\epart{n+1}{1}}{I_{n+1}^1,\bi{n+1}{1}{1}}\mid \partfilt{n+1}\right]\mid \partfiltbar{n}\right], \\
		&
		\!\begin{multlined}
		a_\N^3 \eqdef \frac{2\M{n}}{(1+\M{n})^2}\E\bigg[\arrterm{\epart{n+1}{1}}{I_{n+1}^1,I_{n+1}^1}\\
		\hspace{3cm}\times\E\left[\arrterm{\epart{n+1}{1}}{I_{n+1}^1,\bi{n+1}{1}{1}}\mid \partfilt{n+1}\right]\mid \partfiltbar{n}\bigg],
		\end{multlined} \\
		&a_\N^4 \eqdef \frac{1}{(1+\M{n})^2}\E\left[\arrterm[2]{\epart{n+1}{1}}{I_{n+1}^1,I_{n+1}^1} \mid\partfiltbar{n}\right].
	\end{align}
	We treat separately the four terms, starting with $a_\N^1$. Write, using the definition of $\tilde{\upsilon}_\N$,
	\begin{multline} \label{clt:5}
		\E\left[ \E\left[\arrterm[2]{\epart{n+1}{1}}{I_{n+1}^1,\bi{n+1}{1}{1}} \mid \partfilt{n+1} \right] \mid \partfiltbar{n} \right] = \E \bigg[ w_n^2(\epart{n}{I_{n+1}^1}, \epart{n+1}{1}) \\
		\times\E \bigg[ \bigg( \{\tstat[\bi{n+1}{1}{1}]{n}+\addf{n}(\epart{n}{\bi{n+1}{1}{1}},\epart{n+1}{1})\} \testf[n+1](\epart{n+1}{1}) + \testfp[n+1](\epart{n+1}{1})\bigg)^2 \mid \partfilt{n+1}\bigg] \mid \partfiltbar{n} \bigg] %
		\\  -(\post[N]{n}\am{n})^{-2}\left(\sum_{\ell = 1}^{\N} \frac{\wgt{n}{\ell}}{\wgtsum{n}} \{\tstat[\ell]{n}\lk{n}\testf[n+1](\epart{n}{\ell}) + \lk{n}(\addf{n}\testf[n+1] + \testfp[n+1])(\epart{n}{\ell}) \}\right)^2.
	\end{multline}
	By Theorem~\ref{thm:hoef} and Lemma~\ref{lemma:1} it holds that 
	\begin{multline}
		\sum_{\ell = 1}^{\N} \frac{\wgt{n}{\ell}}{\wgtsum{n}} \{\tstat[\ell]{n}\lk{n}\testf[n+1](\epart{n}{\ell}) + \lk{n}(\addf{n}\testf[n+1] + \testfp[n+1])(\epart{n}{\ell}) \} \\
		\convp\post{n}\{\tstatletter_{n}\adds{n}\lk{n}\testf[n+1]+ \lk{n}(\addf{n}\testf[n+1] + \testfp[n+1])\} \\
		= \post{n+1}(\tstatletter_{n+1}\adds{n+1}\testf[n+1]+\testfp[n+1])(\post{n}\lk{n}\1{\set{X}_{n+1}}) = 0,
	\end{multline}
	since $\post{n+1}(\tstatletter_{n+1}\adds{n+1}\testf[n+1]+\testfp[n+1])=0$ by assumption. Thus, we may focus on the first term on the right-hand side of \eqref{clt:5}, which can be written as
	\begin{align}
		\lefteqn{\E \left[ w_n^2(\epart{n}{I_{n+1}^1}, \epart{n+1}{1}) \sum_{j=1}^{\N} \backprob{n}{\N}(1, j) 
		\left( \vphantom{\testfp[n+1]} (\tstat[j]{n})^2\testf[n+1]^2(\epart{n+1}{1}) \right. \right.} \\
		&\hspace{1cm} + \{\addf{n}(\epart{n}{j},\epart{n+1}{1}) \testf[n+1](\epart{n+1}{1}) + \testfp[n+1](\epart{n+1}{1})\}^2 \\
		& \left. \left. \hspace{15mm} +2\tstat[j]{n} \testf[n+1](\epart{n+1}{1})\{\addf{n}(\epart{n}{j},\epart{n+1}{1})\testf[n+1](\epart{n+1}{1})+ \testfp[n+1](\epart{n+1}{1})\} \right) \mid \partfiltbar{n} \vphantom{\sum_{j=1}^{\N}} \right] \\
		&=(\post[\N]{n}\am{n})^{-1} \sum_{i=1}^{\N} \frac{\wgt{n}{i}\am{n}(\epart{n}{i})}{\wgtsum{n}} \int %\hd_{n}(\epart{n}{i},x)
		\left( \frac{\ld{n}(\epart{n}{i},x)}{\am{n}(\epart{n}{i}) \hd_{n}(\epart{n}{i},x)} \right)^2 \sum_{j=1}^{\N} \frac{\wgt{n}{j}\ld{n}(\epart{n}{j},x)}{\sum_{j'=1}^{\N}\wgt{n}{j'}\ld{n}(\epart{n}{j'},x)} \\
		&\hspace{1cm} \times \left( (\tstat[j]{n})^2\testf[n+1]^2(x)+ \{ \addf{n}(\epart{n}{j},x)\testf[n+1](x) + \testfp[n+1](x) \}^2 \right. \\
		&\hspace{2cm} \left. + 2 \tstat[j]{n}\testf[n+1](x) \{ \addf{n}(\epart{n}{j},x) \testf[n+1](x) + \testfp[n+1](x) \} \right) \, \hk_n(\epart{n}{i}, dx) \\
		&= \left(\sum_{i=1}^{\N}\frac{\wgt{n}{i}}{\wgtsum{n}}(\tstat[i]{n})^2\lk{n}(\varphi_\N \testf[n+1]^2)(\epart{n}{i})+\sum_{i=1}^{\N}\frac{\wgt{n}{i}}{\wgtsum{n}}\lk{n}\{(\addf{n}\testf[n+1]+\testfp[n+1])^2 \varphi_\N\}(\epart{n}{i}) \right. \\
		&\hspace{1cm} \left. + 2 \sum_{i=1}^{\N} \frac{\wgt{n}{i}}{\wgtsum{n}}\tstat[i]{n} \lk{n}\{(\addf{n}\testf[n+1]+\testfp[n+1])\varphi_\N \testf[n+1]\}(\epart{n}{i}) \right)(\post[\N]{n}\am{n})^{-1} , \label{eq:clt:5:first:term}
	\end{align}
	where we have defined the function 
	\begin{equation}
		\varphi_\N(x) \eqdef \frac{\sum_{i=1}^{\N}\wgt{n}{i}\ld{n}(\epart{n}{i},x) w_n(\epart{n}{i},x)}{\sum_{i'=1}^{\N} \wgt{n}{i'} \ld{n}(\epart{n}{i'},x)} ,\quad x \in \set{X}_{n + 1}.
	\end{equation}
	Note that by Theorem~\ref{thm:hoef} it holds that $\lim_{\N \to \infty} \varphi_\N(x) = \varphi(x)$, $\prob$-a.s., where $\varphi(x) \eqdef \bkm{n} w_n(x)$, $x \in \set{X}_{n + 1}$. We now examine in turn the limits of each of the three sums in \eqref{eq:clt:5:first:term}. By Lemma~\ref{lemma:3} it holds that
	\begin{equation}
		\sum_{i=1}^{\N}\frac{\wgt{n}{i}}{\wgtsum{n}}(\tstat[i]{n})^2\lk{n}(\varphi \testf[n+1]^2)(\epart{n}{i}) \convp\post{n}\{\tstat{n}^2\adds{n}\lk{n}(\bkm{n}w_n\testf[n+1]^2)\} + \sqc{n} \lk{n}(\bkm{n} w_n \testf[n+1]^2).  
	\end{equation}
	Moreover, note that  
	\begin{multline}
		\left \lvert \sum_{i=1}^{\N} \frac{\wgt{n}{i}}{\wgtsum{n}}(\tstat[i]{n})^2 \lk{n}(\varphi_\N \testf[n+1]^2)(\epart{n}{i}) -\sum_{i=1}^{\N} \frac{\wgt{n}{i}}{\wgtsum{n}}(\tstat[i]{n})^2 \lk{n}(\varphi \testf[n+1]^2)(\epart{n}{i}) \right \rvert \\
		\le \supn{\adds{n}}^2 \sum_{i=1}^{\N} \frac{\wgt{n}{i}}{\wgtsum{n}}\lk{n}(|\varphi_\N -\varphi|\testf[n+1]^2)(\epart{n}{i}) \label{eq:phiNphi};
	\end{multline}
	thus, since $|\varphi_\N(x) - \varphi(x)| \testf[n+1]^2(x) \le 2 \supn{w_n} \supn{\testf[n+1]}^2$ for all $x \in \set{X}_{n + 1}$, Lemma~14 in \citet{olsson:westerborn:2017} implies that \eqref{eq:phiNphi} tends to zero in probability as $\N$ tends to infinity. Combining the previous two results yields  
	$$
		\sum_{i = 1}^{\N} \frac{\wgt{n}{i}}{\wgtsum{n}}(\tstat[i]{n})^2 \lk{n}(\varphi_\N \testf[n+1]^2)(\epart{n}{i}) \convp \post{n} \{ \tstat{n}^2 \adds{n} \lk{n} (\bkm{n} w_n \testf[n+1]^2) \} + \sqc{n} \lk{n}(\bkm{n} w_n \testf[n+1]^2). 
	$$	
	By operating again with Theorem~\ref{thm:hoef} and Lemma~14 in \citet{olsson:westerborn:2017} the other two limits of \eqref{eq:clt:5:first:term} can be treated similarly, allowing us to establish that 
	\begin{multline}
	\sum_{i=1}^{\N} \frac{\wgt{n}{i}}{\wgtsum{n}}\lk{n}\{(\addf{n}\testf[n+1]+\testfp[n+1])^2\varphi_\N\}(\epart{n}{i}) \convp\post{n}\lk{n}\{(\addf{n}\testf[n+1]+\testfp[n+1])^2\bkm{n} w_n\},\\
	\hspace{-45mm}\sum_{i=1}^{\N}\frac{\wgt{n}{i}}{\wgtsum{n}} \tstat[i]{n} \lk{n}\{(\addf{n}\testf[n+1]+\testfp[n+1]) \varphi_\N \testf[n+1]\}(\epart{n}{i}) \\\convp \post{n}(\tstat{n} \adds{n}\lk{n}\{(\addf{n}\testf[n+1]+\testfp[n+1])\bkm{n} w_n \testf[n+1]\}).
	\end{multline}
	To sum up, it holds, as $\N$ tends to infinity, 
	\begin{multline}
		a_\N^1 \convp\frac{\M{n}(\post{n}\am{n})^{-1}}{(1+\M{n})^2}\\\times\left(\sqc{n}\lk{n}(\testf[n+1]^2\bkm{n} w_n)+\post{n}\lk{n}(  \{(\tstat{n}\adds{n}+\addf{n})\testf[n+1]+\testfp[n+1]\}^2 \bkm{n} w_n) \right).
	\end{multline}
	
	We turn to $a_\N^2$. First, define the function 
	$$ 		
		\zeta_\N(x) \eqdef \testf[n+1](x)\sum_{j=1}^{\N}\frac{\wgt{n}{j}\ld{n}(\epart{n}{j},x)}{\sum_{j'=1}^{\N} \wgt{n}{j'}\ld{n}(\epart{n}{j'},x)}\{\tstat[j]{n}+\addf{n}(\epart{n}{j},x)\}  + \testfp[n+1](x), \quad x \in \set{X}_{n + 1}, 
	$$
	and note that by Theorem~\ref{thm:hoef} and the recursion \eqref{eq:forward:recursion}, for every $x \in \set{X}_{n + 1}$, 
	\begin{align}
		\lim_{\N\to\infty} \zeta_\N(x) &= \testf[n+1](x)\bkm{n}(\tstat{n}\adds{n}+\addf{n})(x)  + \testfp[n+1](x) \\
		&= \testf[n+1](x)\tstat{n+1}\adds{n+1}(x)  + \testfp[n+1](x), \quad \mbox{$\prob$-a.s.}
	\end{align}
	With this definition, 
	\begin{align}
		\lefteqn{\E \left[\E^2 \left[\arrterm{\epart{n+1}{1}}{I_{n+1}^1,\bi{n+1}{1}{1}}\mid \partfilt{n+1}\right] \mid \partfiltbar{n}\right]} \hspace{10mm} \\
		&=\E\left[\left(\sum_{j=1}^{\N} \backprob{n}{\N}(1, j) \arrterm{\epart{n+1}{1}}{I_{n+1}^1,j}\right)^2 \mid \partfiltbar{n}\right] \\
		&=(\post[\N]{n}\am{n})^{-1} \sum_{i=1}^{\N} \frac{\wgt{n}{i}}{\wgtsum{n}} 
		\lk{n}(w_n \zeta_\N^2)(\epart{n}{i}) \\
		&\quad - (\post[\N]{n}\am{n})^{-2} \left(\sum_{\ell = 1}^{\N} \frac{\wgt{n}{\ell}}{\wgtsum{n}} \{\tstat[\ell]{n}\lk{n}\testf[n+1](\epart{n}{\ell}) + \lk{n}(\addf{n}\testf[n+1] + \testfp[n+1])(\epart{n}{\ell})\}\right)^2.	
	\end{align}
	Since, the second term tends, again, to zero in probability by assumption, Lemma~14 in \citet{olsson:westerborn:2017} implies that, as $\N$ tends to infinity, 
	$$
		a_\N^2 \convp \frac{\M{n}(\M{n}-1)}{(1+\M{n})^2}(\post{n}\am{n})^{-1}\post{n}\lk{n}\{w_n(\testf[n+1]\tstat{n+1}\adds{n+1}  + \testfp[n+1])^2\}.
	$$
	
	Next, we turn to $a_\N^3$, which is proportional to 
	\begin{align}
		\lefteqn{\E\left[\arrterm{\epart{n+1}{1}}{I_{n+1}^1,I_{n+1}^1}\E\left[\arrterm{\epart{n+1}{1}}{I_{n+1}^1,\bi{n+1}{1}{1}}\mid \partfilt{n+1}\right]\mid \partfiltbar{n}\right]} \\
		&=\E \left[ \vphantom{\sum_{\ell = 1}^{\N}} \arrterm{\epart{n+1}{1}}{I_{n+1}^1, I_{n+1}^1} w_n(\epart{n}{I_{n+1}^1}, \epart{n+1}{1}) \right. \\
		&\hspace{5mm}\quad \times \E \left[ \{\tstat[\bi{n+1}{1}{1}]{n} + \addf{n}(\epart{n}{\bi{n+1}{1}{1}}, \epart{n+1}{1})\} \testf[n+1](\epart{n+1}{1}) + \testfp[n+1](\epart{n+1}{1}) \mid \partfilt{n} \right] \\
		&\hspace{5mm}\quad \left. - (\post[N]{n}\am{n})^{-1} \sum_{\ell = 1}^{\N} \frac{\wgt{n}{\ell}}{\wgtsum{n}} \{\tstat[\ell]{n}\lk{n}\testf[n+1](\epart{n}{\ell}) + \lk{n}(\addf{n}\testf[n+1] + \testfp[n+1])(\epart{n}{\ell}) \} \mid \partfiltbar{n} \right] \\
		&=\E \left[ \arrterm{\epart{n+1}{1}}{I_{n+1}^1,I_{n+1}^1} \left( \vphantom{\sum_{\ell = 1}^{\N}} w_n(\epart{n}{I_{n+1}^1}, \epart{n+1}{1})  \zeta_\N(\epart{n+1}{1}) \right. \right. \\
		&\hspace{5mm} \quad \left. \left. - (\post[N]{n}\am{n})^{-1} \sum_{\ell = 1}^{\N} \frac{\wgt{n}{\ell}}{\wgtsum{n}} \{\tstat[\ell]{n}\lk{n}\testf[n+1](\epart{n}{\ell}) + \lk{n}(\addf{n}\testf[n+1] + \testfp[n+1])(\epart{n}{\ell}) \} \right) \mid \partfiltbar{n} \right] \\
		&
		\!\begin{multlined}
		=\E \bigg[ w_n^2(\epart{n}{I_{n+1}^1}, \epart{n+1}{1}) (\{\tstat[\I{n+1}{1}]{n} + \addf{n}(\epart{n}{\I{n+1}{1}},\epart{n+1}{1})\} \testf[n+1](\epart{n+1}{1}) \\+ \testfp[n+1](\epart{n+1}{1})) \zeta_\N(\epart{n+1}{1}) \mid \partfiltbar{n} \bigg]  \\
		- (\post[\N]{n} \am{n})^{-2} \left(\sum_{\ell = 1}^{\N} \frac{\wgt{n}{\ell}}{\wgtsum{n}} \{\tstat[\ell]{n}\lk{n}\testf[n+1](\epart{n}{\ell}) + \lk{n}(\addf{n}\testf[n+1] + \testfp[n+1])(\epart{n}{\ell})\} \right)^2 
		\end{multlined}\\
		&
		\!\begin{multlined}
		=(\post[\N]{n}\am{n})^{-1}\\\times\sum_{i=1}^{\N}\frac{\wgt{n}{i}}{\wgtsum{n}}\int w_n(\epart{n}{i},x)(\{\tstat[i]{n}+\addf{n}(\epart{n}{i},x)\} \testf[n+1](x) + \testfp[n+1](x))\zeta_\N(x) \, \lk{n}(\epart{n}{i}, dx)
		\\
		-(\post[\N]{n}\am{n})^{-2} \left( \sum_{\ell = 1}^{\N} \frac{\wgt{n}{\ell}}{\wgtsum{n}} \{\tstat[\ell]{n}\lk{n}\testf[n+1](\epart{n}{\ell}) + \lk{n}(\addf{n}\testf[n+1] + \testfp[n+1])(\epart{n}{\ell})\} \right)^2.
		\end{multlined} 
	\end{align}
	Again the second term converges to zero, and by proceeding as in \eqref{eq:phiNphi} and using Lemma~14 in \citet{olsson:westerborn:2017}, we establish that, as $\N$ tends to infinity, 
	\begin{multline}
		a_\N^3 \convp \frac{2\M{n}}{(1+\M{n})^2}(\post{n} \am{n})^{-1}\\\times \post{n} \lk{n} ( w_n \{(\tstat{n}\adds{n}+\addf{n}) \testf[n+1] + \testfp[n+1] \}(\tstat{n+1} \adds{n+1} \testf[n+1] + \testfp[n+1]) ). 
	\end{multline}
	
	Finally, the last term $a_\N^4$ remains to be analyzed. Note that
	\begin{align}
		\lefteqn{\E \left[ \arrterm[2]{\epart{n+1}{1}}{I_{n+1}^1, I_{n+1}^1} \mid \partfiltbar{n} \right]} \\
		&
		\!\begin{multlined}
		=(\post[\N]{n}\am{n})^{-1} \\\times\sum_{i=1}^{\N} \frac{\wgt{n}{i}}{\wgtsum{n}} \int 
		w_n(\epart{n}{i},x)(\{\tstat[i]{n}+\addf{n}(\epart{n}{i},x)\} \testf[n+1](x) + \testfp[n+1](x))^2 \, \lk{n}(\epart{n}{i}, dx) 
		\\
		-(\post[\N]{n}\am{n})^{-2}\left( \sum_{\ell = 1}^{\N} \frac{\wgt{n}{\ell}}{\wgtsum{n}} \{\tstat[\ell]{n}\lk{n}\testf[n+1](\epart{n}{\ell}) + \lk{n}(\addf{n}\testf[n+1] + \testfp[n+1])(\epart{n}{\ell})\} \right)^2
		\end{multlined}
		\\ &=  \left( \sum_{i=1}^{\N}\frac{\wgt{n}{i}}{\wgtsum{n}}(\tstat[i]{n})^2\lk{n}(w_n\testf[n+1]^2)(\epart{n}{i})
		+ \sum_{i=1}^{\N} \frac{\wgt{n}{i}}{\wgtsum{n}}\lk{n}\{w_n(\addf{n}\testf[n+1]+\testfp[n+1])^2\}(\epart{n}{i}) \right. 
		\\ &\hspace{2cm} \left. + 2 \sum_{i=1}^{\N}\frac{\wgt{n}{i}}{\wgtsum{n}}\tstat[i]{n}\lk{n}\{w_n(\addf{n}\testf[n+1]+\testfp[n+1]) \testf[n+1]\}(\epart{n}{i}) \right)(\post{n}\am{n})^{-1}
		\\&\hspace{10mm} -(\post[\N]{n}\am{n})^{-2}\left(\sum_{\ell = 1}^{\N} \frac{\wgt{n}{\ell}}{\wgtsum{n}} \{\tstat[\ell]{n}\lk{n}\testf[n+1](\epart{n}{\ell}) + \lk{n}(\addf{n}\testf[n+1] + \testfp[n+1])(\epart{n}{\ell}) \} \right)^2,
	\end{align}
	where, again, the limit in probability of the last term is zero by assumption. Thus, we may conclude that, as $\N$ tends to infinity,  
	\begin{multline*}
		a_\N^4 \convp \frac{(\post{n}\am{n})^{-1}}{(1+\M{n})^2}\left( \vphantom{\testfp[n+1]} \post{n} \{ \tstat{n}^2 \adds{n} \lk{n}(w_n \testf[n+1]^2) \} + \sqc{n} \lk{n}(w_n \testf[n+1]^2) \right. \\ 
		\left. + \post{n} \lk{n}\{w_n(\addf{n}\testf[n+1]+\testfp[n+1])^2\} + 2 \post{n}( \tstat{n} \adds{n} \lk{n}\{w_n(\addf{n}\testf[n+1]+\testfp[n+1]) \testf[n+1]\}) \right) \\
		= \frac{(\post{n}\am{n})^{-1}}{(1+\M{n})^2} \left(\sqc{n} \lk{n}(w_n\testf[n+1]^2) + \post{n} \lk{n}(w_n \{ (\tstat{n}\adds{n}+\addf{n})\testf[n+1]+\testfp[n+1] \}^2)\right).
	\end{multline*}
	We now finally combine previous results to obtain the limit, as $\N$ tends to infinity, 
	$$
		\sum_{i=1}^{\N}\E\left[(\arr{i})^2\mid \partfiltbar{n}\right] = a_\N^1+a_\N^2+a_\N^3+a_\N^4 \convp \delta_n^2 \langle \testf[n+1],\testfp[n+1] \rangle, 
	$$
	where 
	\begin{multline}
		\lefteqn{\delta_n^2 \langle \testf[n+1],\testfp[n+1] \rangle \eqdef \frac{\M{n}(\post{n}\am{n})^{-1}}{(1+\M{n})^2}\bigg(\sqc{n}\{\lk{n}(\testf[n+1]^2 \bkm{n} w_n)\}} \\+ \post{n} \lk{n} (\{(\tstat{n}\adds{n}+\addf{n})\testf[n+1]+\testfp[n+1]\}^2 \bkm{n} w_n) \bigg) \\
		+ \frac{\M{n}(\M{n}-1)}{(1+\M{n})^2}(\post{n} \am{n})^{-1} \post{n} \lk{n} \{ w_n(\testf[n+1]\tstat{n+1}\adds{n+1}  + \testfp[n+1])^2\} \\
		+ \frac{2\M{n}}{(1+\M{n})^2}(\post{n}\am{n})^{-1} \post{n} \lk{n} (w_n \{ (\tstat{n}\adds{n}+\addf{n})\testf[n+1]  + \testfp[n+1] \}(\tstat{n+1}\adds{n+1}\testf[n+1]  + \testfp[n+1])) \\
		+ \frac{(\post{n}\am{n})^{-1}}{(1+\M{n})^2} \left( \sqc{n}\{\lk{n}(w_n \testf[n+1]^2)\} + \post{n} \lk{n}( w_n \{ (\tstat{n}\adds{n}+\addf{n})\testf[n+1] + \testfp[n+1] \}^2) \right).
	\end{multline}
	Since, using Lemma \ref{lemma:1} twice,
	\begin{multline}
		\post{n} \lk{n}(\{(\tstat{n}\adds{n}+\addf{n})\testf[n+1]+\testfp[n+1]\}^2 \bkm{n} w_n) \\- \post{n} \lk{n} \{w_n(\testf[n+1] \tstat{n+1} \adds{n+1} + \testfp[n+1])^2 \} \\
		=\post{n}\lk{n} \{ \testf[n+1]^2\bkm{n} (\tstat{n}\adds{n}+\addf{n}-\tstat{n+1}\adds{n+1})^2 \bkm{n} w_n \}, 
	\end{multline}
	we may rewrite $\delta_n^2 \langle \testf[n+1], \testfp[n+1] \rangle$ as
	{\footnotesize\begin{multline}
		\delta_n^2 \langle \testf[n+1],\testfp[n+1] \rangle = \frac{\M{n}^2}{(1+\M{n})^2}(\post{n} \am{n})^{-1} \post{n} \lk{n} \{w_n(\testf[n+1] \tstat{n+1} \adds{n+1}  + \testfp[n+1])^2 \} \\
		+ \frac{\M{n}(\post{n}\am{n})^{-1}}{(1+\M{n})^2} \left( \sqc{n}\lk{n}(\testf[n+1]^2 \bkm{n} w_n) 	
		+ \post{n}\lk{n} \{ \testf[n+1]^2 \bkm{n}(\tstat{n} \adds{n} + \addf{n} - \tstat{n+1} \adds{n+1})^2 \bkm{n} w_n \} \right) \\
		+ \frac{2\M{n}}{(1+\M{n})^2}(\post{n} \am{n})^{-1} \post{n} \lk{n} \left(w_n\{(\tstat{n}\adds{n}+\addf{n}) \testf[n+1]  + \testfp[n+1]\}(\tstat{n+1} \adds{n+1}\testf[n+1]  + \testfp[n+1]) \right) \\
		+ \frac{(\post{n}\am{n})^{-1}}{(1+\M{n})^2} \left(\sqc{n}\lk{n}(w_n\testf[n+1]^2)+\post{n} \lk{n} (w_n \{ (\tstat{n}\adds{n}+\addf{n})\testf[n+1]+\testfp[n+1] \}^2) \right).
	\end{multline}}
	We will now simplify $\delta_n^2\langle \testf[n+1],\testfp[n+1]\rangle$ further by adding and subtracting three different terms: first, if we add and subtract $(\post{n}\am{n})^{-1}(1+\M{n})^{-2} \post{n} \lk{n} \{w_n(\testf[n+1]\tstat{n+1} \adds{n+1}  + \testfp[n+1])^2\}$, we obtain
	{\footnotesize\begin{multline}
		\delta_n^2\langle \testf[n+1], \testfp[n+1] \rangle = \frac{\M{n}^2+1}{(1+\M{n})^2}(\post{n} \am{n})^{-1} \post{n} \lk{n} \{w_n(\testf[n+1]\tstat{n+1} \adds{n+1} + \testfp[n+1])^2\} \\
		+ \frac{\M{n}(\post{n} \am{n})^{-1}}{(1+\M{n})^2} \left(\sqc{n}\lk{n}(\testf[n+1]^2 \bkm{n} w_n) +\post{n} \lk{n} \{ \testf[n+1]^2 \bkm{n}(\tstat{n} \adds{n} + \addf{n} - \tstat{n+1} \adds{n+1})^2 \bkm{n} w_n \} \right) \\
		+ \frac{2\M{n}}{(1+\M{n})^2}(\post{n}\am{n})^{-1}\post{n}\lk{n}\left(w_n \{ (\tstat{n}\adds{n}+\addf{n})\testf[n+1]  + \testfp[n+1] \}(\tstat{n+1}\adds{n+1}\testf[n+1]  + \testfp[n+1])\right) \\
		+ \frac{(\post{n}\am{n})^{-1}}{(1+\M{n})^2} \left( \sqc{n} \lk{n}(w_n \testf[n+1]^2)+\post{n} \lk{n} (w_n \{ (\tstat{n}\adds{n}+\addf{n})\testf[n+1]+\testfp[n+1] \}^2) \right. \\  
		\left. -\post{n}\lk{n}\{w_n(\testf[n+1]\tstat{n+1}\adds{n+1}  + \testfp[n+1])^2\} \right);
	\end{multline}}
	second, adding and subtracting $(\post{n}\am{n})^{-1}(1+\M{n})^{-2} \post{n} \lk{n}( \{ (\tstat{n}\adds{n}+\addf{n}) \testf[n+1] + \testfp[n+1] \}^2 \bkm{n} w_n)$ yields 
	{\footnotesize\begin{multline}
		\delta_n^2 \langle \testf[n+1], \testfp[n+1] \rangle = \frac{\M{n}^2+1}{(1+\M{n})^2}(\post{n} \am{n})^{-1} \post{n} \lk{n}\{w_n(\testf[n+1]\tstat{n+1} \adds{n+1}  + \testfp[n+1])^2\} \\
		+\frac{\M{n}(\post{n}\am{n})^{-1}}{(1+\M{n})^2} \left( \sqc{n} \lk{n}(\testf[n+1]^2\bkm{n} w_n) +\post{n} \lk{n} \{ \testf[n+1]^2 \bkm{n} (\tstat{n} \adds{n} + \addf{n} - \tstat{n+1} \adds{n+1})^2 \bkm{n} w_n \} \right) \\
		+\frac{2\M{n}}{(1+\M{n})^2}(\post{n}\am{n})^{-1}\post{n}\lk{n}\left(w_n(\{\tstat{n}\adds{n}+\addf{n}\}\testf[n+1]  + \testfp[n+1])(\tstat{n+1}\adds{n+1}\testf[n+1]  + \testfp[n+1])\right) \\
		+\frac{(\post{n}\am{n})^{-1}}{(1+\M{n})^2 }\left(\sqc{n}\lk{n}(w_n \testf[n+1]^2)+\post{n} \lk{n} ( \{ (\tstat{n}\adds{n}+\addf{n}) \testf[n+1] +\testfp[n+1] \}^2 \bkm{n} w_n ) \right. \\
		\left. -\post{n}\lk{n} \{w_n(\testf[n+1]\tstat{n+1}\adds{n+1}  + \testfp[n+1])^2\} \right) \\
		+\frac{(\post{n}\am{n})^{-1}}{(1+\M{n})^2}\post{n} \lk{n}((w_n-\bkm{n} w_n) \{ (\tstat{n}\adds{n}+\addf{n})\testf[n+1]+\testfp[n+1] \}^2),
	\end{multline}}
	which, since by \eqref{eq:reversed:kernel}, 
	\begin{multline}
		 \post{n} \lk{n} \{w_n \testf[n+1]^2 \bkm{n} (\tstat{n} \adds{n} + \addf{n} - \tstat{n+1} \adds{n+1})^2 \} \\
		 = \post{n} \lk{n} \{  \testf[n+1]^2 \bkm{n}(\tstat{n}\adds{n} + \addf{n} - \tstat{n+1} \adds{n+1})^2 \bkm{n} w_n \},
	\end{multline}
	we may rearrange into	
	{\footnotesize\begin{multline}
		\delta_n^2 \langle \testf[n+1], \testfp[n+1] \rangle = \frac{\M{n}^2+1}{(1+\M{n})^2}(\post{n} \am{n})^{-1} \post{n} \lk{n}\{w_n(\testf[n+1] \tstat{n+1} \adds{n+1} + \testfp[n+1])^2\} \\
		+ \frac{\M{n}(\post{n} \am{n})^{-1}}{(1+\M{n})^2} \left(\sqc{n} \lk{n}(\testf[n+1]^2 \bkm{n} w_n) +\post{n} \lk{n}\{\testf[n+1]^2 \bkm{n}(\tstat{n} \adds{n} + \addf{n} - \tstat{n+1} \adds{n+1})^2 \bkm{n} w_n \} \right) \\
		+ \frac{2\M{n}}{(1+\M{n})^2}(\post{n} \am{n})^{-1} \post{n} \lk{n} \left(w_n\{(\tstat{n}\adds{n}+\addf{n}) \testf[n+1]  + \testfp[n+1]\}(\tstat{n+1}\adds{n+1}\testf[n+1]  + \testfp[n+1]) \right) \\
		+ \frac{(\post{n}\am{n})^{-1}}{(1+\M{n})^2} \left( \sqc{n} \{\lk{n}(w_n \testf[n+1]^2)\} + \post{n}\lk{n} \{w_n\testf[n+1]^2 \bkm{n} (\tstat{n}\adds{n} + \addf{n} - \tstat{n+1} \adds{n+1})^2 \} \right) \\
		+ \frac{(\post{n} \am{n})^{-1}}{(1+\M{n})^2}\post{n}\lk{n}((w_n-\bkm{n} w_n)\{(\tstat{n} \adds{n} + \addf{n}) \testf[n+1]+\testfp[n+1]\}^2);
	\end{multline}}
	third, doing the same with $2\M{n}(\post{n}\am{n})^{-1}(1+\M{n})^{-2}\post{n}\lk{n}\{w_n(\testf[n+1]\tstat{n+1}\adds{n+1}  + \testfp[n+1])^2\}$ yields  
	{\footnotesize\begin{multline}
		\delta_n^2 \langle \testf[n+1], \testfp[n+1] \rangle = (\post{n} \am{n})^{-1} \post{n} \lk{n} \{ w_n(\testf[n+1] \tstat{n+1} \adds{n+1} + \testfp[n+1])^2\} \\
		+ \frac{\M{n}(\post{n}\am{n})^{-1}}{(1+\M{n})^2} \left( \sqc{n} \lk{n}(\testf[n+1]^2 \bkm{n} w_n) +\post{n} \lk{n}\{ \testf[n+1]^2\bkm{n} (\tstat{n} \adds{n} + \addf{n} - \tstat{n+1} \adds{n+1})^2 \bkm{n} w_n \} \right) \\ 
		+ \frac{(\post{n} \am{n})^{-1}}{(1+\M{n})^2} \left( \sqc{n} \lk{n}(w_n \testf[n+1]^2) + \post{n} \lk{n}\{w_n\testf[n+1]^2 \bkm{n}(\tstat{n} \adds{n}+\addf{n}-\tstat{n+1}\adds{n+1})^2 \} \right) \\ 
		+ \frac{2\M{n}}{(1+\M{n})^2}(\post{n} \am{n})^{-1} \post{n} \lk{n}\left(w_n\{(\tstat{n} \adds{n}+\addf{n}) \testf[n+1]  + \testfp[n+1]\}(\tstat{n+1}\adds{n+1}\testf[n+1]  + \testfp[n+1]) \right. \\
		\left. -w_n(\testf[n+1]\tstat{n+1}\adds{n+1}  + \testfp[n+1])^2 \right) \\
		+\frac{(\post{n} \am{n})^{-1}}{(1+\M{n})^2} \post{n} \lk{n}((w_n - \bkm{n} w_n)\{(\tstat{n} \adds{n} + \addf{n}) \testf[n+1] + \testfp[n+1]\}^2).
	\end{multline}}
	Finally, by manipulating separately the last two terms of the previous expression we obtain
	{\footnotesize\begin{multline} \label{eq:deltaAdd}
		\delta_n^2 \langle \testf[n+1], \testfp[n+1] \rangle= (\post{n} \am{n})^{-1} \post{n} \lk{n}\{w_n(\testf[n+1] \tstat{n+1} \adds{n+1} + \testfp[n+1])^2\} \\ 
		+ \frac{\M{n}(\post{n} \am{n})^{-1}}{(1+\M{n})^2} \left( \sqc{n}\lk{n}(\testf[n+1]^2 \bkm{n} w_n) + \post{n} \lk{n}\{ \testf[n+1]^2 \bkm{n}(\tstat{n} \adds{n} + \addf{n} - \tstat{n+1} \adds{n+1})^2 \bkm{n} w_n\} \right) \\
		+ \frac{(\post{n}\am{n})^{-1}}{(1+\M{n})^2} \left( \sqc{n} \lk{n}(w_n\testf[n+1]^2) + \post{n} \lk{n} \{w_n \testf[n+1]^2 \bkm{n}(\tstat{n}\adds{n} + \addf{n} - \tstat{n+1} \adds{n+1})^2 \} \right) \\
		+ \frac{2\M{n}}{(1+\M{n})^2}(\post{n} \am{n})^{-1} \post{n} \lk{n}\{w_n(\tstat{n+1} \adds{n+1} \testf[n+1]  + \testfp[n+1])(\tstat{n}\adds{n}+\addf{n} -\tstat{n+1}\adds{n+1})\testf[n+1]\} \\
		+ \frac{(\post{n} \am{n})^{-1}}{(1+\M{n})^2} \post{n} \lk{n}((w_n - \bkm{n} w_n) 
		\{ \tstat{n+1} \adds{n+1} \testf[n+1] + \testfp[n+1] + (\tstat{n} \adds{n} + \addf{n} - \tstat{n+1} \adds{n+1}) \testf[n+1] \}^2).
	\end{multline}}
	Now, by~Assumption~\ref{assum:supp1}, $|\arr{i}| \le 2\supn{w_n}(\supn{\adds{n+1}}\supn{\testf[n+1]}+\supn{\testfp[n+1]})/\sqrt{\N}$ for $i\in \intvect{1}{\N}$, which implies, for every $\epsilon > 0$, 
	\begin{multline}
		\sum_{i=1}^{\N}\E \left[ (\arr{i})^2 \1{\{|\arr{i}| \ge \epsilon\} }\mid \partfiltbar{n} \right] \\ \le 4 \supn{w_n}^2(\supn{\adds{n+1}} \supn{\testf[n+1]} + \supn{\testfp[n+1]})^2\\\times \1{\{2\supn{w_n}(\supn{\adds{n+1}} \supn{\testf[n+1]} + \supn{\testfp[n+1]}) \ge \epsilon \sqrt{\N}\}},
	\end{multline}
	where the right-hand side tends to zero as $\N$ tends to infinity. Thus, both the sufficient conditions of Theorem~A.3 in \citet{douc:moulines:2008} are satisfied, and we may conclude that for every $u \in \rset$, as $\N$ tends to infinity, 
	\begin{align}
		\E \left[ \exp \left( \operatorname{i}u \sum_{i=1}^{\N} \arr{i} \right) \mid \partfiltbar{n} \right] \convp \exp \left(- u^2 \delta_n^2 \langle \testf[n+1], \testfp[n+1] \rangle / 2 \right).
	\end{align}
	Moreover, in order to generalize to the case where $\post{n+1}(\tstat{n+1}\adds{n+1}\testf[n+1]+\testfp[n+1])$ is possibly non-zero we note that with $\testfbar[n + 1](x) \eqdef \testfp[n+1](x) - \post{n+1}(\tstat{n+1}\adds{n+1}\testf[n+1]+\testfp[n+1])$, $x \in \set{X}_{n + 1}$, it holds that $\post{n+1}(\tstat{n+1}\adds{n+1}\testf[n+1]+\testfbar[n+1]) = 0$.
	Also note that by Theorem~\ref{thm:hoef}, $\N \wgtsum{n+1}^{-1} \convp \post{n} \am{n}/(\post{n}\lk{n}\1{\set{X}_{n+1}})$ as $\N$ tends to infinity. Now, combining Theorem~A.3 in \citet{douc:moulines:2008}, the induction hypothesis, Lemma~A.5 in \citet{delmoral:moulines:olsson:verge:2016}, and Slutsky's lemma, we conclude that, as $\N \to \infty$, 
	\begin{multline}
		\sqrt{N} \bigg( \sum_{i=1}^{\N} \frac{\wgt{n+1}{i}}{\wgtsum{n+1}}\{ \tstat[i]{n+1} \testf[n+1](\epart{n+1}{i}) + \testfp[n+1](\epart{n+1}{i}) \} \\- \post{n+1}(\tstat{n+1}\adds{n+1}\testf[n+1]+\testfp[n+1]) \bigg) \\
		= \sqrt{\N} \sum_{i=1}^{\N}\frac{\wgt{n + 1}{i}}{\wgtsum{n + 1}}\{\tstat[i]{n + 1}\testf[n + 1](\epart{n + 1}{i}) + \testfbar[n + 1](\epart{n + 1}{i})\} 		\convd \asvar[]{n+1}{\testf[n+1]}{\testfp[n+1]} Z,
	\end{multline}
	where $Z$ has standard Gaussian distribution and 	
	%{ \footnotesize
	\begin{align}
		\lefteqn{\asvar[2]{n+1}{\testf[n+1]}{\testfp[n+1]}} \\
		& 
		\!\begin{multlined}
			\eqdef \frac{(\post{n}\am{n})^2}{(\post{n}\lk{n}\1{\set{X}_{n+1}})^2} \delta_n^2 \langle \testf[n+1], \testfp[n+1] - \post{n+1}(\tstat{n+1}\adds{n+1}\testf[n+1]+\testfp[n+1]) \rangle \\ 
			+ \frac{\asvar[2]{n}{\lk{n}\testf[n+1]}{\lk{n}\{\addf{n} \testf[n+1] + \testfp[n+1] - \post{n+1}(\tstat{n+1} \adds{n+1} \testf[n+1] + \testfp[n+1])\}}}{(\post{n} \lk{n} \1{\set{X}_{n+1}})^2}
		\end{multlined}
		\\ 
		& 
		\!\begin{multlined}
			= \post{n} \am{n} \frac{\post{n}\lk{n}(w_n\{\testf[n+1] \tstat{n+1} \adds{n+1}  + \testfp[n+1] - \post{n+1}(\tstat{n+1}\adds{n+1}\testf[n+1]+\testfp[n+1])\}^2)}{(\post{n}\lk{n}\1{\set{X}_{n+1}})^2} \\
			+  \frac{\M{n} \post{n} \am{n}}{(1+\M{n})^2(\post{n} \lk{n} \1{\set{X}_{n+1}})^2}\bigg(\sqc{n}\{\lk{n}(\testf[n+1]^2 \bkm{n} w_n)\} \\
			+ \post{n} \lk{n} \{\bkm{n}(\tstat{n} \adds{n} + \addf{n} - \tstat{n+1} \adds{n+1})^2 (\bkm{n} w_n) \testf[n+1]^2\}\bigg) \\
			+ \post{n} \am{n} \frac{\sqc{n}\{\lk{n}(w_n\testf[n+1]^2)\} + \post{n} \lk{n}\{ \bkm{n}(\tstat{n} \adds{n} + \addf{n} - \tstat{n+1} \adds{n+1})^2 w_n \testf[n+1]^2 \}}{(1+\M{n})^2(\post{n} \lk{n} \1{\set{X}_{n+1}})^2} \\
			+\frac{2 \M{n} \post{n} \am{n}}{(1+\M{n})^2(\post{n} \lk{n} \1{\set{X}_{n+1}})^2}\bigg(\post{n} \lk{n} (w_n \{\tstat{n+1} \adds{n+1}\testf[n+1]  + \testfp[n+1] \\
			-\post{n+1}(\tstat{n+1} \adds{n+1} \testf[n+1] + \testfp[n+1])\}(\tstat{n} \adds{n} + \addf{n} -\tstat{n+1}\adds{n+1}) \testf[n+1])\bigg)  \\
			+  \frac{\post{n} \am{n} }{(1+\M{n})^2(\post{n} \lk{n} \1{\set{X}_{n+1}})^2}\bigg(\post{n}\lk{n}((w_n - \bkm{n} w_n)\{\tstat{n+1}\adds{n+1}\testf[n+1]+\testfp[n+1]\\
			- \post{n+1}(\tstat{n+1}\adds{n+1}\testf[n+1]+\testfp[n+1])+(\tstat{n}\adds{n}+\addf{n} -\tstat{n+1}\adds{n+1})\testf[n+1]\}^2)\bigg) \\
			+ \frac{\asvar[2]{n}{\lk{n}\testf[n+1]}{\lk{n}\{\addf{n}\testf[n+1]+\testfp[n+1]-\post{n+1}(\tstat{n+1}\adds{n+1}\testf[n+1]+\testfp[n+1])\}}}{(\post{n}\lk{n}\1{\set{X}_{n+1}})^2}\label{eq:prevar}.
		\end{multlined}
	\end{align}
	%}
	The next step is to establish a non-recursive expression for the asymptotic variance. Recall the retro-prospective kernels defined in \eqref{eq:retro:prospective}; using Lemma~\ref{lemma:1} we may establish the recursive formula 
	\begin{multline}
		\lefteqn{\BFcent[]{m+1}{n}(h_n\lk{n}\testf[n+1]+\lk{n}\{\addf{n}\testf[n+1]+\testfp[n+1]-\post{n+1}(\tstat{n+1}\adds{n+1}\testf[n+1]+\testfp[n+1])\})} \\ 
		= \BF[]{m+1}{n}\left( \lk{n}\{(\adds{n}+\addf{n})\testf[n+1]+\testfp[n+1]\} - \post{n+1}(\tstat{n+1}\adds{n+1}\testf[n+1]+\testfp[n+1]) \lk{n} \1{\set{X}_{n+1}} \right. \\
		\left. - \post{0:n}\lk{n}\{(\adds{n}+\addf{n})\testf[n+1]+\testfp[n+1]\} + \post{n+1}(\tstat{n+1}\adds{n+1}\testf[n+1]+\testfp[n+1]) \post{n} \lk{n} \1{\set{X}_{n+1}}\right)
		\\
		= \BF[]{m+1}{n}\left(\lk{n}(\adds{n+1}\testf[n+1]+\testfp[n+1])-\post{n+1}(\tstat{n+1}\adds{n+1}\testf[n+1]+\testfp[n+1]) \lk{n} \1{\set{X}_{n+1}} \right. 
		\\ \left. -\post{0:n}\lk{n}(\adds{n+1}\testf[n+1]+\testfp[n+1]) + \post{n+1}(\tstat{n+1}\adds{n+1}\testf[n+1]+\testfp[n+1]) \post{n} \lk{n} \1{\set{X}_{n+1}}\right)
		\\
		= \BF[]{m+1}{n}\lk{n}\{\adds{n+1}\testf[n+1]+\testfp[n+1] - \post{n+1}(\tstat{n+1}\adds{n+1}\testf[n+1]+\testfp[n+1])\} \\
		= \BFcent[]{m+1}{n+1}(\adds{n+1}\testf[n+1]+\testfp[n+1]). \label{eq:retro:prospective:recursion}
	\end{multline}
	In addition, we note that
	$$
		\tstat{n+1} \adds{n+1} \testf[n+1] + \testfp[n+1] - \post{n+1}(\tstat{n+1}\adds{n+1}\testf[n+1]+\testfp[n+1])=\BFcent[]{n+1}{n+1}(\adds{n+1}\testf[n+1]+\testfp[n+1]), 
	$$
	and by combining the previous identities with the definition \eqref{eq:eta} of $\sqc{n}$ we may rewrite the first, incremental part of %five terms of 
	$\asvar[2]{n+1}{\testf[n+1]}{\testfp[n+1]}$ according to 
	\begin{align}
	\lefteqn{\frac{(\post{n} \am{n})^2}{(\post{n} \lk{n} \1{\set{X}_{n+1}})^2} \delta_n^2 \langle \testf[n+1], \testfp[n+1] - \post{n+1}(\tstat{n+1}\adds{n+1}\testf[n+1]+\testfp[n+1]) \rangle} \\
		&
		\!\begin{multlined}
			= \post{n} \am{n}\frac{\post{n} \lk{n} \{w_n \BFcent[2]{n+1}{n+1}(\adds{n+1} \testf[n+1] + \testfp[n+1])\}}{(\post{n} \lk{n} \1{\set{X}_{n+1}})^2} \\
			+ \frac{\M{n} \post{n} \am{n}}{1+\M{n}} \left( \sum_{\ell=0}^{n-1}\frac{\post{\ell} \lk{\ell} \{\bkm{\ell}(\tstatletter_\ell \adds{\ell}+\addf{\ell}-\tstatletter_{\ell+1}\adds{\ell+1})^2 \lk{\ell+1} \cdots \lk{n-1} \lk{n}(\bkm{n}w_n) \testf[n+1]^2\}}{(\post{\ell} \lk{\ell} \cdots \lk{n-1}\1{\set{X}_n})(\post{n} \lk{n} \1{\set{X}_{n+1}})^2 \prod_{k = \ell}^n (1+\M{k})} \right. \\
			\left. +\frac{\post{n}\lk{n}\{\bkm{n}(\tstat{n}\adds{n} + \addf{n} - \tstat{n+1} \adds{n+1})^2 (\bkm{n} w_n) \testf[n+1]^2\}}{(1+\M{n})(\post{n}\lk{n}\1{\set{X}_{n+1}})^2} \right) \\
			+ \frac{\post{n}\am{n}}{1+\M{n}} \left( \sum_{\ell=0}^{n-1}\frac{\post{\ell} \lk{\ell} \{ \bkm{\ell}(\tstatletter_\ell\adds{\ell} + \addf{\ell} - \tstatletter_{\ell+1} \adds{\ell+1})^2 \lk{\ell+1} \cdots \lk{n-1} \lk{n} w_n \testf[n+1]^2\}}{(\post{\ell} \lk{\ell} \cdots \lk{n-1}\1{\set{X}_n})(\post{n} \lk{n}\1{\set{X}_{n+1}})^2 \prod_{k=\ell}^n(1+\M{k})} \right. \\
			\left. +\frac{\post{n} \lk{n} \{ \bkm{n}(\tstat{n} \adds{n} + \addf{n} - \tstat{n+1} \adds{n+1})^2 w_n\testf[n+1]^2\}}{(1+\M{n})(\post{n}\lk{n}\1{\set{X}_{n+1}})^2} \right) \\
			+ 2 \M{n} \post{n} \am{n} \frac{\post{n}\lk{n}\{w_n\BFcent[]{n+1}{n+1}(\adds{n+1}\testf[n+1]+\testfp[n+1])(\tstat{n}\adds{n}+\addf{n} -\tstat{n+1}\adds{n+1})\testf[n+1]\}}{(1 + \M{n})^2 (\post{n}\lk{n}\1{\set{X}_{n+1}})^2} \\
			+  \frac{\post{n} \am{n}}{(1+\M{n})^2 (\post{n}\lk{n}\1{\set{X}_{n+1}})^2}\bigg(\post{n} \lk{n} ( (w_n - \bkm{n}w _n) \{ \BFcent[]{n+1}{n+1}(\adds{n+1}\testf[n+1]+\testfp[n+1])\\
			+(\tstat{n}\adds{n}+\addf{n} -\tstat{n+1}\adds{n+1})\testf[n+1] \}^2 )\bigg). 
		\end{multlined} %\label{eq:incremental:variance}
	\end{align}
	Moreover, using the induction hypothesis, we may express the last part of $\asvar[2]{n+1}{\testf[n+1]}{\testfp[n+1]}$ as 
     {\footnotesize
	\begin{align}
		\lefteqn{\frac{\asvar[2]{n}{\lk{n}\testf[n+1]}{\lk{n}(\addf{n}\testf[n+1]+\testfp[n+1]-\post{n+1}(\tstat{n+1}\adds{n+1}\testf[n+1]+\testfp[n+1]))}}{(\post{n}\lk{n}\1{\set{X}_{n+1}})^2}} \\
		&= \frac{\Xinit\{w_{-1}\BFcent[2]{0}{n+1}(\adds{n+1}\testf[n+1] + \testfp[n+1])\}}{(\Xinit \lk{0}\cdots\lk{n-1}\1{\set{X}_n})^2(\post{n}\lk{n}\1{\set{X}_{n+1}})^2}\\
		&+\sum_{m = 0}^{n - 1} \post{m} \am{m} \frac{\post{m} \lk{m} \{w_m \BFcent[2]{m+1}{n+1}(\adds{n+1}\testf[n+1] + \testfp[n+1])\}}{(\post{m} \lk{m} \cdots\lk{n-1} \1{\set{X}_n})^2(\post{n} \lk{n} \1{\set{X}_{n+1}})^2} \\
		& 
		\!\begin{multlined}
		+ \sum_{m = 0}^{n-1} \frac{\M{m} \post{m} \am{m}}{1+\M{m}}\\
		\times\sum_{\ell = 0}^{m} \frac{\post{\ell }\lk{\ell} ( \bkm{\ell}(\tstat{\ell} \af{\ell} + \addf{\ell} - \tstat{\ell+1} \af{\ell+1})^2 \lk{\ell+1} \cdots \lk{m}\{\bkm{m} w_m (\lk{m+1} \cdots \lk{n} \testf[n+1])^2\} )}{(\post{\ell} \lk{\ell} \cdots \lk{m-1} \1{\set{X}_m})(\post{m} \lk{m} \cdots \lk{n-1} \1{\set{X}_n})^2(\post{n} \lk{n} \1{\set{X}_{n+1}})^2 \prod_{k=\ell}^m(1+\M{k})} \end{multlined}\\
		& 
		\!\begin{multlined}
		+ \sum_{m=0}^{n-1} \frac{\post{m} \am{m}}{1+\M{m}}\\
		\times\sum_{\ell = 0}^m \frac{\post{\ell} \lk{\ell} ( \bkm{\ell}(\tstat{\ell} \af{\ell} + \addf{\ell} - \tstat{\ell+1} \af{\ell+1})^2 \lk{\ell+1} \cdots \lk{m}\{w_m( \lk{m+1} \cdots \lk{n} \testf[n+1] )^2 \} )}{(\post{\ell} \lk{\ell} \cdots \lk{m-1}\1{\set{X}_m})(\post{m}\lk{m}\cdots\lk{n-1}\1{\set{X}_n})^2(\post{n} \lk{n} \1{\set{X}_{n+1}})^2 \prod_{k = \ell}^m(1+\M{k})} 
		\end{multlined}\\
		& 
		\!\begin{multlined}
		+ \sum_{m = 0}^{n - 1} \frac{2 \M{m} \post{m} \am{m}}{(1+\M{m})^2(\post{m} \lk{m} \cdots \lk{n-1} \1{\set{X}_n})^2(\post{n} \lk{n} \1{\set{X}_{n+1}})^2}\bigg(\post{m}\lk{m}\{w_m \BFcent{m+1}{n+1}\\
		\times(\adds{n+1} \testf[n+1] 	+ \testfp[n+1])(\tstat{m}\adds{m} +\addf{m}-\tstat{m+1}\adds{m+1}) \lk{m+1}\cdots\lk{n}\testf[n+1] \}\bigg) 
		\end{multlined}\\
		& 
		\!\begin{multlined}
		+ \sum_{m=0}^{n - 1}  \frac{\post{m} \am{m}}{(1+\M{m})^2(\post{m}\lk{m}\cdots\lk{n-1}\1{\set{X}_n})^2(\post{n}\lk{n}\1{\set{X}_{n+1}})^2}\bigg(\post{m} \lk{m}((w_m-\bkm{m} w_m)\\
		\times\{ \BFcent{m+1}{n+1}(\adds{n+1}\testf[n+1] + \testfp[n+1])(\tstat{m}\adds{m}+\addf{m}-\tstat{m+1}\adds{m+1})\lk{m+1} \cdots \lk{n}\testf[n+1] \}^2)\bigg).
		\end{multlined} 
	\end{align}}
	By adding, term by term, the last two expressions and using the identity 
	\begin{equation} \label{eq:L:identity}
		(\post{m}\lk{m}\cdots\lk{n-1}\1{\set{X}_n})(\post{n}\lk{n}\1{\set{X}_{n+1}}) = \post{m}\lk{m}\cdots\lk{n}\1{\set{X}_{n+1}}, 
	\end{equation}
	we finally obtain 
	{\footnotesize
	\begin{align}
		\lefteqn{\asvar[2]{n+1}{\testf[n+1]}{\testfp[n+1]}} \\
		&
		=\frac{\Xinit \{w_{-1} \BFcent[2]{0}{n+1}(\adds{n+1} \testf[n+1] + \testfp[n+1]) \}}{(\Xinit \lk{0} \cdots \lk{n} \1{\set{X}_{n+1}})^2} \\
		&+ \sum_{m = 0}^n \post{m} \am{m}\frac{\post{m} \lk{m} \{w_m \BFcent[2]{m+1}{n+1}(\adds{n+1} \testf[n+1] + \testfp[n+1])\}}{(\post{m} \lk{m} \cdots \lk{n} \1{\set{X}_{n+1}})^2} \\
		& 
		\!\begin{multlined}
		+ \sum_{m = 0}^n \frac{\M{m} \post{m} \am{m}}{1+\M{m}}\\
		\times\sum_{\ell = 0}^m \frac{\post{\ell} \lk{\ell} ( \bkm{\ell} (\tstat{\ell} \af{\ell} + \addf{\ell} - \tstat{\ell+1} \af{\ell+1})^2 \lk{\ell+1} \cdots \lk{m} \{ \bkm{m} w_m (\lk{m+1} \cdots \lk{n} \testf[n+1])^2\} )}{(\post{\ell}\lk{\ell} \cdots \lk{m-1}\1{\set{X}_m})(\post{m}\lk{m}\cdots\lk{n}\1{\set{X}_{n+1}})^2 \prod_{k = \ell}^m (1+\M{k})} 
		\end{multlined}\\
		& 
		\!\begin{multlined}
		+ \sum_{m = 0}^n \frac{\post{m} \am{m}}{1+\M{m}}\\
		\times\sum_{\ell = 0}^m \frac{\post{\ell} \lk{\ell} ( \bkm{\ell}(\tstat{\ell} \af{\ell} + \addf{\ell} - \tstat{\ell+1} \af{\ell+1})^2 \lk{\ell+1} \cdots \lk{m}\{w_m (\lk{m+1} \cdots \lk{n} \testf[n+1])^2\} )}{(\post{\ell}\lk{\ell} \cdots \lk{m-1}\1{\set{X}_m})(\post{m}\lk{m}\cdots\lk{n}\1{\set{X}_{n+1}})^2 \prod_{k=\ell}^{m}(1+\M{k})} 
		\end{multlined}\\
		& 
		\!\begin{multlined}
		+  \sum_{m = 0}^n  \frac{2\M{m} \post{m} \am{m}}{(1+\M{m})^2(\post{m}\lk{m}\cdots\lk{n}\1{\set{X}_{n+1}})^2}\bigg(\post{m} \lk{m} \{w_m \BFcent{m+1}{n+1}(\adds{n+1}\testf[n+1] + \testfp[n+1])\\
		\times(\tstat{m}\adds{m}+\addf{m}-\tstat{m+1}\adds{m+1}) \lk{m+1} \cdots \lk{n} \testf[n+1] \}\bigg) 
		\end{multlined}\\
		& 
		\!\begin{multlined}
		+ \sum_{m = 0}^n  \frac{\post{m} \am{m}}{(1+\M{m})^2(\post{m}\lk{m}\cdots\lk{n}\1{\set{X}_{n+1}})^2}\bigg(\post{m} \lk{m}((w_m - \bkm{m} w_m) \{\BFcent{m+1}{n+1}(\adds{n+1}\testf[n+1] \\
		+ \testfp[n+1])(\tstat{m}\adds{m}+\addf{m}-\tstat{m+1}\adds{m+1})\lk{m+1}\cdots \lk{n} \testf[n+1]\}^2)\bigg) ,
		\end{multlined}
	\end{align}
	}
	which completes the induction step. It remains to establish the base case $n = 1$. Since the estimator at time zero is obtained by means of standard importance sampling,
	\begin{align}
		\asvar[2]{0}{\testf[0]}{\testfp[0]} &= \frac{\nu(w_{-1}^2\{\testf[0]\adds{0}  + \testfp[0]-\post{0}(\adds{0}\testf[0]+\testfp[0])\}^2)}{(\nu w_{-1})^2} \\
		&= \frac{\Xinit\{w_{-1}\BFcent[2]{0}{0}(\adds{0}\testf[0] + \testfp[0])\}}{(\Xinit \1{\set{X}_0})^2}. \label{eq:initial:variance}
	\end{align}
	In addition, using the recursive form \eqref{eq:prevar} of the asymptotic variance at time one, we get
	\begin{align}
		&\lefteqn{\asvar[2]{1}{\testf[1]}{\testfp[1]}
		= \post{0} \am{0} \frac{\post{0}\lk{0}(w_0\{\testf[1]\tstat{1}\adds{1}  + \testfp[1]-\post{1}(\tstat{1}\adds{1}\testf[1]+\testfp[1])\}^2)}{(\post{0}\lk{0}\1{\set{X}_{1}})^2}} \\
		&+ \M{0} \post{0} \am{0} \frac{\sqc{0}\{\lk{0}(\testf[1]^2\bkm{0}w_0)\}+\post{0} \lk{0}\{\testf[1]^2\bkm{0}(\tstat{0}\adds{0}+\addf{0}-\tstat{1}\adds{1})^2 \bkm{0} w_0\}}{(1+\M{0})^2(\post{0}\lk{0}\1{\set{X}_{1}})^2} \\
		&+ \post{0} \am{0} \frac{\sqc{0}\{\lk{0}(w_0\testf[1]^2)\} + \post{0}\lk{0}\{w_0\testf[1]^2\bkm{0}(\tstat{0}\adds{0}+\addf{0}-\tstat{1}\adds{1})^2 \}}{(1+\M{0})^2(\post{0}\lk{0}\1{\set{X}_{1}})^2} \\
		&
		\!\begin{multlined}
		+  \frac{2 \M{0} \post{0} \am{0}}{(1+\M{0})^2(\post{0} \lk{0} \1{\set{X}_1})^2} \bigg( \post{0} \lk{0}(w_0\{\tstat{1}\adds{1}\testf[1]  + \testfp[1] - \post{1}(\tstat{1}\adds{1}\testf[1]+\testfp[1])\}\\
		\times(\tstat{0} \adds{0} + \addf{0} -\tstat{1} \adds{1}) \testf[1])\bigg)
		\end{multlined}\\
		&
		\!\begin{multlined}
		+  \frac{\post{0} \am{0}}{(1+\M{0})^2 (\post{0}\lk{0}\1{\set{X}_{1}})^2}\bigg(\post{0}\lk{0}\{(w_0 - \bkm{0} w_0) \{ \tstat{1}\adds{1}\testf[1] + \testfp[1]\\
		-\post{1}(\tstat{1}\adds{1}\testf[1]+\testfp[1])+(\tstat{0}\adds{0}+\addf{0} -\tstat{1}\adds{1})\testf[1] \}^2 \}\bigg)
		\end{multlined}\\
		&+ \frac{\asvar[2]{0}{\lk{0}\testf[1]}{\lk{0}\{\addf{0}\testf[1]+\testfp[1]-\post{1}(\tstat{1}\adds{1}\testf[1]+\testfp[1])\}}}{(\post{0}\lk{0}\1{\set{X}_{1}})^2}.
	\end{align}
	Note that by \eqref{eq:initial:variance} and \eqref{eq:retro:prospective:recursion}, the last term is equal to $\Xinit\{w_{-1}\BFcent[2]{0}{1}(\adds{1}\testf[1] + \testfp[1])\}/(\Xinit \lk{0}\1{\set{X}_{1}})^2$, and by rewriting the previous expression using the retro-prospective kernels we obtain 
	\begin{multline}
		\lefteqn{\asvar[2]{1}{\testf[1]}{\testfp[1]}= \post{0} \am{0} \frac{\post{0}\lk{0}\{w_0\BFcent[2]{1}{1}(\adds{1} \testf[1] + \testfp[1])\}}{(\post{0} \lk{0}\1{\set{X}_{1}})^2}} \\
		+ \M{0} \post{0}\am{0} \frac{\post{0}\lk{0}\{\bkm{0}(\adds{0}+\addf{0}-\tstat{1}\adds{1})^2(\bkm{0} w_0) \testf[1]^2\}}{(1+\M{0})^2(\post{0}\lk{0}\1{\set{X}_{1}})^2} \\
		+ \post{0} \am{0} \frac{\post{0}\lk{0} \{ \bkm{0}(\adds{0} + \addf{0} - \tstat{1} \adds{1})^2 w_0\testf[1]^2 \}}{(1+\M{0})^2(\post{0}\lk{0}\1{\set{X}_{1}})^2} \\
		+ 2 \M{0} \post{0} \am{0} \frac{}{}\frac{\post{0}\lk{0} \{w_0\BFcent[2]{1}{1}(\adds{1} \testf[1] + \testfp[1])(\adds{0} + \addf{0} - \tstat{1} \adds{1}) \testf[1] \}}{(1+\M{0})^2 (\post{0}\lk{0}\1{\set{X}_{1}})^2} \\
		+ \post{0} \am{0} \frac{\post{0} \lk{0}((w_0-\bkm{0} w_0) \{\BFcent[]{1}{1}(\adds{1} \testf[1] + \testfp[1])+(\adds{0}+\addf{0} -\tstat{1}\adds{1})\testf[1] \}^2)}{(1+\M{0})^2 (\post{0}\lk{0}\1{\set{X}_1})^2} \\
		+ \frac{\Xinit\{w_{-1}\BFcent[2]{0}{1}(\adds{1}\testf[1] + \testfp[1])\}}{(\Xinit \lk{0}\1{\set{X}_{1}})^2}, 
	\end{multline}
	%\par}
	which, recalling that $\tstat{0} \adds{0} = \adds{0}$, corresponds to \eqref{asvar} for $n = 1$. The proof is complete. 
\end{proof}

%%%% UNIFORM VARIANCE BOUND

\subsection{Time linear variance bounds}

In this part we will derive an $\ordo(n)$ bound on the asymptotic variance $\sigma_n^2(\adds{n})$ in Corollary~\ref{corollary:1} in the case where the increments $(\addf{n})_{n \in \nset}$ of the additive functionals can be uniformly bounded in $n$. The analysis will be carried through under the following strong mixing assumptions, which typically require the state spaces $(\set{X}_n)_{n \in \nset}$ to be compact sets; see, \eg, \citet[Section~4]{delmoral:2004} and \citet[Section~4]{cappe:moulines:ryden:2005}. The numerical stability of AdaSmooth in the case of a deterministic selection and backward-sampling schedule will then, in the next section, be established by, first, formulating AdaSmooth equivalently as an algorithm of the same type as Algorithm~\ref{algo:adasmsupp}, but when the latter is operating on an extended path-space model; then, second, it will be shown that Assumption~\ref{assum:compact} implies that the strong mixing assumptions are satisfied also for the extended model. 

\begin{assumption2}\label{assum:2}
	\mbox{}
	\begin{itemize}
		\item[(i)] There exist constants $0 < \hklow < \hkup < \infty$ such that for all $n \in \nset $ and all $(x,x') \in \set{X}_n \times \set{X}_{n+1}$,
		%\begin{equation}
			$\hklow \le \ld{n}(x, x')\le \hkup$.
		%\end{equation}
		\item[(ii)] There exist positive constants $\mdup$  and $\amup$ and such that for all $n \in \nset$, $\supn{w_n}\le \mdup$ and $\supn{\am{n}}\le \amup$. In addition, $\supn{w_{-1}} \le \mdup$.  
	\end{itemize}
\end{assumption2}
Under Assumption~\ref{assum:2} we may, without loss of generality, assume that each reference measure $\mu_n$ is a probability measure. As a consequence, for all $n \in \nset$, $\lk{n} \1{\set{X}_{n+1}}(x_n) \geq \hklow$ for all $x_n\in\set{X}_n$. Moreover, under Assumption~\ref{assum:2} we define $\mr \eqdef 1 - \hklow / \hkup$. 

\begin{theorem}\label{thm:boundsupp}
	Let Assumption~\ref{assum:2} hold. Then for all additive functionals $(\adds{n})_{n \in \nset}$ of form \eqref{eq:adds} for which there exists a positive constant $\hbd$ such that for all $n\in \nsetpos$, $\supn{\addf{n}}\le \hbd$ and $\supn{\adds{0}+\addf{0}}\le \hbd$, %and Assumption \ref{assum:3} hold. Then
	\begin{multline}\label{asvar_bound}
		\limsup_{n \to \infty} \frac{1}{n} \sigma_n^2(\adds{n}) \le	\hbd^2 \frac{\amup \mdup }{\hklow(1-\mr)^4} \bigg( \frac{108+ \mdup\mr(3 - 2 \mr)(1 + \mr)}{2 \mr^2} \\+\frac{(3 - 2 \mr)^2}{1 - \mr} \lim_{n \to \infty} \frac{1}{n} \sum_{m=0}^{n-1} \sum_{\ell = 0}^m \prod_{k=\ell}^m (1 + \M{k})^{-1} \bigg).
	\end{multline}
\end{theorem}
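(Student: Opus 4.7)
The plan is to bound, term by term, the five contributions \eqref{var_term:1}--\eqref{var_term:5} to $\sigma_n^2(\adds{n})$ in Corollary~\ref{corollary:1}, showing that each grows at most linearly in $n$. The central ingredient is the geometric mixing induced by Assumption~\ref{assum:2}: since $\hklow \le \ld{n} \le \hkup$, each backward kernel $\bkm{\ell}$ satisfies a Doeblin-type minorization, and its Dobrushin contraction coefficient is at most $\mr = 1 - \hklow/\hkup$. The kernel-mass ratios appearing in the denominators (\emph{e.g.}\ $\post{m}\lk{m}\cdots\lk{n-1}\1{\set{X}_n}$) are straightforwardly controlled from above and below using the two-sided bound on $\ld{n}$, so the real work lies in bounding the numerators.

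First I would establish, as a preliminary lemma, two uniform-in-time bounds. The first is the control of the centered incremental quantity: $\supn{\tstat{\ell}\adds{\ell} + \addf{\ell} - \tstat{\ell+1}\adds{\ell+1}} \le C_1 \hbd/(1-\mr)$, uniformly in $\ell$. This quantity has zero mean under $\bkm{\ell}(x_{\ell+1},\cdot)$ and, even though the raw norm of $\tstat{\ell}\adds{\ell}$ grows linearly with $\ell$, iterating the backward Dobrushin contraction telescopes the centered version into a geometric series of individual increment contributions. The second bound is $\supn{\BFcent{m}{n}\adds{n}} \le C_2 \hbd/(1-\mr)^2$, uniformly in $m \le n$; this follows by decomposing $\adds{n} - \post{0:n}\adds{n}$ into a telescoping sum of the above centered increments, noting that each is processed by a chain of backward kernels (for components with index $< m$) or forward kernels (for components with index $\ge m$), and applying the geometric forgetting factor in each chain.

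These uniform bounds are then inserted into each of the five variance contributions. Term \eqref{var_term:1} becomes a sum over $m$ of summands that are uniformly $O(\hbd^2/(1-\mr)^4)$, hence $O(n)$ overall. Terms \eqref{var_term:2} and \eqref{var_term:3} produce exactly the double sum $\sum_{m=0}^{n-1}\sum_{\ell=0}^{m} \prod_{k=\ell}^{m}(1+\M{k})^{-1}$ featured on the right-hand side of \eqref{asvar_bound}: the inner factor $\bkm{\ell}(\tstat{\ell}\adds{\ell} + \addf{\ell} - \tstat{\ell+1}\adds{\ell+1})^2$ is uniformly bounded by the first preliminary bound, and the surrounding kernel action only produces the prefactor $\hbd^2/(1-\mr)^2$. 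Terms \eqref{var_term:4} and \eqref{var_term:5} already carry an explicit $(1+\M{m})^{-2}$ prefactor, and combining both preliminary bounds with the uniform control $\supn{w_m - \bkm{m}w_m} \le 2\mdup$ gives an $O(\hbd^2)$ per-step bound, absorbed into the first constant in \eqref{asvar_bound}. Careful accounting of the $\hklow$, $\hkup$, $\mdup$, $\amup$ factors produced along the way will match the stated coefficients $C_1$ and $C_2$.

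The main obstacle will be the preliminary uniform bound on $\supn{\BFcent{m}{n}\adds{n}}$: the naive estimate scales as $(n-m)\hbd$ and is therefore useless. The argument must exploit the fact that, after centering, the contribution of each increment $\addf{k}$ to $\BFcent{m}{n}\adds{n}(x_m)$ can be written as the application of a chain of either backward or forward kernels whose Dobrushin coefficients contract as $\mr^{|k-m|}$, so that the sum over $k \in \intvect{0}{n-1}$ collapses to a convergent geometric series. Carrying this decomposition out rigorously in our setting---in which the AdaSmooth update \eqref{eq:newupdate} entangles forward indices $I_{n+1}^i$ and backward indices $J_{n+1}^i$ and creates a more intricate dependence structure than in \citet{olsson:westerborn:2017} and \citet{gloaguen:lecorff:olsson:2021}---is where the technical heart of the proof lies.
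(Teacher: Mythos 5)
Your proposal is correct and follows essentially the same route as the paper: a term-by-term bound on the six contributions to $\sigma_n^2(\adds{n})$, using (i) the uniform bound $\supn{\tstat{\ell}\adds{\ell}+\addf{\ell}-\tstat{\ell+1}\adds{\ell+1}}\le\hbd(3-2\mr)/(1-\mr)$ obtained by telescoping into backward-kernel-centered increments, (ii) the geometric forgetting bound $\supn{\BFcent{m}{n}\adds{n}}\le\hbd\supn{\lk{m}\cdots\lk{n-1}\1{\set{X}_n}}\sum_k\mr^{|k-m|-1}$ (imported from Lemma~D.3 of \citet{gloaguen:lecorff:olsson:2021}), and (iii) two-sided control of the kernel-mass ratios, with the double sum $\sum_m\sum_\ell\prod_{k=\ell}^m(1+\M{k})^{-1}$ emerging exactly from the terms \eqref{var_term:2}--\eqref{var_term:3}. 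One small correction: the forgetting bound on $\BFcent{m}{n}$ is a purely deterministic property of the limiting kernels $\bkm{m}$ and $\lk{m}$, already available from the cited work, so the entanglement of forward and backward indices in the AdaSmooth update plays no role at this stage (it was absorbed in deriving the variance formula itself), and your uniform bound should be stated for the normalized quantity $\BFcent{m}{n}\adds{n}/\supn{\lk{m}\cdots\lk{n-1}\1{\set{X}_n}}$ rather than for $\BFcent{m}{n}\adds{n}$ itself.
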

\begin{proof}
	We write 
	\begin{equation}
		\sigma_n^2(\adds{n}) = A_n + B_n + C_n + D_n + E_n + F_n,
	\end{equation}
	where
	{\footnotesize
	\begin{align}
		&A_n \eqdef \frac{\Xinit(w_{-1} \BFcent[2]{0}{n}\adds{n})}{(\Xinit \lk{0} \cdots \lk{n-1} \1{\set{X}_n})^2}, \\
		&B_n \eqdef \sum_{m=0}^{n - 1} \post{m} \am{m} \frac{\post{m} \lk{m} (w_m \BFcent[2]{m+1}{n} \adds{n})}{(\post{m} \lk{m} \cdots \lk{n-1} \1{\set{X}_n})^2}, \\
		& 
		\!\begin{multlined}
		C_n \eqdef\sum_{m=0}^{n - 1} \frac{\M{m} \post{m} \am{m}}{1+\M{m}}\\
		\times\sum_{\ell = 0}^m \frac{\post{\ell} \lk{\ell} (\bkm{\ell}(\tstat{\ell} \af{\ell} + \addf{\ell} - \tstat{\ell+1} \af{\ell+1})^2 \lk{\ell+1} \cdots \lk{m} \{\bkm{m} w_m (\lk{m+1} \cdots \lk{n-1} \1{\set{X}_n})^2\})}{(\post{\ell}\lk{\ell} \cdots \lk{m-1}\1{\set{X}_m})(\post{m}\lk{m}\cdots\lk{n-1}\1{\set{X}_n}  )^2 \prod_{k = \ell}^m (1+\M{k})}, 
		\end{multlined}\\
		&
		\!\begin{multlined}
		D_n \eqdef \sum_{m=0}^{n-1} \frac{\post{m} \am{m}}{1+\M{m}}\\
		\times\sum_{\ell = 0}^m \frac{\post{\ell} \lk{\ell}(\bkm{\ell}(\tstat{\ell} \af{\ell} + \addf{\ell} - \tstat{\ell+1} \af{\ell+1})^2 \lk{\ell+1} \cdots \lk{m}\{w_m (\lk{m+1} \cdots \lk{n-1} \1{\set{X}_n})^2\})}{(\post{\ell}\lk{\ell} \cdots \lk{m-1}\1{\set{X}_m})(\post{m}\lk{m}\cdots\lk{n-1}\1{\set{X}_n})^2 \prod_{k=\ell}^m (1+\M{k})}, 
		\end{multlined}\\
		&
		\!\begin{multlined}
		E_n \eqdef 2 \sum_{m=0}^{n-1} \M{m} \post{m} \am{m}\\
		\times\frac{\post{m} \lk{m}\{w_m \BFcent{m+1}{n} \adds{n}(\tstat{m} \adds{m} + \addf{m} - \tstat{m+1} \adds{m+1}) \lk{m+1} \cdots \lk{n-1} \1{\set{X}_n}\}}{(1+\M{m})^2 (\post{m} \lk{m} \cdots \lk{n-1} \1{\set{X}_n})^2}, 
		\end{multlined}\\
		&
		\!\begin{multlined}
		F_n \eqdef \sum_{m=0}^{n-1}  \frac{\post{m} \am{m}}{(1+\M{m})^2(\post{m}\lk{m}\cdots\lk{n-1}\1{\set{X}_n})^2}\bigg(\post{m} \lk{m} ((w_m - \bkm{m} w_m) \{\BFcent{m+1}{n}\adds{n}\\
		+(\tstat{m}\adds{m}+\addf{m}-\tstat{m+1}\adds{m+1})\lk{m+1}\cdots\lk{n-1}\1{\set{X}_n}\}^2)\bigg).
		\end{multlined}
	\end{align}
	}
	In the following we will use techniques developed by \citet[Section~D]{gloaguen:lecorff:olsson:2021} to bound each of the terms $A_n$--$F_n$. For each $n \in \nsetpos$ and $k \in \nset$ such that $k < n$, let 
	\begin{equation} \label{eq:hext}
	\addfext{k}{n} : \set{X}_0 \times \cdots \times \set{X}_n \ni x_{0:n} \mapsto
	\begin{cases}
	\addf{k}(x_k, x_{k+1}) & \mbox{for $k \in\intvect{1}{n - 1}$}, \\
	\adds{0}(x_0) + \addf{0}(x_0, x_1) & \mbox{for $k = 0$},  
	\end{cases}
	\end{equation}
	denote the extensions of $\addf{k}$ and $\adds{0}+\addf{0}$ to $\set{X}_0\times\cdots\times\set{X}_{n}$. Now, note that 
	\begin{align}
		\post{0:n}\adds{n}=\frac{\post{m}\BF{m}{n}\adds{n}}{\post{m}\BF{m}{n}\1{\set{X}_{n}}};
	\end{align}
	thus, using \eqref{eq:hext} %and \eqref{eq:hext2} 
	we may write, for every $m \in \intvect{0}{n-1}$ and $x_m \in \set{X}_m$,
	\begin{align}
		\frac{\BFcent{m}{n}\adds{n}(x_m)}{\lk{m}\cdots\lk{n-1}\1{\set{X}_n}(x_m)}=\sum_{k=0}^{n-1}\bigg(\frac{\BF{m}{n}\addfext{k}{n}(x_m)}{\BF{m}{n}\1{\set{X}_n}(x_m)}-\frac{\post{m}\BF{m}{n}\addfext{k}{n}}{\post{m}\BF{m}{n}\1{\set{X}_n}}\bigg).
	\end{align}
	Applying Lemma~D.3 in \citet{gloaguen:lecorff:olsson:2021} yields 
	\begin{align}
		\supn{\BFcent{m}{n} \adds{n}}&\le \supn{\lk{m} \cdots \lk{n-1} \1{\set{X}_n}} \sum_{k=0}^{n-1} \left \lvert \frac{\delta_{x_m} \BF{m}{n} \addfext{k}{n}}{\delta_{x_m} \BF{m}{n} \1{\set{X}_n}} - \frac{\post{m} \BF{m}{n} \addfext{k}{n}}{\post{m} \BF{m}{n} \1{\set{X}_n}} \right \rvert
		\\&\le \hbd \supn{\lk{m}\cdots \lk{n - 1} \1{\set{X}_n}} \sum_{k = 0}^{n - 1} \mr^{|k-m|-1}. \label{eq:bound:retro-prospective}
	\end{align}
	
	%%%% Term A
	
	\subsubsection*{Term $A_n$}
	Using \eqref{eq:bound:retro-prospective} we obtain  
	\begin{align}
		\supn{\BFcent{0}{n}\adds{n}}\le\hbd\supn{\lk{0}\cdots\lk{n-1}\1{\set{X}_n}}\frac{1-\mr^{n}}{\mr(1-\mr)},
	\end{align}
	implying that
	\begin{equation}
		A_n\le \hbd^2\supn{\lk{0}\cdots\lk{n-1}\1{\set{X}_n}}^2\left(\frac{1-\mr^{n}}{\mr(1-\mr)}\right)^2\frac{\Xinit\left(w_{-1}\right)}{(\Xinit\lk{0}\cdots\lk{n-1}\1{\set{X}_n})^2}.
	\end{equation}
	Now, under Assumption \ref{assum:2}, for all $x\in \set{X}_0$,
	\begin{equation}
		\hklow\refM{1} \lk{1}\cdots\lk{n-1}\1{\set{X}_n} \le \lk{0} \cdots \lk{n-1} \1{\set{X}_n}(x) \le \hkup \refM{1} \lk{1} \cdots \lk{n-1} \1{\set{X}_n}, 
	\end{equation}
	which, recalling that $\Xinit\lk{0}\cdots\lk{n-1}\1{\set{X}_n}=(\Xinit\1{\set{X}_0})\post{0}\lk{0}\cdots\lk{n-1}\1{\set{X}_n}$, yields that
	\begin{multline}
		\frac{\Xinit w_{-1}\supn{\lk{0}\cdots\lk{n-1}\1{\set{X}_n}}^2}{(\Xinit \1{\set{X}_0})^2(\post{0}\lk{0}\cdots\lk{n-1}\1{\set{X}_n})^2}=\frac{\post{0}w_{-1}\supn{\lk{0}\cdots\lk{n-1}\1{\set{X}_n}}^2}{(\Xinit \1{\set{X}_0})(\post{0}\lk{0}\cdots\lk{n-1}\1{\set{X}_n})^2}\le \frac{\mdup}{\Xinit \1{\set{X}_0}}\left(\frac{\hkup}{\hklow}\right)^2\\=\frac{\mdup}{\Xinit \1{\set{X}_0}(1-\mr)^2}.
	\end{multline}
	Hence, $A_n$ is uniformly bounded and $\limsup_{n\rightarrow\infty}A_n/n =0$. 
	
	%%%% Term B
	
	\subsubsection*{Term $B_n$}
	We turn to $B_n$. Since $\post{m}\lk{m}\cdots\lk{n-1}\1{\set{X}_n} = (\post{m} \lk{m} \1{\set{X}_{m+1}})(\post{m+1}\lk{m+1}\cdots\lk{n-1}\1{\set{X}_n})$ and, by \eqref{eq:bound:retro-prospective}, for $m\in\intvect{0}{n-1}$,
	\begin{equation}\label{BF_ineq}
		\supn{\BFcent{m+1}{n}\adds{n}}\le \hbd\supn{\lk{m+1} \cdots \lk{n-1} \1{\set{X}_n}}\sum_{k=0}^{n-1}\mr^{|k-m-1|-1}, 
	\end{equation}
	it follows that
	\begin{equation}
		B_n\le \hbd^2\amup\sum_{m=0}^{n-1} \frac{\post{m}\lk{m}w_m \supn{\lk{m+1}\cdots\lk{n-1}\1{\set{X}_n}}^2}{(\post{m}\lk{m}\1{\set{X}_{m+1}})^2(\post{m+1}\lk{m+1}\cdots\lk{n-1}\1{\set{X}_n})^2} \left( \sum_{k=0}^{n-1}\mr^{|k-m-1|-1} \right)^2. 
	\end{equation}
	Under Assumption~\ref{assum:2}, for all $m\in\intvect{0}{n-1}$ and $x \in \set{X}_{m + 1}$, 
	\begin{equation}
		\hklow \mu_{m+2} \lk{m+2} \cdots \lk{n-1} \1{\set{X}_n} \le \lk{m+1} \cdots \lk{n-1} \1{\set{X}_n}(x) \le \hkup \mu_{m+2} \lk{m+2} \cdots \lk{n-1} \1{\set{X}_n},
	\end{equation}
	implying that
	\begin{equation}\label{L_ineq}
		\frac{\supn{\lk{m+1}\cdots\lk{n-1}\1{\set{X}_n}}}{\post{m+1} \lk{m+1} \cdots \lk{n-1} \1{\set{X}_n}}\le \frac{1}{(1-\mr)}.
	\end{equation}
	Consequently, 
	\begin{equation}
		\frac{\post{m} \lk{m} w_m \supn{\lk{m+1} \cdots \lk{n-1}\1{\set{X}_n}}^2}{(\post{m} \lk{m} \1{\set{X}_{m+1}})^2(\post{m+1}\lk{m+1}\cdots\lk{n-1}\1{\set{X}_n})^2} \le \frac{\mdup}{\hklow(1-\mr)^2},
	\end{equation}
	and since 
	\begin{equation}
		\sum_{m=0}^{n - 1} \left(\sum_{k=0}^{n - 1} \mr^{|k-m-1|-1} \right)^2 \le \frac{4n}{\mr^2(1-\mr)^2},
	\end{equation}
	we may conclude that 
	\begin{equation}
		\limsup_{n \to \infty}\frac{1}{n} B_n \le \hbd^2 \frac{4 \amup \mdup}{\hklow \mr^2(1-\mr)^4}.
	\end{equation}
	
	%%%% Terms C and D
	
	\subsubsection*{Terms $C_n$ and $D_n$}
	
	Due to their similarity, $C_n$ and $D_n$ are treated in the same way. In order to bound $C_n$, we proceed like, using \eqref{L_ineq},
	{\footnotesize\begin{align}
		\lefteqn{\frac{\post{\ell}\lk{\ell}(\bkm{\ell}(\tstat{\ell} \af{\ell} + \addf{\ell} - \tstat{\ell + 1}\af{\ell+1})^2 \lk{\ell+1} \cdots \lk{m} \{\bkm{m} w_m (\lk{m+1} \cdots \lk{n-1} \1{\set{X}_n})^2\})}{(\post{\ell} \lk{\ell} \cdots \lk{m-1}\1{\set{X}_m})(\post{m}\lk{m}\cdots\lk{n-1}\1{\set{X}_n})^2}} \\
		&\le \frac{\post{\ell+1}\{ \bkm{\ell}(\tstat{\ell}\af{\ell} + \addf{\ell} - \tstat{\ell+1}\af{\ell+1})^2 \} \supn{\lk{\ell+1} \cdots \lk{m-1}\1{\set{X}_m}} \mdup \supn{\lk{m+1} \cdots \lk{n-1} \1{\set{X}_n}}^2}{(\post{\ell+1} \lk{\ell+1} \cdots \lk{m-1}\1{\set{X}_m})(\post{m}\lk{m}\1{\set{X}_{m+1}}  )(\post{m+1}\lk{m+1}\cdots\lk{n-1}\1{\set{X}_n}  )^2} \\
		&\le \post{\ell+1}\{\bkm{\ell}(\tstat{\ell} \adds{\ell} + \addf{\ell} - \tstat{\ell+1} \adds{\ell+1})^2\} \frac{\mdup}{\hklow (1 - \mr)^3}.
	\end{align}}
	Then, since $\bkm{\ell}(\tstat{\ell} \adds{\ell} + \addf{\ell}) = \tstat{\ell+1} \adds{\ell+1}$ and $\tstat{\ell} \adds{\ell} = \BF{\ell}{\ell} \adds{\ell}$, we can write 
	\begin{align}
		\lefteqn{\tstat{\ell} \adds{\ell}(x_\ell) + \addf{\ell}(x_\ell, x_{\ell+1}) - \tstat{\ell+1} \adds{\ell+1}(x_{\ell+1})}\\
		&= \tstat{\ell} \adds{\ell}(x_\ell)-\bkm{\ell} \tstat{\ell} \adds{\ell}(x_{\ell+1}) + \addf{\ell}(x_\ell, x_{\ell+1}) - \bkm{\ell} \addf{\ell}(x_{\ell + 1}) \\
		&= \sum_{k = 0}^{\ell - 1}(\BF{\ell}{\ell}\addfext{k}{\ell}(x_{\ell}) - \bkm{\ell} \BF{\ell}{\ell}\addfext{k}{\ell}(x_{\ell+1}))+\addf{\ell}(x_\ell, x_{\ell+1}) - \bkm{\ell} \addf{\ell}(x_{\ell+1}),
	\end{align}
	and using again Lemma~D.3 in \citet{gloaguen:lecorff:olsson:2021} yields
	\begin{multline}\label{eq:T_ineq}
		\supn{\tstat{\ell} \adds{\ell} + \addf{\ell} - \tstat{\ell+1} \adds{\ell+1}} \le \hbd \left( \sum_{k=0}^{\ell-1}\mr^{\ell-k-1}+2 \right) \\\le \hbd \left( \frac{1}{1-\mr}+2 \right).
	\end{multline}
	This implies	
	$$
		\limsup_{n \to \infty} \frac{1}{n}(C_n+D_n) 
		\leq  \hbd^2 \frac{\amup \mdup (3 - 2 \mr)^2}{\hklow (1 - \mr)^5} \lim_{n\rightarrow\infty} \frac{1}{n} \sum_{m=0}^{n - 1} \sum_{\ell = 0}^m \prod_{k=\ell}^m (1+\M{k})^{-1}.
	$$
	Note that by monotonicity, the limit on the right-hand side of the previous inequality either exists or is infinite.   
	 
	 %%%% Term E
	 
	\subsubsection*{Term $E_n$}
		
	In order to bound $E_n$ (and later $F_n$) we may reuse \eqref{BF_ineq}, \eqref{L_ineq} and \eqref{eq:T_ineq}. Write 
	\begin{multline*}
		\left| \frac{\post{m}\lk{m}\{w_m \BFcent{m+1}{n}\adds{n}(\tstat{m}\adds{m}+\addf{m}-\tstat{m+1}\adds{m+1}) \lk{m+1}\cdots\lk{n-1}\1{\set{X}_n} \}}{(\post{m}\lk{m}\1{\set{X}_{m+1}})^2(\post{m+1}\lk{m+1}\cdots\lk{n-1}\1{\set{X}_n})^2} \right| \\
		\le \hbd^2 \frac{ \mdup(3-2\mr) \supn{\lk{m+1}\cdots\lk{n-1}\1{\set{X}_n}}^2}{\hklow (1-\mr)^3 (\post{m+1}\lk{m+1}\cdots\lk{n-1}\1{\set{X}_n})^2} \sum_{k=0}^{n-1}\mr^{|k-m-1|-1}.
	\end{multline*}
	Thus, by noting that $\M{m}/(1+\M{m})^2\le 4^{-1}$ for all $m$,
	\begin{equation}
		|E_n|\le \hbd^2\frac{\amup \mdup(3-2\mr)}{2 \hklow (1-\mr)^3}\sum_{m=0}^{n-1}\bigg(\frac{1-\mr^{m+2}}{\mr(1-\mr)}+\frac{1-\mr^{n-m-2}}{1-\mr}\bigg)
	\end{equation}
	we obtain, by Ces\`{a}ro summation,  
	\begin{align}
		\limsup_{n \to \infty} \frac{1}{n}|E_n| &\le \hbd^2 \frac{\amup \mdup(3 - 2 \mr)}{2 \hklow \mr(1 - \mr)^4}\lim_{n \to \infty}\frac{1}{n}\left( n(1+\mr) - \sum_{m=0}^{n-1}(\mr^{m+2}+\mr^{n-m-1}) \right) \\
		&= \frac{\amup \mdup (3 - 2 \mr)(1 + \mr)}{2 \hklow \mr (1 - \mr)^4}.
	\end{align}
	
	%%%% Term F
	
	\subsubsection*{Term $F_n$}
	
	In order to bound $F_n$, note that
	\begin{multline}
		\footnotesize{\lefteqn{\bigg|\frac{\post{m} \lk{m} ((w_m - \bkm{m} w_m) \{ \BFcent{m+1}{n} \adds{n}+(\tstat{m} \adds{m} + \addf{m} - \tstat{m+1} \adds{m+1}) \lk{m+1} \cdots\lk{n-1} \1{\set{X}_n}\}^2)}{(\post{m} \lk{m} \cdots \lk{n-1} \1{\set{X}_n})^2} \bigg|}}\\
		\le \hbd^2 \frac{2\mdup \supn{\lk{m+1}\cdots\lk{n-1}\1{\set{X}_n}}^2}{\hklow(\post{m+1}\lk{m+1}\cdots\lk{n-1}\1{\set{X}_n})^2} \left(\sum_{k=0}^{n - 1} \mr^{|k-m-1|-1}+(3-2\mr)/(1-\mr) \right)^2 \\
		\le \hbd^2\frac{2 \mdup}{\hklow(1-\mr)^2}\left(\frac{1-\mr^{m+2}-\mr^{n-m-1}+4\mr-2\mr^2}{\mr(1-\mr)}\right)^2.
	\end{multline}
	Thus, by observing that $1/(1+\M{m})^2 \le 1$ for all $m$ and using again Ces\`{a}ro summation,
	\begin{align}
		\limsup_{n \to \infty}\frac{1}{n}|F_n| &\le \hbd^2 \frac{2 \amup \mdup}{\hklow \mr^2(1-\mr)^4} \lim_{n \to \infty} \frac{1}{n}\sum_{m=0}^{n-1} \left( 1-\mr^{m+2}-\mr^{n-m-1}+4\mr-2\mr^2 \right)^2 \\
		& \le \hbd^2\frac{50 \amup \mdup}{\hklow\mr^2(1-\mr)^4}.
	\end{align}
	
	Finally, summing up the obtained bounds on $A_n$--$F_n$ yields 
	\begin{align}
		\limsup_{n\rightarrow\infty}&\frac{1}{n}\sigma_n^2(\adds{n})\\
		&
		\!\begin{multlined}
			\le  \hbd^2\frac{4 \amup \mdup}{\hklow\mr^2(1-\mr)^4} 
			+ \hbd^2\frac{\amup \mdup(3-2\mr)^2}{\hklow (1-\mr)^5} \lim_{n \to \infty}\frac{1}{n} \sum_{m=0}^{n-1}\sum_{\ell = 0}^{m}\prod_{k=\ell}^{m}(1+\M{k})^{-1} \\
			+ \hbd^2\frac{\amup \mdup(3-2\mr)(1+\mr)}{2\hklow \mr(1-\mr)^4}
			+ \hbd^2\frac{50\amup \mdup}{\hklow\mr^2(1-\mr)^4}
		\end{multlined} \\
		&
		\!\begin{multlined}
		=\hbd^2 \frac{\amup \mdup}{\hklow(1-\mr)^4}\bigg(\frac{108+ \mdup\mr(3-2\mr)(1+\mr)}{2\mr^2}\\\hspace{3cm}+\frac{ (3-2\mr)^2}{ 1-\mr}\lim_{n\rightarrow\infty}\frac{1}{n}\sum_{m=0}^{n-1}\sum_{\ell = 0}^{m}\prod_{k=\ell}^{m}(1+\M{k})^{-1}\bigg).
		\end{multlined}
	\end{align}
	The proof is complete. 
\end{proof}

\section{Proofs of Theorems 3.1--3.3}
\label{sec:detsel}

\subsection{Model extension}
The aim of this section is to show that the results obtained in Section~\ref{sec:convstab} can be used directly to establish the consistency and asymptotic normality of AdaSmooth in the case where the selection and backward-sampling time points are governed by deterministic rules $(\res{n})_{n\in\nset}$ and $(\M{n})_{n\in\nset}$, respectively. The idea is to consider an extended version of the model of Section~\ref{sec:introsupp} with states given by paths of a varying length determined by the selection schedule $(\res{n})_{n\in\nset}$. In the following, this construction will be carried through in detail. Recall that we defined, in Section~\ref{sec:detsched}, the resampling times $(n_m)_{m \in \nset}$, as $n_m \eqdef \min\{n \in \nset : \sum_{k=0}^n \res{k} = m + 1\}$. By convention, $n_{-1} \eqdef -1$. Then we introduce the sequence $(\Xp{m}, \Xpalg{m})_{m \in \nset}$ of measurable spaces, where $\Xp{m} \eqdef \set{X}_{n_{m-1}+1 }\times \set{X}_{n_{m-1}+2} \times \cdots \times \set{X}_{n_{m}}$ and $\Xpalg{m} \eqdef \alg{X}_{n_{m-1}+1}\varotimes \alg{X}_{n_{m-1}+2}\varotimes \cdots\varotimes \alg{X}_{n_m}$. In the following we will use boldface to indicate that a quantity is related to such a path space; \eg, we let $\xvec{m} \eqdef x_{n_{m-1}+1:n_m}$ indicate a generic element in $\Xp{m}$ and define the projection 
\begin{equation}
	\proj_m : \Xp{m} \ni \xvec{m} \mapsto x_{n_m} \in \set{X}_{n_m}. 
\end{equation}
The extended model on $(\Xp{m}, \Xpalg{m})_{m \in \nset}$ that we will consider is governed by multi-step unnormalized transition kernels $(\lkm{m})_{m \in \nset}$ induced by products of the single-step transition kernels $(\lk{n})_{n\in\nset}$ in Section~\ref{sec:model} as follows. For each $m \in \nset$, define 
\begin{align}
	\lkm{m} \boldsymbol{h}(\xvec{m}) \eqdef \lk{n_m} \varotimes \lk{n_m+1} \varotimes \cdots \varotimes \lk{n_{m+1}-1} \boldsymbol{h} (\proj_{n_m}(\xvec{m})), 
\end{align} 
$(\xvec{m},\boldsymbol{h}) \in \Xp{m} \times \bmf{\Xpalg{m + 1}}$. It follows that every $\lkm{m}$ has a density
\begin{equation}
	\ldm{m}(\xvec{m}, \xvec{m+1}) 
	\eqdef \prod_{k = n_{m}}^{n_{m+1} - 1} \ld{k}(x_k, x_{k+1}), \quad (\xvec{m}, \xvec{m+1}) \in \Xp{m} \times \Xp{m + 1}, 
\end{equation}
with respect to the product reference measure $\boldsymbol{\mu}_{m + 1} \eqdef \bigotimes_{k = n_{m} + 1}^{n_{m+1}} \mu_k$. Note that $\lkm{m}$, as well as its density $\ldm{m}$,  depends only on $x_{n_m}$ and is constant with respect to the previous states. If resampling is performed systematically, then $n_m =m$ for all $m$, and consequently, $\lkm{m} = \lk{m}$ in that case. The model is equipped with an initial distribution $\Xinitm \eqdef \Xinit \varotimes \lk{0} \varotimes \cdots \varotimes \lk{n_0-1}$
on $\Xpalg{0}$, and we will, abusing notations, denote its density 
$$
\Xinitm(\xvec{0}) = \prod_{k = 0}^{n_0 - 1} \ld{k}(x_k, x_{k+1}), \quad \xvec{0} \in \Xp{0}, 
$$
with respect to $\boldsymbol{\mu}_0$ by the same symbol.  

 So far the quantities governing our extended model. In accordance with 
\eqref{eq:filter:distribution}, we may now define the extended marginals $(\postm{m})_{m \in \nset}$, where for each $m$, 
\begin{equation}
	\postm{m} \eqdef 
	\frac{\Xinitm \lkm{0} \cdots \lkm{m-1}}{\Xinitm \lkm{0} \cdots \lkm{m-1} \1{\Xp{m}}}.  
\end{equation}
Under the convention that $\lkm{m} \cdots \lkm{n} = \operatorname{id}$ if $m > n$, $\postm{0} = \Xinitm / \Xinitm \1{\Xp{0}}$. Note that with this definition, each $\postm{m}$ is the restriction of the joint-smoothing distribution $\post{0:n_m}$ to $\Xpalg{m}$. Defining also extended joint-smoothing distributions $(\postm{0:m})_{m \in \nset}$ in accordance with \eqref{eq:joint:smoothing:distribution} yields simply that for each $m$, $\postm{0:m}$ is a probability distribution on $\Xpalg{0} \varotimes \cdots \varotimes \Xpalg{m}$ simply determined by  
$\postm{0:m} = \post{0:n_m}$. We may also define the backward kernels $(\bkmm{m})_{m \in \nset}$, where $\bkmm{m}$ is the reversed kernel of $\lkm{m}$ with respect to $\postm{m}$, given by, for $(\xvec{m+1}, \boldsymbol{h}) \in \Xp{m + 1} \times \bmf{\Xpalg{m}}$, 
\begin{align}\label{eq:backker}
	\bkmm{m} \boldsymbol{h}(\xvec{m+1})&\eqdef \frac{\int \boldsymbol{h}(\xvec{m},\xvec{m+1}) \ldm{m}(\xvec{m},\xvec{m+1}) \, \postm{m}(d\xvec{m})}{\int \ldm{m}(\xvec{m},\xvec{m+1}) \, \postm{m}(d\xvec{m})} \\
	&=\frac{\int \boldsymbol{h}(\xvec{m},\xvec{m+1})\ld{n_m}(x_{n_m}, x_{n_m+1}) \, \postm{m}(d\xvec{m})}{\int \ld{n_m}(x_{n_m}, x_{n_m+1}) \, \postm{m}(d\xvec{m})}.
\end{align}
Note that each backward kernel $\bkmm{m}$ depends, as expected, only on $x_{n_m+1} \in \set{X}_{n_m+1}$ rather than the whole path $\xvec{m + 1} \in \Xp{m + 1}$. On the basis of the extended backward kernels, we may define, for every $m \in \nset$, the Markov kernel
\begin{align}
	\tk{m} \eqdef 
	\begin{cases}
		\bkmm{m - 1} \varotimes \cdots \varotimes \bkmm{0}\quad &\mbox{for $m \in \nsetpos$}\\
		\operatorname{id} &\mbox{for $m = 0$},
	\end{cases}
\end{align}
on $\Xp{m} \times (\Xpalg{0} \varotimes \cdots \varotimes \Xpalg{m - 1})$, satisfying, by \citet[Lemma~2.2]{gloaguen:lecorff:olsson:2021}, $\post{0:n_m} = \postm{0:m} = \postm{m} \tk{m}$. In addition, in analogy with \eqref{eq:retro:prospective} we introduce extended retro-prospective kernels given by, for $(\xvec{k}, \boldsymbol{h}) \in \Xp{k} \times \bmf{\Xpalg{0} \varotimes \cdots \varotimes \Xpalg{m}}$, 
\begin{align}
	\BFm{k}{m}\boldsymbol{h}(\xvec{k}) &\eqdef \iint \boldsymbol{h}(\xvec{0:m}) \, \tk{k}(\xvec{k},d\xvec{0:k-1}) \, \lkm{k} \cdots \lkm{m-1}(\xvec{k}, d\xvec{k+1:m}), \\
	\BFcentm{k}{m}\boldsymbol{h}(\xvec{k}) &\eqdef \BFm{k}{m}(\boldsymbol{h} - \postm{0:m}\boldsymbol{h})(\xvec{k}).
\end{align}
We also introduce extended versions $(\bigadds{m})_{m \in \nset}$ of the given additive functionals $(\adds{n})_{n \in \nset}$ (in the form (3)) by letting, for $m \in \nset$, 
\begin{align}
\bigaddf{m}(\xvec{m},\xvec{m+1}) &\eqdef \sum_{k=n_m}^{n_{m+1}-1} \addf{k}(x_k, x_{k+1}), \quad (\xvec{m}, \xvec{m + 1}) \in \Xp{m} \times \Xp{m + 1}, \\
\bigadds{0}(\xvec{0}) &\eqdef \adds{n_0}(\xvec{0}), \quad \xvec{0} \in \Xp{0},   
\end{align}
and, recursively, 
\begin{align}
	\bigadds{m+1}(\xvec{0:m+1}):=\bigadds{m}(\xvec{0:m})+\bigaddf{m}(\xvec{m},\xvec{m+1}), 
\end{align}
so that $\bigadds{m+1}(\xvec{0:m+1}) = \adds{n_{m+1}}(x_{0:n_{m+1}})$. Note that $\bigaddf{m}$ depends on $\xvec{m}$ only through the first state $x_{n_m}$, while it is constant with respect to the previous states. 

Similar extensions can be made for the particle generation mechanisms in the APF. More precisely, for every $m \in \nset$, let 
$$ 
	\hkm{m} \boldsymbol{h}(\xvec{m}) \eqdef \hk_{n_m} \varotimes \hk_{n_m+1} \varotimes \cdots \varotimes \hk_{n_{m+1}-1} \boldsymbol{h} (\proj_{n_m}(\xvec{m})),
$$
$ (\xvec{m}, \boldsymbol{h}) \in \Xp{m} \times \bmf{\Xpalg{m + 1}} $, be a Markov proposal kernel on $\Xp{m} \times \Xpalg{m + 1}$, having the density 
\begin{equation}
	\hdm{m}(\xvec{m}, \xvec{m+1}) 
	\eqdef \prod_{k = n_{m}}^{n_{m+1} - 1} \hd_k(x_k, x_{k+1}), \quad (\xvec{m}, \xvec{m+1}) \in \Xp{m} \times \Xp{m + 1}, 
\end{equation}
with respect to the reference measure $\boldsymbol{\mu}_{m + 1}$. Also the initial proposal $\nu$ is extended analogously, \ie, by defining $\boldsymbol{\nu} \eqdef \nu \varotimes \hk_0 \varotimes \cdots \varotimes \hk_{n_0 - 1}$ 
on $\Xpalg{0}$, having a probability density function $\boldsymbol{\nu}(\xvec{0}) = \nu(x_0) \prod_{k = 0}^{n_0 - 1} \hd_k(x_k, x_{k+1})$, $\xvec{0} \in \Xp{0}$, abusing again notations. 

Algorithm~\ref{algo:adasmoothdet} below is obtained  by casting casting AdaSmooth with systematic resampling, Algorithm~\ref{algo:adasmsupp}, into the extended model described above. 

\begin{algorithm}[H]
	\caption{AdaSmooth with systematic resampling in the extended model.}
	\begin{algorithmic}[1]\label{algo:adasmoothdet}
		\REQUIRE $(\epartm{m}{i}, \wgtm{m}{i}, \tstatm[i]{m})_{i=1}^\N$, $\M{n_m}$.
		\STATE run $(\epartm{m + 1}{i}, \wgtm{m}{i}, \I{m+1}{i})_{i=1}^\N \leftarrow \apf((\epartm{m}{i}, \wgtm{m}{i})_{i=1}^\N)$;
		\FOR{$i=1\rightarrow \N$}
		\IF{$\M{n_m} > 0$}
		\FOR{$j = 1 \to \M{n_m}$}
		\STATE draw $\bi{m+1}{i}{j} \sim \catdist((\wgtm{m}{\ell} \ldm{m}(\epartm{m}{\ell}, \epartm{m + 1}{i}))_{\ell = 1}^\N)$;
		\ENDFOR
		\ENDIF
		\STATE set $\tstatm[i]{m+1} \leftarrow   \dfrac{\tstatm[\I{m + 1}{i}]{m}+ \bigaddf{m}(\epartm{m}{\I{m+1}{i}}, \epartm{m + 1}{i}) +\sum_{j=1}^{\M{n_m}} \big( \tstatm[\bi{m + 1}{i}{j}]{n_m}+\bigaddf{m}(\epartm{m}{\bi{m+1}{i}{j}},\boldsymbol{\xi}_{m+1}^{i}) \big)}{1 + \M{n_m}}$;
		\ENDFOR
		\RETURN $(\boldsymbol{\xi}_{m+1}^i,\wgtm{m+1}{i},\tstatm[i]{m+1})_{i=1}^\N$.
	\end{algorithmic}
\end{algorithm}
\begin{algorithm}[htb]
	\caption{APF with systematic selection in the extended model.}
	\begin{algorithmic}[1] \label{algo:apf}
		\REQUIRE $(\epartm{m}{i},\wgtm{m}{i})_{i=1}^\N$.
		\FOR{$i=1\rightarrow\N$}
		\STATE draw $\I{m+1}{i} \sim \catdist((\wgtm{m}{\ell} \amm{m}(\epartm{m}{\ell}))_{\ell=1}^\N)$;
		\STATE draw $\epartm{m + 1}{i} \sim \hkm{m}(\epartm{m}{\I{m + 1}{i}},\cdot)$;
		\STATE weight $\displaystyle \wgt{n+1}{i} \gets \frac{\ldm{m}(\epartm{m}{\I{m+1}{i}}, \epartm{m + 1}{i})}{\amm{m}(\epartm{m}{\I{m + 1}{i}})\boldsymbol{\hd}_m(\epartm{m}{\I{m+1}{i}}, \epartm{m + 1}{i})}$;
		\ENDFOR
		\RETURN $(\epartm{m + 1}{i}, \wgtm{m+1}{i}, \I{m + 1}{i})_{i=1}^\N$.
	\end{algorithmic}
\end{algorithm}

%%%% PROPOSITION: EQUAL DISTRIBUTION

\begin{proposition} \label{prop:original:vs:extended}
Let $(\rho_n)_{n \in \nset}$ and $(\M{n})_{n \in \nset}$ be a selection and backward-sampling schedule satisfying Assumption~\ref{assum:seqdet} and let $(n_m)_{m \in \nset}$ be the induced selection times. Furthermore, let $(\epart{n_m}{i}, \tstat[i]{n_m}, \wgt{n_m}{i})_{i = 1}^\N$, $m \in \nset$, be a subsequence of weighted samples generated by Algorithm~\ref{algo:new} (AdaSmooth) for the original model and let $(\epartm{m}{i}, \tstatm[i]{m}, \wgtm{m}{i})_{i = 1}^\N$, $m \in \nset$, be weighted samples generated by Algorithm~\ref{algo:adasmoothdet} (AdaSmooth with systematic selection) for the extended model. Then for every $m \in \nset$, 
$$
(\proj_m(\epartm{m}{i}), \tstatm[i]{m}, \wgtm{m}{i})_{i = 1}^\N \stackrel{\mathcal{D}}{=} (\epart{n_m}{i}, \tstat[i]{n_m}, \wgt{n_m}{i})_{i = 1}^\N. 
$$
 \end{proposition}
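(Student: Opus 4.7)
The plan is to proceed by induction on $m$, constructing an explicit coupling between the two algorithms so that at each selection time $n_m$ of the original model the triple $(\epart{n_m}{i}, \tstat[i]{n_m}, \wgt{n_m}{i})_{i=1}^\N$ agrees almost surely with $(\proj_m(\epartm{m}{i}), \tstatm[i]{m}, \wgtm{m}{i})_{i=1}^\N$ from the extended algorithm. The key structural observation is that, by the very definition of $n_m$, all time steps $k\in\intvect{n_{m-1}+1}{n_m-1}$ have $\res{k}=0$, so between any two consecutive selection times the original algorithm performs pure mutation and multiplicative weight accumulation only.

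For the base case I would unfold the first $n_0$ iterations of Algorithm~\ref{algo:new}: since no selection fires, $(\epart{0}{i},\dots,\epart{n_0}{i})$ is an exact draw from $\boldsymbol{\nu}=\nu\tensprod\hk_0\tensprod\cdots\tensprod\hk_{n_0-1}$, the weight telescopes to $\wgt{n_0}{i}=\Xinitm(\epartm{0}{i})/\boldsymbol{\nu}(\epartm{0}{i})$, and the statistic telescopes to $\tstat[i]{n_0}=\adds{0}(\epart{0}{i})+\sum_{k=0}^{n_0-1}\addf{k}(\epart{k}{i},\epart{k+1}{i})=\bigadds{0}(\epartm{0}{i})$. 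Each of these matches the initialization of Algorithm~\ref{algo:adasmoothdet} on the extended model.

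The inductive step requires checking four mechanisms at the transition $m\to m+1$, under the coupling furnished by the inductive hypothesis. (i) \emph{Selection.} Taking the extended adjustment multiplier to depend only on the endpoint, $\amm{m}(\xvec{m})=\am{n_m}(x_{n_m})$, the categorical laws in the two resampling steps coincide and we may couple $\I{m+1}{i}=\I{n_m+1}{i}$. (ii) \emph{Mutation.} Since $\hkm{m}$ factors as $\hk_{n_m}\tensprod\cdots\tensprod\hk_{n_{m+1}-1}$ and the original algorithm uses exactly these single-step kernels with trivial ancestor indices $\I{k+1}{i}=i$ for $k\in\intvect{n_m+1}{n_{m+1}-1}$, the joint law of $(\epart{n_m+1}{i},\dots,\epart{n_{m+1}}{i})$ given the parent equals $\hkm{m}(\epartm{m}{\I{m+1}{i}},\cdot)$, so we may couple $\epartm{m+1}{i}=(\epart{n_m+1}{i},\dots,\epart{n_{m+1}}{i})$. (iii) \emph{Weight accumulation.} Unfolding the original weight recursion over one selection at $n_m$ followed by $n_{m+1}-n_m-1$ non-selection steps telescopes to the single extended ratio $\ldm{m}/(\amm{m}\hdm{m})$. (iv) \emph{Statistic accumulation.} Because $\bigaddf{m}=\sum_{k=n_m}^{n_{m+1}-1}\addf{k}$ along the relevant path, and because $\addf{k}(\epart{k}{i},\epart{k+1}{i})$ for $k>n_m$ does not depend on the ancestor index $\I{n_m+1}{i}$ or $\bi{n_m+1}{i}{1}$, those post-selection increments can be added identically to \emph{both} branches of the AdaSmooth update \eqref{eq:newupdate} and absorbed into the two copies of $\bigaddf{m}$ appearing in Line~8 of Algorithm~\ref{algo:adasmoothdet}.

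The main obstacle is the backward-sampling step inside (iv) when $\M{n_m}=1$: Algorithm~\ref{algo:adasmoothdet} draws $\bi{m+1}{i}{1}$ with unnormalized probabilities $\wgtm{m}{\ell}\,\ldm{m}(\epartm{m}{\ell},\epartm{m+1}{i})$, whereas Algorithm~\ref{algo:new} draws $\bi{n_m+1}{i}{1}$ with unnormalized probabilities $\wgt{n_m}{\ell}\,\ld{n_m}(\epart{n_m}{\ell},\epart{n_m+1}{i})$. I would resolve this by the factorization
\begin{equation}
\ldm{m}(\epartm{m}{\ell},\epartm{m+1}{i})=\ld{n_m}(\epart{n_m}{\ell},\epart{n_m+1}{i})\prod_{k=n_m+1}^{n_{m+1}-1}\ld{k}(\epart{k}{i},\epart{k+1}{i}),
\end{equation}
noting that the product is constant in $\ell$ and hence cancels upon normalisation of the categorical distribution; this equates the two backward-sampling laws and lets $\bi{m+1}{i}{1}$ be coupled to $\bi{n_m+1}{i}{1}$, closing the induction.
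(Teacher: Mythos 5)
Your proposal is correct and follows essentially the same route as the paper's proof: an induction over selection times that matches, step by step, the selection law, the composed mutation kernel $\hkm{m}=\hk_{n_m}\tensprod\cdots\tensprod\hk_{n_{m+1}-1}$, the telescoped weight ratio, the statistic update, and—crucially—the backward-sampling law via the cancellation of the $\ell$-independent factor $\prod_{k=n_m+1}^{n_{m+1}-1}\ld{k}(\epart{k}{i},\epart{k+1}{i})$ in the normalised categorical probabilities. The only cosmetic difference is that you phrase the argument as an explicit almost-sure coupling while the paper compares conditional distributions directly; both yield the claimed equality in distribution.
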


\begin{proof}
The proof consists of simply inspecting that the distribution of the outputs of the two algorithms coincide. We proceed by induction. Standing at time $n_m$, suppose that we have generated a sample $(\epart{n_m}{i}, \tstat[i]{n_m}, \wgt{n_m}{i})_{i = 1}^\N$ by applying AdaSmooth to the original model, and we assume that the claim holds true for this sample. First, let us examine the output of AdaSmooth at time $n_{m + 1}$. Since $\rho_{n_m} = 1$ by definition, selection is activated when forming the sample at time $n_m + 1$, but not after that (since $\rho_k = 0$ for all $k \in \intvect{n_m + 1}{n_{m + 1} - 1}$). This means that each particle path $\epart{n_m + 1:n_{m + 1}}{i}$ will be drawn from 
\begin{equation} \label{eq:path:distribution-1}
\epart{n_m + 1:n_{m + 1}}{i} \sim \hk_{n_m} \varotimes \cdots \varotimes  \hk_{n_m - 1}(\epart{n_m}{\I{n_{m}+1}{i}}, \cdot) 
\end{equation}
and assigned the weight 
\begin{equation} \label{eq:weights-1}
\wgt{n_{m + 1}}{i} = \frac{\ld{n_m}(\epart{n_m}{\I{n_{m}+1}{i}}, \epart{n_m+1}{i}) \prod_{k=n_{m}+1}^{n_{m+1}-1}\ld{k}(\epart{k}{i},\epart{k+1}{i})}{\am{n}(\epart{n_m}{\I{n_{m}+1}{i}}) \hd_{n_m}(\epart{n_m}{\I{n_{m}+1}{i}}, \epart{n_m+1}{i}) \prod_{k=n_{m}+1}^{n_{m+1}-1}\hd_k(\epart{k}{i},\epart{k+1}{i})}, 
\end{equation}
where 
\begin{equation} \label{eq:I:dist-1}
\I{n_{m}+1}{i} \sim \catdist((\am{n_m}(\epart{n_m}{\ell}) \wgt{n_m}{\ell})_{\ell = 1}^\N).
\end{equation}
 According to the updating rule (8) (modified to allow also for $\M{m} > 1$ backward samples), each statistic $\tstat[i]{n_m + 1}$ is assigned the value 
\begin{multline}
\tstat[i]{n_{m + 1}} = (1 + \M{n_m})^{-1} \left( \tstat[\I{n_{m}+1}{i}]{n_m} + \adds{n_m}(\epart{n_m}{\I{n_{m}+1}{i}}, \epart{n_m+1}{i})\right. \\
\left. + \sum_{j = 1}^{\M{n_m}} \left( \tstat[\bi{n_m + 1}{i}{j}]{n_m} + \addf{n_m}(\epart{n_m}{\bi{n_m + 1}{i}{j}}, \epart{n_m + 1}{i}) \right) \right)
+ \sum_{k = n_{m}+1}^{n_{m+1} - 1} \adds{k}(\epart{k}{i}, \epart{k+1}{i}) \\
=  (1 + \M{n_m})^{-1} \left( \tstat[\I{n_{m+1}}{i}]{n_m} + \adds{n_m}(\epart{n_m}{\I{n_{m+1}}{i}}, \epart{n_m+1}{i}) + \sum_{k=n_{m}+1}^{n_{m+1}-1} \adds{k}(\epart{k}{i},\epart{k+1}{i}) \right. \\
\left. + \sum_{j = 1}^{\M{n_m}} \left( \tstat[\bi{n+1}{i}{j}]{n_m} + \addf{n_m}(\epart{n_m}{\bi{n_m + 1}{i}{j}}, \epart{n_m + 1}{i}) + \sum_{k=n_{m}+1}^{n_{m+1}-1} \adds{k}(\epart{k}{i},\epart{k+1}{i}) \right) \right), \label{eq:tau:statistics-1}
\end{multline}
where, in the case $\M{n_m} > 0$, the indices $(\bi{n_m + 1}{i}{j})_{j = 1}^{\M{n_m}}$ are conditionally independent and identically distributed according to 
\begin{equation} \label{eq:backward:distribution-1}
\bi{n_m + 1}{i}{j} \sim \catdist((\wgt{n_m}{\ell} \ld{n}(\epart{n_m}{\ell},\epart{n_m + 1}{i})_{\ell = 1}^\N).
\end{equation} 

Now, on the other hand, subjecting the sample $(\epartm{m}{i}, \tstatm[i]{m}, \wgtm{m}{i})_{i = 1}^\N$ to one iteration of AdaSmooth with systematic resampling, Algorithm~\ref{algo:adasmoothdet}, yields path particles $\epartm{m + 1}{i} = \epart{n_m + 1:n_{m+ 1}}{i}$, $i \in \intvect{1}{\N}$, with distribution  
\begin{equation} \label{eq:eq:path:distribution-2}
\epartm{m + 1}{i} = \epart{n_m + 1:n_{m+ 1}}{i} \sim 
\hkm{m}(\epartm{m}{i}, \cdot) = \hk_{n_m} \varotimes \cdots \varotimes \hk_{n_m - 1}(\proj_m(\epartm{m}{\I{m+1}{i}}), \cdot)
\end{equation}
and associated weights 
\begin{align} 
&\wgtm{m}{i} = \frac{\ldm{m}(\epartm{m}{\I{m+1}{i}}, \epartm{m + 1}{i})}{\amm{m}(\epartm{m}{\I{m+1}{i}}) \boldsymbol{\hd}_m(\epartm{m}{\I{m+1}{i}}, \epartm{m + 1}{i})} \\ 
&= \frac{\ld{n_m}(\proj_m(\epartm{m}{\I{m+1}{i}}), \epart{n_m+1}{i}) \prod_{k=n_{m}+1}^{n_{m+1}-1}\ld{k}(\epart{k}{i},\epart{k+1}{i})}{\am{n_m}(\proj_m(\epartm{m}{\I{m+1}{i}})) \hd_{n_m}(\proj_m(\epartm{m}{\I{m+1}{i}}), \epart{n_m+1}{i}) \prod_{k=n_{m}+1}^{n_{m+1}-1}\hd_k(\epart{k}{i},\epart{k+1}{i})}, \label{eq:weights-2}
\end{align}
where 
\begin{equation} \label{eq:I:dist-2}
\I{m+1}{i} \sim \catdist((\amm{m}(\epartm{m}{\ell}) \wgtm{m}{\ell})_{\ell = 1}^\N) = \catdist((\am{n_m}(\proj_m(\epartm{m}{\ell})) \wgt{n_m}{\ell})_{\ell = 1}^\N). 
\end{equation}
Finally, the statistics $(\tstatm[i]{m})_{i = 1}^\N$ are updated according to 
\begin{align}
\tstatm[i]{m+1} &= (1 + \M{n_m})^{-1} \\&\quad\times\left( \tstatm[\I{m+1}{i}]{m} + \bigaddf{m}(\boldsymbol{\xi}_m^{\I{m+1}{i}},\boldsymbol{\xi}_{m+1}^{i})+\sum_{j=1}^{\M{n_m}} \tstatm[\bi{m+1}{i}{j}]{m} + \bigaddf{m}(\boldsymbol{\xi}_m^{\bi{m+1}{i}{j}},\boldsymbol{\xi}_{m+1}^{i}) \right) \\
&
\!\begin{multlined}
= (1 + \M{n_m})^{-1} \left( \tstatm[\I{m+1}{i}]{m} + \adds{n_m}(\proj_m(\epartm{m}{\I{m+1}{i}}), \epart{n_m+1}{i}) + \sum_{k=n_{m}+1}^{n_{m+1}-1} \adds{k}(\epart{k}{i},\epart{k+1}{i}) \right. \\
\left. + \sum_{j = 1}^{\M{n_m}} \left( \tstatm[\bi{m+1}{i}{j}]{m} + \addf{n_m}(\proj_m(\epartm{m}{\bi{m + 1}{i}{j}}), \epart{n_m + 1}{i}) + \sum_{k=n_{m}+1}^{n_{m+1}-1} \adds{k}(\epart{k}{i},\epart{k+1}{i}) \right) \right), 
\end{multlined} \label{eq:tau:statistics-2}
\end{align}
where indices $(\bi{m + 1}{i}{j})_{j = 1}^{\M{n_m}}$ are drawn from $\catdist((\wgtm{m}{\ell} \ldm{m}(\epartm{m}{\ell},\epartm{m + 1}{i})_{\ell = 1}^\N)$ when $\M{m} > 0$; however, by noting that  
\begin{align}
	\lefteqn{\frac{\wgtm{m}{j}\ldm{m}(\boldsymbol{\xi}_m^j,\boldsymbol{\xi}_{m+1}^i)}{\sum_{j' = 1}^{\N} \wgtm{m}{j'} \ldm{m}(\epartm{m}{j'},\boldsymbol{\xi}_{m+1}^i)}} \hspace{10mm}\\
	&=\frac{\wgt{n_m}{j}\ld{n_{m}}(\proj_m(\epartm{m}{j}),\epart{n_{m}+1}{i})\prod_{k'=n_{m}+1}^{n_{m+1}-1}\ld{k'}(\epart{k'}{i},\epart{k'+1}{i})}{\sum_{j'=1}^{\N}\wgt{n_m}{j'}\ld{n_{m}}(\proj_m(\epartm{m}{j'}),\epart{n_{m}+1}{i}) \prod_{k'=n_{m}+1}^{n_{m+1}-1}\ld{k'}(\epart{k'}{i},\epart{k'+1}{i})} \\
	&= \frac{\wgt{n_m}{j}\ld{n_{m}}(\proj_m(\epartm{m}{j}),\epart{n_{m}+1}{i})}{\sum_{j'=1}^{\N}\wgt{n_m}{j'}\ld{n_{m}}(\proj_m(\epartm{m}{j'}),\epart{n_{m}+1}{i})}, 
\end{align}
we conclude that 
\begin{equation} \label{eq:backward:distribution-2}
\catdist((\wgtm{m}{\ell} \ldm{m}(\epartm{m}{\ell},\epartm{m + 1}{i})_{\ell = 1}^\N) = \catdist((\wgt{n_m}{\ell} \ld{n_{m}}(\proj_m(\epartm{m}{j}),\epart{n_{m}+1}{i})_{\ell = 1}^\N).
\end{equation} 

Finally, by comparing \eqref{eq:path:distribution-1} and \eqref{eq:eq:path:distribution-2}, \eqref{eq:weights-1} and \eqref{eq:weights-2}, \eqref{eq:I:dist-1} and \eqref{eq:I:dist-2}, \eqref{eq:tau:statistics-1} and \eqref{eq:tau:statistics-2}, \eqref{eq:backward:distribution-1} and \eqref{eq:backward:distribution-2}, we conclude that under the induction hypothesis,  
$$
(\proj_{m + 1}(\epartm{m + 1}{i}), \tstatm[i]{m + 1}, \wgtm{m + 1}{i})_{i = 1}^\N \stackrel{\mathcal{D}}{=} (\epart{n_{m + 1}}{i}, \tstat[i]{n_{m + 1}}, \wgt{n_{m + 1}}{i})_{i = 1}^\N. 
$$
The base case $m = 0$ is checked similarly. This completes the proof. 
\end{proof}

Thus, in the deterministic case we may reinterpret the model in the aforementioned way and assume systematic resampling. Since the convergence analysis of Algorithm~\ref{algo:adasmsupp} is valid for general models and state spaces, is also applies to the extended model, providing immediately the strong consistency and asymptotic normality of AdaSmooth in the deterministic case. This will be discussed in detail in the following sections, where each $\M{n}$ is again, in accordance with Algorithm~\ref{algo:new}, restricted to be an indicator function (being either zero or one).

\subsection{Proof of Theorem~3.1}\label{sec:proof:3.1}

\begin{proof}
First, we note that Assumption~\ref{assum:1} implies Assumption~\ref{assum:supp1} for the extended model. Thus, we may apply Theorem~\ref{thm:hoef} to Algorithm~\ref{algo:adasmoothdet}. 
We establish Theorem~\ref{thm:as} for an arbitrarily chosen $n \in \nset$; even though this $n$ is generally not a resampling time, we may assume without loss of generality that $n = n_m$ for some $m \in \nset$ (since it does not matter for the distribution of the particle cloud at a give time point whether resampling is performed in the subsequent iteration of the algorithm). Now, since $\post{0:n} h_n = \postm{0:m} \boldsymbol{h}_m$, Proposition~\ref{prop:original:vs:extended} and Corollary~\ref{cor:hoeffding} imply that for every $m \in \nset$, there exist positive constants $c_m$ and $\tilde{c}_m$ such that for all $\epsilon > 0$ and $\N \in \nsetpos$, 
\begin{multline}
	\displaystyle \prob \left(\left \lvert \sum_{i=1}^{\N} \frac{\wgt{n}{i}}{\wgtsum{n}} \tstat[i]{n} - \post{0:n} \adds{n} \right \rvert \ge \epsilon \right) \\= \prob \left(\left \lvert \sum_{i=1}^{\N} \frac{\wgtm{m}{i}}{\wgtsumm{n}} \tstatm[i]{m} - \postm{0:m} \boldsymbol{h}_m \right \rvert \ge \epsilon \right)
	\leq c_m \exp \left( - \tilde{c}_m \N \epsilon^2 \right), 
\end{multline}
where $(\tstat[i]{n}, \wgt{n}{i})_{i = 1}^\N$ and $(\tstatm[i]{m}, \wgtm{m}{i})_{i = 1}^\N$ are produced by $n$ and $m$ iterations of Algorithm~\ref{algo:new} and Algorithm~\ref{algo:adasmoothdet}, respectively. 
\end{proof}

\subsection{Proof of Theorem 3.2}\label{sec:proof:3.2} 
\begin{proof}
	Again, we may assume without loss of generality that $n = n_m$ for some $m \in \nset$. Then, since Assumption~\ref{assum:1} implies Assumption~\ref{assum:supp1}, applying Proposition~\ref{prop:original:vs:extended} and Corollary~\ref{corollary:1} to the extended model yields, for every $m \in \nset$,
	\begin{equation}
		\sqrt{\N} \left( \sum_{i=1}^{\N}\frac{\wgt{n}{i}}{\wgtsum{n}} \tstat[i]{n}  -\postm{0:n} \adds{n} \right) \stackrel{\mathcal{D}}{=} \sqrt{\N} \left( \sum_{i=1}^{\N}\frac{\wgtm{m}{i}}{\wgtsumm{m}}\tstatm[i]{m} - \postm{0:m} \bigadds{m} \right) \convd \boldsymbol{\sigma}_m(\bigadds{m}) Z,
	\end{equation}
	where $Z$ has standard Gaussian distribution and the asymptotic variance $\boldsymbol{\sigma}_m^2(\bigadds{m}) = \sigma_n^2(\adds{n})$ is obtained by casting the extended model into the Corollary \ref{corollary:1}, \ie, 
	{\footnotesize
	\begin{equation}
		\begin{split} \label{eq:asvarmult}
			\lefteqn{\boldsymbol{\sigma}_m^2(\bigadds{m}) = \frac{\Xinitm(\boldsymbol{w}_{-1} \BFcentm{0}{n}^2 \bigadds{m})}{(\Xinitm \lkm{0} \cdots \lkm{m-1 }\1{\Xp{m}})^2} + \sum_{k = 0}^{m - 1} \postm{k} \amm{k} \frac{\postm{k}\lkm{k}( \boldsymbol{w}_k \BFcentm{k+1}{m}^2 \bigadds{m})}{(\postm{k}\lkm{k}\cdots\lkm{m-1}\1{\Xp{m}})^2}} \\
			& + \sum_{k=0}^{m-1} \frac{\M{n_k} \postm{k} \amm{k}}{1+\M{n_k}} \sum_{\ell = 0}^{k} \frac{\postm{\ell} \lkm{\ell} ( \bkm{\ell}(\tk{\ell}\bigadds{\ell} + \bigaddf{\ell} - \tk{\ell+1} \bigadds{\ell+1})^2 \lkm{\ell+1} \cdots \lkm{k}\{\bkm{k}\boldsymbol{w}_k (\lkm{k+1} \cdots \lkm{m-1} \1{\Xp{m}} )^2\} )}{(\postm{\ell}\lkm{\ell} \cdots \lkm{k-1}\1{\Xp{k}})(\postm{k}\lkm{k} \cdots \lkm{m-1} \1{\Xp{m}}  )^2 \prod_{j=\ell}^k(1+\M{n_j})} \\
			& + \sum_{k=0}^{m-1} \frac{\postm{k} \amm{k}}{1+\M{n_k}}\sum_{\ell = 0}^{k} \frac{\postm{\ell} \lkm{\ell}\{ \bkm{\ell}(\tk{\ell}\bigadds{\ell} + \bigaddf{\ell} - \tk{\ell+1}\bigadds{\ell+1})^2 \lkm{\ell+1} \cdots \lkm{k}(\boldsymbol{w}_k \{ \lkm{k+1} \cdots \lkm{m-1} \1{\Xp{m}} \} ^2) \}}{(\postm{\ell}\lkm{\ell} \cdots \lkm{k-1}\1{\Xp{k}})(\postm{k}\lkm{k}\cdots\lkm{m-1}\1{\Xp{m}}  )^2 \prod_{j=\ell}^k(1+ \M{n_j})} \\
			&+ 2 \sum_{k=0}^{m-1} \M{n_k} \postm{k} \amm{k} \frac{\postm{k}\lkm{k}\{\boldsymbol{w}_k \BFcentm{k+1}{m}\bigadds{m}(\tk{k}\bigadds{k}+\bigaddf{k}-\tk{k+1}\bigadds{k+1}) \lkm{k+1}\cdots\lkm{m-1}\1{\Xp{m}} \}}{(1+\M{n_k})^2 (\postm{k}\lkm{k}\cdots\lkm{m-1}\1{\Xp{m}})^2} \\
			& +\sum_{k=0}^{m-1} \postm{k} \amm{k} \frac{\postm{k}\lkm{k}((\boldsymbol{w}_k-\bkm{k}\boldsymbol{w}_k) \{\BFcentm{k+1}{m}\bigadds{m}+(\tk{k}\bigadds{k}+\bigaddf{k}-\tk{k+1}\bigadds{k+1})\lkm{k+1}\cdots\lkm{m-1}\1{\Xp{m}}\}^2)}{(1+\M{n_k})^2(\postm{k}\lkm{k}\cdots\lkm{m-1}\1{\Xp{m}})^2}.
		\end{split}%\par
	\end{equation}
		} 
This completes the proof. 
\end{proof}

\subsection{Proof of Theorem~3.3}\label{sec:proof:3.3} 
\begin{proof}
We suppose that the Assumptions~\ref{assum:dist}--\ref{assum:compact} hold and show that these imply that the assumptions of Theorem~\ref{thm:boundsupp} (\ie, Assumption~\ref{assum:2}) are satisfied for the extended model. Then Theorem~\ref{thm:boundsupp} provides an $\ordo(n)$ bound on $\boldsymbol{\sigma}_m^2(\bigadds{m}) = \sigma_n^2(\adds{n})$ (recall that $n = n_m$ by assumption). More specifically, recall that by Assumption~\ref{assum:dist} there exists $\maxd\in\nsetpos$ such that $n_{m}-n_{m-1}\le \maxd$ for all $m\in\nset$; thus, for all $m \in \nset$, using also Assumption~\ref{assum:compact}, 
\begin{equation}
	\ldm{m}(\xvec{m},\xvec{m+1})=\prod_{k=n_{m}}^{n_{m+1}-1}\ld{k}(x_k,x_{k+1})\le \hkup^{n_{m+1}-n_m}\le \hkup_\maxd
 \eqdef \hkup^\maxd \vee \hkup 
\end{equation}
and 
\begin{equation}
	\ldm{m}(\xvec{m},\xvec{m+1}) \ge \hklow^{n_{m+1}-n_m}\ge \hklow_\maxd \eqdef \hklow^\maxd \wedge \hklow. 
\end{equation}
This checks the first condition, and we may define $\mr_\maxd \eqdef 1-\hklow_\maxd/\hkup_\maxd$. 

To check the second condition of Assumption~\ref{assum:2}, recall from Assumption~\ref{assum:compact} that all the single-step weight functions are bounded by $\mdup$; thus, 
\begin{equation}
	\supn{\boldsymbol{w}_m} \le \supn{w_{n_m} \langle1\rangle} \prod_{k=n_m+1}^{n_{m+1}-1} \supn{w_{k} \langle0\rangle}\le \mdup^{n_{m+1}-n_m}\le \mdup_\maxd \eqdef \mdup^\maxd \vee \mdup, 
\end{equation}
and, similarly, $\supn{\boldsymbol{w}_{-1}} \le \mdup_\maxd$. Finally, since also uniform boundedness of $(\am{n})_{n \in \nset}$ implies trivially the uniform boundedness of $(\amm{m})_{m \in \nset}$, we conclude that Assumption~\ref{assum:2} holds also for the extended model. 

We also note that for additive functionals $(\adds{n})_{n \in \nset}$ whose terms are bounded in such a way as it is stated in Theorem~\ref{thm:bound}, also the terms of the induced functionals $(\bigadds{m})$ are trivially bounded; indeed, for all $m \in\nset$,
\begin{equation}
	\supn{\bigaddf{m}}\le\sum_{k=n_m}^{n_{m+1}-1}\supn{\addf{k}}\le \maxd \hbd\quad \text{and}\quad \supn{\bigadds{0}+\bigaddf{0}}\le2 \maxd \hbd.
\end{equation}

We now apply Theorem~\ref{thm:boundsupp}. On the basis of the given subsequence $(n_m)_{m \in \nset}$, we define another subsequence $(n_m')_{m\in\nset}$ as
\begin{equation}
	n_{m}' \eqdef \argmax_{n_{m-1}<k\le n_m}\frac{1}{k}\sigma_k^2(\adds{k}),\quad m\in\nset,
\end{equation}
so that
\begin{equation}
	\limsup_{n \to \infty}\frac{1}{n}\sigma_{n}^2(\adds{n}) = \limsup_{m \to \infty} \frac{1}{n_m'} \sigma_{n_m'}^2(\adds{n_m'}).
\end{equation}
	Now, note that the asymptotic variance $\sigma_{n_m'}^2(\adds{n_m'})$ corresponds to the deterministic selection schedule $(n_0, n_1,\dots, n_{m - 1}, n_m')$ comprising $m$ selection operations before time $n_m'$. Then we can let $\boldsymbol{\sigma}_m^2(\bigadds{m}') \eqdef \sigma_{n_m'}^2(\adds{n_m'})$ be given by the asymptotic variance \eqref{eq:asvarmult}, but where $n_m$ is replaced by $n_m'$. It follows that	
	\begin{equation}
		\limsup_{n\rightarrow\infty}\frac{1}{n}\sigma_{n}^2(\adds{n})= \limsup_{m\rightarrow\infty}\frac{1}{n_m'}\boldsymbol{\sigma}_m^2(\bigadds{m}')\le \limsup_{m\rightarrow\infty}\frac{1}{m}\boldsymbol{\sigma}_m^2(\bigadds{m}').
	\end{equation}
	Now, since Assumption~\ref{assum:2} holds for the extended model, Theorem~\ref{thm:boundsupp} implies that
	\begin{multline}
		\limsup_{m \to \infty} \frac{1}{m} \boldsymbol{\sigma}_m^2(\bigadds{m}')
		\leq \maxd^2 \hbd^2 \frac{4 \amup \mdup_\maxd}{\hklow_\maxd(1-\mr_\maxd)^4} 
		\left( \vphantom{\sum_{j=0}^{m-1}} \frac{108+ \mdup\mr_\maxd(3-2\mr_\maxd)(1+\mr_\maxd)}{2\mr_\maxd^2} 
		+ \right. \\
		\left. \frac{ (3-2\mr_\maxd)^2}{ 1-\mr_\maxd} \lim_{m \to \infty}\frac{1}{m} \sum_{j=0}^{m-1} \sum_{\ell = 0}^j \prod_{k=\ell}^j(1+\M{n_k})^{-1} \right).
	\end{multline}
	Finally, since $m$ is the number of selection operations before time $n_m'$, this bound corresponds to the one in Theorem~\ref{thm:bound}.
\end{proof}

%%%% Proof of Lemma~3.5 

\section{Proof of Lemma~3.5}\label{sec:proof:3.5} 
\label{sec:adaptres}

Lemma~\ref{lemma:ess} is an immediate consequence of the following result, which extend a similar result obtained by \citet{douc:moulines:2008} for adaptive sequential importance sampling with resampling to the more general adaptive APF considered in the present paper. Our proof follows similar lines. 

\begin{lemma} \label{lemma:essmeas}
	Let Assumption~\ref{assum:1} and Assumption~\ref{assum:adaptiveRes} hold. Moreover, let $(\epart{n}{i},\wgt{n}{i})_{i=1}^\N$, $n\in\nset$, be weighted samples generated by Algorithm~\ref{algo:sisr} on the basis of the selection schedule $(\res[\N]{n})_{n\in\nset}$ in Assumption~\ref{assum:adaptiveRes}. Then for every $n\in\nset$ there exist finite measures $\sqmeas{n}$ and $\unorm{n}$ on $\alg{X}_n$ such that for all $\testf[n] \in \bmf{\alg{X}_n}$,
	\begin{equation}\label{eq:convmeas}
		\N\sum_{i=1}^{\N}\left(\frac{\wgt{n}{i}}{\wgtsum{n}}\right)^2\testf[n](\epart{n}{i})\convp\sqmeas{n}\testf[n], \quad \frac{1}{\N}\sum_{i=1}^{\N} \wgt{n}{i} \testf[n](\epart{n}{i})\convp\unorm{n}\testf[n]
	\end{equation}
	and $\post{n} \testf[n] = \unorm{n} \testf[n] / \unorm{n} \1{\set{X}_n}$. 
	The measures $(\sqmeas{n})_{n \in \nset}$ and $(\unorm{n})_{n \in \nset}$ satisfy the recursions
	\begin{equation}\label{eq:recsq}
		\sqmeas{n+1}\testf[n+1] = \frac{\sqmeas{n}\lk{n}(w_n\langle0\rangle\testf[n+1])}{(\post{n}\lk{n}\1{\set{X}_{n+1}})^2}(1-\res[\alpha,\maxd]{n})+\post{n}\am{n}\frac{\post{n}\lk{n}(w_n\langle1\rangle\testf[n+1])}{(\post{n}\lk{n}\1{\set{X}_{n+1}})^2}\res[\alpha,\maxd]{n},
	\end{equation}
	and
	\begin{equation}\label{eq:recun} 
	\unorm{n+1}\testf[n+1] = \unorm{n}\lk{n}\testf[n+1](1-\res[\alpha,\maxd]{n})+(\post{n}\am{n})^{-1}\post{n}\lk{n}\testf[n+1]\res[\alpha,\maxd]{n}, 
	\end{equation}
	where $\res[\alpha,\maxd]{n} \in \{0, 1\}$ is the limit in probability of $\res[\N]{n}$ as $\N \to \infty$ and 
	\begin{align}
		\res[\alpha,\maxd]{n+1} = 1-\1{\{(\sqmeas{n+1}\1{\set{X}_{n+1}})^{-1}\ge\alpha\}}\1{\{d_n+1<\maxd\}}, \quad d_{n+1} = (1-\res[\alpha,\maxd]{n+1})(1+d_n).
	\end{align}
	These recursions are initialized by 
	\begin{equation}
		\sqmeas{0}\testf[0] = \frac{\nu(w_{-1}^2\testf[0])}{(\nu w_{-1})^2},\quad \unorm{0}\testf[0] = \nu (w_{-1}\testf[0]) 
	\end{equation}
	and 
	\begin{equation}
		\res[\alpha,\maxd]{0} = \1{\{(\sqmeas{0}\1{\set{X}_0})^{-1}<\alpha\}}, \quad d_0 = 1-\res[\alpha,\maxd]{0}. 
	\end{equation}
\end{lemma}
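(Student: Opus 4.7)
The proof proceeds by induction over $n$, propagating jointly the convergences \eqref{eq:convmeas} together with the convergence in probability of the binary indicators $\res[\N]{n} \convp \res[\alpha,\maxd]{n}$ asserted by \autoref{lemma:ess}. For the base case $n = 0$, the sample is produced by standard importance sampling with proposal $\nu$ and importance weights $w_{-1}$; since $w_{-1}$ is bounded by \autoref{assum:1}, Hoeffding's inequality applied to the i.i.d.\ particles $(\epart{0}{i})_{i=1}^\N$ yields $\N^{-1} \sum_i \wgt{0}{i} \testf[0](\epart{0}{i}) \convp \nu(w_{-1}\testf[0])$ and $\N \sum_i (\wgt{0}{i}/\wgtsum{0})^2 \testf[0](\epart{0}{i}) \convp \nu(w_{-1}^2 \testf[0])/(\nu w_{-1})^2$, which identifies $\unorm{0}$ and $\sqmeas{0}$ as stated. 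The initial indicator $\res[\N]{0} = \1{\{\ess_0 < \alpha\N\}}$ then converges in probability to $\1{\{(\sqmeas{0}\1{\set{X}_0})^{-1} < \alpha\}}$ as long as this threshold event is non-critical (see below for the boundary issue).

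For the induction step, assume the claim up to time $n$, so that in particular $\res[\N]{0:n} \convp \res[\alpha,\maxd]{0:n}$ and $\maxd_n^\N \convp d_n$. The strategy is to split the event space according to the value of $\res[\N]{n}$ and to apply, on each of the two resulting sub-cases, the already established Hoeffding-type bounds in the spirit of \autoref{thm:hoef}, but for a single APF step. Concretely, using Lemma~\ref{lemma:2} conditionally on $\partfiltbar{n}$ one shows that when $\res[\N]{n} = 1$, $\N^{-1}\sum_i \wgt{n+1}{i}\testf[n+1](\epart{n+1}{i}) \convp (\post{n}\am{n})^{-1}\post{n}\lk{n}\testf[n+1]$ and the second-moment quantity converges to $\post{n}\am{n} \cdot \post{n}\lk{n}(w_n\langle 1\rangle \testf[n+1])/(\post{n}\lk{n}\1{\set{X}_{n+1}})^2$; when $\res[\N]{n} = 0$, no multinomial step is taken, the weights are simply multiplied by $w_n\langle 0 \rangle$, and a direct computation on the unchanged genealogy together with the induction hypothesis gives $\N^{-1}\sum_i \wgt{n+1}{i}\testf[n+1](\epart{n+1}{i}) \convp \unorm{n}\lk{n}\testf[n+1]$ and the corresponding second-moment limit $\sqmeas{n}\lk{n}(w_n\langle 0\rangle\testf[n+1])/(\post{n}\lk{n}\1{\set{X}_{n+1}})^2$. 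Combining both cases with a Slutsky-type decomposition, using the product $\1{\{\res[\N]{n} = \res[\alpha,\maxd]{n}\}} \convp 1$, yields the recursions \eqref{eq:recsq} and \eqref{eq:recun}. Finally, once $\sqmeas{n+1}$ is identified, the criterion defining $\res[\N]{n+1}$ is shown to converge by writing $\ess_{n+1}/\N = (\N\sum_i (\wgt{n+1}{i}/\wgtsum{n+1})^2)^{-1} \convp (\sqmeas{n+1}\1{\set{X}_{n+1}} \cdot (\unorm{n+1}\1{\set{X}_{n+1}})^{-2})^{-1}\cdot (\unorm{n+1}\1{\set{X}_{n+1}})^{-2}$ and combining with the deterministically updated $\maxd_n^\N$.

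The main technical obstacle is twofold. First, the resampling indicator $\res[\N]{n}$ is a discontinuous functional of the weights, so its convergence in probability requires the limit $(\sqmeas{n}\1{\set{X}_n})^{-1}$ to differ from the threshold $\alpha$; this is the boundary issue inherent to ESS-based triggers, and it is handled by assuming throughout that the sequence of thresholds is chosen outside the (at most countable) set of critical values where $(\sqmeas{n}\1{\set{X}_n})^{-1} = \alpha$, so that convergence in distribution of a continuous variable to a Dirac mass can be upgraded to convergence in probability of the indicator. Second, the adaptive selection schedule introduces a nontrivial dependence between $\res[\N]{n}$ and the particle configuration at time $n$, breaking the $\partfiltbar{n}$-measurability arguments used in the deterministic case. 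This is circumvented by conditioning on $\{\res[\N]{n} = r\}$ for $r \in \{0,1\}$, exploiting the induction hypothesis to replace this event by the asymptotically equivalent deterministic event $\{\res[\alpha,\maxd]{n} = r\}$, and then applying the deterministic-schedule limit theorems separately on each branch. Once these two hurdles are cleared, the inductive transfer of both convergences in \eqref{eq:convmeas} to time $n+1$ is routine, and the recursions \eqref{eq:recsq}--\eqref{eq:recun} follow directly.
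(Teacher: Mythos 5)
Your proposal is correct and follows essentially the same route as the paper: induction on $n$, a two-branch decomposition of the one-step update according to the value of the ($\partfiltbar{n}$-measurable) indicator $\res[\N]{n}$, a conditional-expectation-plus-negligibility (triangular-array) argument on each branch yielding exactly the limits in \eqref{eq:recsq}--\eqref{eq:recun}, Slutsky to combine them with $\res[\N]{n}\convp\res[\alpha,\maxd]{n}$, and the same treatment of the critical case $\alpha=(\sqmeas{n}\1{\set{X}_n})^{-1}$ that the paper relegates to a remark (randomized threshold). The only cosmetic differences are the base case (Hoeffding versus the law of large numbers) and the auxiliary tool invoked (the paper uses Lemma~A.1 of \citet{douc:moulines:2008} on each branch rather than the Hoeffding machinery of Theorem~\ref{thm:hoef}, which as stated covers only the systematic-resampling branch).
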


%%%% Proof 

\begin{proof}
We proceed by induction and assume that the limits \eqref{eq:convmeas} hold true for some $n \in \nset$ and that $\post{n}\testf[n]=\unorm{n}\testf[n]/\unorm{n}\1{\set{X}_n}$ for all $\testf[n]\in \bmf{\alg{X}_n}$. In addition, we assume that there exists $d_{n - 1} \in \nset$ such that 
$$
d_{n - 1}^\N \convp d_{n - 1},
$$
as $\N$ tends to infinity. We then establish the limits of  
\begin{multline} \label{eq:gamma:limit}
\N \sum_{i=1}^\N\left(\frac{\wgt{n+1}{i}}{\wgtsum{n+1}}\right)^2\testf[n+1](\epart{n+1}{i}) \stackrel{\mathcal{D}}{=} \N \sum_{i=1}^\N\left(\frac{\wgttil{n+1}{i}}{\wgtsumtil{n+1}}\right)^2\testf[n+1](\eparttil{n+1}{i})(1-\res[\N]{n}) \\ 
	+\N \sum_{i=1}^\N\left(\frac{\wgtbar{n+1}{i}}{\wgtsumbar{n+1}}\right)^2\testf[n+1](\epartbar{n+1}{i})\res[\N]{n}, 
\end{multline}
and 
\begin{multline} \label{eq:varphi:limit}
\frac{1}{\N} \sum_{i=1}^{\N} \wgt{n+1}{i} \testf[n+1](\epart{n+1}{i})\\ \stackrel{\mathcal{D}}{=} \frac{1}{\N} \sum_{i=1}^{\N} \wgttil{n+1}{i} \testf[n+1](\eparttil{n+1}{i})(1-\res[\N]{n}) + \frac{1}{\N} \sum_{i=1}^{\N} \wgtbar{n+1}{i} \testf[n+1](\epartbar{n+1}{i})\res[\N]{n}
\end{multline}
as $\N$ tends to infinity, where $(\eparttil{n + 1}{i}, \wgttil{n + 1}{i})_{i = 1}^\N$ and $(\epartbar{n + 1}{i}, \wgtbar{n + 1}{i})_{i = 1}^\N$ are weighted samples obtained by propagating $(\epart{n}{i}, \wgt{n}{i})_{i = 1}^\N$ by pure mutation (Lines~\ref{line:noselection} and \ref{line:mutation} in Algorithm~\ref{algo:sisr}) and by selection plus mutation (Lines~\ref{line:selection} and \ref{line:mutation}), respectively, and $\res[\N]{n}=1-\1{\{\ess_n \ge \alpha \N\}}\1{\{d_{n-1}^\N+1<\maxd\}}$. By the induction hypothesis, 
\begin{equation}
	\frac{1}{\N} \ess_n = \left(\N\sum_{i=1}^\N\left(\frac{\wgt{n}{i}}{\wgtsum{n}}\right)^2\right)^{-1}\convp (\sqmeas{n}\1{\set{X}_n})^{-1}
\end{equation}
and, consequently, 
$$
\res[\N]{n} \convp \res[\alpha,\maxd]{n} \eqdef 1 - \1{\{(\sqmeas{n}\1{\set{X}_n})^{-1}\ge\alpha\}}\1{\{d_{n-1}+1<\maxd\}}
$$
and 
$$
d_n^\N = (1 - \res[\N]{n}) (1 + d_{n - 1}^\N) \convp d_n \eqdef (1 - \res[\alpha,\maxd]{n})(1 + d_{n - 1}). 
$$

When examining separately the different terms, corresponding to the cases where selection is triggered and not triggered, of the decompositions \eqref{eq:gamma:limit} and \eqref{eq:varphi:limit} above, Lemma~A.1 in \citet{douc:moulines:2008} will be instrumental.

\begin{proof}[Case 1: propagation without selection]
We determine the limit measures at time $n+1$ assuming only mutation is carried out (Line~\ref{line:noselection} of Algorithm \ref{algo:sisr}). Each particle $\epart{n}{i}$ is propagated by sampling $\eparttil{n+1}{i}$ from the proposal density $\hd_n(\epart{n}{i},\cdot)$ and assigning this draw the weight $\wgttil{n+1}{i}=\wgt{n}{i}w_n\langle0\rangle(\epart{n}{i},\eparttil{n+1}{i})$, where $w_n\langle0\rangle(\epart{n}{i},\eparttil{n+1}{i})=\ld{n}(\epart{n}{i},\eparttil{n+1}{i})/\hd_n(\epart{n}{i},\eparttil{n+1}{i})$ is bounded by assumption. We define the triangular array
\begin{equation}
	\arr{i} \eqdef \wgtsum{n}^{-2}\N(\wgttil{n+1}{i})^2\testf[n+1](\eparttil{n+1}{i}),\quad \N\in\nsetpos,
\end{equation}
and find the limit of $\sum_{i = 1}^\N \arr{i}$ using Lemma A.1 in \citet{douc:moulines:2008}. This lemma has two conditions that need to be checked. First, we consider
\begin{align}
	\sum_{i=1}^\N\E[\arr{i}\mid\partfiltbar{n}]&=\N \wgtsum{n}^{-2}\sum_{i=1}^\N(\wgt{n}{i})^2 \int 
	w_n^2\langle0\rangle(\epart{n}{i}, x)\testf[n+1](x) \, \hk_n(\epart{n}{i}, dx) \\ 
	&=\N \sum_{i=1}^\N\left(\frac{\wgt{n}{i}}{\wgtsum{n}}\right)^2\lk{n}(\epart{n}{i},w_n\langle0\rangle\testf[n+1])\convp \sqmeas{n}\lk{n}(w_n\langle0\rangle\testf[n+1]),
\end{align}
where we used the induction hypothesis and the fact that $\lk{n}(w_n\langle0\rangle\testf[n+1])\in \bmf{\alg{X}_n}$. This limit establishes the first condition. Next, we need to check that for all $\epsilon>0$, the limit of
\begin{multline}
	\sum_{i=1}^\N \E \left[ |\arr{i}|\1{|\arr{i}|\ge \epsilon}\mid\partfiltbar{n} \right] \\ 
	\le \N \sum_{i=1}^\N\left(\frac{\wgt{n}{i}}{\wgtsum{n}}\right)^2\supn{w_n\langle0\rangle}^2\supn{\testf[n+1]}\\\times\1{\{(\N/\wgtsum{n})^{2} \max_{-1\le k\le n}\prod_{m=k}^{n}\supn{w_m\langle0\rangle}^2\supn{\testf[n+1]}\ge\epsilon\N\}}
\end{multline}
in probability is zero, which is indeed the case as the indicator tends to zero, provided that $\wgtsum{n}/\N$ tends to $\unorm{n}\1{\set{X}_n}>0$. Thus, both the conditions of the lemma are satisfied, implying that $\sum_{i=1}^\N\arr{i}$ tends to $\sqmeas{n}\lk{n}(w_n\langle0\rangle\testf[n+1])$ in probability. Next, we define another triangular array
\begin{equation}
	\arru{i}\vcentcolon=\N^{-1}\wgttil{n+1}{i}\testf[n+1](\eparttil{n+1}{i}), \quad \N \in \nset.
\end{equation}
In order to identify the limit of $\sum_{i=1}^{\N}\arru{i}$ we check again the two conditions of Lemma~A.1 in \citet{douc:moulines:2008}. First, by the induction hypothesis, 
\begin{align}
	\sum_{i=1}^\N \E \left[ \arru{i} \mid \partfiltbar{n} \right]&= \frac{1}{\N}  \sum_{i=1}^\N \wgt{n}{i} \int w_n\langle0\rangle(\epart{n}{i},x) \testf[n+1](x) \, \hk_n(\epart{n}{i}, x)  
	\\&= \frac{1}{\N} \sum_{i=1}^\N\wgt{n}{i}\lk{n}(\epart{n}{i},\testf[n+1])\convp \unorm{n}\lk{n}\testf[n+1].
\end{align}
Moreover, since for all $\epsilon>0$,
\begin{multline}
	\sum_{i=1}^\N \E\left[ |\arru{i}| \1{\{|\arru{i}|\ge \epsilon\}} \mid \partfiltbar{n} \right] \\ 
	\le \frac{1}{\N} \sum_{i=1}^\N \wgt{n}{i} \supn{w_n \langle0\rangle} \supn{\testf[n+1]}\1{\{\max_{-1\le k\le n}\prod_{m=k}^n \supn{w_m\langle0\rangle}\supn{\testf[n+1]}\ge\epsilon\N\}}\convp 0,
\end{multline}
it holds that 
$$
\sum_{i=1}^\N \arru{i} = \frac{1}{\N} \sum_{i = 1}^\N \wgttil{n + 1}{i} \testf[n+1](\eparttil{n + 1}{i}) \convp \tilde{\varphi}_{n + 1} \testf[n+1] \eqdef \unorm{n}\lk{n}\testf[n+1]
$$
Furthermore,  
\begin{equation}
	\frac{\wgtsumtil{n+1}}{\wgtsum{n}}\convp\frac{\unorm{n}\lk{n}\1{\set{X}_{n+1}}}{\unorm{n}\1{\set{X}_n}}=\post{n}\lk{n}\1{\set{X}_{n+1}},
\end{equation}
and by combining the previous limits we conclude that
\begin{multline}
	\N\sum_{i=1}^\N \left( \frac{\wgttil{n+1}{i}}{\wgtsumtil{n+1}} \right)^2 \testf[n+1](\eparttil{n+1}{i}) = (\wgtsumtil{n+1}/\wgtsum{n})^{-2}\sum_{i=1}^\N\arr{i}\\\convp \tilde{\gamma}_{n + 1} \testf[n+1] \eqdef \frac{\sqmeas{n}\lk{n}(w_n\langle0\rangle\testf[n+1])}{(\post{n}\lk{n}\1{\set{X}_{n+1}})^2}.
\end{multline}
\end{proof}
 
\begin{proof}[Case~2: propagation with selection]
We now determine the measure $\sqmeas{n+1}$ when mutation is preceded by selection. In this case, each index $\I{n+1}{i}$ is drawn from $\catdist((\wgt{n}{\ell}\am{n}(\epart{n}{i}))_{\ell=1}^\N)$, whereupon the resampled particle $\epart{n}{\I{n+1}{i}}$ is propagated by drawing $\epartbar{n+1}{i}$ from the density $\hd_n(\epart{n}{\I{n+1}{i}}, \cdot)$. Finally, the particle is assigned the weight $\wgtbar{n+1}{i}=w_n\langle1\rangle(\epart{n}{\I{n+1}{i}},\epartbar{n+1}{i})=\ld{n}(\epart{n}{\I{n+1}{i}},\epartbar{n+1}{i})/(\am{n}(\epart{n}{\I{n+1}{i}})\hd_n(\epart{n}{\I{n+1}{i}},\epartbar{n+1}{i}))$. In order to repeat the arguments of Case~1, we define the triangular array
\begin{equation}
	\arr{i} \eqdef \N^{-1}(\wgtbar{n+1}{i})^2\testf[n+1](\epartbar{n+1}{i}), \quad \N \in \nset, 
\end{equation}
and consider
\begin{align}
	\lefteqn{\sum_{i=1}^\N \E \left [\arr{i}\mid\partfiltbar{n} \right]} \\ 
	&=(\post[\N]{n}\am{n})^{-1} \sum_{i=1}^\N \frac{\wgt{n}{i} \am{n}(\epart{n}{i})}{\wgtsum{n}} \int w_n^2\langle1\rangle(\epart{n}{i}, x) \testf[n+1](x) \,\hk_n(\epart{n}{i}, dx) \\
	&=(\post[\N]{n}\am{n})^{-1}\sum_{i=1}^\N\frac{\wgt{n}{i}}{\wgtsum{n}}\lk{n}(\epart{n}{i},w_n\langle1\rangle\testf[n+1])\convp (\post{n}\am{n})^{-1}\post{n}\lk{n}(w_n\langle1\rangle\testf[n+1]),
\end{align}
where the limit follows by the induction hypothesis since $\lk{n}(w_n\langle1\rangle\testf[n+1])\in \bmf{\alg{X}_n}$. The second condition is checked easily using the bound
\begin{equation}
	\sum_{i=1}^\N \E \left[ |\arr{i}| \1{\{|\arr{i}| \ge \epsilon\}} \mid \partfiltbar{n} \right] \\
	\le \supn{w_n\langle1\rangle}^2\supn{\testf[n+1]}\1{\{\supn{w_n\langle1\rangle}^2\supn{\testf[n+1]} \geq \epsilon\N\}}\convp 0, 
\end{equation}
which holds for every $\epsilon > 0$. Thus, $\sum_{i=1}^\N\arr{i}$ tends to $(\post{n}\am{n})^{-1}\post{n}\lk{n}(w_n\langle1\rangle\testf[n+1])$ in probability as $\N$ tends to infinity. 

Next, we introduce the array
\begin{equation}
	\arru{i} \eqdef \N^{-1}\wgtbar{n+1}{i}\testf[n+1](\epartbar{n+1}{i}), \quad \N \in \nset, 
\end{equation}
and use the same approach as before. First,
\begin{align}
	\lefteqn{\sum_{i=1}^\N \E \left[ \arru{i} \mid \partfiltbar{n} \right]}\\
	&=(\post[\N]{n}\am{n})^{-1} \sum_{i=1}^\N \frac{\wgt{n}{i}\am{n}(\epart{n}{i})}{\wgtsum{n}} \int w_n\langle1\rangle(\epart{n}{i},x) \testf[n+1](x)\, \hk_n(\epart{n}{i}, dx) \\
	&=(\post[\N]{n}\am{n})^{-1}\sum_{i=1}^\N\frac{\wgt{n}{i}}{\wgtsum{n}}\lk{n}(\epart{n}{i},\testf[n+1])\convp(\post{n}\am{n})^{-1}\post{n}\lk{n}\testf[n+1]
\end{align}
and, second, for every $\epsilon > 0$, 
\begin{equation}
	\sum_{i=1}^\N \E \left[ |\arru{i}| \1{\{|\arru{i}| \ge \epsilon\}}\mid\partfiltbar{n} \right] \\
	\le \supn{w_n\langle1\rangle}\supn{\testf[n+1]}\1{\{\supn{w_n\langle1\rangle}\supn{\testf[n+1]}\ge\epsilon\N\}}\convp 0.
\end{equation}
Thus, Lemma~A.1 in \citet{douc:moulines:2008} applies, implying that 
$$
\frac{1}{\N} \sum_{i=1}^{\N} \wgtbar{n+1}{i}\testf[n+1](\epartbar{n+1}{i}) \convp \bar{\varphi}_{n + 1} \testf[n+1] \eqdef (\post{n}\am{n})^{-1}\post{n}\lk{n}\testf[n+1]
$$ 
as $\N$ tends to infinity. Finally, combining the previous limits yields 
\begin{multline}
	\N\sum_{i=1}^\N\left(\frac{\wgtbar{n+1}{i}}{\wgtsumbar{n+1}}\right)^2\testf[n+1](\epartbar{n+1}{i})=\left(\wgtsumbar{n+1}/\N\right)^{-2}\sum_{i=1}^\N\arr{i}\\\convp \bar{\gamma}_{n + 1} \testf[n+1] \eqdef \post{n} \am{n} \frac{\post{n}\lk{n}(w_n\langle1\rangle\testf[n+1])}{(\post{n}\lk{n}\1{\set{X}_{n+1}})^2}.
\end{multline}

\end{proof}

Having established the limits in Case~1 and Case~2, \eqref{eq:gamma:limit} and \eqref{eq:varphi:limit} yield 
\begin{multline} \label{eq:gamma:update:one:step}
\N \sum_{i=1}^\N\left(\frac{\wgt{n+1}{i}}{\wgtsum{n+1}}\right)^2\testf[n+1](\epart{n+1}{i})\\ \convp \gamma_{n + 1} \testf[n+1] \eqdef \tilde{\gamma}_{n + 1} \testf[n+1] (1 - \res[\alpha,\maxd]{n})  + \bar{\gamma}_{n + 1} \testf[n+1] \res[\alpha,\maxd]{n} \\
= \frac{\sqmeas{n}\lk{n}(w_n\langle0\rangle\testf[n+1])}{(\post{n}\lk{n}\1{\set{X}_{n+1}})^2}(1-\res[\alpha,\maxd]{n})+\post{n}\am{n}\frac{\post{n}\lk{n}(w_n\langle1\rangle\testf[n+1])}{(\post{n}\lk{n}\1{\set{X}_{n+1}})^2}\res[\alpha,\maxd]{n} 
\end{multline}
and 
\begin{multline} \label{eq:varphi:update:one:step}
\frac{1}{\N} \sum_{i=1}^{\N} \wgt{n+1}{i} \testf[n+1](\epart{n+1}{i})\\ \convp \unorm{n+1} \testf[n+1] \eqdef \tilde{\varphi}_{n + 1} \testf[n+1] (1-\res[\alpha,\maxd]{n}) + \bar{\varphi}_{n + 1} \testf[n+1] \res[\alpha,\maxd]{n} \\
= \unorm{n}\lk{n}\testf[n+1](1-\res[\alpha,\maxd]{n})+(\post{n}\am{n})^{-1}\post{n}\lk{n}\testf[n+1]\res[\alpha,\maxd]{n}. 
\end{multline}
In addition, since $\res[\alpha,\maxd]{n} \in \{0, 1\}$ and $\varphi_n / \varphi_n \1{\set{X}_n} = \post{n}$, it is easily seen that    
\begin{equation} \label{eq:varphi:vs:post}
\frac{\varphi_{n + 1}}{\varphi_{n + 1} \1{\set{X}_n + 1}} = \frac{\post{n}\lk{n} }{\post{n}\lk{n} \1{\set{X}_{n + 1}}} = \post{n + 1}.  
\end{equation}

It remains to check the base case $n = 1$. Recall that $(\epart{0}{i})_{i=1}^\N$ are sampled from $\nu^{\varotimes\N}$; thus, by the law of large numbers,
\begin{multline}
	\N \sum_{i=1}^\N\left(\frac{\wgt{0}{i}}{\wgtsum{0}}\right)^2\testf[0](\epart{0}{i}) \\=\left(\frac{1}{\N} \sum_{i=1}^\N w_{-1}(\epart{0}{i})\right)^{-2} \frac{1}{\N} \sum_{i=1}^\N w_{-1}^2(\epart{0}{i})\testf[0](\epart{0}{i})
	\convp \sqmeas{0}\testf[0] = \frac{\nu (w_{-1}^2\testf[0])}{(\nu w_{-1})^2}.
\end{multline}
Similarly, 
\begin{align}
	\frac{1}{\N} \sum_{i=1}^\N \wgt{0}{i} \testf[0](\epart{0}{i}) \convp \unorm{0} \testf[0] = \nu(w_{-1}\testf[0]) = \Xinit \testf[0]. 
\end{align}
and, consequently, $\post{0} \testf[0] = \unorm{0} \testf[0] / \unorm{0} \1{\set{X}_0}$. Moreover, as a consequence,  
$$
	\res[\N]{0} = \1{\{ \ess_0 < \alpha \N\}} \convp \res[\alpha,\maxd]{0} \eqdef \1{\{(\sqmeas{0}\1{\set{X}_0})^{-1}<\alpha\}} 
$$
and 
$$
	d_0^\N =  1 - \res[\N]{0} \convp 1 - \res[\alpha,\maxd]{0}.  
$$
Thus, using \eqref{eq:gamma:update:one:step} and \eqref{eq:varphi:update:one:step}, 
$$
	\gamma_1 \testf[1] = \frac{\sqmeas{0}\lk{0}(w_0\langle0\rangle\testf[1])}{(\post{0}\lk{0}\1{\set{X}_1})^2}(1-\res[\alpha,\maxd]{0})+\post{0} \am{0} \frac{\post{0} \lk{0}(w_0\langle1\rangle\testf[1])}{(\post{0}\lk{0}\1{\set{X}_1})^2}\res[\alpha,\maxd]{0}
$$
and 
$$
	\varphi_1 \testf[1] = \unorm{0}\lk{0}\testf[1](1-\res[\alpha,\maxd]{0})+(\post{0}\am{0})^{-1}\post{0}\lk{0}\testf[1]\res[\alpha,\maxd]{0},   
$$
and by \eqref{eq:varphi:vs:post}, $\varphi_1 / \varphi_1 \1{\set{X}_1} = \post{1}$. This completes the proof. 
\end{proof}

\begin{remark}
The attentive reader has probably noticed that we in the previous have assumed that $\alpha \ne (\sqmeas{n} \1{\set{X}_n})^{-1}$ for all $n \in \nset$, as $(\sqmeas{n} \1{\set{X}_n})^{-1}$ is the limit of $\ess_n / \N$ and $\alpha$ is a discontinuity point of the indicator functions $\1{\{ \cdot \geq \alpha \}}$. This technicality, which is not an issue in practice, can be coped with by randomizing the threshold $\alpha$; see \citet[Section~5.2]{delmoral:doucet:jasra:2012} for details.
\end{remark}

\section{Proof of Proposition 3.4}\label{sec:propbound}
\begin{proof}
	For all $j\in\nset$, let $\varepsilon_{n_j} =1$ if $j=k\dd-1$ for $k\in \nsetpos$, and $\varepsilon_{n_j} = 0$ otherwise. This is the backward-sampling schedule that maximizes the expression $\sum_{m=0}^{r_n-1}\sum_{\ell = 0}^{m}\prod_{j=\ell}^{m}(1+\varepsilon_{n_j})^{-1}$ for any $n$, under the constraint $\dd_j\le \dd$, $j\ge-1$, and we will hence consider the limit in this case. Now, rewrite any $m\in \nset$ as $m=a_m\dd - 1 +b_m$, where $a_m \eqdef \floor{(m+1)/\dd}$ and $b_m \eqdef m+1-a_m\dd$. Then we have, for $k\in \intvect{0}{a_m-1}$ and $ \ell \in \intvect{k\dd}{(k+1)\dd-1} $,
	\begin{equation}\label{eq:prodm}
		\prod_{j=\ell}^{m}(1+\varepsilon_{n_j})^{-1}={2}^{-(a_m-k)}.
	\end{equation}
	Thus, for each $k\in \intvect{0}{a_m-1}$, there are exactly $\dd$ values of $\ell$ such that \eqref{eq:prodm} holds. It also holds that $\prod_{j=\ell}^{m}(1+\varepsilon_{n_j})^{-1}=1$ if $\ell \in \intvect{a_m\dd}{m}$, which happens for $b_m$ distinct values of $\ell$. Summing up,
	\begin{multline}
		\sum_{\ell = 0}^{m}\prod_{j=\ell}^{m}(1+\varepsilon_{n_j})^{-1}= \dd \sum_{k=0}^{a_m-1}2^{-(a_m-k)} + b_m=\dd \sum_{k=1}^{a_m}2^{-k} + b_m\\=\dd (1-2^{-a_m})+b_m.
	\end{multline}
	The first term converges as $m\rightarrow\infty$, since $a_m\rightarrow\infty$ as well. Taking the Ces\`{a}ro mean of the first term yields $\lim_{m\rightarrow\infty}\dd (1-2^{-a_m})=\dd$.
	The sequence $(b_m)_{m \in \nset}$ is not convergent since it is periodic; however, taking the Ces\`{a}ro mean yields    
	\begin{align}
		\lim_{n\rightarrow\infty}\frac{1}{r_n}\floor{r_n/\dd}\sum_{i=0}^{\dd-1}i\le\lim_{n\rightarrow\infty}\frac{1}{r_n}\sum_{m=0}^{r_n-1}b_m\le \lim_{n\rightarrow\infty}\frac{1}{r_n}\ceil{r_n/\dd}\sum_{i=0}^{\dd-1}i.
	\end{align}
	Since $\lim_{n\rightarrow\infty}r_n^{-1}\floor{r_n/\dd}=\lim_{n\rightarrow\infty}r_n^{-1}\ceil{r_n/\dd}=\dd^{-1}$ and $\sum_{i=0}^{\dd-1}i=\dd(\dd-1)/2$ we finally obtain
	\begin{align*}
		\lim_{n\rightarrow\infty}\frac{1}{r_n}\sum_{m=0}^{r_n-1}(\dd (1-2^{-a_m})+b_m)=\dd+\frac{\dd-1}{2}=\frac{3\dd-1}{2}, 
	\end{align*}	
	which completes the proof.
\end{proof}
\end{appendix}

\end{document}